\newcolumntype{L}[1]{>{\raggedright\let\newline\\\arraybackslash\hspace{0pt}}m{#1}}
\newcolumntype{Y}{>{\centering\arraybackslash}X}
\newcolumntype{S}{>{\hsize=.4\hsize}X}
\newcolumntype{s}{>{\raggedleft\arraybackslash\hsize=.3\hsize}X}
\newcolumntype{M}{>{\RaggedRight\arraybackslash\hspace{0pt}}X}
\theoremstyle{definition}
\newtheorem{definition}{Definition}[section]
\newtheorem{proposition}{Proposition}[section]
\newtheorem{corollary}{Corollary}[section]
\newcommand{\review}[1]{\textcolor{black}{#1}}
\newcommand{\reviewpar}{\color{black}}
\newcommand{\reviewminor}[1]{\textcolor{black}{#1}}
\newcommand{\reviewminorpar}{\color{black}}
\newlength{\foursubht}
\newsavebox{\foursubbox}
\begin{document}
%
\title{\review{Optimization for Medical Image Segmentation: Theory and Practice when evaluating with Dice Score or Jaccard Index}}
%
%
%

\ifCLASSOPTIONpeerreview
\author{}
\else
\author{Tom~Eelbode,
        Jeroen~Bertels,
        Maxim~Berman,
        Dirk~Vandermeulen,~\IEEEmembership{Fellow,~IEEE,}
        Frederik~Maes,~\IEEEmembership{Senior Member,~IEEE,}
        Raf~Bisschops,
        and~Matthew~B.~Blaschko
\thanks{Copyright (c) 2019 IEEE. Personal use of this material is permitted. Permission from IEEE must be obtained for all other uses, in any current or future media, including reprinting/republishing this material for advertising or promotional purposes, creating new collective works, for resale or redistribution to servers or lists, or reuse of any copyrighted component of this work in other works.}
\thanks{Submitted for review on November 16, 2019.}
\thanks{Published on June 15, 2020.}
\thanks{T. Eelbode and J. Bertels contributed equally to this work.}
\thanks{T. Eelbode, J. Bertels, M. Berman, D. Vandermeulen,
F. Maes and M. Blaschko are with the Department of Electrical Engineering (ESAT),
Center for Processing Speech and Images, KU Leuven, Belgium.
E-mail: \url{tom.eelbode@kuleuven.be}}
\thanks{R. Bisschops is with the Department of Gastroenterology and Hepatology, UZ Leuven, Belgium.}
\thanks{This work is funded in part by Internal Funds KU Leuven (grant \# C24/18/047). The computational resources were partly provided by the Flemish Supercomputer Center (VSC). J.B. is part of NEXIS, a project that has received funding from the European Union's Horizon 2020 Research and Innovations Programme (grant \# 780026). R.B. is supported by FWO and Fujifilm. M.B.\ and M.B.B.\ acknowledge support from FWO (grant \# G0A2716N), an Amazon Research Award, an NVIDIA GPU grant, and the Facebook AI Research Partnership. This research received funding from the Flemish Government under the “Onderzoeksprogramma Artifici\"{e}le Intelligentie (AI) Vlaanderen” programme. The authors thank H. Willekens, C. Camps, C. Hassan, E. Coron, P. Bhandari, H. Neumann and O. Pech for their effort and collaboration.}}
\fi

\ifCLASSOPTIONpeerreview
\markboth{IEEE Transactions on Medical Imaging, VOL. X, NO. X, JUNE 2020}{}%
\else
\markboth{IEEE Transactions on Medical Imaging, VOL. X, NO. X, JUNE 2020}%
{T. Eelbode J. Bertels et al.}
\fi
%

\maketitle

\begin{abstract}
\reviewpar
In many medical imaging and classical computer vision tasks, the Dice score and Jaccard index are used to evaluate the segmentation performance. Despite the existence and great empirical success of metric-sensitive losses, i.e. relaxations of these metrics such as soft Dice, soft Jaccard and Lov\'{a}sz-Softmax, many researchers still use per-pixel losses, such as (weighted) cross-entropy to train CNNs for segmentation. Therefore, the target metric is in many cases not directly optimized. We investigate from a theoretical perspective, the relation within the group of metric-sensitive loss functions and question the existence of an optimal weighting scheme for weighted cross-entropy to optimize the Dice score and Jaccard index at test time. We find that the Dice score and Jaccard index approximate each other relatively and absolutely, but we find no such approximation for a weighted Hamming similarity. For the Tversky loss, the approximation gets monotonically worse when deviating from the trivial weight setting where soft Tversky equals soft Dice.
We verify these results empirically in an extensive validation on six medical segmentation tasks and can confirm that metric-sensitive losses are superior to cross-entropy based loss functions in case of evaluation with Dice Score or Jaccard Index. This further holds in a multi-class setting, and across different object sizes and foreground/background ratios. These results encourage a wider adoption of metric-sensitive loss functions for medical segmentation tasks where
the performance measure of interest is the Dice score or Jaccard index.

\end{abstract}

\begin{IEEEkeywords}
Dice, Jaccard, Risk minimization, Cross-entropy, Tversky
\end{IEEEkeywords}

%
\IEEEpeerreviewmaketitle

\section{Introduction}
\IEEEPARstart{I}{n} medical image segmentation, the most commonly used performance metrics are the Dice score and Jaccard index~\cite{Kamnitsas2017,Ronneberger2015a,brats2018,isles2017,isles2018}. These metrics are used rather than pixel-wise accuracy because they are a more adequate indicator for the perceptual quality of a segmentation. This is also recognized by Zijdenbos et al. who state that the Dice score better reflects size and localization agreement for object segmentation \cite{Zijdenbos1994}. They also show that Dice score is a special case of the Kappa statistic in case the number of background voxels greatly outweighs that of the foreground voxels.

When we train a learning-based method for image segmentation, we are performing risk minimization. In this setting, it is important that we optimize a loss function during training which we will also use for evaluation during test time \cite{Vapnik:1995:NSL:211359}. However, in the MICCAI 2018 proceedings, 68 out of 96 learning-based segmentation papers used a loss function which does not directly optimize Dice score or Jaccard index even though evaluation was performed with one of these two metrics (more details in Appendix~\ref{sec:MICCAI2018methodology}). Pixel-wise (weighted) cross-entropy loss is still frequently used in these papers \cite{Kamnitsas2017,Chen2017}. Nevertheless, differentiable approximations for Dice score and Jaccard index have been proposed for training discriminative models with a gradient-based optimization algorithm, such as stochastic gradient descent (SGD). For example, the soft Dice \cite{Sudre2017} and soft Jaccard \cite{tarlow2012revisiting,nowozin2014optimal} are relaxations for their respective metrics and can be used to surrogate Dice score and Jaccard index during training. The Lov\'{a}sz-softmax is a more recent convex extension for the Jaccard index \cite{Berman2018a}.

This great variety of loss functions being used in the literature raises the question whether there is any rationale for choosing a specific group of loss functions over the others and if this choice has an influence on the quality of the predictions. In this work, we theoretically investigate the link between Dice score and Jaccard index and find that they approximate each other under risk minimization. Despite the existence of this approximation bound, we can prove that no such approximation exists for cross-entropy and that no weighting will allow it to surrogate Dice or Jaccard. Analogously, we look for an optimal weighting of false positives and false negatives in the Tversky loss and find that this is the case when the latter is equivalent to the Dice loss. We then empirically validate these theories on \review{six} different medical image segmentation tasks. We find that the performance when training with (weighted) cross-entropy is inferior to using any of the surrogates for Dice and Jaccard. However, there is no statistically significant difference between these surrogates.

This article is an extended version of \cite{Bertels2019a}.
\review{We continued the analysis on an extra dataset from the WMH 2017 challenge, in which the images contain multiple small objects with a very low total foreground/background ratio. Additionally, we ran all experiment on this dataset under two different network architectures: additional to U-Net (results shown as WM17) we used DeepMedic~\cite{Kamnitsas2017} (results shown as WM17\textsuperscript{DM}). Furthermore, these runs were now executed in a patch-wise setting due to the larger image sizes. More details can be found under Sec.~\ref{sec:empirical_setup} and the results are listed in Tables~\ref{tab:l1vsl2}, \ref{tab:results} \& \ref{tab:tversky_results}.}
We added an analysis of the Tversky loss (Eqs.~\eqref{eq:TverskySetDefinition} \& \eqref{eq:tversky}, Prop.~\ref{prop:TverskyDiceApproximation}, and experiments in Sec.~\ref{sec:experiments}),
a comparison between the $L^1$ and $L^2$-based soft Dice variants (Sec.~\ref{sec:soft_dice_variants}),
visualizations of segmentation results and qualitative discussion in Sec.~\ref{sec:experiments},
multi-class segmentation experiments on the BRATS dataset,
additional experiments analyzing the effect of object size on the POLYPS dataset in Sec.~\ref{sec:multiclass},
and extended discussions.

\section{Risk minimization with Dice and related similarities}\label{sec:2}

When performing discriminative training of machine learning methods, such as stochastic gradient descent for a CNN~\cite{Goodfellow-et-al-2016}, we are performing risk minimization.
To learn a mapping $f$ from an observed input $x$ to a hidden variable $y$, empirical risk minimization optimizes the expectation of a loss function over a finite training set:
{ 
\setlength{\abovedisplayskip}{3pt}
\setlength{\belowdisplayskip}{5pt}
\begin{align} \label{eq:empiricalRisk}
\arg\min_{f\in\mathcal{F}} \underbrace{\frac{1}{n} \sum_{i=1}^n \ell(f(x_i),y_i)}_{=:\hat{\mathcal{R}}(f)} , 
\end{align}
}%
where $\ell$ is a loss function and $\mathcal{F}$ is a function class of interest, e.g.\ the set of functions that can be represented by a neural network with a given topology. 
We will denote the \review{empirical} distribution arising from a sample $\mathcal{S}:=\{(x_i,y_i)\}_{1\leq i \leq n}$ of size $n$ as $P_n$, and we may equivalently denote $\hat{\mathcal{R}}(f) = \mathbb{E}_{(x,y)\sim P_n}[\ell(x,y)]$.  

In binary medical image segmentation, $y$ can be thought of as a set of pixels labeled as foreground. It is therefore well defined to consider set theoretic notions such as $y\cap \tilde{y}$ for two different segmentations.
This motivates the use of multiple set theoretic similarity measures  between the ground truth segmentation $y$ and the predicted segmentation $\tilde{y}$ including
the Dice score $D$, the Jaccard index $J$, the Hamming similarity $H$, the weighted Hamming similarity $H_\gamma$, and the Tversky index $T_{\alpha,\beta}$:
\begin{align}
D(y,\tilde{y}) :=& \frac{2 |y\cap \tilde{y}|}{|y| + |\tilde{y}|}, \label{eq:DiceSetDefinition} \\
J(y,\tilde{y}) :=& \frac{|y\cap \tilde{y}|}{|y\cup \tilde{y}|}, \label{eq:JaccardSetDefinition} \\
H(y,\tilde{y}) :=& 1 - \frac{|y\setminus \tilde{y}| + |\tilde{y}\setminus y|}{d}, \label{eq:HammingSetDefinition} \\
H_\gamma(y,\tilde{y}) :=& 1 - \gamma \frac{|y\setminus \tilde{y}|}{|y|} - (1 - \gamma) \frac{|\tilde{y}\setminus y|}{d - |y|} , \label{eq:hammingbound} \\
T_{\alpha,\beta}(y,\tilde{y}) :=& \frac{|y\cap \tilde{y}|}{|y\cap \tilde{y}| + \alpha |\tilde{y}\setminus y| + \beta | y\setminus \tilde{y}| } , \label{eq:TverskySetDefinition}
\end{align}
where $d$ denotes the number of pixels, $0\leq \gamma \leq 1$, $\alpha>0$, and $\beta>0$.    We note that all these similarities are between 0 and 1, that $H_{\gamma}$ generalizes $H$ with equality when $\gamma = \frac{|y|}{d}$, and that the Tversky index is equivalent to the Dice score when $\alpha = \beta = \frac{1}{2}$ and equivalent to Jaccard when $\alpha = \beta = 1$. A further important relationship is that between the Jaccard index and the Dice score.  It is well known that
\begin{align} \label{eq:DiceJaccardMonotonicRelationship}
J(y,\tilde{y}) = \frac{D(y,\tilde{y})}{2-D(y,\tilde{y})} \text{ and } D(y,\tilde{y}) = \frac{2 J(y,\tilde{y})}{1+J(y,\tilde{y})} .
\end{align}
Indeed, in the risk minimization framework for medical image segmentation, there are numerous examples where each of these measures are optimized \cite{pmid30557049,Sudre2017,TverskyMLMI2017}.

\subsection{Differentiable loss surrogates\label{sec:dif_loss_sur}}

The similarities in Equations \eqref{eq:DiceSetDefinition}-\eqref{eq:TverskySetDefinition} are over a discrete space of binary segmentations, while the output of a predictive function, e.g.\ given by a neural network, is continuous.  In order to perform gradient descent by backpropagation~\cite{Rumelhart:1988:LRB:65669.104451,Goodfellow-et-al-2016} on the empirical risk (Equation~\eqref{eq:empiricalRisk}), we must replace a discrete loss
\begin{align}
1-S : \{0,1\}^d \times \{0,1\}^d \rightarrow \mathbb{R}_+ ,
\end{align}
where $S$ is any of the similarities in \eqref{eq:DiceSetDefinition}-\eqref{eq:TverskySetDefinition},
with a surrogate that is differentiable in the predicted label
\begin{align}
\ell : \{0,1\}^d \times \mathbb{R}^d \rightarrow \mathbb{R}_+ .
\end{align}

For the Hamming similarity (Equation~\eqref{eq:HammingSetDefinition}), cross-entropy loss and other convex surrogates are statistically consistent~\cite{Bartlett2016JASA}.
To optimize the weighted Hamming similarity, one may employ weighted cross-entropy \cite{Ling2010}.  In this section, we will encode sets by binary vectors $y,\tilde{y}\in \{0,1\}^d$, allowing us to write the weighted cross entropy loss as:
\begin{equation}
    \Delta_{H_{\gamma}}(y,\tilde{y})=-\gamma \langle y ,  \log(\tilde{y}) \rangle - (1-\gamma) \langle \mathbf{1}-y, \log(\mathbf{1}-\tilde{y}) \rangle ,
\end{equation}
where $\Delta_{S}$ denotes a differentiable surrogate for a given similarity $S$, $\langle \cdot, \cdot \rangle$ denotes the canonical inner product, the logarithm is taken element-wise over the vector, and $\mathbf{1}$ denotes a $d$-dimensional vector of ones.  To make the surrogate differentiable, one relaxes $\tilde{y}$ to take continuous values $\tilde{y}\in [0,1]^d$, where unbounded scores are constrained to lie in $[0,1]$ via a softmax operation \cite[Equation~(4.1)]{Goodfellow-et-al-2016}.

Similarly, differentiable surrogates have been proposed for the Dice score (e.g.\ soft Dice \cite{MilletariNA16,Sudre2017}), Jaccard index (e.g.\ soft Jaccard \cite{SoftJaccard2016} \review{or} Lov\'{a}sz-softmax \cite{Berman2018a}), and Tversky index (e.g.\ soft Tversky \cite{TverskyMLMI2017}).

The soft Dice loss \cite{MilletariNA16,Sudre2017} generalizes the Dice loss by noting that we have the identity $|y\cap \tilde{y}| = \langle y, \tilde{y}\rangle$.  Furthermore, $|y| = |y\cap y| = \langle y, y\rangle$.  We may now write the Dice loss as
\begin{align}
    \Delta_{D}(y,\tilde{y}) =& 1-\frac{2|y\cap\tilde{y}|}{|y|+|\tilde{y}|} \\
    =& 1-\frac{2 \langle y, \tilde{y} \rangle}{\langle y, y\rangle + \langle \tilde{y},\tilde{y}\rangle}.
    \label{eq:diceloss}
\end{align}
The soft Dice loss is simply the relaxation where $\tilde{y}\in [0,1]^d$.  We note that multiple generalizations are possible, e.g.\ by replacing $\langle \tilde{y},\tilde{y} \rangle = \|\tilde{y}\|_2^2$ with $\|\tilde{y}\|_p^p$ for any $L^p$ norm, and the statistical and optimization properties of the resulting surrogate are dependent on these choices.  We will see below that $\|\tilde{y}\|_1$ is used in related surrogate definitions.

The soft Jaccard loss \cite{SoftJaccard2016} is defined using similar ideas to the soft Dice loss:
\begin{align}
\Delta_{J}(y,\tilde{y}) = 1 - \frac{\langle y,\tilde{y} \rangle}{\|y\|_1 + \|\tilde{y}\|_1 - \langle y, \tilde{y} \rangle} ,
\end{align}
although as above, alternative relaxations of the discrete loss are possible, e.g.\ by replacing the $L^1$ norm with a squared $L^2$ norm.

The Lov\'{a}sz softmax is a strategy for constructing loss surrogates for submodular losses \cite{Berman2018a}.  The Jaccard loss has been proven to be submodular \cite[Proposition~11]{Yu2018a}, while the Dice loss is not \cite[Proposition 6]{Yu2016a}.  As with other surrogates described in this section, the Lov\'{a}sz softmax first projects pixel scores into $[0,1]$ using a softmax operation, and then computes the Lov\'{a}sz extension \cite{Lovasz1983} of the submodular loss function, which is convex and differentiable almost everywhere, giving it favorable properties as a loss surrogate.

We finally describe the soft Tversky loss function:
\begin{equation}\label{eq:tversky}
\begin{aligned}
    \Delta_{T_{\alpha,\beta}}(y,\tilde{y}) = 1- \frac{\langle y, \tilde{y} \rangle}{\langle y, \tilde{y} \rangle + \alpha \langle \mathbf{1}-y, \tilde{y} \rangle + \beta \langle y, \mathbf{1}-\tilde{y} \rangle} \\
    = 1- \frac{\langle y, \tilde{y} \rangle}{ (1-\alpha-\beta)\langle y, \tilde{y} \rangle +  \alpha \| \tilde{y} \|_1  +  \beta \| y \|_1  }.
\end{aligned}
\end{equation}


Next, we discuss the absolute and relative approximations between Dice, Jaccard, and Tversky, and inspect the existence of an approximation through a weighted Hamming similarity.

\subsection{Approximation bounds}\label{sec:ApproximationBounds}

\begin{definition}[Absolute approximation]
A similarity $S$ is absolutely approximated by $\tilde{S}$ with error $\varepsilon \geq 0$ if the following holds for all $y$ and $\tilde{y}$:
\begin{align}
    | S(y,\tilde{y}) - \tilde{S}(y,\tilde{y}) | \leq \varepsilon .
\end{align}
\end{definition}
\begin{definition}[Relative approximation]
A similarity $S$ is relatively approximated by $\tilde{S}$ with error $\varepsilon \geq 0$ if the following holds for all $y$ and $\tilde{y}$:
\begin{align}
    \frac{\tilde{S}(y,\tilde{y})}{1+\varepsilon} \leq S(y,\tilde{y}) \leq \tilde{S}(y,\tilde{y})(1+\varepsilon) .
\end{align}
\end{definition}
We note that both notions of approximation are symmetric in $S$ and $\tilde{S}$.

\begin{proposition}\label{thm:DiceJaccardApproximation}
$J$ and $D$ approximate each other with relative error of $1$ and absolute error of $3 - 2\sqrt{2}=0.17157\dots$. \end{proposition}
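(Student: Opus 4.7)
The plan is to exploit the monotonic relationship in \eqref{eq:DiceJaccardMonotonicRelationship}, which expresses $J$ as a function of $D$ alone (and vice versa), so that both bounds reduce to a one-variable calculus problem on the interval $[0,1]$.

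First I would establish the ordering $J \leq D$. From $D = \frac{2J}{1+J}$ with $J \in [0,1]$, we have $1+J \in [1,2]$, so $J \leq D \leq 2J$. Rewriting the upper inequality gives $D/2 \leq J$, and therefore
\begin{align*}
\frac{D}{1+1} \leq J \leq D(1+1),
\end{align*}
which is exactly the relative approximation with $\varepsilon = 1$. Tightness follows by letting $D \to 0^+$ along sequences where $J/D \to 1/2$ (for instance $|y \cap \tilde y| = 1$ with $|y| + |\tilde y|$ large), showing that $\varepsilon = 1$ cannot be improved.

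For the absolute bound I would substitute $J = D/(2-D)$ into $D - J$ to obtain, on $D \in [0,1]$,
\begin{align*}
g(D) := D - J(D) = D - \frac{D}{2-D} = \frac{D(1-D)}{2-D} \geq 0,
\end{align*}
so that $|D - J| = g(D)$. Differentiating and setting $g'(D) = 0$ yields the quadratic $D^2 - 4D + 2 = 0$ in the numerator, whose root in $[0,1]$ is $D^\ast = 2 - \sqrt{2}$. Plugging back in and simplifying gives
\begin{align*}
g(2-\sqrt{2}) = \frac{(2-\sqrt{2})(\sqrt{2}-1)}{\sqrt{2}} = 3 - 2\sqrt{2},
\end{align*}
and since $g(0) = g(1) = 0$ this is the global maximum, establishing the claimed absolute error. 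An explicit witness (any $y,\tilde y$ whose Dice equals $2-\sqrt{2}$, e.g.\ with appropriately chosen $|y \cap \tilde y|$, $|y|$, $|\tilde y|$) shows the bound is tight.

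The only mild subtlety is verifying that the critical point of $g$ lies in $[0,1]$ and is indeed a maximum rather than a saddle; the other root $2+\sqrt{2}$ lies outside $[0,1]$, and $g \geq 0$ vanishing at both endpoints forces the interior critical point to be the maximizer. Everything else is routine algebra, so I do not anticipate a serious obstacle — the argument is essentially a one-variable optimization after invoking \eqref{eq:DiceJaccardMonotonicRelationship}.
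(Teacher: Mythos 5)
Your proposal is correct and follows essentially the same route as the paper: both reduce the problem to a one-variable optimization via the identity $J = D/(2-D)$, obtain the relative error $\varepsilon=1$ from the limit $D\to 0$ (where $J/D \to 1/2$), and locate the absolute-error maximizer at $D = 2-\sqrt{2}$ by first-order conditions, yielding $3-2\sqrt{2}$. Your version is slightly more explicit about sign, endpoint values, and tightness witnesses, but the argument is the same.
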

\begin{proof}
The relative error between $J$ and $D$ is given by (cf.\ Equation~\eqref{eq:DiceJaccardMonotonicRelationship})
\begin{align}
    \min_{\varepsilon\geq 0} \varepsilon ,\
    \text{s.t. } 
    x \leq \frac{x}{2-x} (1+\varepsilon),  \ \forall \ 0\leq x \leq 1. \\
    x \leq \frac{x}{2-x}(1+\varepsilon) 
    \implies 1-x \leq \varepsilon \implies \varepsilon = 1 .
\end{align}

The absolute error between $J$ and $D$ is given by 
\begin{align}
   \varepsilon = \sup_{0\leq x \leq 1} \left| x - \frac{x}{2-x} \right| = 3 - 2\sqrt{2} ,
\end{align}
which can be verified straightforwardly by first order conditions:
\begin{align}
\frac{\partial}{\partial x} \left( x - \frac{x}{2-x} \right) 
= 0 \implies & (2-x)^2 - 2 = 0 \\ 
\implies & x = 
2 - \sqrt{2}. 
\end{align}
\end{proof}

\begin{proposition}
$D$ and $H_{\gamma}$ (where $\gamma$ is chosen to minimize the approximation factor between $D$ and $H_{\gamma}$) do not relatively approximate each other,  and absolutely approximate each other with an error of $1$. We note that the absolute error bound is trivial as $D$ and $H_{\gamma}$ are both similarities in the range $[0,1]$.
\end{proposition}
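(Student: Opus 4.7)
The plan is to dispatch the absolute bound immediately (as the proposition itself flags), and then do the substantive work of showing that no value of $\gamma \in [0,1]$ produces a finite relative approximation error. Since $D,H_\gamma \in [0,1]$, the inequality $|D - H_\gamma| \le 1$ is automatic, so only the relative part requires construction.

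For the relative claim, the driving intuition is that $D$ can collapse to $0$ while $H_\gamma$ remains bounded away from $0$ (or vice versa, if $\gamma = 1$), and any such gap destroys a relative approximation because $0 \le H_\gamma(y,\tilde{y})(1+\varepsilon)$ cannot be forced to bound a positive quantity away from zero from above. First I would fix $\gamma$ arbitrary and note that a single pair $(y,\tilde{y})$ with $D(y,\tilde{y}) = 0$ and $H_\gamma(y,\tilde{y}) > 0$ is enough to rule out \emph{every} finite $\varepsilon$ in the inequality $H_\gamma(y,\tilde{y}) \le (1+\varepsilon)\, D(y,\tilde{y})$.

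I would then split into two cases on $\gamma$. Case $\gamma \in [0,1)$: take any nonempty $y$ and $\tilde{y} = \emptyset$. Then $|y \cap \tilde{y}| = 0$, so $D(y,\tilde{y}) = 0$, while directly from \eqref{eq:hammingbound},
\begin{equation*}
H_\gamma(y,\emptyset) = 1 - \gamma \cdot \frac{|y|}{|y|} - (1-\gamma)\cdot \frac{0}{d - |y|} = 1 - \gamma > 0,
\end{equation*}
so no finite $\varepsilon$ satisfies $H_\gamma \le (1+\varepsilon) D$. Case $\gamma = 1$: the previous counterexample degenerates because $H_1(y,\emptyset) = 0$, so I would instead take $|y| = 1$ and $\tilde{y}$ equal to the whole pixel grid, so that $y \subseteq \tilde{y}$ and $|y \setminus \tilde{y}| = 0$. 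Then $H_1(y,\tilde{y}) = 1$ while $D(y,\tilde{y}) = 2/(d+1)$, and letting $d$ be arbitrarily large drives $D \to 0$ while $H_1 \equiv 1$, again ruling out any finite $\varepsilon$.

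The main obstacle is purely bookkeeping: recognizing that the natural counterexample $\tilde{y} = \emptyset$ fails precisely at $\gamma = 1$ (because $H_1$ then also vanishes, matching $D = 0$ in a way compatible with relative approximation), so the boundary case must be handled separately by letting the domain size $d$ grow. Once that case split is in hand, no further computation is needed, and the absolute bound of $1$ is immediate from the common range $[0,1]$ of $D$ and $H_\gamma$.
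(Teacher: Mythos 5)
Your treatment of the relative part is correct, and it takes a genuinely different route from the paper's. The paper builds a single family of pairs with $|y\setminus\tilde y|=0$, $|\tilde y\setminus y|=\alpha d$, $|y\cap\tilde y|=\alpha^2 d$, on which $H_\gamma \geq 1-\tfrac{\alpha}{1-\alpha^2}\to 1$ \emph{uniformly in $\gamma$} (the $\gamma$-weighted term vanishes identically and the $(1-\gamma)$-weighted term is $O(\alpha)$) while $D=\tfrac{2\alpha}{1+2\alpha}\to 0$, so one limit disposes of every $\gamma$ at once. You instead fix $\gamma$ and split cases: $\tilde y=\emptyset$ gives $D=0<1-\gamma=H_\gamma$ for $\gamma<1$, and the ``predict everything'' pair with $|y|=1$ handles $\gamma=1$ by driving $H_1/D=(d+1)/2\to\infty$. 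Both arguments are sound; yours is more elementary per case but needs the case split, and your correct observation that the $\tilde y=\emptyset$ example degenerates exactly at $\gamma=1$ is the key bookkeeping point. Note that both your $\gamma=1$ case and the paper's construction implicitly require $d$ to be unbounded; for a fixed image size the $\gamma=1$ ratio is in fact bounded by $(d+1)/2$, so the proposition is really about a $d$-independent constant.

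Where you fall short is the absolute claim. As in Proposition~2.1, ``absolutely approximate each other with an error of $1$'' is meant as the exact optimal constant, so besides the trivial upper bound $|D-H_\gamma|\leq 1$ one must show that for every $\gamma$ the supremum of $|D-H_\gamma|$ reaches $1$ in the limit. The paper does this with the same family, computing $\lim_{\alpha\to 0}\bigl(1-\tfrac{\alpha}{1-\alpha^2}-\tfrac{2\alpha}{1+2\alpha}\bigr)=1$. Your two counterexamples yield $|D-H_\gamma|$ equal to $1-\gamma$ and (asymptotically) $\gamma$ respectively, hence only a lower bound of $\max(\gamma,1-\gamma)\geq \tfrac12$, which does not establish tightness for intermediate $\gamma$. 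To close this you need a single family on which both terms of $H_\gamma$ are simultaneously small while $D$ is small, e.g.\ $y\subseteq\tilde y$ with $|y|\ll|\tilde y|\ll d$, which is exactly what the paper's $(\alpha d,\alpha^2 d)$ parametrization provides.
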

\begin{proof}
For relative error, consider the case that $|y\setminus \tilde{y}|=0$, $|\tilde{y}\setminus y| = \alpha d$, and $|y \cap \tilde{y}| = \alpha^2 d$ for some $0\leq \alpha<\frac{\sqrt{5}-1}{2}$ and $d$ the number of pixels:
\begin{align}
    \inf_{\gamma} \sup_{y, \tilde{y}}\  & 1 - \gamma \frac{|y \setminus \tilde{y}|}{|y|} - (1-\gamma)\frac{|\tilde{y} \setminus y|}{d-|y|} \nonumber \\ & - \frac{2|y \cap \tilde{y}|}{|y \triangle \tilde{y}| +  2|y \cap \tilde{y}|} (1+\varepsilon) \leq 0  \\
    \implies \sup_{0\leq\alpha<\frac{\sqrt{5}-1}{2}} \  & 1 - \frac{\alpha }{1- \alpha^2 } -  \frac{2\alpha^2 }{\alpha  + 2\alpha^2 } (1+\varepsilon) \leq 0 .
\end{align}
If we let $\alpha \rightarrow 0$, it must be the case that $\varepsilon \rightarrow \infty$.

To show that the absolute approximation error is 1, we similarly take
\begin{align}
    \lim_{\alpha\rightarrow 0} 1 - \frac{\alpha }{1- \alpha^2 } -  \frac{2\alpha }{1  + 2\alpha} = 1. \rlap{$\qquad \qed$}
\end{align}
\end{proof}

\begin{corollary}\label{thm:DiceHammingNoApproximation}
$D$ and $H$ do not relatively approximate each other, and absolutely approximate each other with an error of $1$.
\end{corollary}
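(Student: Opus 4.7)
The plan is to deduce this corollary directly from the preceding proposition by exploiting the remark following Equation~\eqref{eq:hammingbound} that $H$ is precisely $H_\gamma$ for the $y$-dependent choice $\gamma = |y|/d$. Because the previous proposition already rules out any constant $\gamma$, the only thing I would need to verify is that the specific data-dependent weight $|y|/d$ built into $H$ does not accidentally rescue the approximation.

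The key steps, in order, would be: first, reuse the adversarial family from the proof of the previous proposition, namely segmentations satisfying $|y\setminus \tilde{y}| = 0$, $|\tilde{y}\setminus y| = \alpha d$, and $|y\cap \tilde{y}| = \alpha^2 d$ for small $\alpha > 0$; second, observe that this construction forces $|y| = \alpha^2 d$, so the built-in weight in $H$ is simply $\gamma = \alpha^2$; third, substitute directly into Equations~\eqref{eq:DiceSetDefinition} and \eqref{eq:HammingSetDefinition} to obtain closed-form expressions in $\alpha$ that behave like $D(y,\tilde{y}) = \tfrac{2\alpha}{1+2\alpha} \to 0$ and $H(y,\tilde{y}) = 1-\alpha \to 1$ as $\alpha \to 0^+$. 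The ratio $H/D$ then diverges, which refutes the existence of any finite relative-approximation constant $\varepsilon$ with $H \leq D(1+\varepsilon)$.

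For the absolute bound, I would simply remark that $|D - H| \leq 1$ is automatic because both similarities lie in $[0,1]$, and that tightness of this bound is witnessed by the very same limit $\alpha \to 0^+$, along which $|D - H| \to 1$.

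The main obstacle is really only the cautious verification in the second step: since $H$ commits itself to the data-dependent weight $\gamma = |y|/d$ rather than allowing a free optimization over $\gamma$, one might worry that this weighting cancels the pathology exhibited by the free-$\gamma$ argument of the previous proposition. The intuition for why it does not is that $H$ weighs false positives uniformly by $1/d$, so in the adversarial regime where $|\tilde{y}\setminus y| = \alpha d$ dominates the denominator, $H$ still stays close to $1-\alpha$ while $D$ collapses to $0$, so the blow-up of the ratio survives and the corollary follows.
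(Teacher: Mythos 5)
Your proposal is correct and takes essentially the same route the paper intends: the corollary is meant to follow from the preceding proposition, whose adversarial family ($|y\setminus\tilde{y}|=0$, $|\tilde{y}\setminus y|=\alpha d$, $|y\cap\tilde{y}|=\alpha^2 d$) already bounds $H_\gamma \geq 1-\tfrac{\alpha}{1-\alpha^2}$ uniformly in $\gamma$, hence in particular for the data-dependent weight $\gamma=|y|/d$ that recovers $H$; your explicit check that this built-in weight does not rescue the approximation is exactly the point that needs verifying, and your computations ($D=\tfrac{2\alpha}{1+2\alpha}\to 0$, $H=1-\alpha\to 1$) are right.
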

From these bounds, we see that a (weighted) binary loss can be an arbitrarily bad approximation for Dice when segmenting small objects, while the Jaccard loss gives multiplicative and additive approximation guarantees. Furthermore,  Eq.~\eqref{eq:DiceJaccardMonotonicRelationship} implies that
\begin{align}
&1-D(y,\tilde{y}) \leq 1-J(y,\tilde{y}) \implies \\
&\mathbb{E}_{(x,y)\sim P_n}[1-D(y,f(x))] \leq \mathbb{E}_{(x,y)\sim P_n}[1-J(y,f(x))]
\end{align}
and optimization with risk computed with the Jaccard loss minimizes an upper bound on risk computed with the Dice loss.  Similarly setting $\varphi(x) =  2x/(1+x)$, by application of Jensen's inequality we arrive at
\begin{align}
\mathbb{E}_{(x,y)\sim P_n}[1-J(y,f(x))] = \qquad \qquad \qquad \qquad  \\
\mathbb{E}_{(x,y)\sim P_n}[\varphi(1-D(y,f(x)))] \nonumber \\
\leq  \varphi(\mathbb{E}_{(x,y)\sim P_n}[1-D(y,f(x))])
\end{align}
and optimizing the Dice loss minimizes an upper bound on the Jaccard loss as $\varphi$ is a monotonic function over $[0,1]$.

\begin{proposition}\label{prop:TverskyDiceApproximation}
The Tversky index approximates the Dice score with an absolute error of $ \max\left( \left| \frac{\sqrt{2\alpha}-1}{\sqrt{2\alpha}+1}\right|, \left| \frac{\sqrt{2\beta}-1}{\sqrt{2\beta}+1}\right| \right)$ and a relative error of $\max(2\alpha,2\beta,0.5 \alpha^{-1},0.5 \beta^{-1})-1$.
\end{proposition}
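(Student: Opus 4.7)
The plan is to reduce both error computations to one-dimensional optimizations via well-chosen reparametrizations. Set $a = |y \cap \tilde{y}|$, $b = |\tilde{y}\setminus y|$, $c = |y \setminus \tilde{y}|$, so that $D = 2a/(2a+b+c)$ and $T_{\alpha,\beta} = a/(a+\alpha b + \beta c)$. Both are homogeneous of degree zero in $(a,b,c)$, and the degenerate cases $a = 0$ and $(b,c) = (0,0)$ give $D = T_{\alpha,\beta}$ trivially, so I restrict attention to $a > 0$ with $(b,c)$ arbitrary.

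For the relative error, I would rewrite
\begin{align}
\frac{D}{T_{\alpha,\beta}} = \frac{2(a + \alpha b + \beta c)}{2a + b + c} = \frac{2a}{2a+b+c}\cdot 1 + \frac{b}{2a+b+c}\cdot 2\alpha + \frac{c}{2a+b+c}\cdot 2\beta.
\end{align}
The three coefficients are nonnegative and sum to one, so $D/T_{\alpha,\beta}$ is a convex combination of $\{1, 2\alpha, 2\beta\}$ and ranges over $[\min(1, 2\alpha, 2\beta),\ \max(1, 2\alpha, 2\beta)]$, the endpoints being attained in the limit where the dominating coefficient tends to $1$. Hence $\sup D/T_{\alpha,\beta} = \max(1, 2\alpha, 2\beta)$ and, by reciprocation, $\sup T_{\alpha,\beta}/D = \max(1, (2\alpha)^{-1}, (2\beta)^{-1})$. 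Combining the two and using that $2\alpha \ge 1 \Leftrightarrow (2\alpha)^{-1} \le 1$ (so a $1$ in the max is always redundant) yields the sharp bound $1+\varepsilon = \max(2\alpha, 2\beta, (2\alpha)^{-1}, (2\beta)^{-1})$, which is the claim.

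For the absolute error, introduce $u = b/a$ and $v = c/a$; direct algebra gives
\begin{align}
D - T_{\alpha,\beta} = \frac{(2\alpha - 1)u + (2\beta - 1)v}{(2+u+v)(1+\alpha u + \beta v)}.
\end{align}
The key reparametrization is now $w := u + v$ and $r := 2(\alpha u + \beta v)/(u+v)$. Since $r$ is itself a convex combination of $2\alpha$ and $2\beta$, it ranges over $[\min(2\alpha, 2\beta),\ \max(2\alpha, 2\beta)]$, and crucially this range is independent of $w \in [0,\infty)$. Substituting $\alpha u + \beta v = wr/2$ collapses the expression to
\begin{align}
D - T_{\alpha,\beta} = \frac{2w(r-1)}{(2+w)(2+wr)}.
\end{align}
For each fixed $r > 0$, elementary calculus shows a unique interior critical point at $w^{\star} = 2/\sqrt{r}$, with extremal value $(\sqrt{r}-1)/(\sqrt{r}+1)$ (the boundary values at $w = 0$ and $w \to \infty$ are both $0$). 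The scalar function $g(r) := |(\sqrt{r}-1)/(\sqrt{r}+1)|$ vanishes at $r = 1$ and is strictly monotone on each of $(0,1]$ and $[1,\infty)$, so its maximum over any closed interval is attained at an endpoint, yielding precisely $\max\bigl(|(\sqrt{2\alpha}-1)/(\sqrt{2\alpha}+1)|,\ |(\sqrt{2\beta}-1)/(\sqrt{2\beta}+1)|\bigr)$.

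The main obstacle is discovering the $(w, r)$ substitution: without it one is stuck with a bivariate optimization of a rational function whose critical-point equations are unpleasant to solve directly (one obtains a system coupling $u$ and $v$ through second-degree terms). The insight that pays off is recognizing $r$ as an externally-constrained convex combination living in a closed interval, so that the problem cleanly splits into a univariate calculus exercise in $w$ followed by a monotonicity observation in $r$.
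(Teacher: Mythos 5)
Your proof is correct, and for the absolute error it takes a genuinely different route from the paper's. The paper parametrizes by the simplex fractions of intersection and the two difference sets, handles $\alpha=\beta$ as a one-variable problem via first-order conditions, and for $\alpha\neq\beta$ observes that the directional derivative of $D-T_{\alpha,\beta}$ along the $a-c$ direction (holding the intersection fixed) has constant sign because the Dice terms cancel, so the maximizer of $|D-T_{\alpha,\beta}|$ lies on a face where $|\tilde{y}\setminus y|=0$ or $|y\setminus\tilde{y}|=0$, reducing again to the one-variable case. Your $(w,r)$ substitution reaches the same extremal configurations ($r=2\alpha$ and $r=2\beta$ correspond exactly to those two faces) but gets there by decoupling the problem into a clean univariate optimization in $w$ with the explicit value $(\sqrt{r}-1)/(\sqrt{r}+1)$, followed by quasi-convexity of $g$ in $r$; this treats $\alpha=\beta$ and $\alpha\neq\beta$ uniformly and makes the critical point $w^{\star}=2/\sqrt{r}$ and the resulting error fully explicit, where the paper leaves the algebra to the reader (one can check that your $w^{\star}$ matches the paper's maximizer $b=(\alpha-\sqrt{2\alpha})/(\alpha-2)$ after the change of normalization). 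For the relative error the two arguments are essentially the same: the paper evaluates the ratio of affine functions at the simplex vertices, while your rewriting of $D/T_{\alpha,\beta}$ as a convex combination of $\{1,2\alpha,2\beta\}$ is a slightly more transparent justification of why the vertices suffice and why the $1$ is redundant in the final maximum.
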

\begin{proof}
The absolute error is given by
\begin{align}
   \sup_{y,\tilde{y}} |D(y,\tilde{y}) - T_{\alpha,\beta}(y,\tilde{y})| = \\ \sup_{y,\tilde{y}} \left| \frac{2 |y\cap \tilde{y}|}{|y| + |\tilde{y}|} - \frac{|y\cap \tilde{y}|}{|y\cap \tilde{y}| + \alpha |\tilde{y}\setminus y| + \beta | y\setminus \tilde{y}| } \right| = \\
   \sup_{a,b,c\geq 0,\ a+b+c = 1} \left| \frac{b}{b+\alpha a + \beta c} - \frac{b}{b + 0.5 a + 0.5 c} \right|. 
\end{align}
If $\alpha=\beta$, we can simplify this to an equation of one variable
\begin{align} \label{eq:TverskyDiceAbsApprox_oneVariableEq}
\sup_{0\leq b\leq 1} \left| \frac{b}{b+\alpha(1-b)} - \frac{b}{b+0.5(1-b)}\right|.
\end{align}
From first order conditions, this is maximized when $b=\frac{\alpha - \sqrt{2\alpha }}{\alpha-2}$, and the resulting error is therefore $ \left| \frac{\sqrt{2\alpha}-1}{\sqrt{2\alpha}+1}\right|$.

For the case that $\alpha\neq \beta$, we note that the directional derivative in the direction $a-c$ has constant sign (i.e.\ it is monotonic), indicating that at the maximum over the simplex, $a=0$ or $c=0$.  This indicates that the problem simplifies to taking the maximum over the special cases that $a=0$ or $c=0$, which in both cases simplifies to Equation~\eqref{eq:TverskyDiceAbsApprox_oneVariableEq}, and the absolute approximation error is $\max\left( \left| \frac{\sqrt{2\alpha}-1}{\sqrt{2\alpha}+1}\right| ,  \left| \frac{\sqrt{2\beta}-1}{\sqrt{2\beta}+1}\right| \right)$.

The relative error can similarly be computed as
\begin{align}
1+\varepsilon=& \max\left( \sup_{y,\tilde{y}} \frac{T_{\alpha,\beta}(y,\tilde{y})}{D(y,\tilde{y})}, \sup_{y,\tilde{y}} \frac{D(y,\tilde{y})}{T_{\alpha,\beta}(y,\tilde{y}} \right) \\
=& \max\left(
\sup_{a,b,c} \frac{b+0.5a +0.5c}{b+\alpha a + \beta c} , \sup_{a,b,c} \frac{b + \alpha a + \beta c}{b +0.5 a + 0.5 c}
\right)  ,
\end{align}
where in each $\sup$, $0<a+b+c\leq 1$ and $a,b,c\geq 0$.
The first $\sup$ is maximized by determining $\max( 0.5 \alpha^{-1}, 0.5 \beta^{-1})$ and setting $a = 1$ or $c=1$, respectively.  The second $\sup$ is similarly maximized by determining $\max(2\alpha, 2\beta)$ and setting either $a=1$ or $c=1$, respectively.  The approximation error can then be determined by maximizing over all 4 cases.
\end{proof}
The absolute and relative errors are shown in Fig.~\ref{fig:TverskyAbsRelErrors}.  The Tversky index does give non-trivial approximation errors for the Dice score for both absolute and relative error.  However, in both cases, the error bounds are increasing as $\alpha$ and $\beta$ deviate from 0.5.  This may provide the ability to trade off the performance on the Dice measure with that of other quality measures, but optimal performance on the Dice measure will be achieved at $\alpha=\beta=0.5$, discounting eventual small sample effects or biases introduced by differentiable surrogates.  Furthermore, if a trade-off between Dice and other quality measures is desired, one may simply take a weighted combination of loss functions for each of the desired metrics, directly encoding the relative importance of each.
\begin{figure}
    \centering
    \begin{tabularx}{\linewidth}{XX}
        \includegraphics[height=\linewidth]{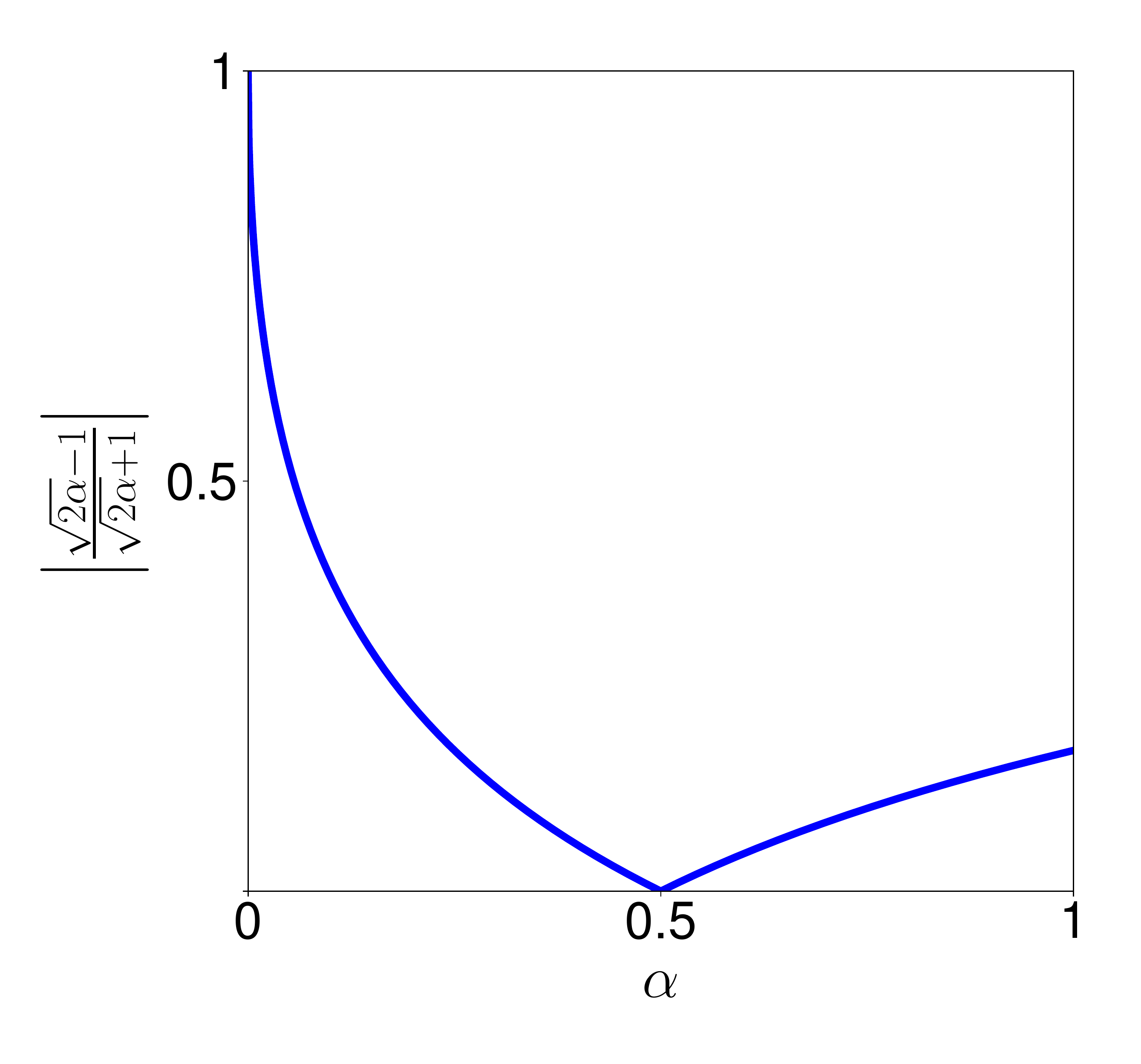}
        &\includegraphics[height=\linewidth]{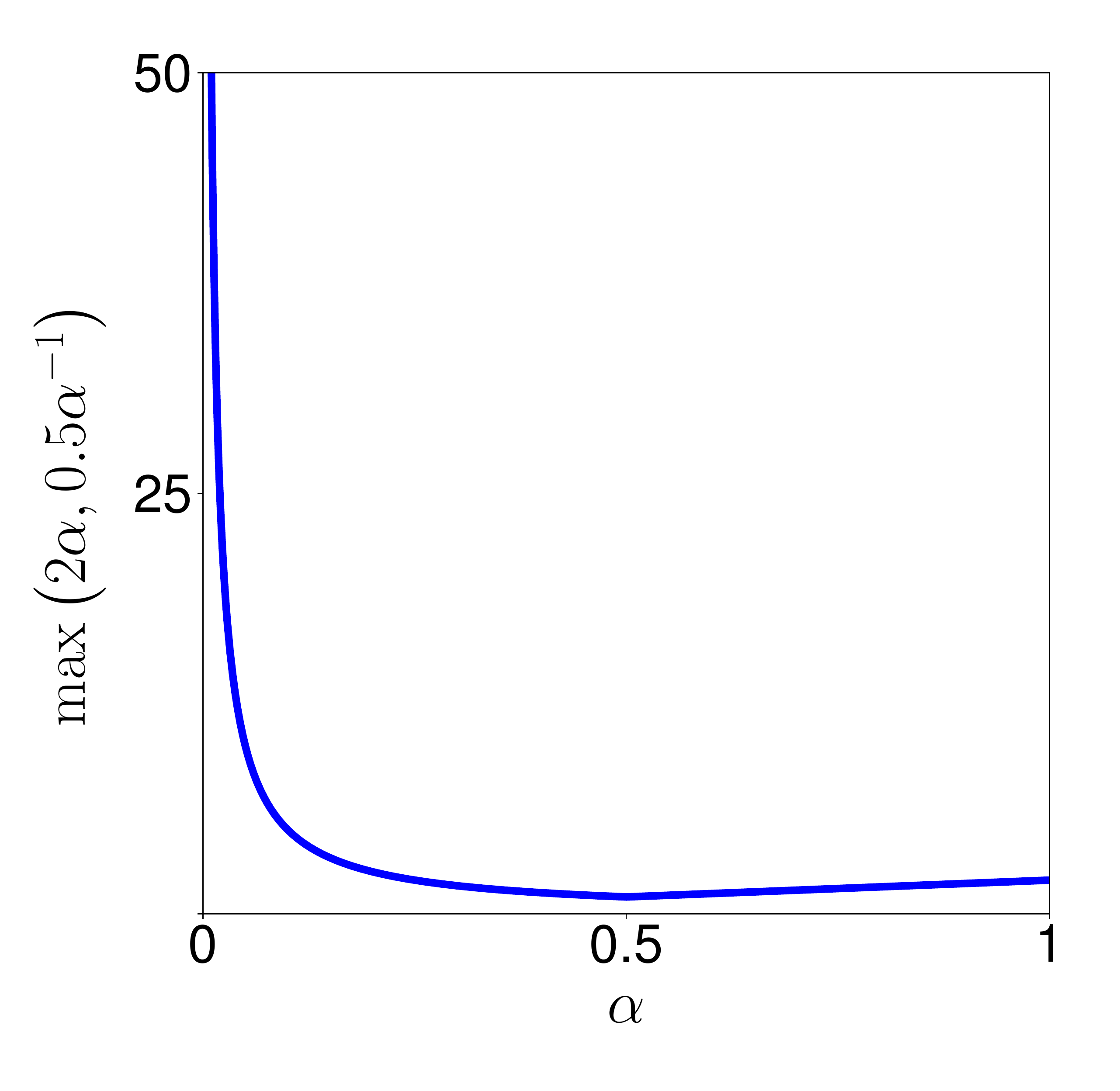} \\
    \end{tabularx}
    \caption{Absolute (left) and relative (right) errors for the Dice score and Tversky indices as a function of the Tversky parameter that maximally deviates from $0.5$ (see Proposition~\ref{prop:TverskyDiceApproximation}).}
    \label{fig:TverskyAbsRelErrors}
\end{figure}

\review{In this section, we have shown that Dice, Jaccard, and Tversky similarities give approximation bounds to each other, while cross-entropy and weighted cross-entropy do not approximate the Dice score.  This indicates that risk minimization with Dice, Jaccard, or Tversky will give some performance guarantees when measuring with the Dice score at test time.  On the other hand, training with (weighted) cross-entropy can yield arbitrarily poor results when evaluating with the Dice score. Approximation bounds from the Tversky index indicate increasing degradation in performance on the Dice score as parameters deviate from $\alpha=\beta=0.5$ suggesting best performance when training with the Dice loss, in correspondence with the risk minimization principle \cite{Vapnik:1995:NSL:211359}.
In the next section, we describe how we validate these theoretical results empirically in a range of medical image segmentation settings.}

\section{Empirical setup}
\label{sec:empirical_setup}
To investigate whether the aforementioned properties hold in practice, we investigate the performance of segmentation networks trained with different loss functions: cross-entropy (CE), weighted cross-entropy (wCE), soft Dice (sDice), soft Jaccard (sJaccard), Lov\'{a}sz-sigmoid (Lov\'{a}sz; this is the binary equivalent to the Lov\'{a}sz-softmax) and soft Tversky loss (sTversky). \review{The empirical validation is performed on the following six binary medical segmentation tasks:}
\begin{itemize}
	\item \textbf{BR18} - whole tumour segmentation on pre-operative MRI (BRATS 2018~\cite{Menze2015, Bakas2017, Bakas2018, brats2018}) - $N=285$
	\item \textbf{IS17} - final infarction after ischemic stroke treatment segmentation from acute MRI perfusion (ISLES 2017~\cite{isles2017}) - $N=43$
	\item \textbf{IS18} - ischemic core segmentation on acute CT perfusion (ISLES 2018~\cite{isles2018}) - $N=94$
	\item \textbf{MO17} - lower-left third molar segmentation on panoramic dental radiographs~\cite{DeTobel2017} - $N=400$
	\item \textbf{PO18} - colorectal polyp segmentation on colonoscopy images~\cite{eelbode2019tu1931} - $N=1166$
	\item \review{\textbf{WM17} - segmentation of white matter hyperintensities on MRI (WMH 2017~\cite{Kuijf2019})  - $N=60$}
\end{itemize}
The datasets BR18, IS17, IS18 and \review{WM17} are publicly available and 3D. The datasets MO17 and PO18 are in-house and 2D. An additional empirical validation for a multi-class segmentation task is given in Sec.~\ref{sec:multiclass} for BRATS 2018 with segmentation of different glioma sub-regions. The segmentation volumes considered for evaluation were the whole tumor (WT), tumor core (TC) and enhancing tumor (ET). \review{The inclusion of this allows us to empirically validate if the conclusions that will be drawn for binary image segmentation can still hold for individual structures, when the de facto optimization objective becomes the average loss across multiple structures.}\\

\subsection{Data preprocessing and network architectures}
\label{sec:data_preprocessing}
We use a 3D U-Net-like~\cite{Ronneberger2015a} architecture for the \review{four} 3D datasets BR18, IS17, IS18 and \review{WM17}. For MO17, we use the same architecture but with the 3D convolutions substituted by 2D convolutions. More specifically, we start from the No New-Net~\cite{Isensee2018} architecture, which was a top-ranked implementation during BRATS 2018. We have one level less and compensate with 3x3(x3) pooling and upsampling to keep sufficient field-of-view. The number of filters evolve similarly, starting with 20 in the first layer. For a richer comparison, a fully convolutional network with a VGG16 backbone and atrous convolutions, pretrained on ImageNet, i.e. DeepLab~\cite{Chen2016,long2015fully}, is used for PO18. \review{On WM17, additional runs with a DeepMedic-like~\cite{Kamnitsas2017} network architecture were done, further referred to as WM17\textsuperscript{DM}, depending on the context.}
In case of the multi-class segmentation task, we adapt the model with three sigmoid layers in the final layer. Thereby allowing a single voxel to be classified into one or more of the evaluation volumes.

We make use of all the available image modalities in each dataset to construct the input tensors (we exclude perfusion data for IS17 and IS18). In order to fit the memory constraints for whole-image processing, these input tensors are resized and cropped. The images of all public datasets are resampled to an isotropic voxel-size of 2 mm and\review{, except for WM17,} cropped to a fixed size of 136x136x82. \review{For WM17, we did patch-wise processing (with the centroids of the patches sampled uniformly inside the head region) instead with a patch size of 136x136x82 or 15x15x15 for U-Net or DeepMedic, respectively.} For MO17, we extract a 217x217 ROI around the geometrical center of the third molar from the panoramic radiograph at two times lower resolution. The images of PO18 are resampled to a fixed size of 384x288. All modalities are normalized according to its dataset's mean and standard deviation, except for PO18 where ImageNet statistics are used. We further use extensive data augmentation for all datasets, including Gaussian noise, lateral flipping and rigid transformations.\\

\begin{figure*}[htbp]%
\sbox\foursubbox{%
  \resizebox{\textwidth}{!}{%
    \includegraphics[height=3cm]{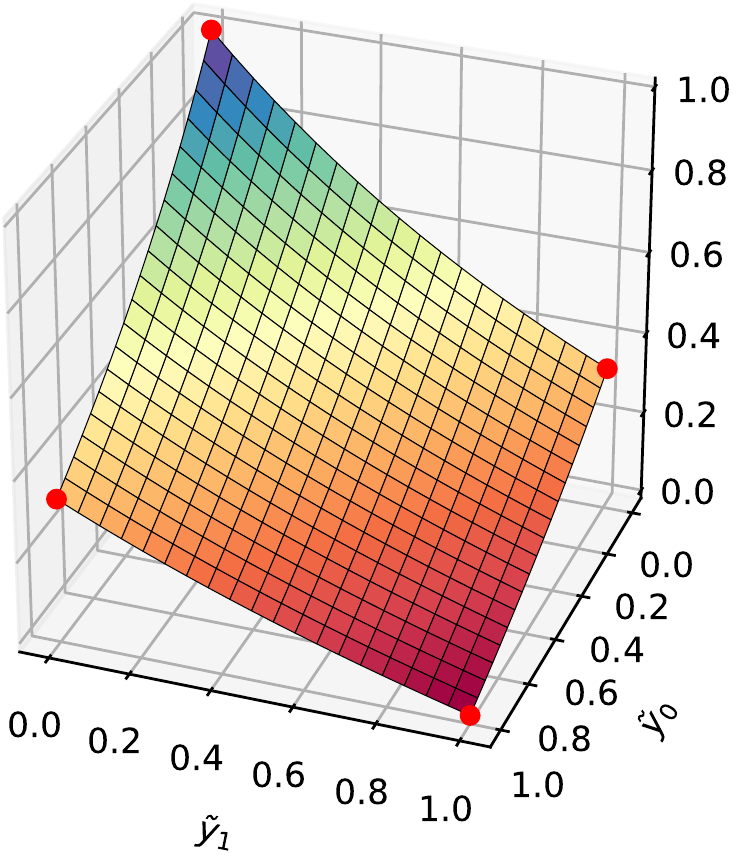}%
    \includegraphics[height=3cm]{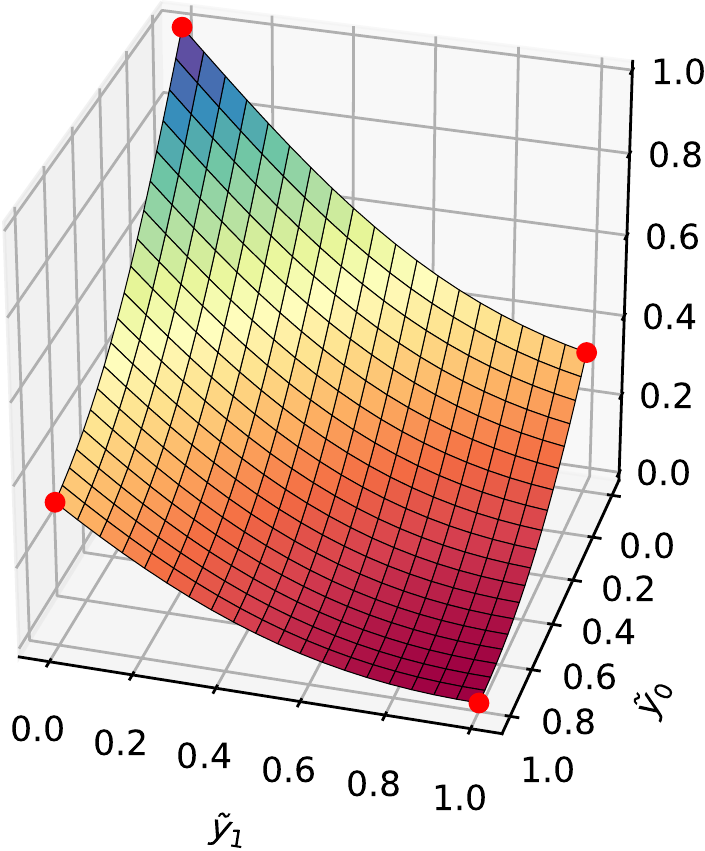}%
    \includegraphics[height=3cm]{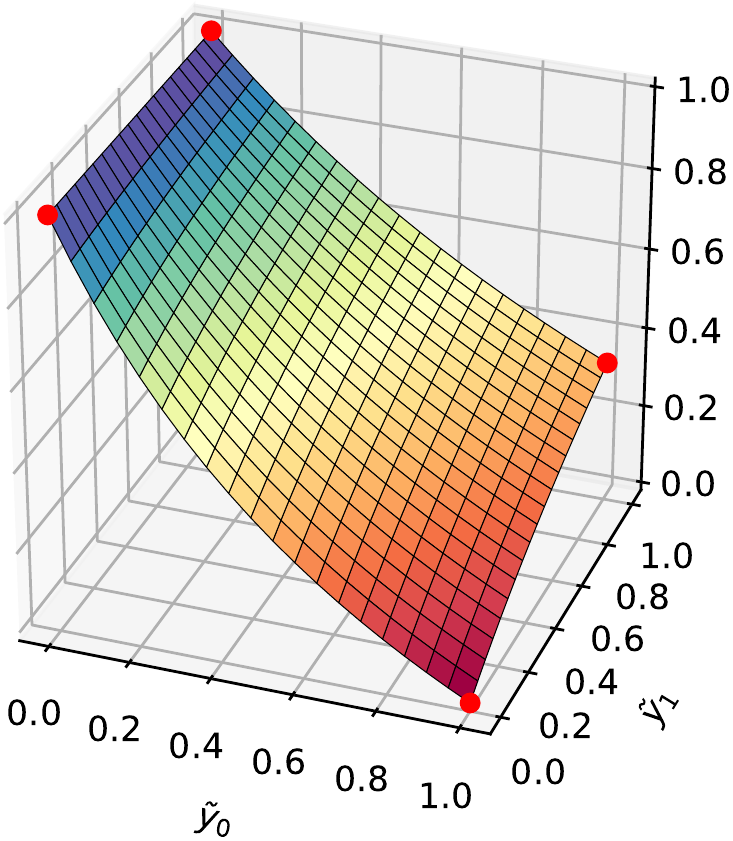}%
    \includegraphics[height=3cm]{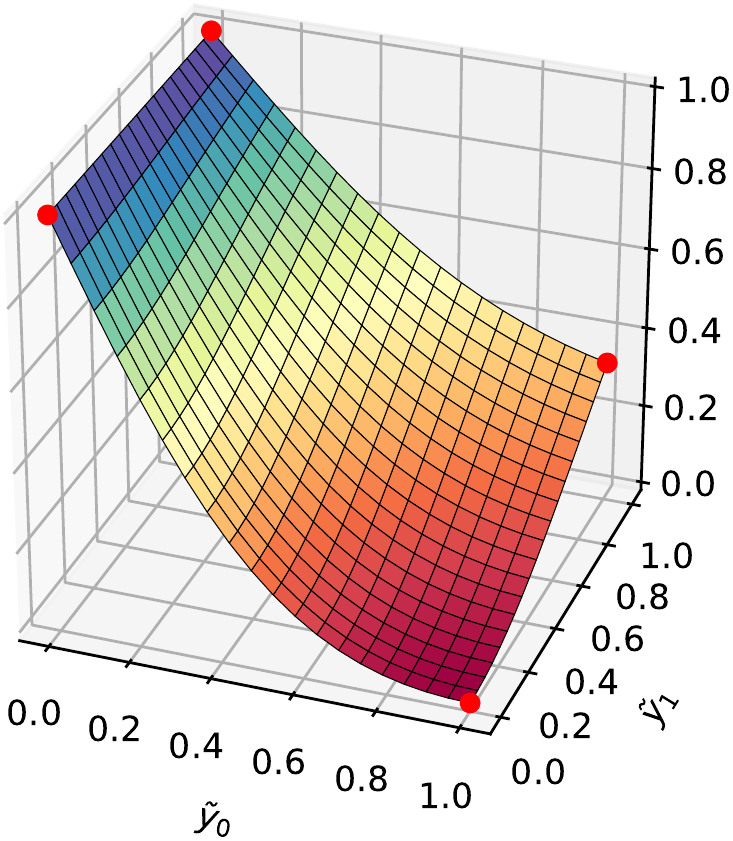}%
  }%
}%
\setlength{\foursubht}{\ht\foursubbox}%

  \centering%
\subcaptionbox{$y=[1, 1]$; $L^1$ soft Dice\label{fig:sdice11L1}}{%
  \includegraphics[height=\foursubht]{L1L2fig/sDice11}%
}%
\subcaptionbox{$y=[1, 1]$; $L^2$ soft Dice\label{fig:sdice11L2}}{%
  \includegraphics[height=\foursubht]{L1L2fig/sDice11L2}%
}%
\subcaptionbox{$y=[1, 0]$; $L^1$ soft Dice\label{fig:sdice10L1}}{%
  \includegraphics[height=\foursubht]{L1L2fig/sDice10}%
}%
\subcaptionbox{$y=[1, 0]$; $L^2$ soft Dice\label{fig:sdice10L2}}{%
  \includegraphics[height=\foursubht]{L1L2fig/sDice10L2}%
}%
  \caption{\review{Comparison between $L^1$ and $L^2$--based variants of the soft Dice loss for a two-pixel prediction $\tilde{y}=[\tilde{y}_0, \tilde{y}_1]$, and two different ground truth labels $y$. 
  The red dots indicate the values of the discrete Dice loss. 
  We see that the $L^2$ variants are flatter around the global minimum corresponding to the ground truth configuration.\label{fig:l1l2surfaces}}}%
\end{figure*}

\subsection{Training parameters}\label{sec:training_parameters}
The CNNs are pretrained with CE to help final convergence. In preliminary experiments we noticed this can speed-up the experiments and help metric-sensitive loss functions to be more robust.
\review{For U-Net and DeepLab,} we make use of the Adam optimizer~\cite{kingma2014adam} with the initial learning rate set at $10^{-3}$ (random initialization) or $10^{-4}$ (ImageNet initialization). \review{For DeepMedic, we do SGD with an initial learning rate $10^{-1}$}
The learning rate decreases by a factor of five when the validation loss on the complete images stops improving and the training stops when this loss increases.
The input batch size is $2$ for BR18, IS17 \review{and WM17}, $4$ for MO17 and IS18, $16$ for PO18 \review{and $64$ for WM17\textsuperscript{DM}}.

After CE-convergence, the training \reviewminor{continues} with CE, wCE, sDice, sJaccard, Lov\'{a}sz or Tversky \reviewminor{with a reset optimizer state.}
Our theoretical analysis suggests there is no optimal Dice or Jaccard approximation for wCE that can be derived before training starts (see Sect. \ref{sec:2}).
We therefore set the weights via the common heuristic of balancing out the number of foreground and background pixels/voxels~\cite{Sudre2017}, using a weight applied to the foreground class of $1/(2p)$ and a weight applied to the background class of $1/(2-2p)$. Here, $p$ represents the foreground prior.
For sTversky, two weighting factors need to be chosen. Our experiments are run for the case $\alpha+\beta=1.0$ analogously to~\cite{TverskyMLMI2017} and on a range of weight values with $\alpha$ ranging from 0.1 to 0.9 with a step size of 0.1. For $\alpha=0.5$, sTversky is equivalent to sDice.
We use similar optimization parameters as for CE pretraining, with the initial learning rate lowered to $10^{-4}$ for all datasets, except for MO17, which we found appropriate for convergence of the losses under study.\\

\subsection{\review{Model choice and statistical testing}}\label{sec:statistical_testing}
\review{In all experiments, we perform five-fold cross-validation and report the results aggregated over all the left-out subjects. For each fold we choose the model with the best validation loss (on the full images in case of patch-wise training) to test and report metrics. Final comparisons were done in a pair-wise setting between losses and tests for statistical significance were performed using non-parametric bootstrapping \reviewminor{as in the BRATS and ISLES challenges~\cite{brats2018,isles2017}. Implementation details are in \cite{bakas2018identifying}, with ranks substituted with scores (in line with the hypothesis).} To assess inferiority or superiority between pairs of optimization methods, significance level $p<0.05$ is used. In all tables, cells highlighted in gray point to the top-ranked losses (i.e. not significantly inferior compared to the loss with the highest performance). Values in italic denote inferiority of one loss compared to all others. \reviewminor{Boxplots for all measurements are reported separately in Supplementary Materials, Appendix~\ref{sec:boxplots}}}.

\section{Results and discussion}\label{sec:results}
This section is divided in four parts: First, we compare two common versions of the soft Dice loss present in the literature. Second and most importantly, the performance of all the losses is evaluated on \review{six} binary medical segmentation tasks. Thirdly, we also perform this evaluation in a multi-class setting and in the final part, the influence of the class-imbalance on the performance of these respective losses is investigated.
\review{We distinguish two groups of loss functions. The first group are the~\emph{CE-based losses}, which contains the (weighted) Hamming loss surrogates, CE and wCE. The second group are the~\emph{metric-sensitive losses}. It contains the sDice, sJaccard and Lov\'{a}sz losses, which are surrogates either for the Dice score or Jaccard index. The second group also contains the sTversky losses, which are surrogates for their respective Tversky indices.}

\subsection{Soft Dice variants: $L^1$  vs. $L^2$\label{sec:soft_dice_variants}}
As noted in Sec.~\ref{sec:dif_loss_sur}, we note that multiple soft generalizations for the Dice loss are possible by replacing $\langle \tilde{y},\tilde{y} \rangle = \|\tilde{y}\|_2^2$ in the denominator of Eq.~\eqref{eq:diceloss} by $\|\tilde{y}\|_p^p$ for any $L^p$ norm, and this choice can influence the optimization properties.
Both the $L^2$~\cite{MilletariNA16} and the $L^1$-based~\cite{Sudre2017} versions are present in the literature.
In a preliminary experiment, we implemented the $L^1$ and $L^2$ version of the soft Dice loss and compare their performance when evaluating on Dice score and Jaccard index in Table~\ref{tab:l1vsl2}.
There is a small favorable trend of the former definition and it significantly outperforms the latter on the IS18, PO18 \review{and WM17\textsuperscript{DM}} dataset.
Figure~\ref{fig:l1l2surfaces} shows a visualization of the loss surfaces for the two soft dice variants, in the case of a simple two-pixel prediction.
The fact that the $L^2$-based variants are flatter around the minimum could indicate that the $L^2$-based surrogates do not favour integer solutions, i.e. solutions that are close to an indicator vector -- which is where the soft surrogate of the Dice metric meets the target discrete Dice metric (red dots on Fig.~\ref{fig:l1l2surfaces}).
This would explain why we found that the $L^1$-based surrogate may outperform the $L^2$-based surrogate in some cases.

In all further experiments, we restrict our analysis to the $L^1$-based relaxation, and refer to it simply as the soft Dice loss.

\begin{table}[htb]
    \caption{\review{Dice scores and Jaccard indexes obtained for each dataset, using the $L^1$ and $L^2$ generalization for soft Dice loss. Values highlighted in grey are significantly superior compared to the other loss.}}
    \begin{tabularx}{\linewidth}{llXYYYYYYY}
    \toprule
         & Loss & \multicolumn{1}{l}{\lapbox[0.5\width]{0.5em}{\emph{Data} $\rightarrow$}} & BR18 & IS17 & IS18 & MO17 & PO18 & \review{WM17} & \review{WM17\textsuperscript{DM}} \\
    \midrule
    \parbox[t]{1mm}{\centering\multirow{2}{*}{\rotatebox[origin=c]{90}{Dice}}}
      & $L^1$   & &  0.870 &  0.362 &  \cellcolor{gray!50} 0.538 &  0.944 &  \cellcolor{gray!50} 0.656 & \review{0.712} & \cellcolor{gray!50} \review{0.717} \\
      & $L^2$   & &  0.869 &  0.358 &  \textit{0.525} &  0.943 &  \textit{0.639} & \review{0.708} & \review{\textit{0.703}}\\
    \midrule
    \parbox[t]{1mm}{\centering\multirow{2}{*}{\rotatebox[origin=c]{90}{Jacc.}}}
      & $L^1$   & &  0.780 &  0.244 & \cellcolor{gray!50} 0.407 &  0.897 &  \cellcolor{gray!50} 0.598 & \review{0.571} & \cellcolor{gray!50} \review{0.577} \\
      & $L^2$   & &  0.780 &  0.245 & \textit{0.395} &  0.896 &  \textit{0.567} & \review{0.566} & \review{\textit{0.563}} \\
         
    \bottomrule
\end{tabularx}
    \label{tab:l1vsl2}
\end{table}

\subsection{Comparison of losses for binary segmentation~\label{sec:experiments}}
The average Dice scores and Jaccard indexes for each dataset and evaluated loss function are shown in table~\ref{tab:results}. The results for sTversky are shown in a different Table~\ref{tab:tversky_results} for a better overview. Multiple observations can be made from these results:

\begin{table}[htb]
    \caption{\review{Dice scores and Jaccard indexes obtained for each dataset, using a CE-based loss (CE, wCE) or a metric-sensitive loss (sDice, sJaccard, Lov\'{a}sz). Cells in gray highlight the top-ranked losses. Values in italic point to a significant lower value compared to all other losses.}}
\begin{tabularx}{\linewidth}{lsXYY|YYY}
    \toprule
    & & \multicolumn{1}{r}{\lapbox[\width]{1em}{\emph{group} $\rightarrow$}} & \multicolumn{2}{c}{CE-based} & \multicolumn{3}{c}{Metric-sensitive} \\
    \midrule
    & Dataset & \multicolumn{1}{r}{\lapbox[\width]{1em}{\emph{loss} $\rightarrow$}} & CE & wCE & sDice & sJaccard & Lov\'{a}sz \\
    \midrule
    \parbox[t]{7mm}{\centering\multirow{5}{*}{\rotatebox[origin=c]{90}{Dice}}} 
    & BR18 & & 0.848 & \textit{0.808} & \cellcolor{gray!50} 0.870 & \cellcolor{gray!50}0.868 & \cellcolor{gray!50}0.871 \\
    & IS17 & & 0.281 & 0.310 & \cellcolor{gray!50} 0.362 & \cellcolor{gray!50} 0.363 & \cellcolor{gray!50} 0.357 \\
    & IS18 & & 0.463 & 0.487 & \cellcolor{gray!50} 0.538 & 0.528 & 0.508 \\
    & MO18 & & \cellcolor{gray!50} 0.942 & \textit{0.903} & \cellcolor{gray!50} 0.944 & \cellcolor{gray!50} 0.942 & \cellcolor{gray!50} 0.944 \\
    & PO18 & & 0.635 & \textit{0.602} & \cellcolor{gray!50} 0.656 & \cellcolor{gray!50} 0.651 & \cellcolor{gray!50} 0.649 \\
    & \review{WM17} & & \review{0.672} & \review{\textit{0.492}} & \cellcolor{gray!50} \review{0.712} & \cellcolor{gray!50} \review{0.712} & \review{0.700} \\
    & \review{WM17\textsuperscript{DM}} & & \review{0.671} & \review{\textit{0.383}} & \cellcolor{gray!50} \review{0.717} & \cellcolor{gray!50} \review{0.714} & \cellcolor{gray!50} \review{0.711} \\
    \midrule
    \parbox[t]{7mm}{\centering\multirow{5}{*}{\rotatebox[origin=c]{90}{Jaccard}}}
    & BR18 & & 0.751 & \textit{0.695} & \cellcolor{gray!50} 0.780 & \cellcolor{gray!50} 0.780 & \cellcolor{gray!50} 0.784 \\
    & IS17 & & 0.191 & 0.212 & \cellcolor{gray!50} 0.244 & \cellcolor{gray!50} 0.246 & \cellcolor{gray!50} 0.245 \\
    & IS18 & & 0.345 & 0.357 & \cellcolor{gray!50} 0.407 & 0.399 & 0.382 \\
    & MO18 & & \cellcolor{gray!50} 0.894 & \textit{0.835} & \cellcolor{gray!50} 0.897 & \cellcolor{gray!50}  0.896 & \cellcolor{gray!50} 0.896 \\
    & PO18 & & 0.541 & \textit{0.488} & \cellcolor{gray!50} 0.559 & \cellcolor{gray!50} 0.554 & \cellcolor{gray!50} 0.553 \\
    & \review{WM17} & & \review{0.530} & \review{\textit{0.355}} & \cellcolor{gray!50} \review{0.571} & \cellcolor{gray!50} \review{0.573} & \review{0.558} \\
    & \review{WM17\textsuperscript{DM}} & & \review{0.529} & \review{\textit{0.273}} & \cellcolor{gray!50} \review{0.577} & \cellcolor{gray!50} \review{0.576} & \cellcolor{gray!50} \review{0.572} \\
    \bottomrule
\end{tabularx}

    \label{tab:results}
\end{table}

\subsubsection{Equivalence of $J$ and $D$}
Theory indicates that Dice score and Jaccard index are equivalent ~(Prop.~\ref{thm:DiceJaccardApproximation}) and that they should provide the same ranking of the different loss functions. This also shows empirically in our results. If a loss $\ell_A$ performs better than loss $\ell_B$ in terms of Dice, it will perform better in terms of Jaccard index, and vice-versa.

\subsubsection{Performance of the surrogates}
The results show clearly that use of the metric-sensitive losses provides superior Dice scores and Jaccard indexes than (w)CE. There is only one dataset, MO17, for which the performance of CE is not inferior to that of the metric-sensitive losses. This is probably because the class imbalance is less pronounced in this dataset.
These results are what we expected since we already saw that CE and the metric-sensitive losses are theoretically divergent. A second observation that we make is that there is no significant difference between sDice, sJacard and Lov\'{a}sz as a loss function. This implies a free choice within the group of metric-sensitive losses.

For the Tversky loss, we observe that in no situation, $\alpha=0.5$-weighting is significantly inferior to any of the unequal weightings. With $\alpha$ going further away from the 0.5 baseline, the performance w.r.t. Dice score drops significantly once the weighting is too extreme. Therefore we can conclude that in our experiments $\alpha=\beta=0.5$ (equivalent to Dice loss) is the optimal weighting for all cases when evaluating on Dice score. \\

\begin{table*}[htb]
    \centering
    \caption{\review{Dice scores and Jaccard indexes obtained for all datasets using the range of sTversky losses with varying alpha/beta. Cells in gray highlight the top-ranked losses. There is no weighting scheme producing significantly higher values than sTversky 0.5/0.5.}}

\begin{tabularx}{\linewidth}{lsXYYYYYYYYY|YY}
    \toprule
    & Dataset & \multicolumn{1}{r}{\lapbox[\width]{1em}{\emph{$\alpha/\beta$} $\rightarrow$}} & 0.1/0.9 & 0.2/0.8 & 0.3/0.7 & 0.4/0.6 & 0.5/0.5 & 0.6/0.4 & 0.7/0.3 & 0.8/0.2 & 0.9/0.1 & \review{0.75/0.75} & \review{1.0/1.0} \\
    \midrule
    \parbox[t]{7mm}{\centering\multirow{5}{*}{\rotatebox[origin=c]{90}{Dice}}} 
    &BR18 &        &        0.801 &        0.844 &        0.859 &        0.863 & \cellcolor{gray!50}0.870 &   \cellcolor{gray!50}     0.867 &        0.857 &        0.831 &        \textit{0.787} &     \cellcolor{gray!50}   \review{0.871} & \cellcolor{gray!50}\review{0.868} \\
    &IS17 &        &    \cellcolor{gray!50}    0.352 &        0.345 &    \cellcolor{gray!50}    0.374 &   \cellcolor{gray!50}     0.374 &\cellcolor{gray!50} 0.362 &        0.346 &    \cellcolor{gray!50}    0.356 &    \cellcolor{gray!50}    0.345 &        0.293 &      \cellcolor{gray!50}  \review{0.371} & \cellcolor{gray!50}\review{0.363} \\
    &IS18 &        &        0.481 &        0.522 &        0.533 &     \cellcolor{gray!50}   0.540 & \cellcolor{gray!50}0.538 &    \cellcolor{gray!50}    0.527 &        0.519 &        0.490 &        \textit{0.445} &        \review{0.528} & \review{0.528} \\
    &MO18 &        &        0.907 &        0.928 &        0.936 &        0.941 & \cellcolor{gray!50}0.944 &    \cellcolor{gray!50}    0.942 &        0.938 &        0.932 &        0.902 &    \cellcolor{gray!50}   \review{0.944} & \cellcolor{gray!50}\review{0.942} \\
    &PO18 &        &        0.615 &        0.628 &        0.638 &        0.647 & \cellcolor{gray!50}0.656 &   \cellcolor{gray!50}     0.651 &        0.647 &        0.633 &        0.611 &          \review{0.646} & \cellcolor{gray!50}\review{0.651} \\
    &\review{WM17} &        &        \review{0.581} &        \review{0.649} &        \review{0.689} &        \review{0.700} & \cellcolor{gray!50}\review{0.712} &        \review{0.706} &        \review{0.693} &        \review{0.660} &        \review{0.565} &      \cellcolor{gray!50}  \review{0.712} & \cellcolor{gray!50}\review{0.712} \\
    &\review{WM17\textsuperscript{DM}} &        &        \review{0.599} &        \review{0.668} &        \review{0.696} &        \review{0.709} & \cellcolor{gray!50}\review{0.717} &        \review{0.706} &        \review{0.675} &        \review{0.628} &        \review{\textit{0.496}} &        \review{0.706} & \cellcolor{gray!50}\review{0.714} \\
    \midrule
    \parbox[t]{7mm}{\centering\multirow{5}{*}{\rotatebox[origin=c]{90}{Jaccard}}}
    &BR18 &        &        0.680 &        0.740 &        0.765 &        0.772 & \cellcolor{gray!50}0.780 &        0.775 &        0.764 &        0.726 &        \textit{0.666} &     \cellcolor{gray!50}   \review{0.783} & \cellcolor{gray!50}\review{0.780} \\
    &IS17 &        &        0.239 &        0.232 &     \cellcolor{gray!50}   0.257 &   \cellcolor{gray!50}     0.257 &\cellcolor{gray!50} 0.244 &        0.231 &    \cellcolor{gray!50}    0.238 &    \cellcolor{gray!50}    0.230 &        \textit{0.188} &      \cellcolor{gray!50} \review{0.250} &\cellcolor{gray!50} \review{0.246} \\
    &IS18 &        &        0.349 &        0.389 &     \cellcolor{gray!50}   0.401 &    \cellcolor{gray!50}    0.407 & \cellcolor{gray!50}0.407 &        0.398 &        0.390 &        0.361 &      \textit{0.321} &        \review{0.399} & \review{0.399} \\
    &MO18 &        &        0.834 &        0.870 &        0.884 &        0.892 & \cellcolor{gray!50}0.897 &    \cellcolor{gray!50}    0.895 &        0.886 &        0.876 &        0.826 &    \cellcolor{gray!50}    \review{0.896} & \cellcolor{gray!50}\review{0.896} \\
    &PO18 &        &        0.504 &        0.523 &        0.536 &        0.548 &\cellcolor{gray!50} 0.559 &   \cellcolor{gray!50}     0.556 &   \cellcolor{gray!50}     0.554 &        0.539 &        0.511 &          \review{0.550} & \cellcolor{gray!50}\review{0.554} \\
    &\review{WM17} &        &        \review{0.422} &        \review{0.496} &        \review{0.541} &        \review{0.556} &\cellcolor{gray!50} \review{0.571} &        \review{0.565} &        \review{0.548} &        \review{0.510} &        \review{0.410} &      \cellcolor{gray!50}  \review{0.571} & \cellcolor{gray!50}\review{0.573} \\
    &\review{WM17\textsuperscript{DM}} &        &        \review{0.441} &        \review{0.516} &        \review{0.551} &        \review{0.566} & \cellcolor{gray!50}\review{0.577} &        \review{0.566} &        \review{0.532} &        \review{0.481} &        \review{\textit{0.360}} &        \review{0.565} & \cellcolor{gray!50}\review{0.576} \\
    \bottomrule
\end{tabularx}

    \label{tab:tversky_results}
\end{table*}

\subsubsection{Weighting of cross-entropy}
We can see from our results that wCE is typically underperforming w.r.t. CE. Choosing a different weight might increase the performance of wCE, but it is apparent that the choice of weighting is non-trivial and very application-dependent. Finding an optimal weight is therefore not easy and will probably require additional parameter tuning as compared to the metric-sensitive losses.
Moreover, we show in section~\ref{sec:classimbalance} that performance of wCE is better for a smaller range of object sizes. Therefore, it is highly unlikely that one specific weighting could provide an appropriate surrogate for all the object sizes and datasets which is in agreement with our theory as well.

\begin{sidewaysfigure*}[p!]
    \centering

    \begin{tabularx}{0.75\textheight}{cYYYYYYY}
        & Original & Ground truth & CE & wCE & sDice & sJaccard & Lov\'{a}sz
    \end{tabularx}

    \def\arraystretch{0}
    \setlength{\tabcolsep}{0pt}
    \begin{tabularx}{0.75\textheight}{l @{\hspace{2pt}} XXXXXXX}
    
        \rotatebox{90}{\hspace{20pt} BR18} &
        \includegraphics[width=\linewidth]{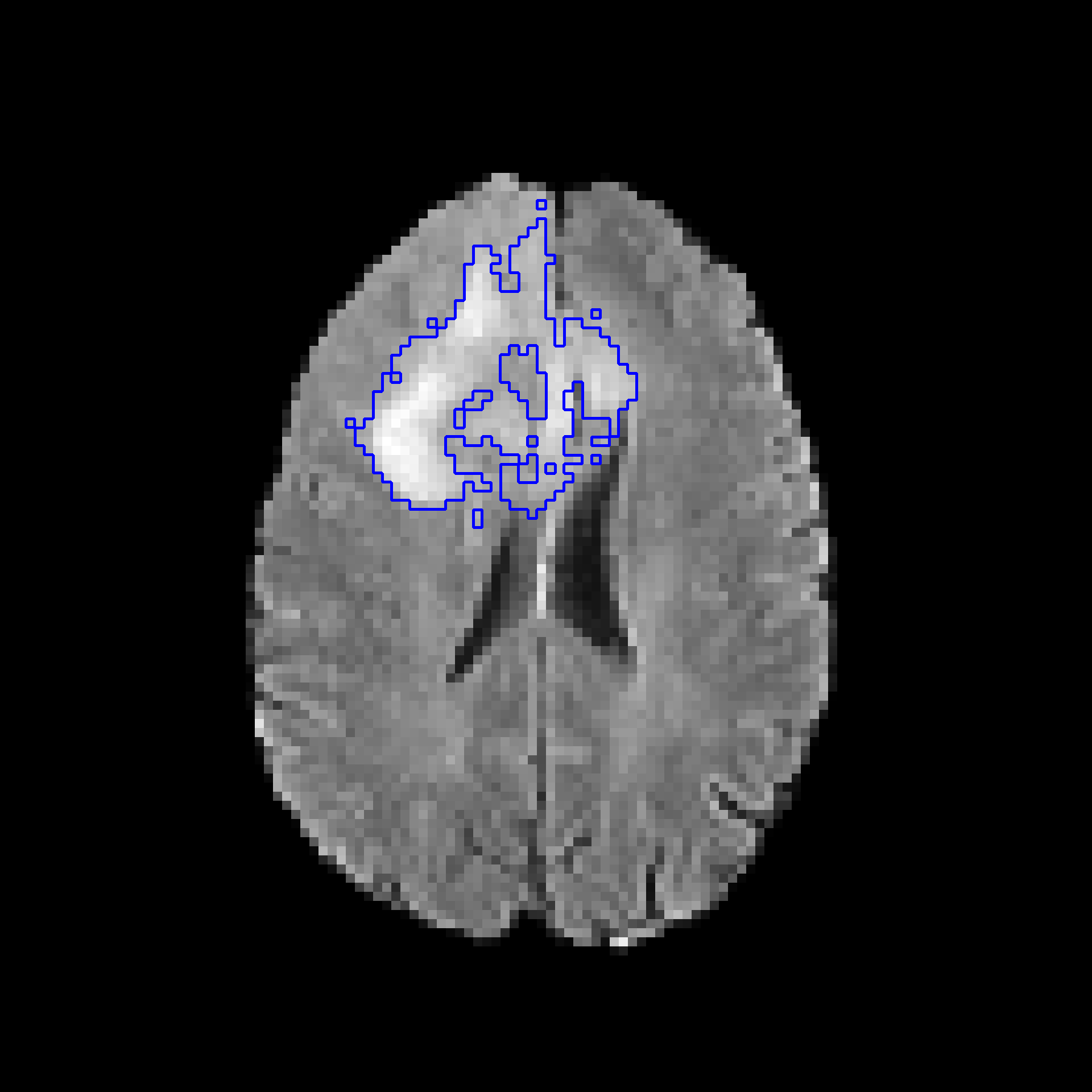} & 
        \includegraphics[width=\linewidth]{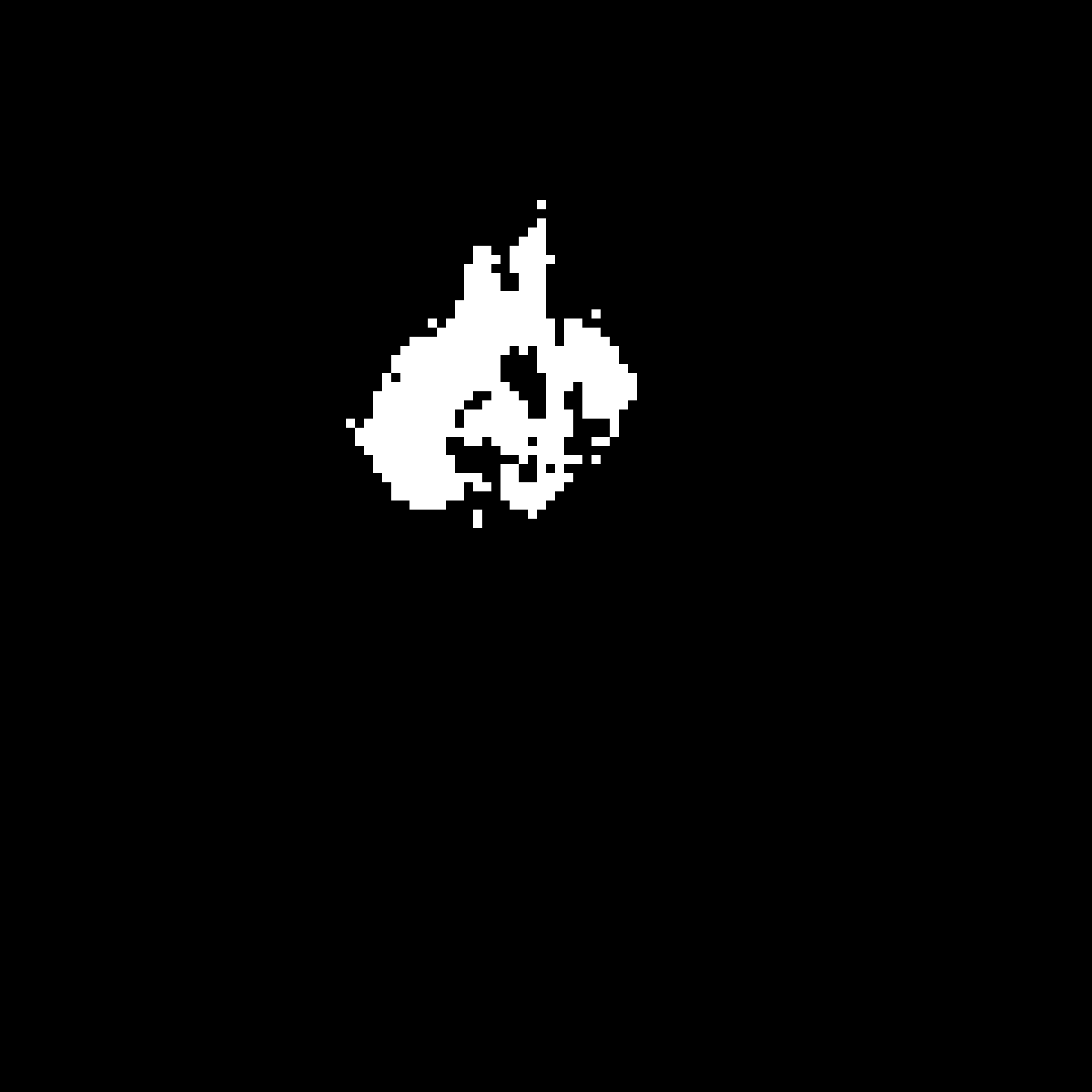} & 
        \includegraphics[width=\linewidth]{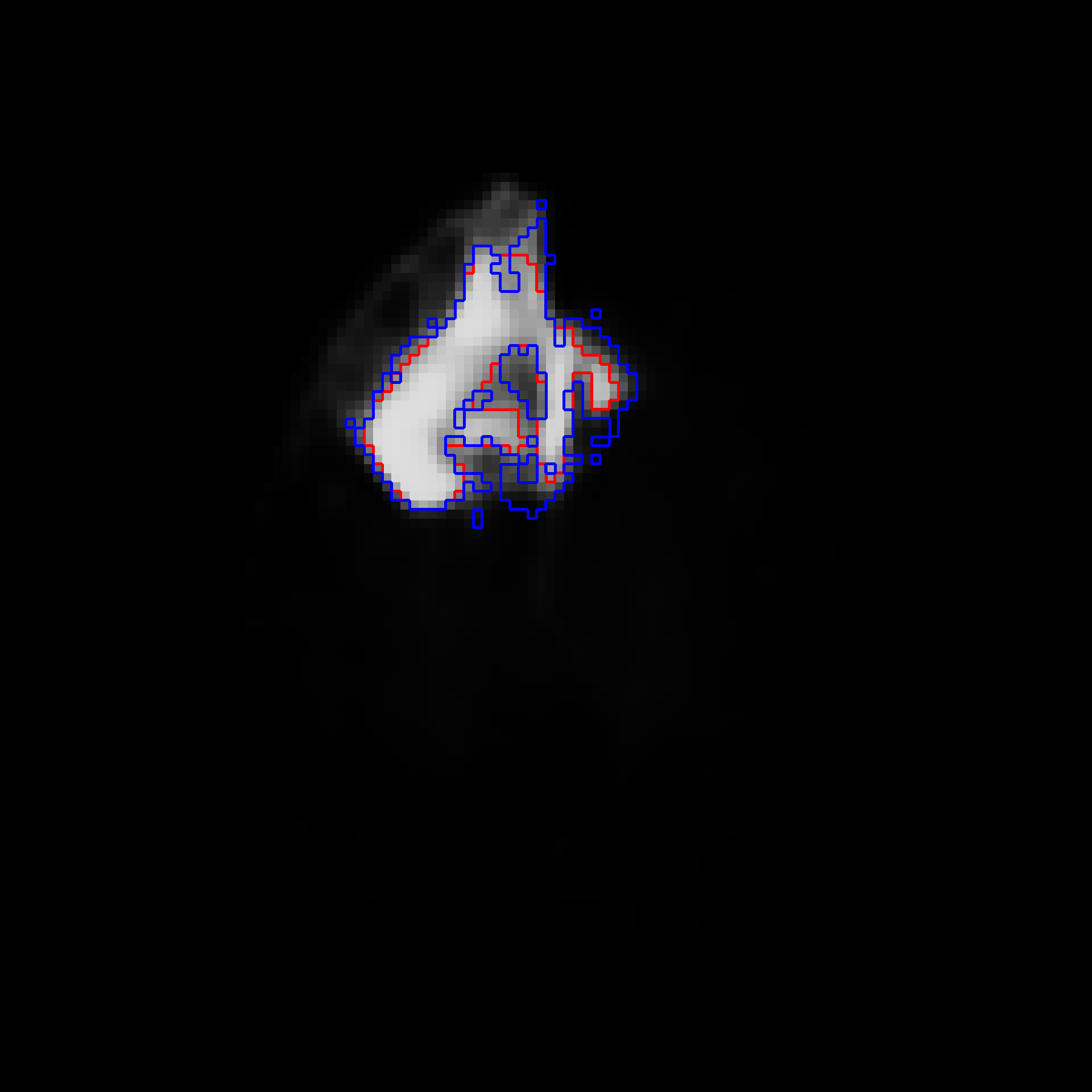} &
        \includegraphics[width=\linewidth]{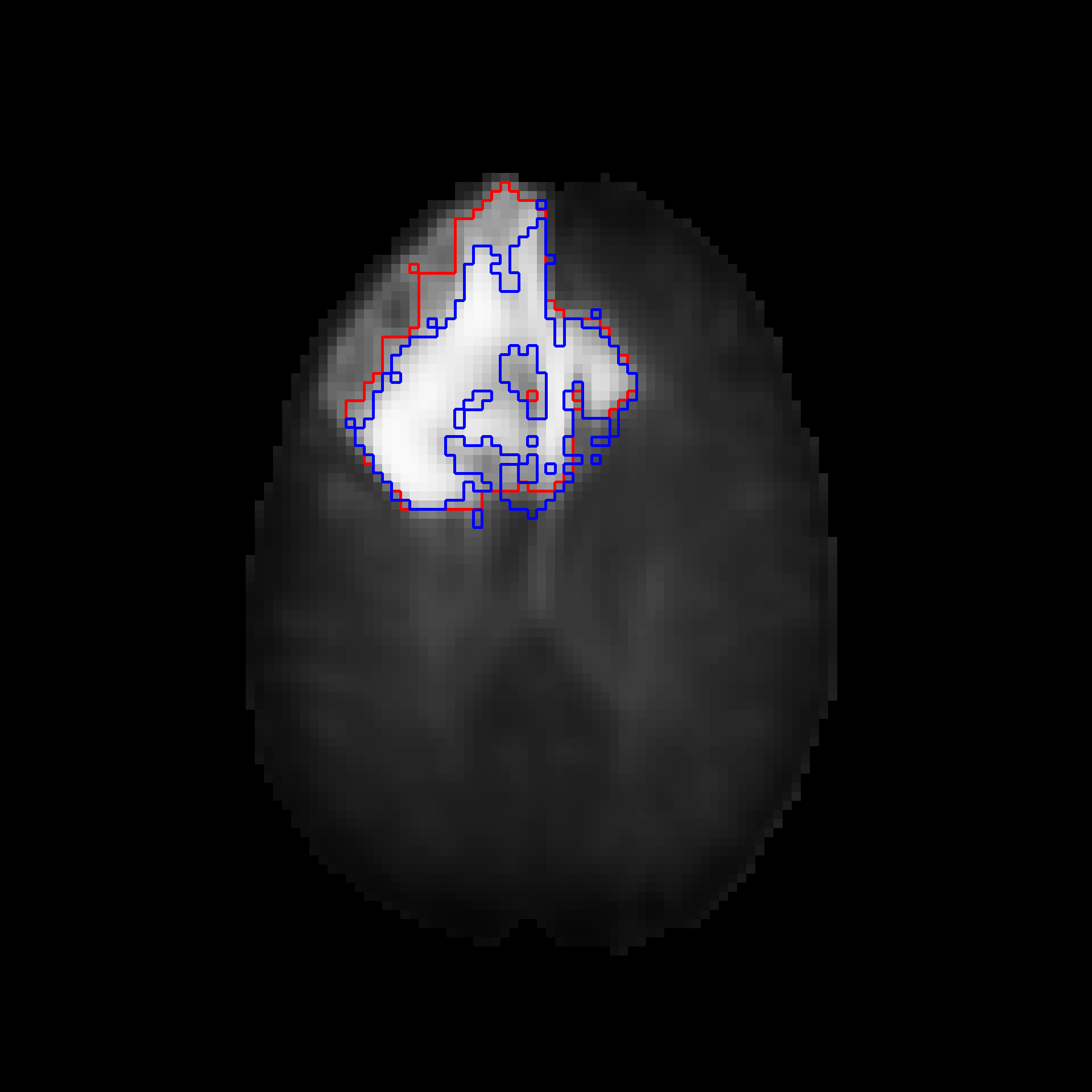} &
        \includegraphics[width=\linewidth]{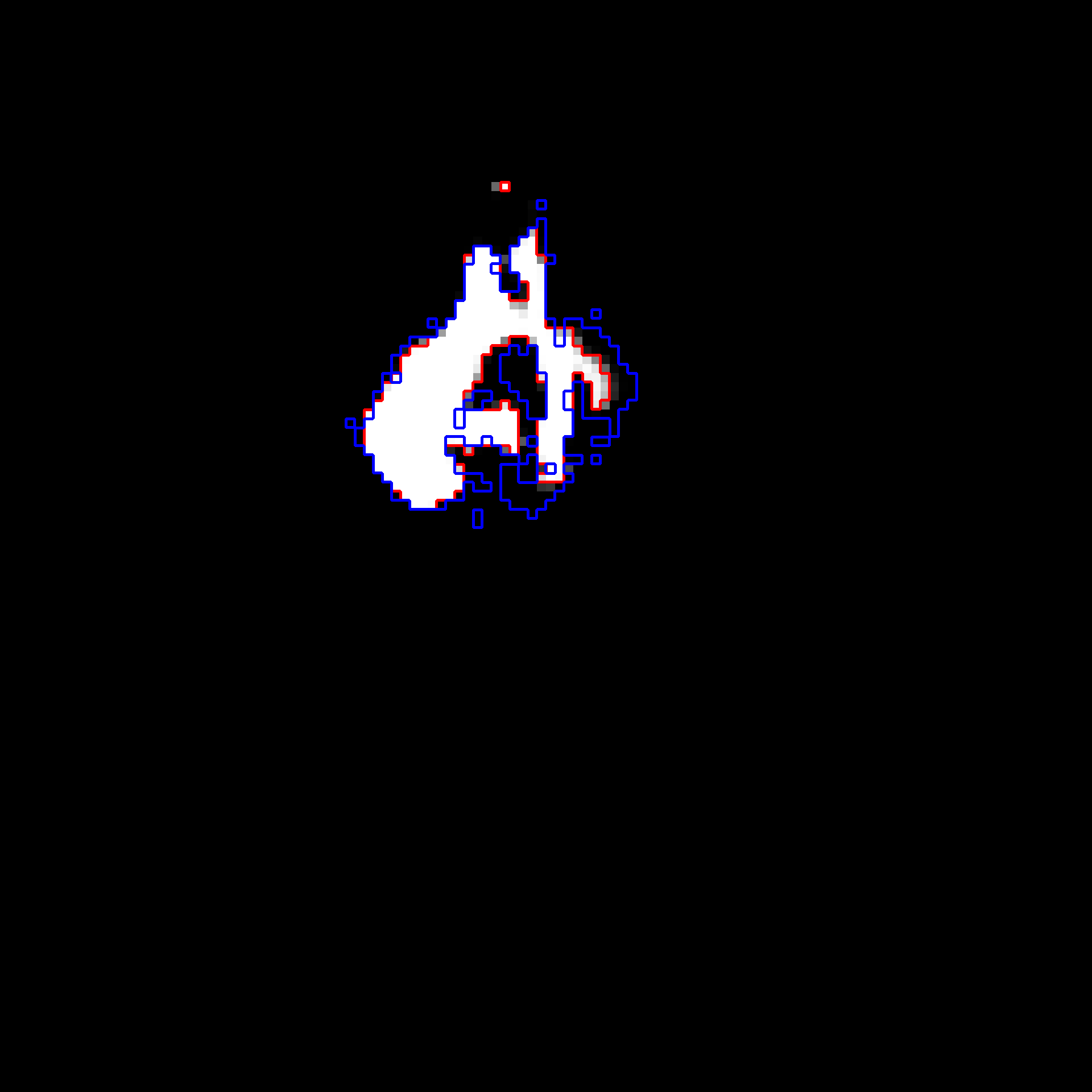} & 
        \includegraphics[width=\linewidth]{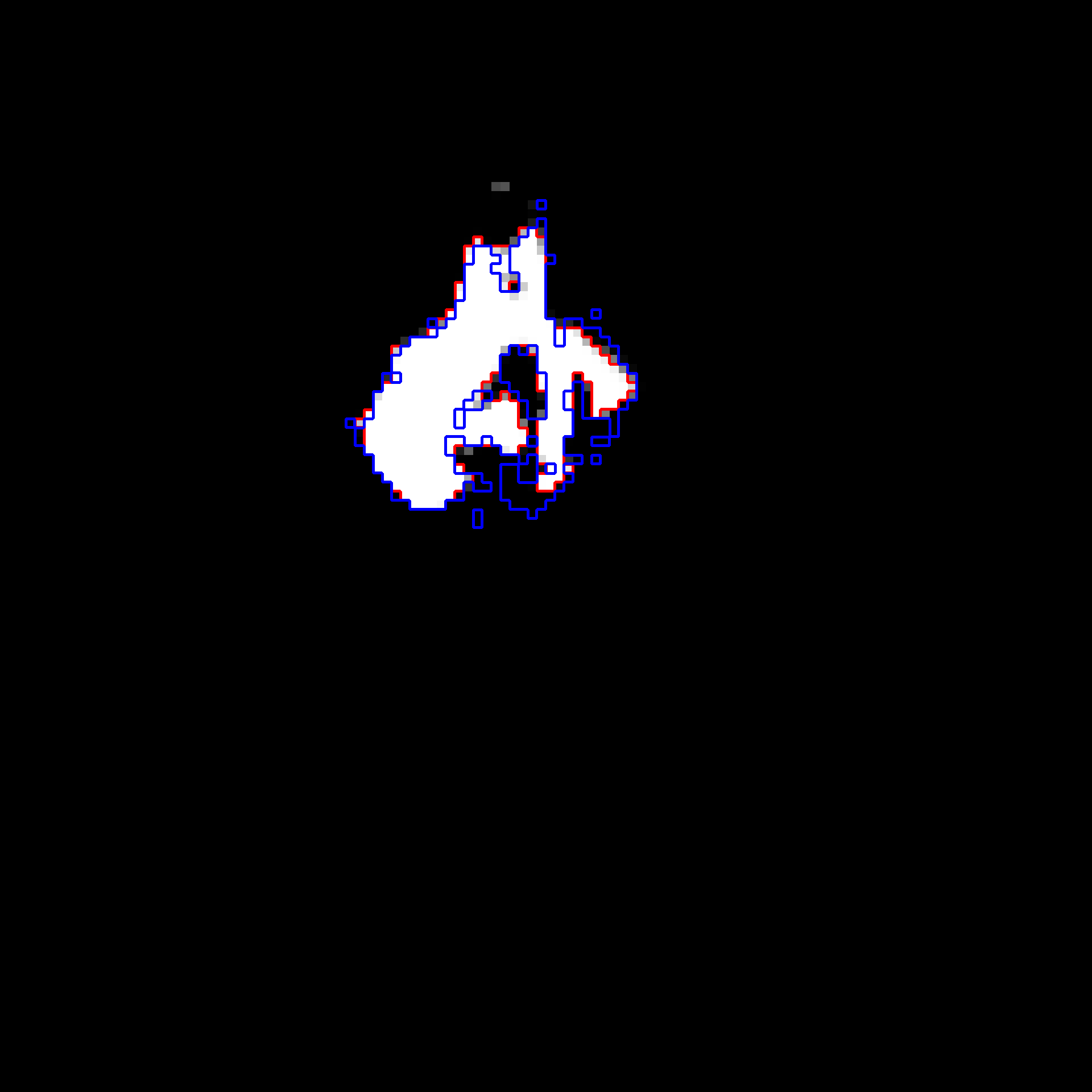} &
        \includegraphics[width=\linewidth]{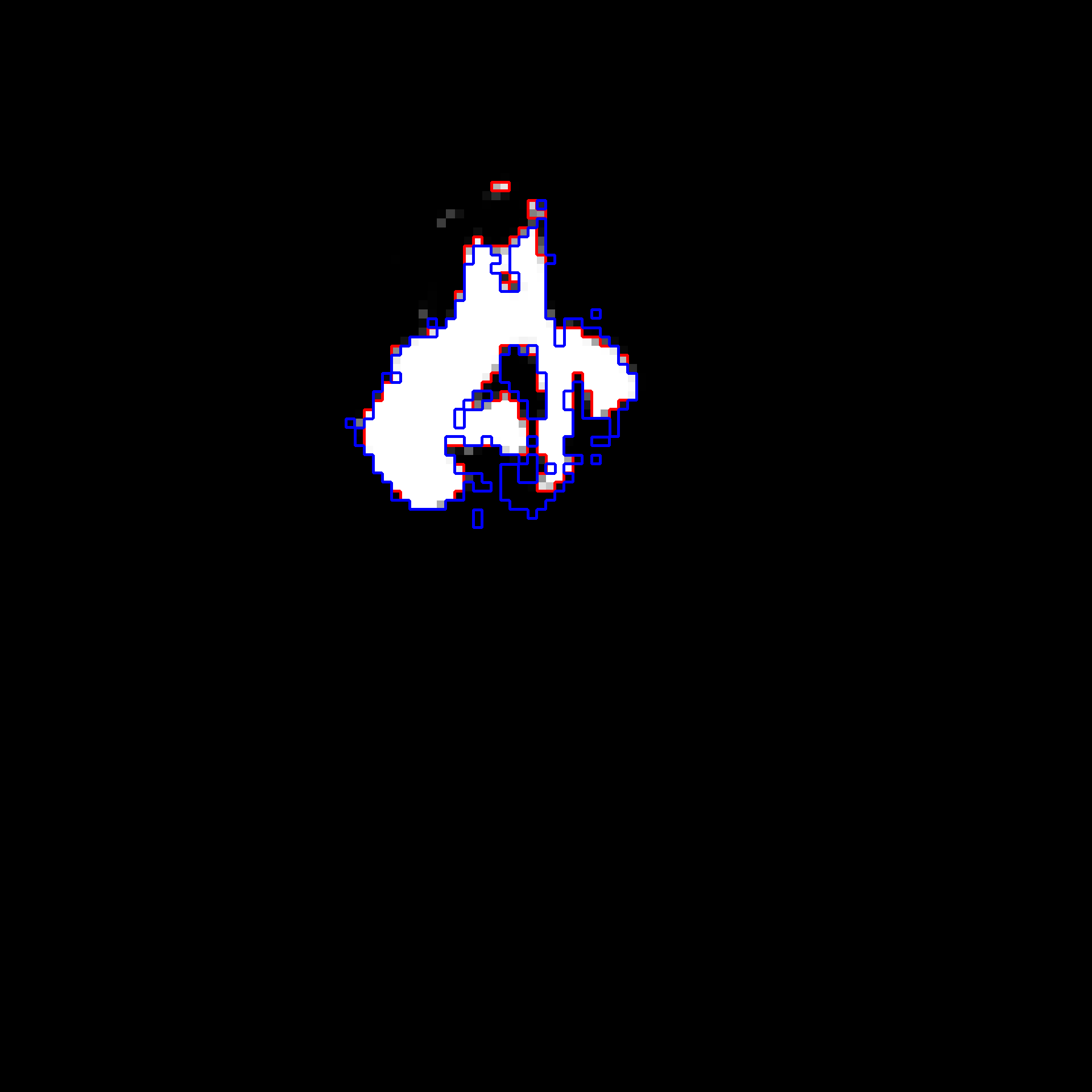} \\
       
        \rotatebox{90}{\hspace{25pt} IS17} &
        \includegraphics[width=\linewidth]{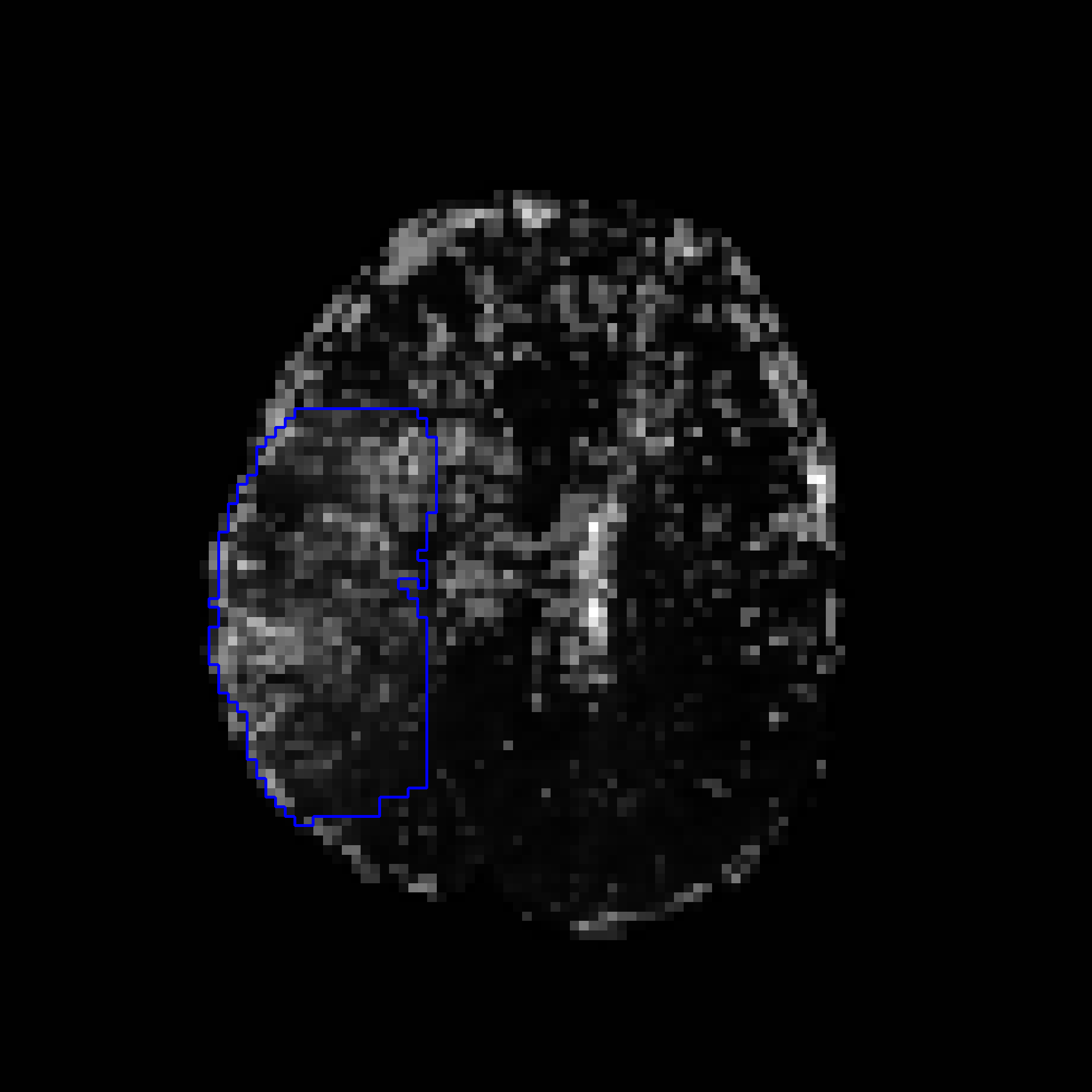} & 
        \includegraphics[width=\linewidth]{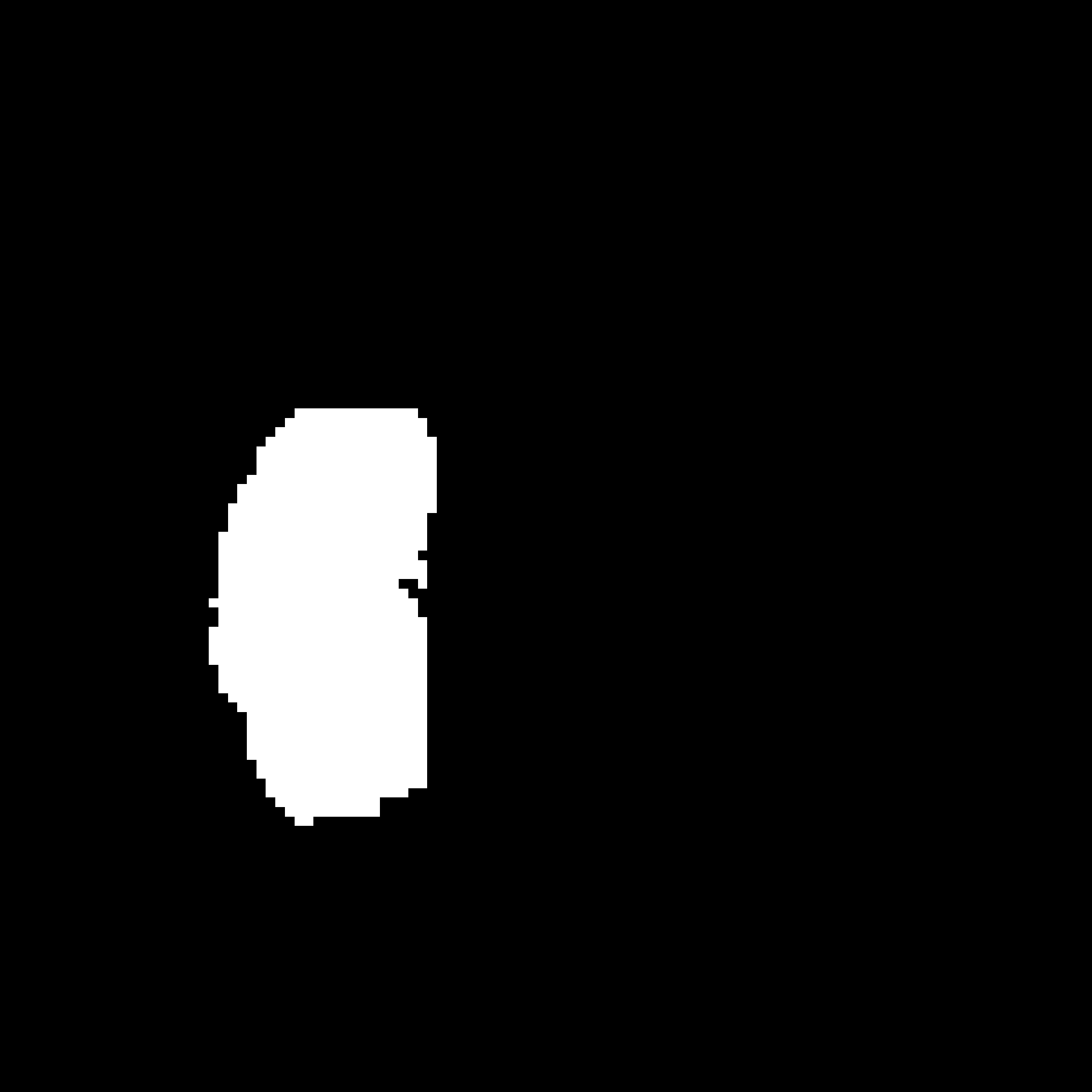} &
        \includegraphics[width=\linewidth]{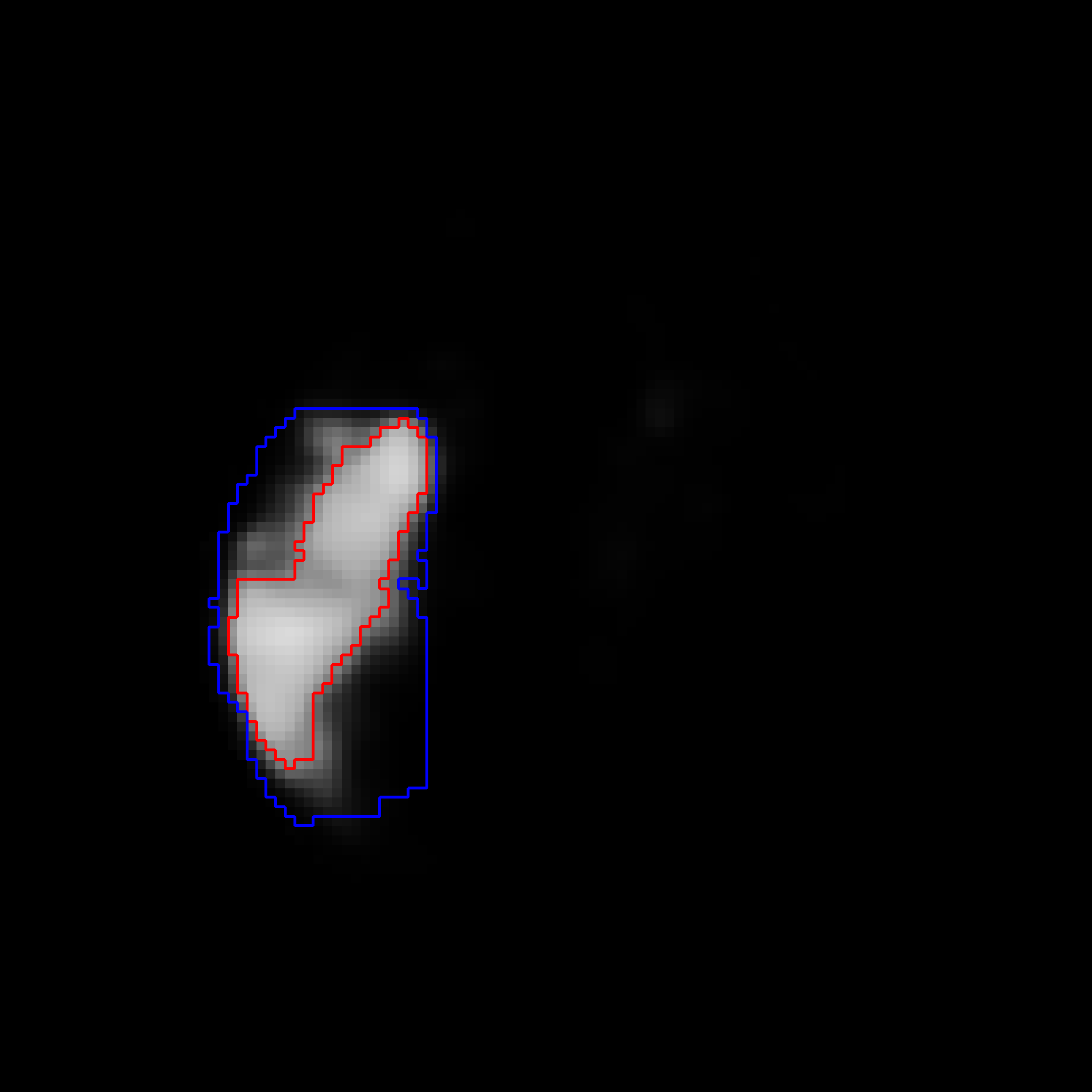} & 
        \includegraphics[width=\linewidth]{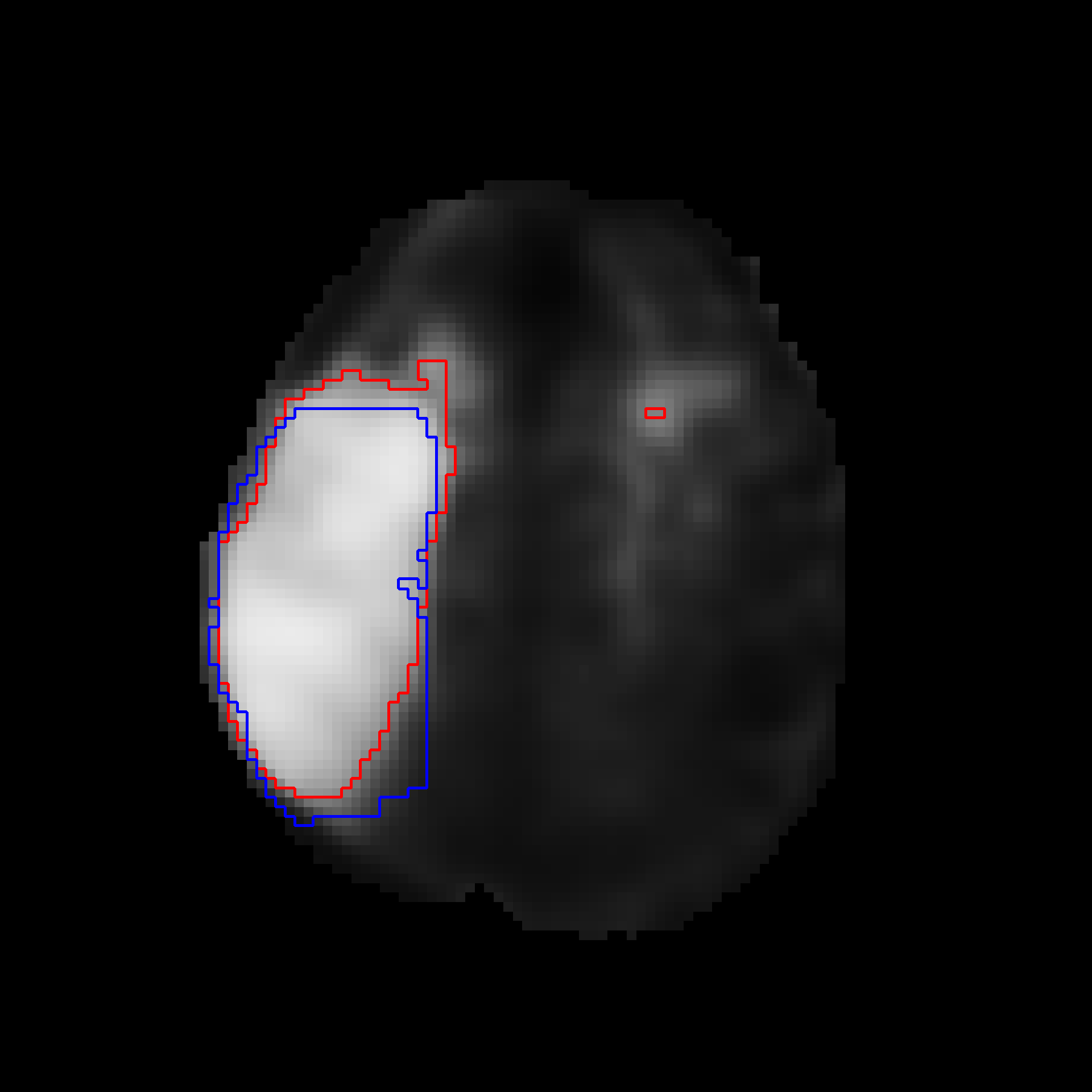} &
        \includegraphics[width=\linewidth]{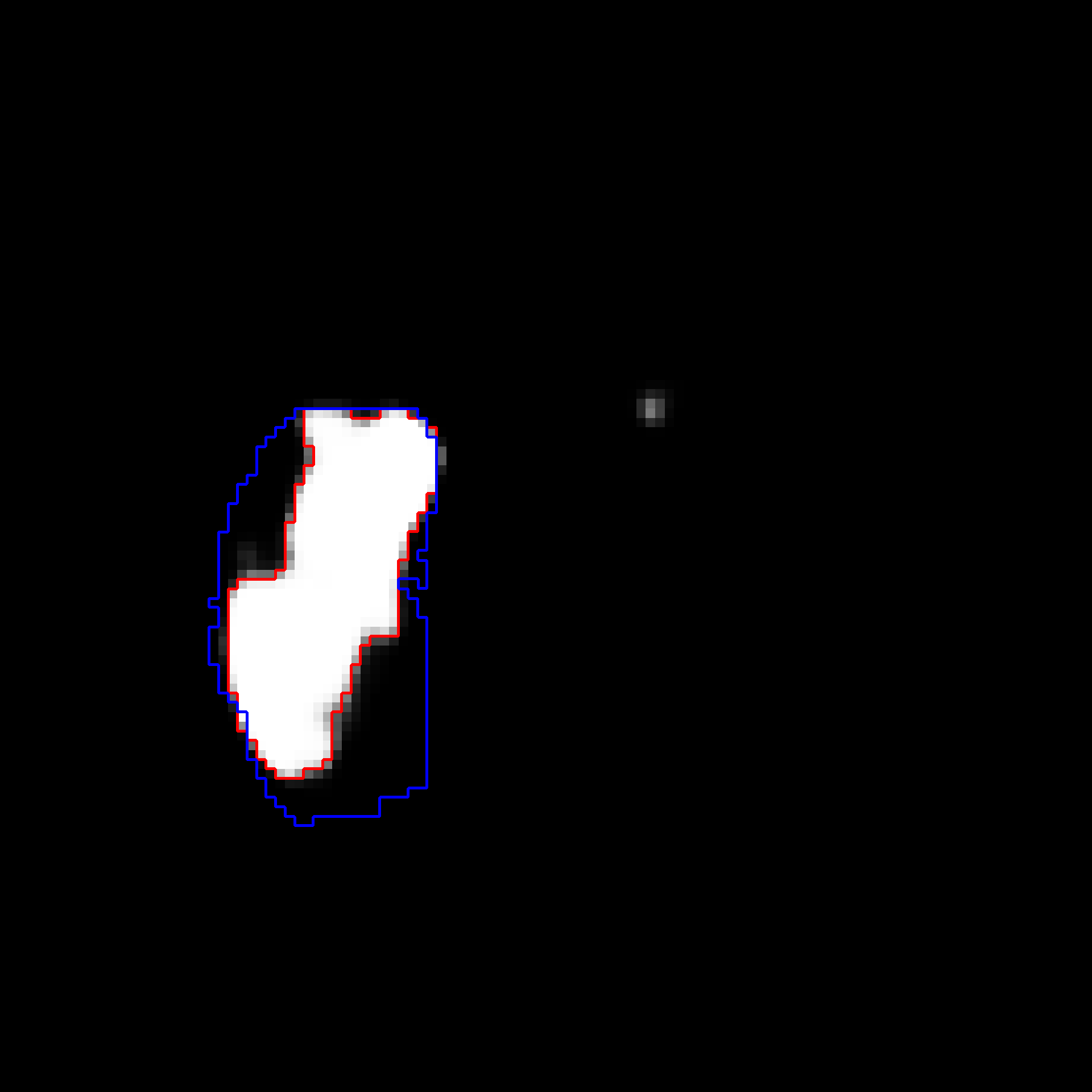} & 
        \includegraphics[width=\linewidth]{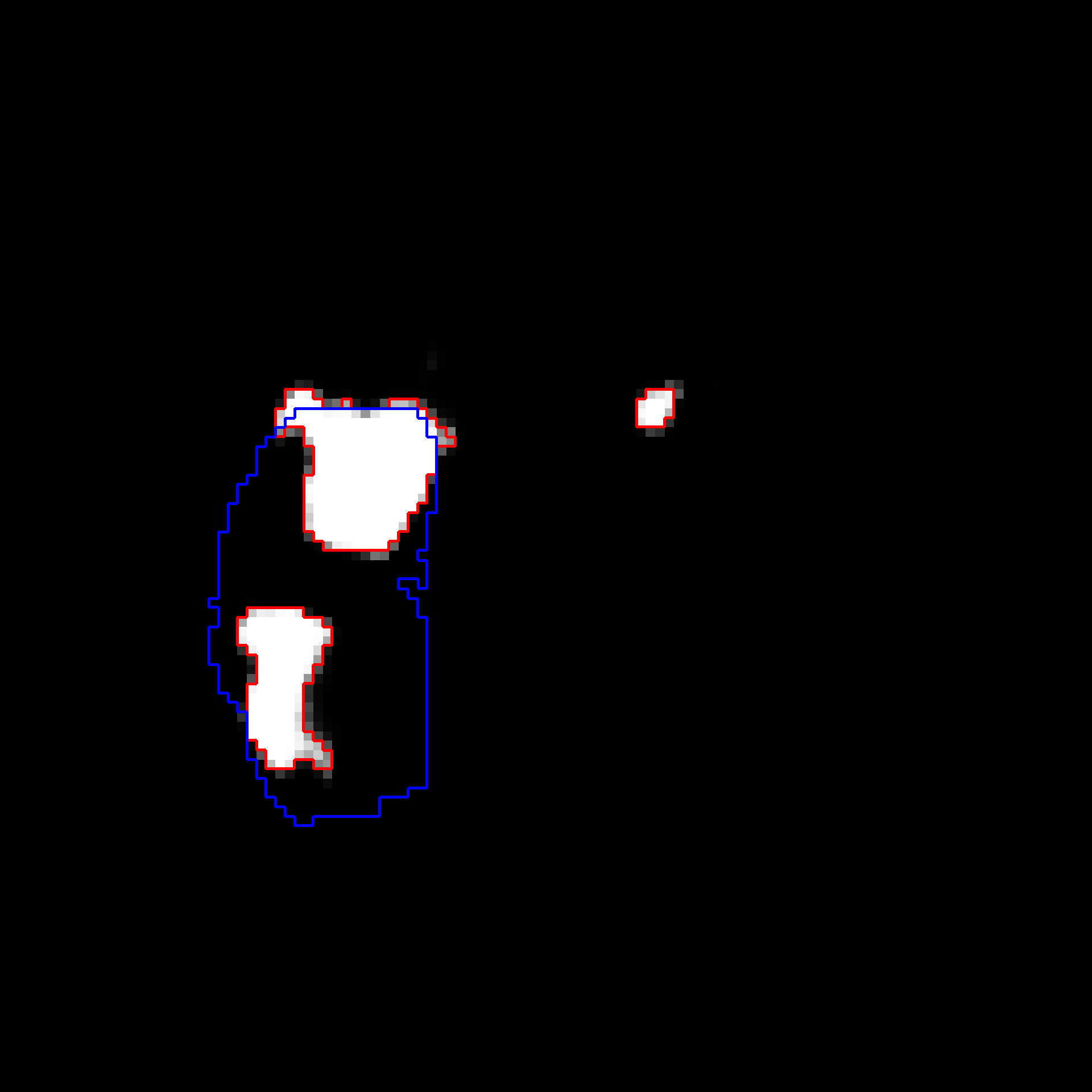} &
        \includegraphics[width=\linewidth]{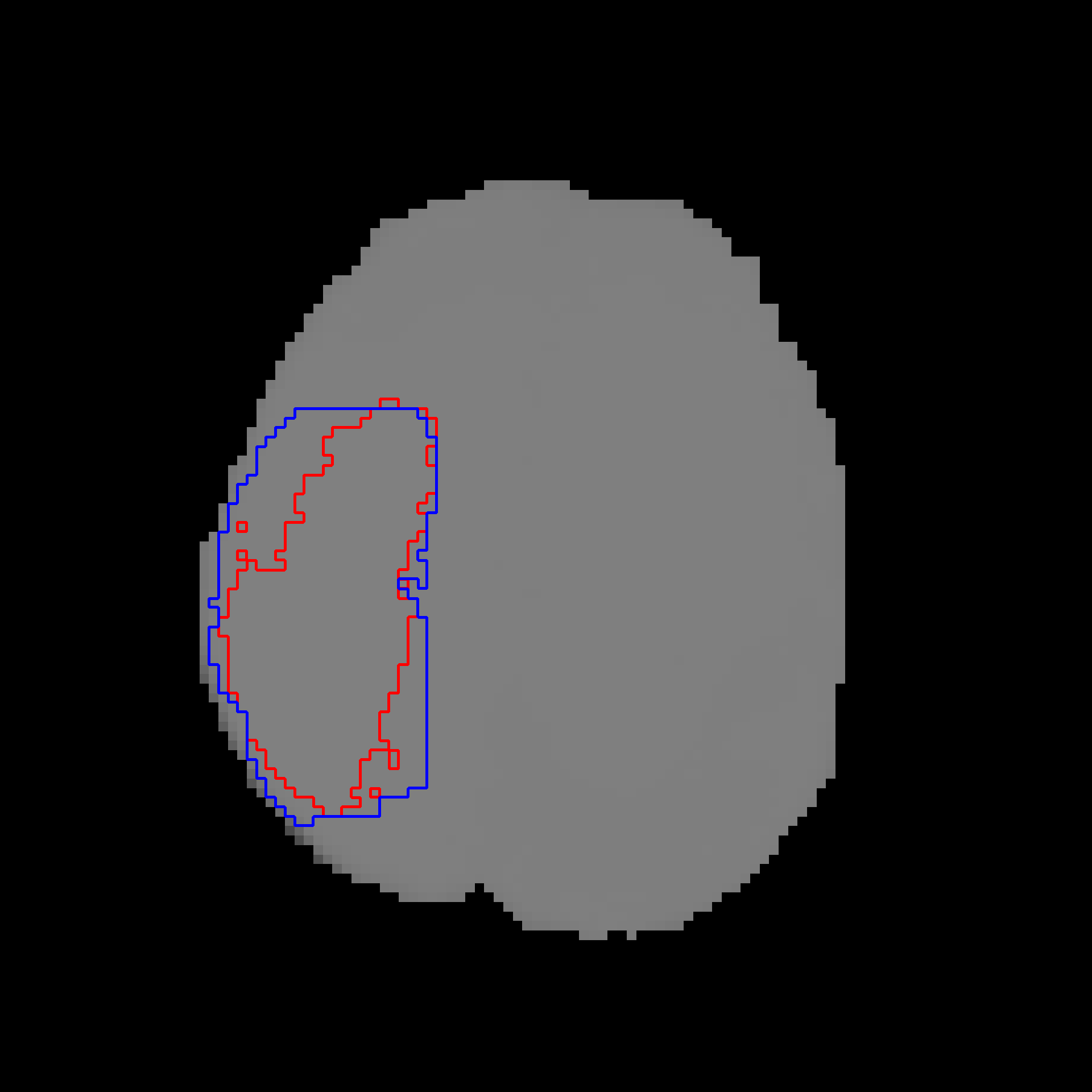} \\
  
        \rotatebox{90}{\hspace{25pt} IS18} &
        \includegraphics[width=\linewidth]{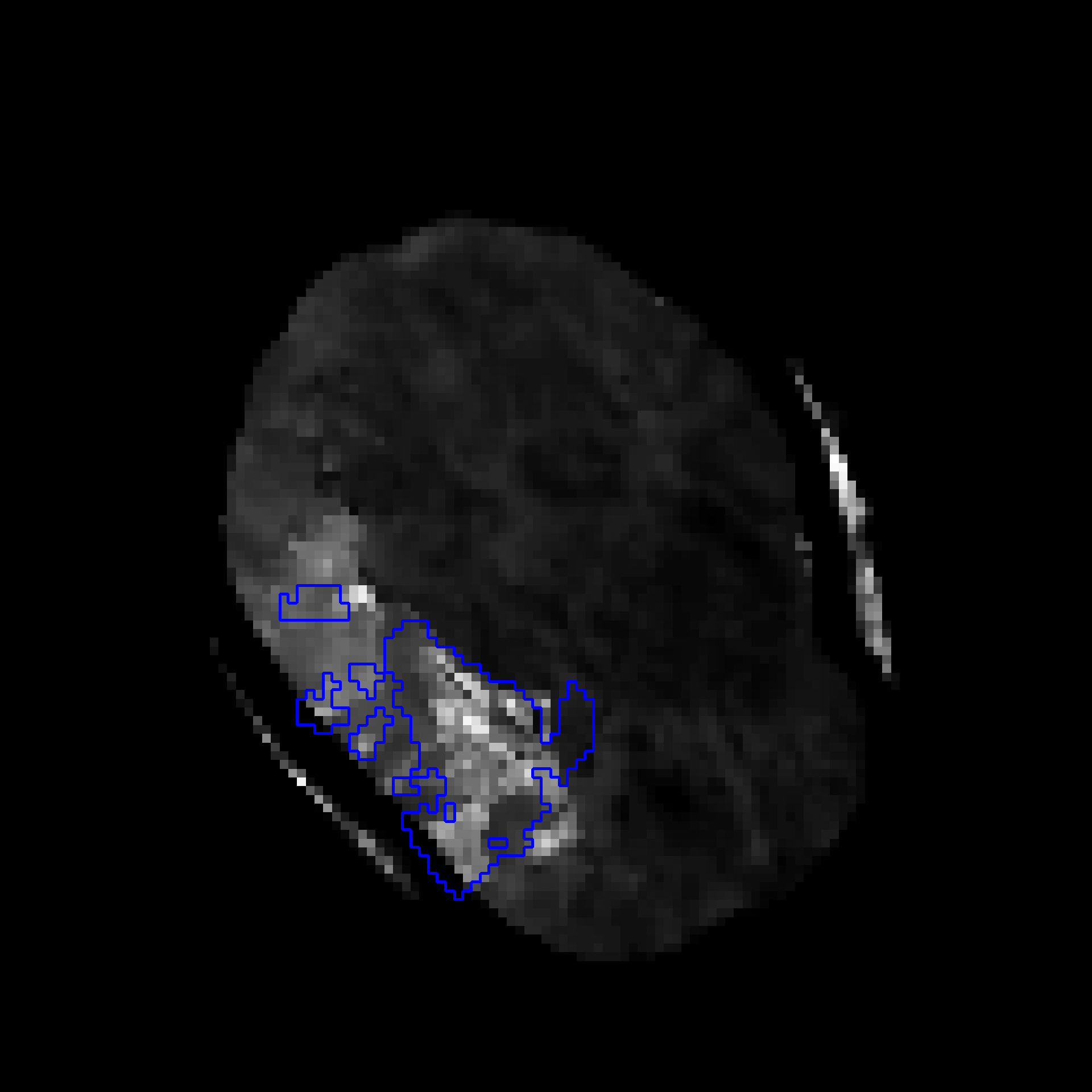} & 
        \includegraphics[width=\linewidth]{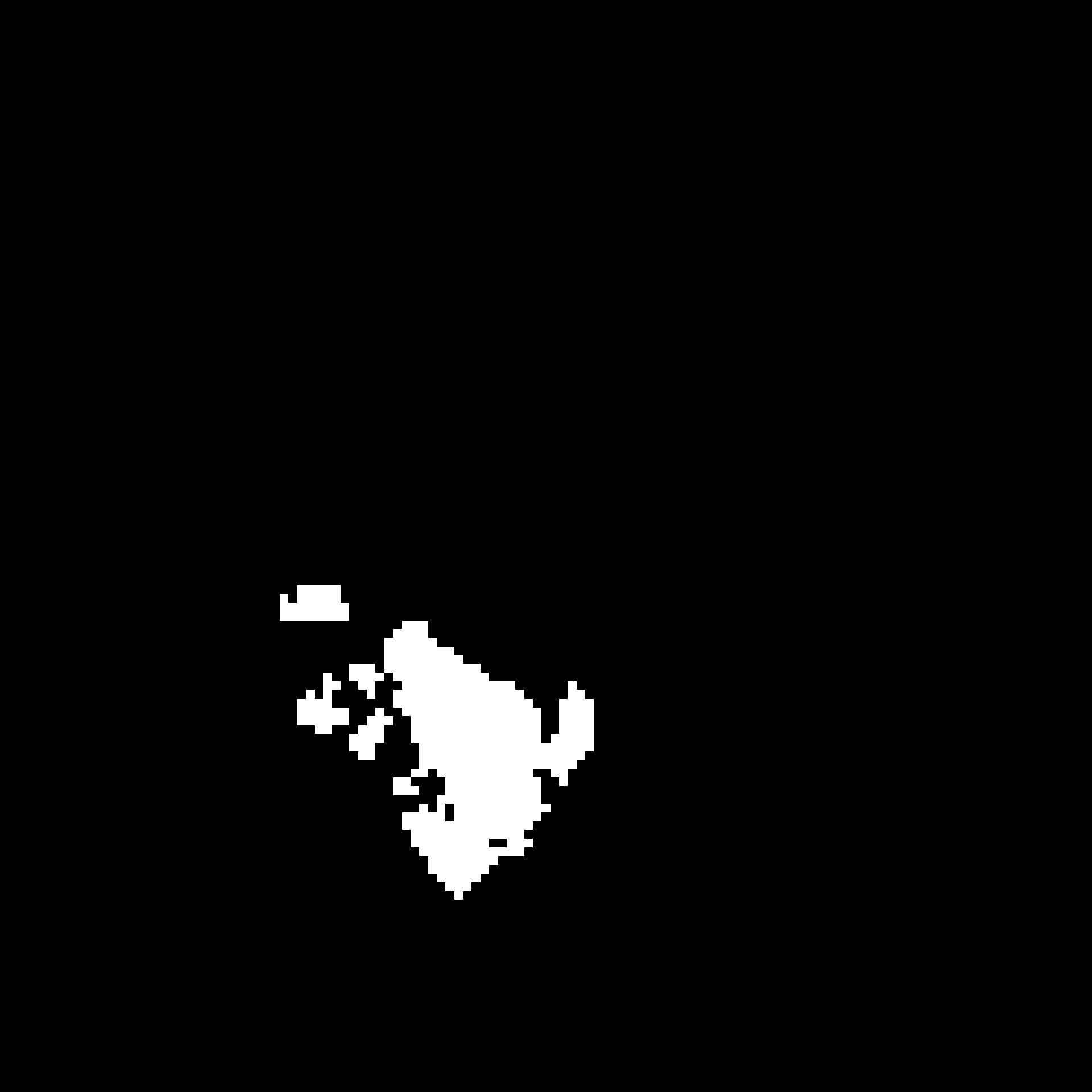} &
        \includegraphics[width=\linewidth]{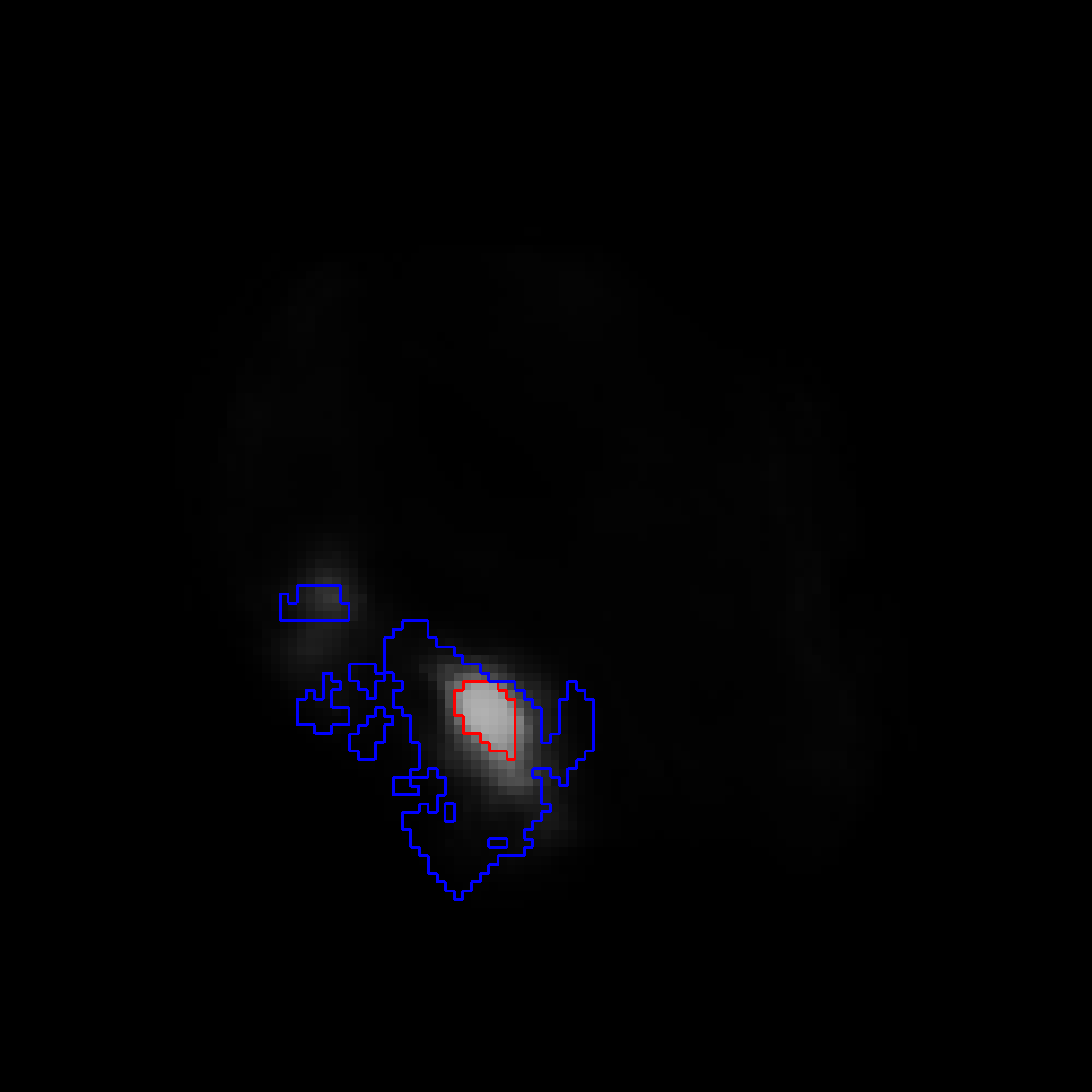} & 
        \includegraphics[width=\linewidth]{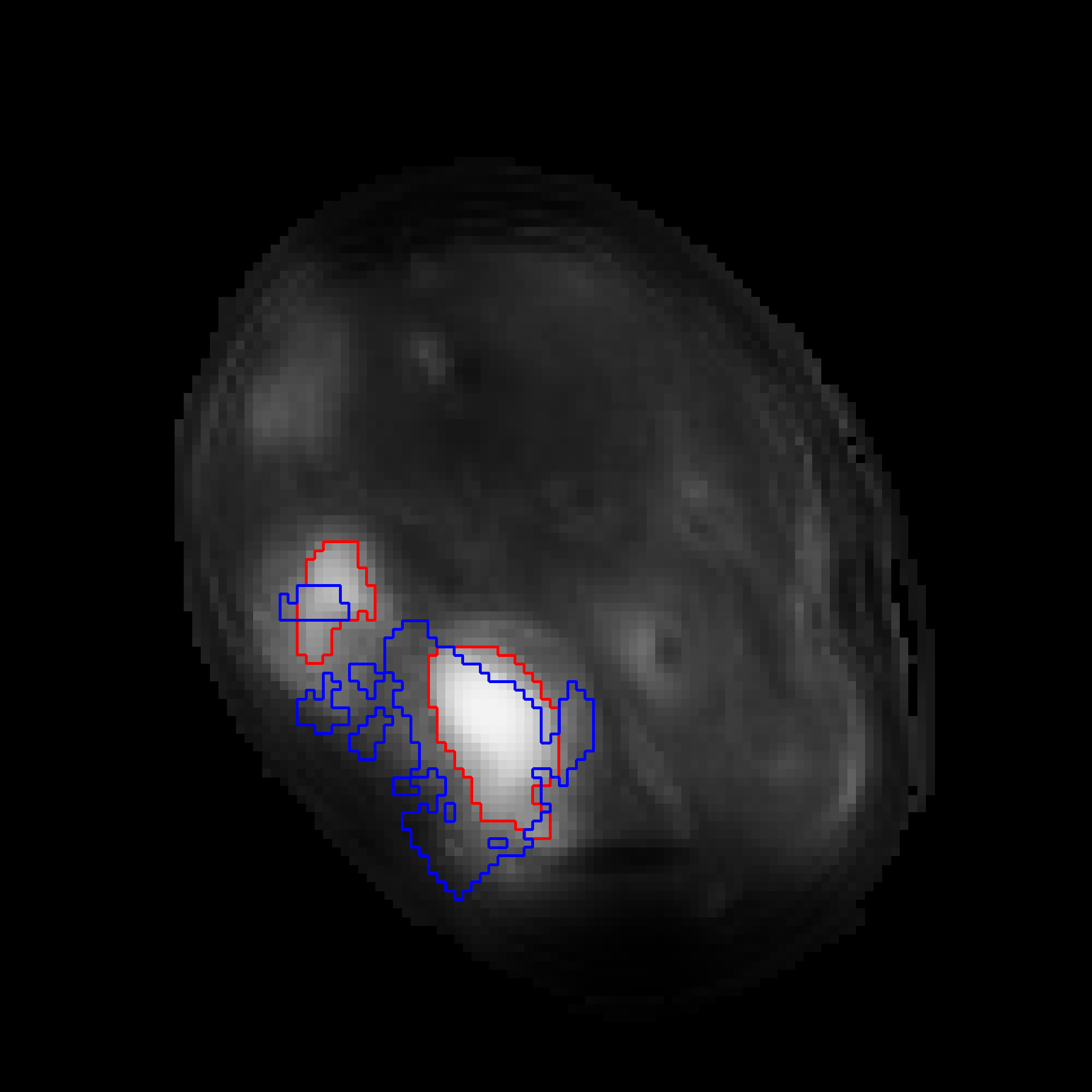} &
        \includegraphics[width=\linewidth]{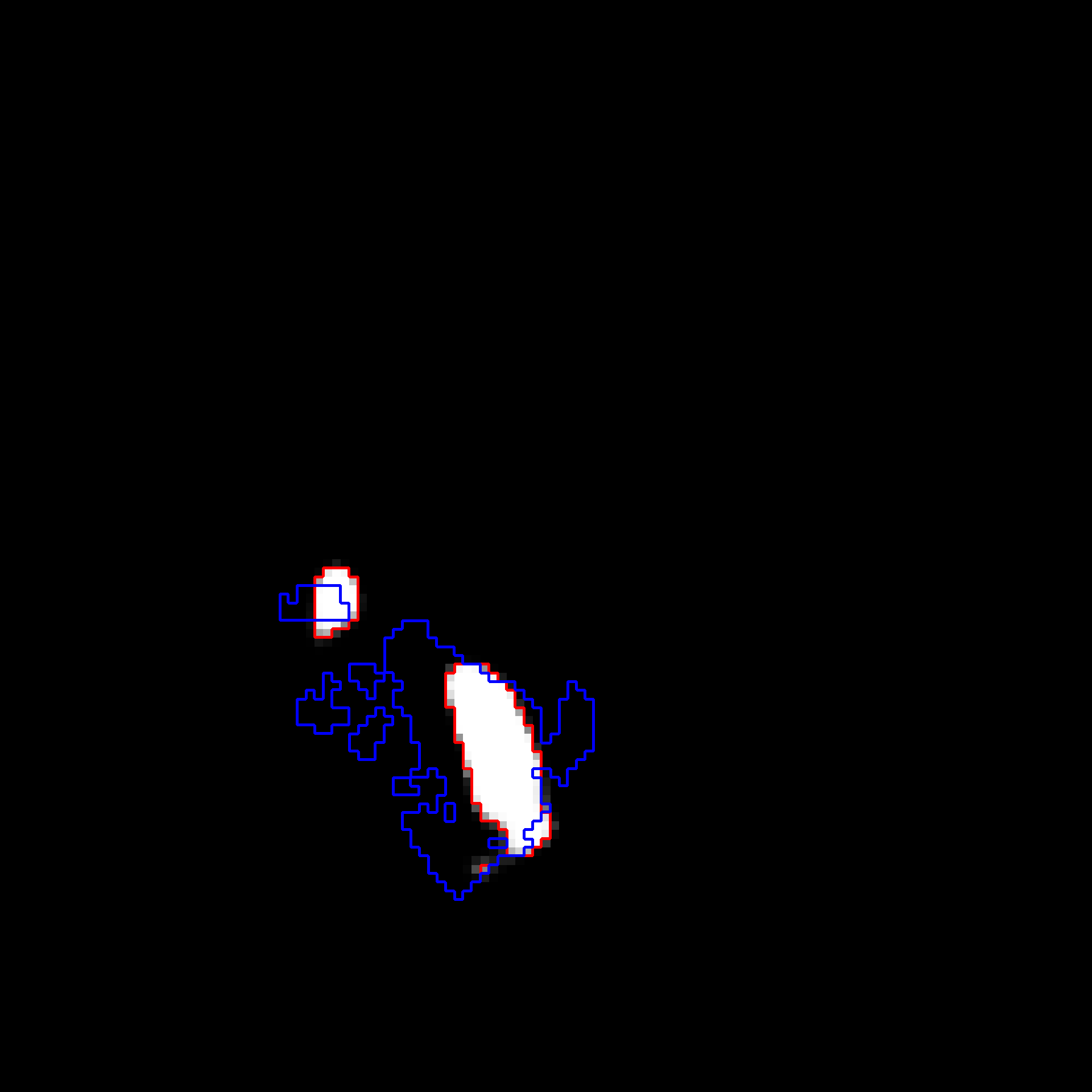} & 
        \includegraphics[width=\linewidth]{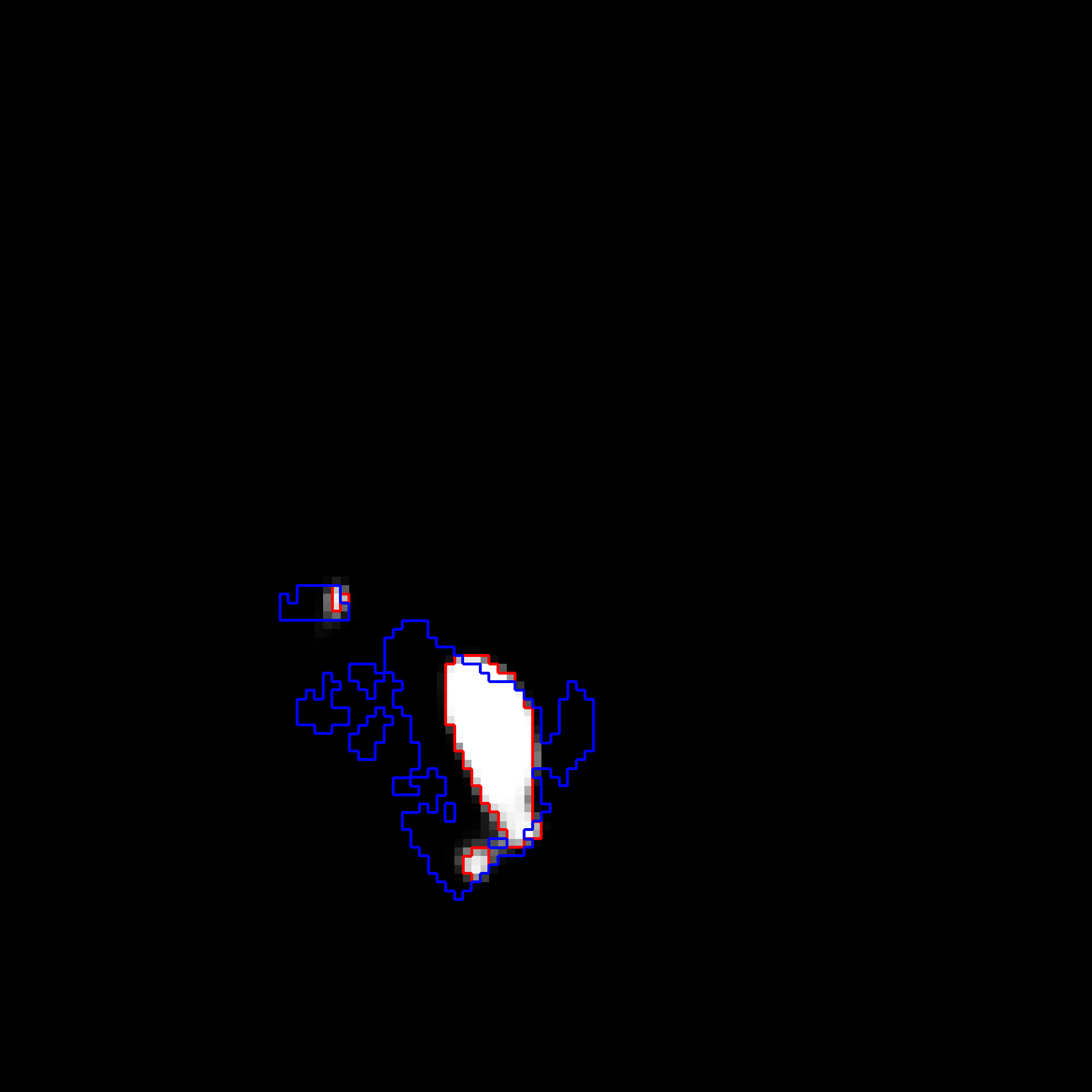} &
        \includegraphics[width=\linewidth]{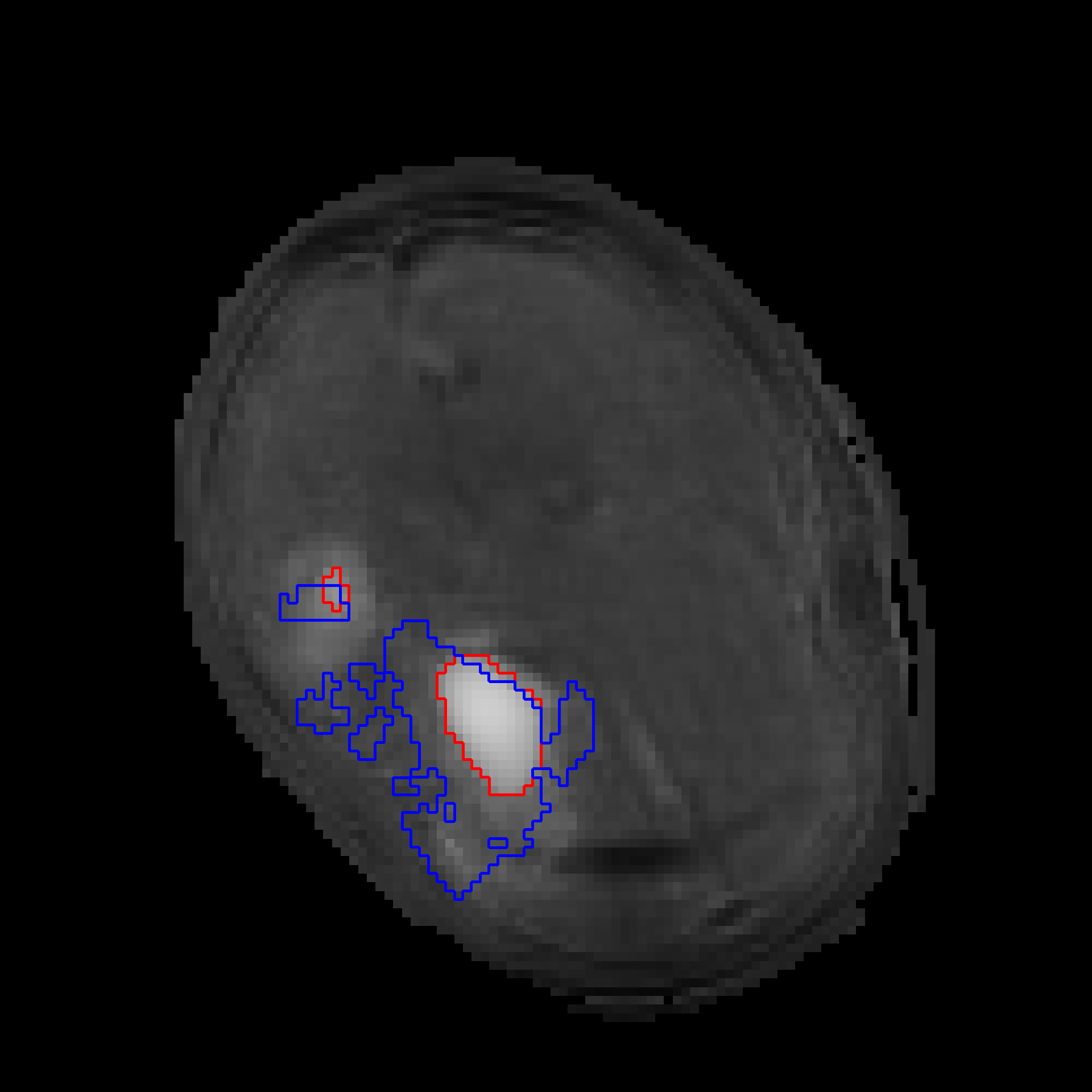} \\ 
        
        \rotatebox{90}{\hspace{20pt} MO17} &
        \includegraphics[width=\linewidth]{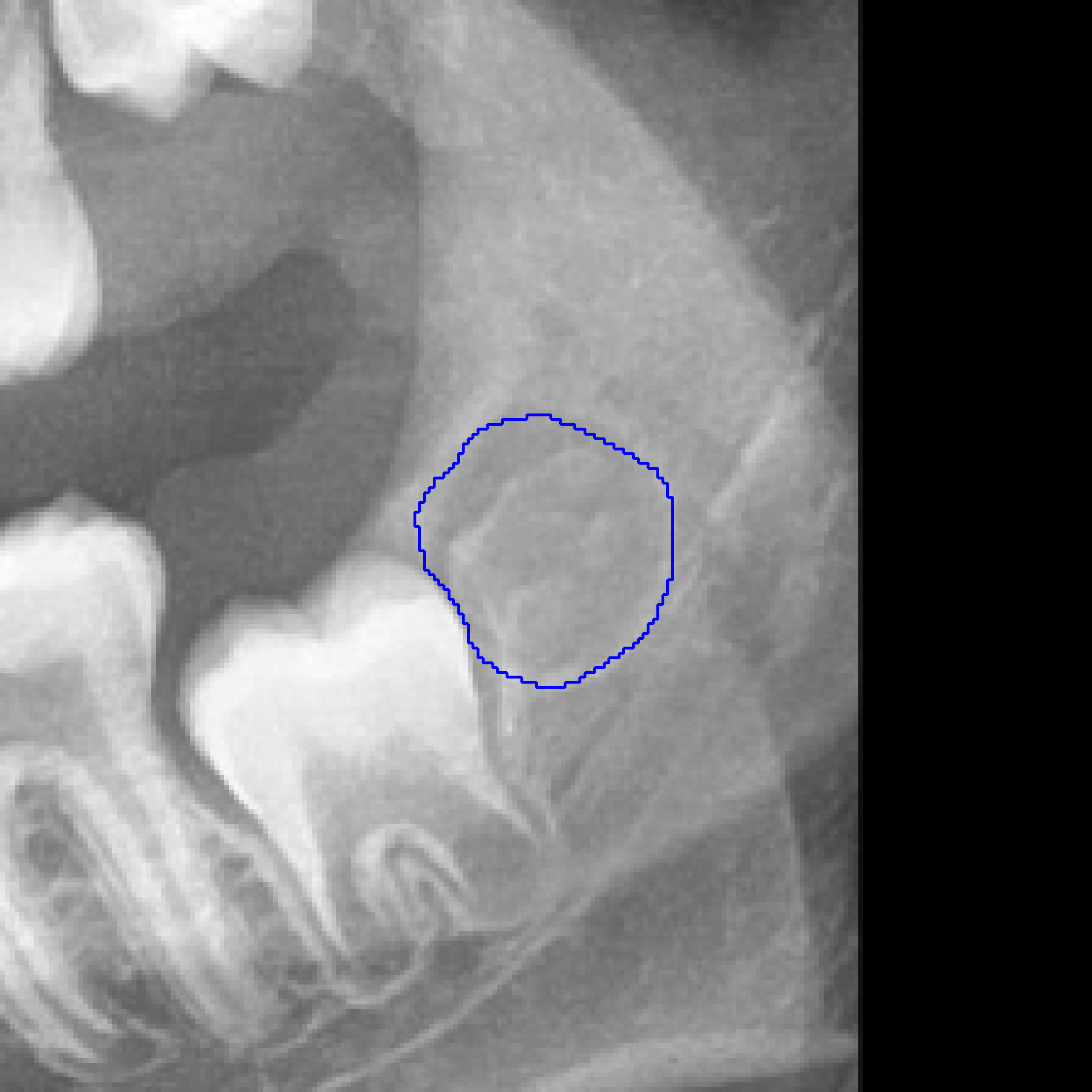} & 
        \includegraphics[width=\linewidth]{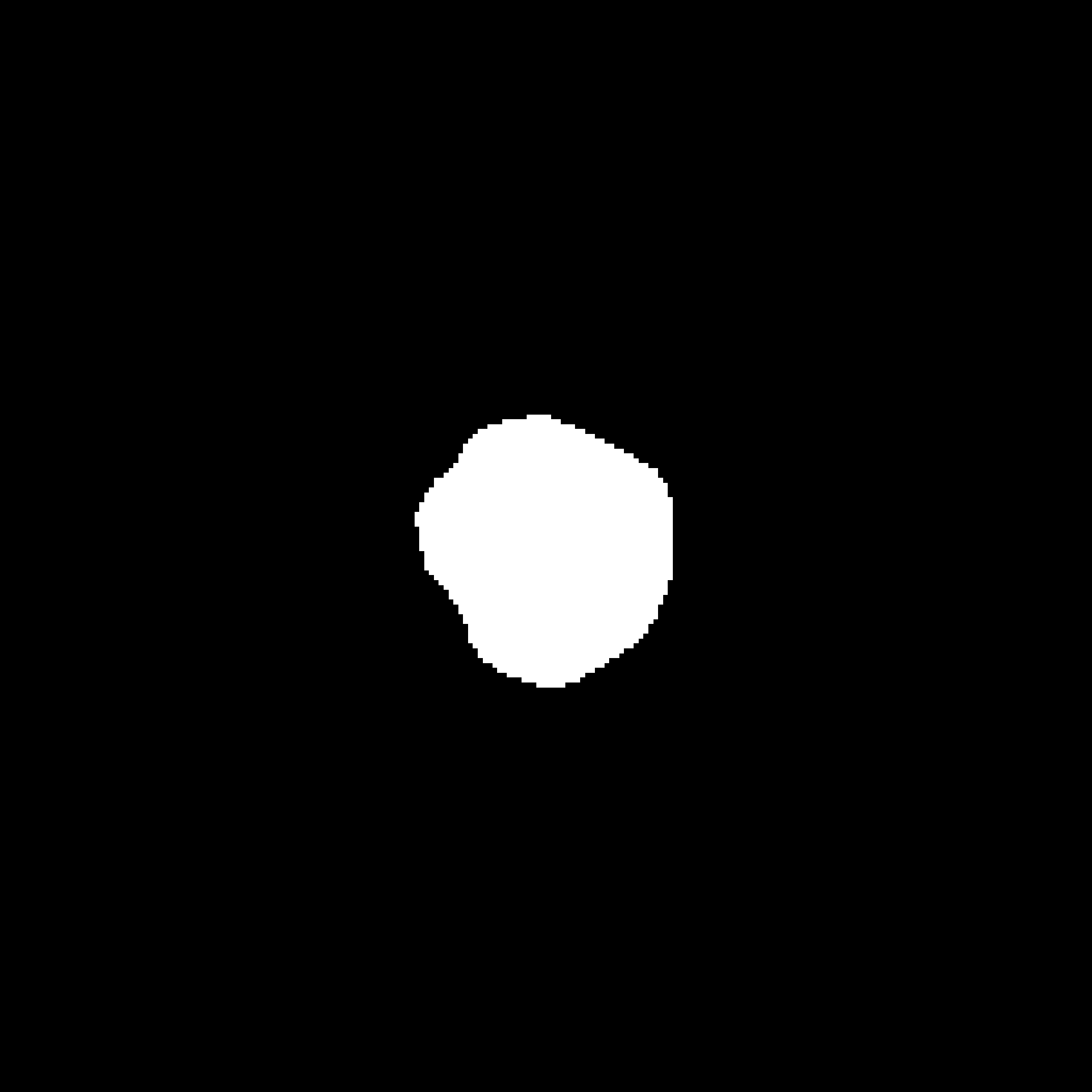} &
        \includegraphics[width=\linewidth]{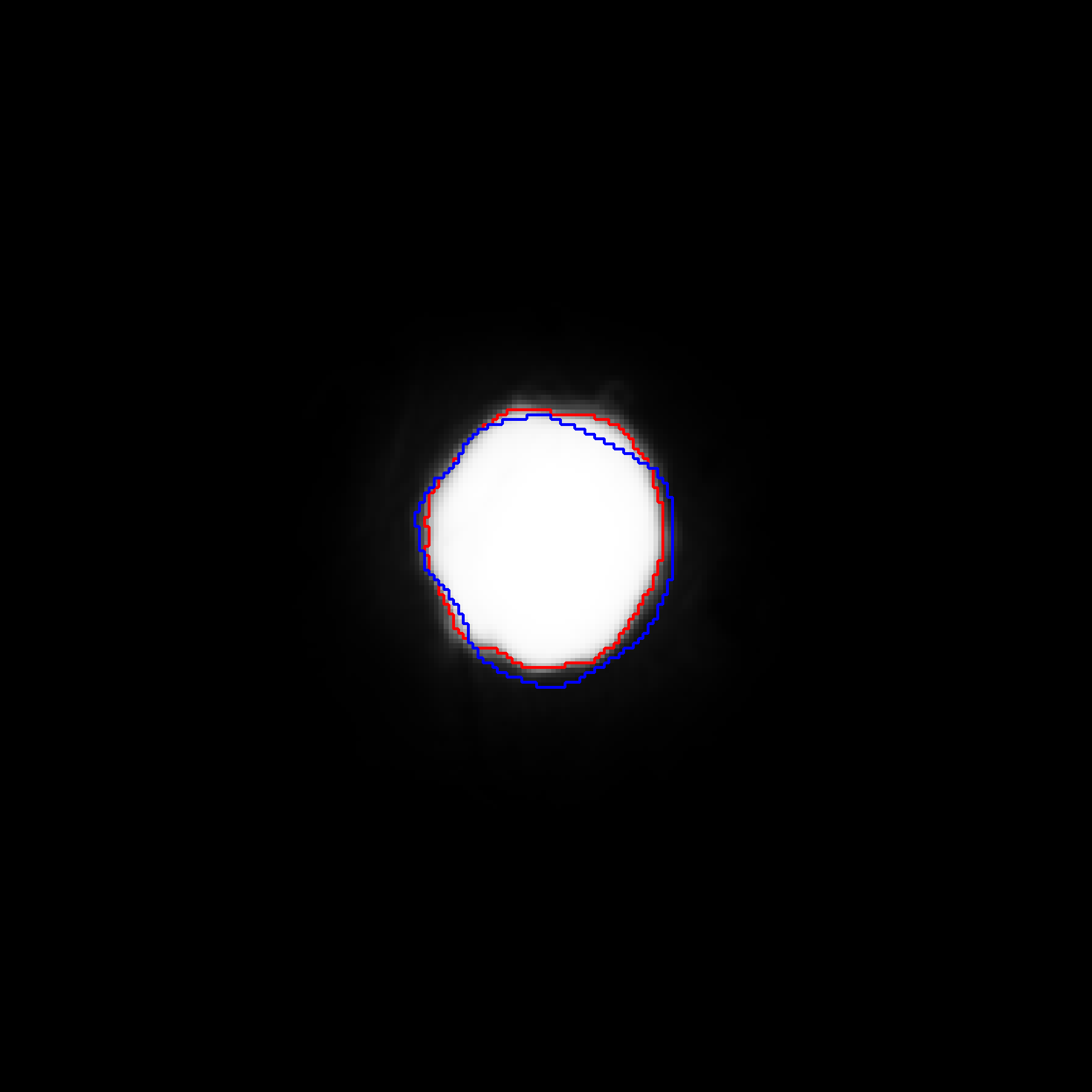} & 
        \includegraphics[width=\linewidth]{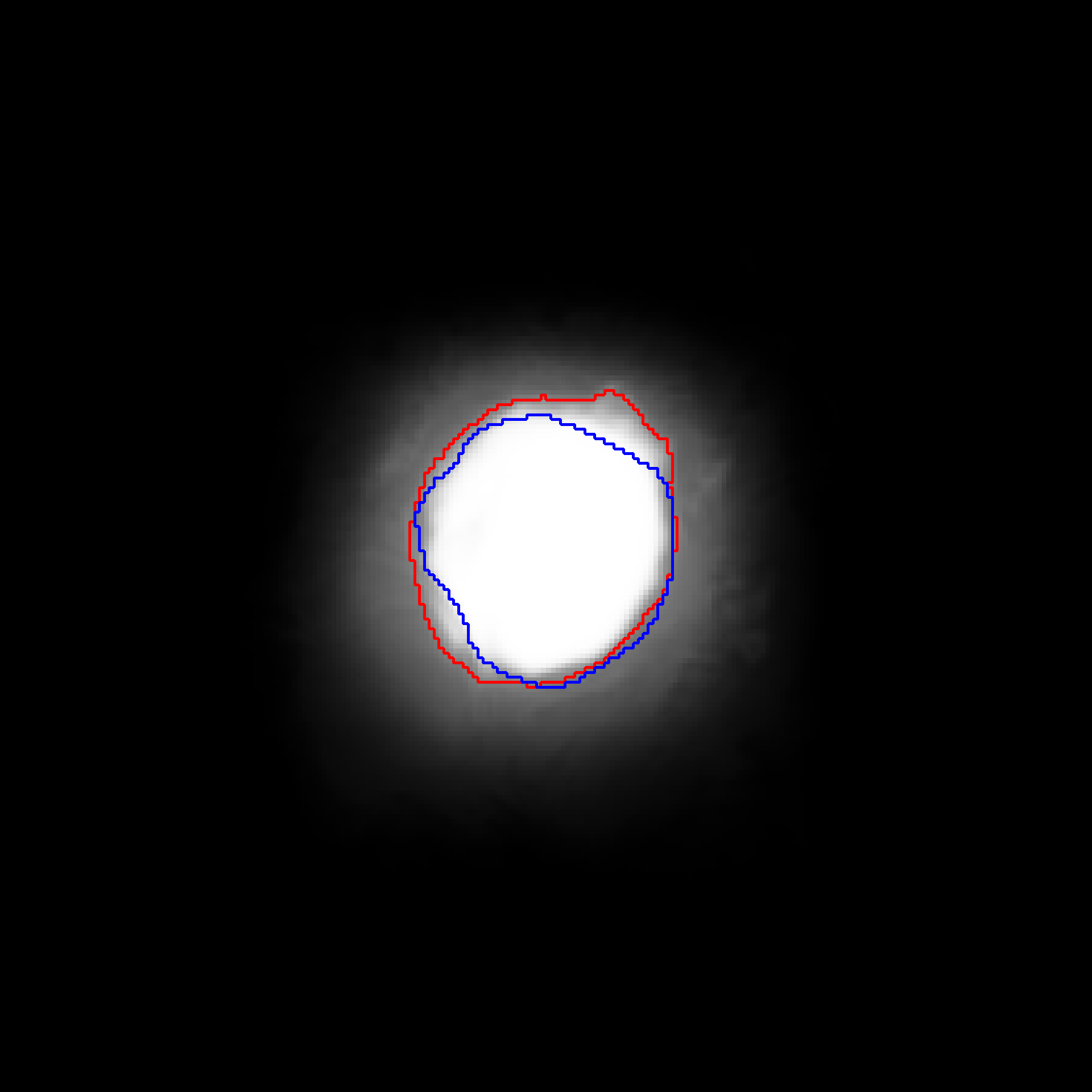} &
        \includegraphics[width=\linewidth]{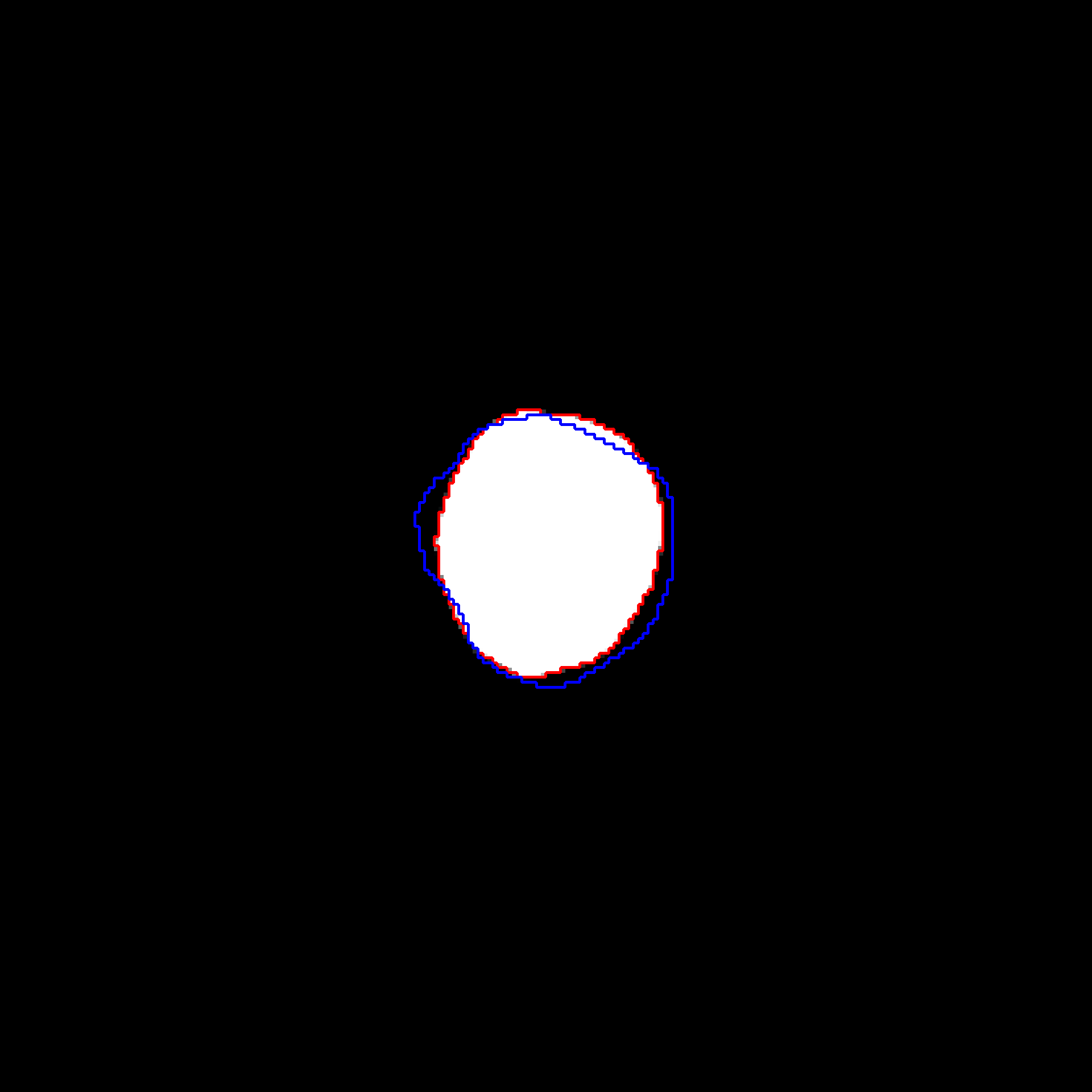} & 
        \includegraphics[width=\linewidth]{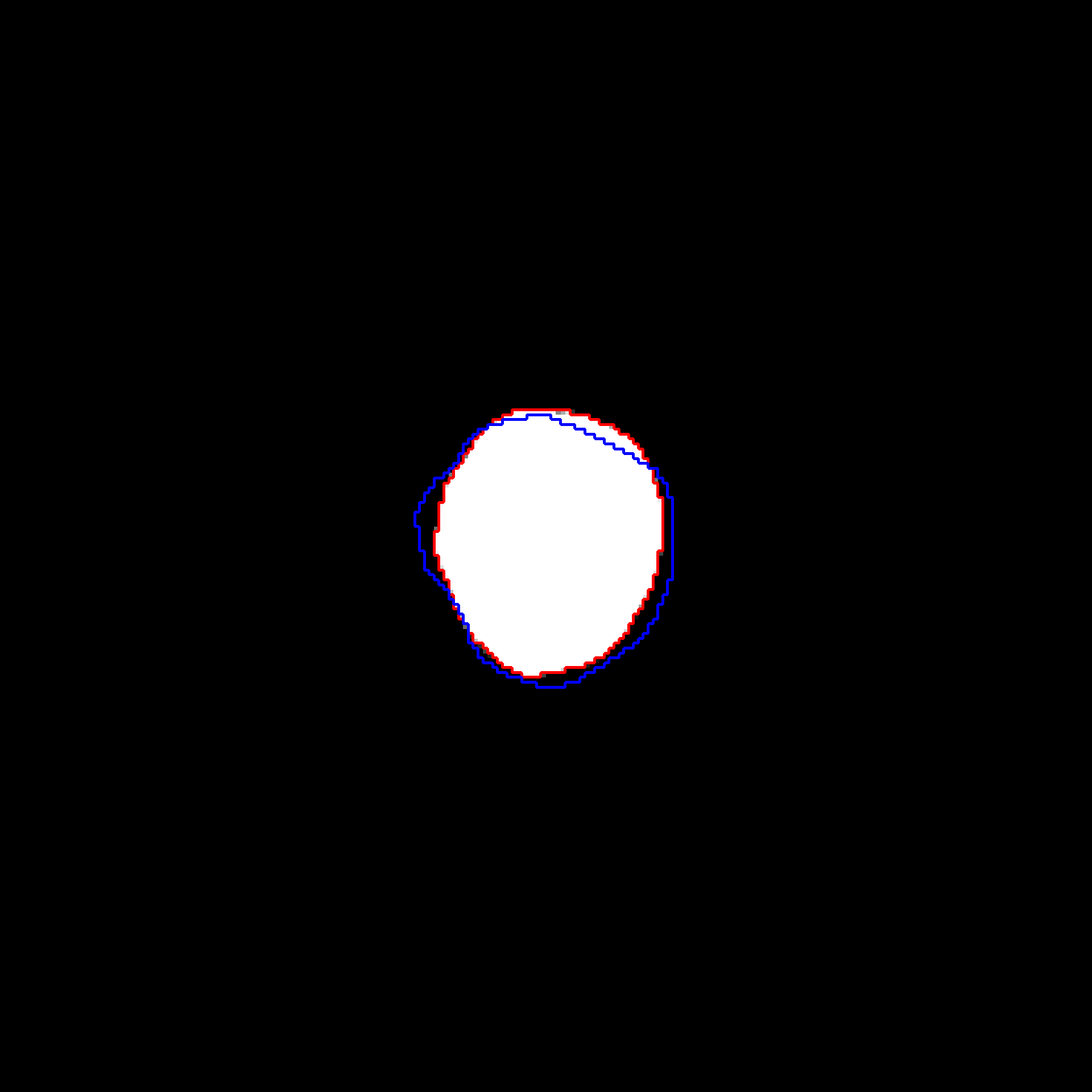} &
        \includegraphics[width=\linewidth]{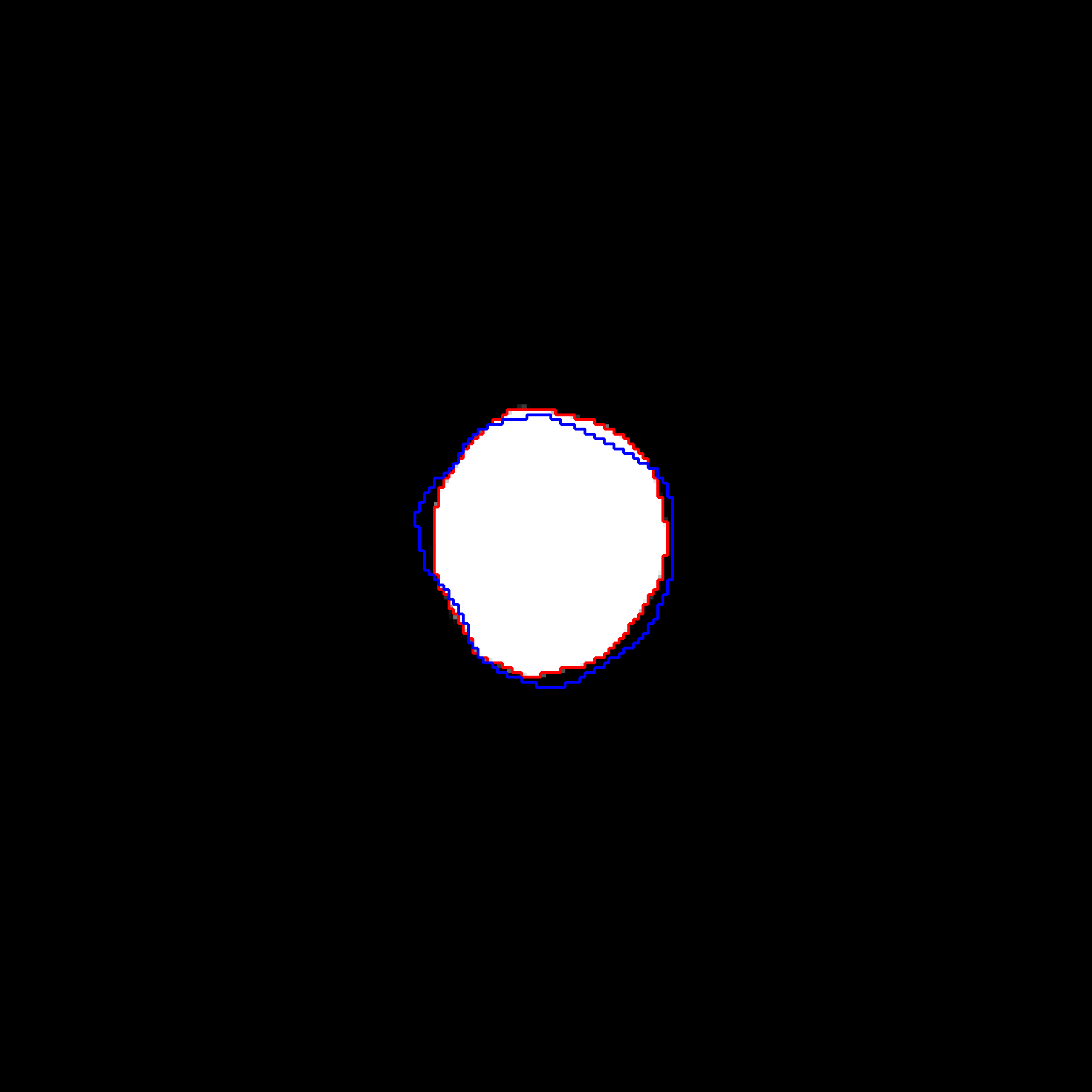} \\
        
        \rotatebox{90}{\hspace{5pt} PO18} &
        \includegraphics[width=\linewidth]{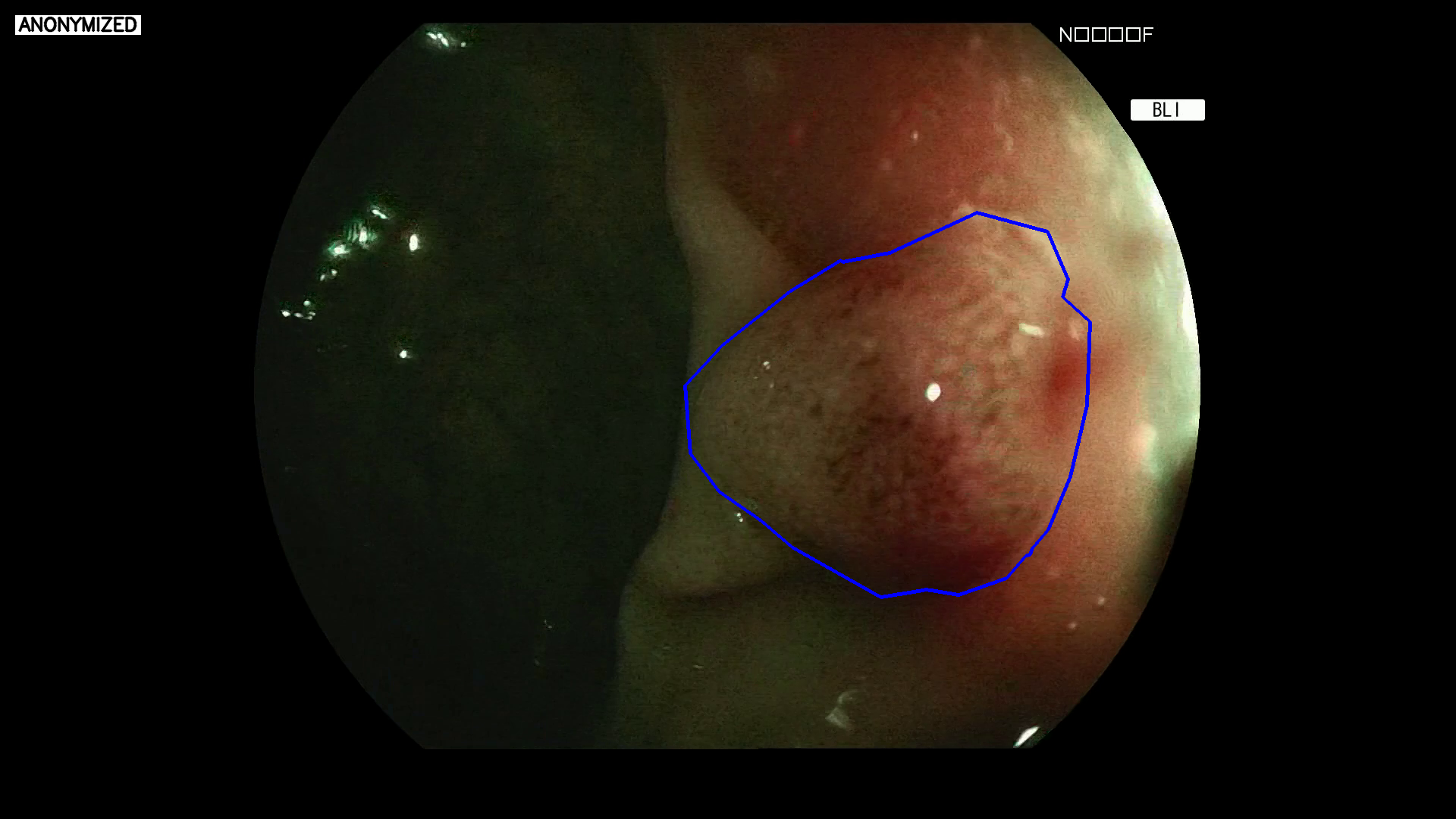} &
        \includegraphics[width=\linewidth]{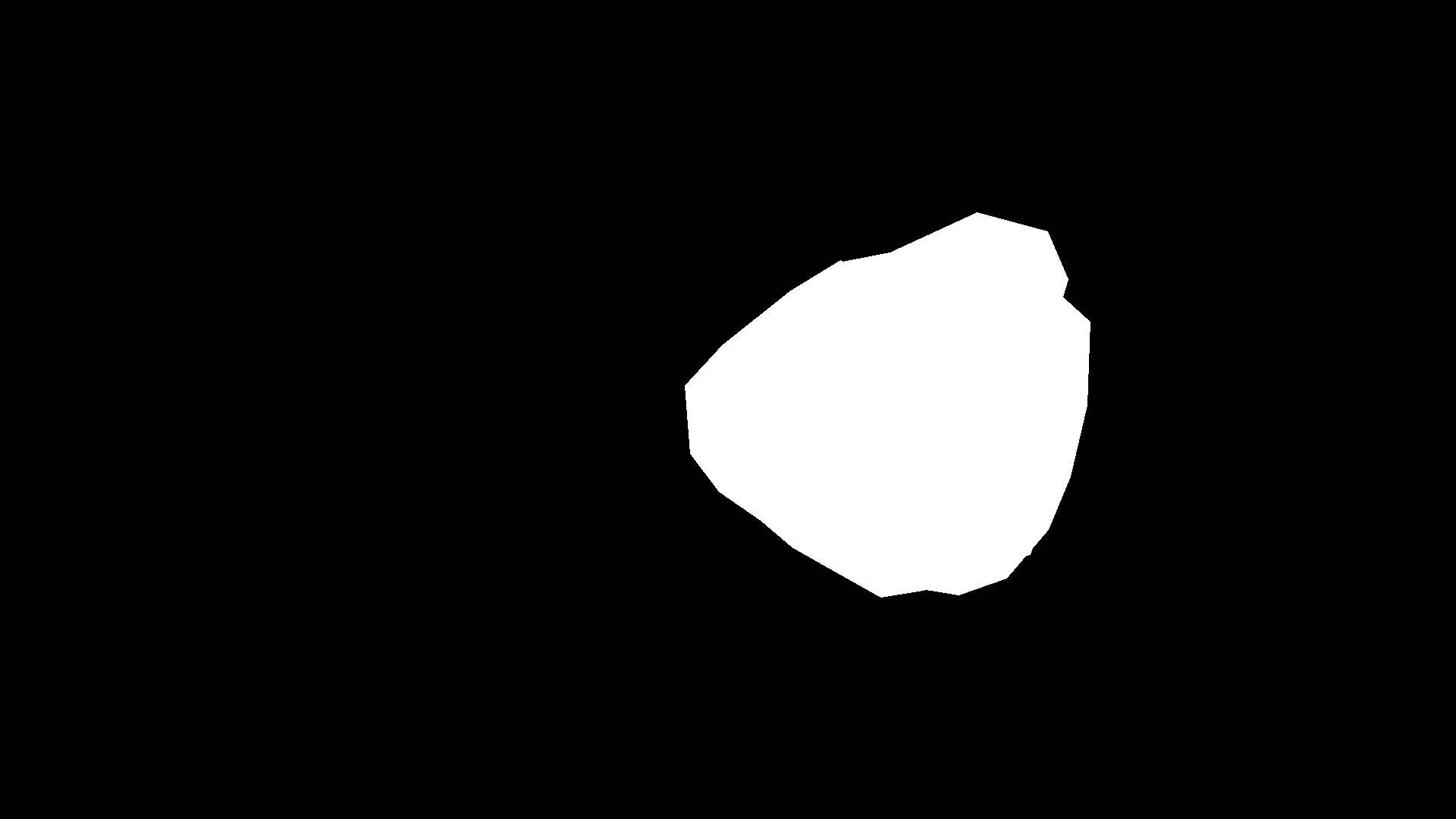} &
        \includegraphics[width=\linewidth]{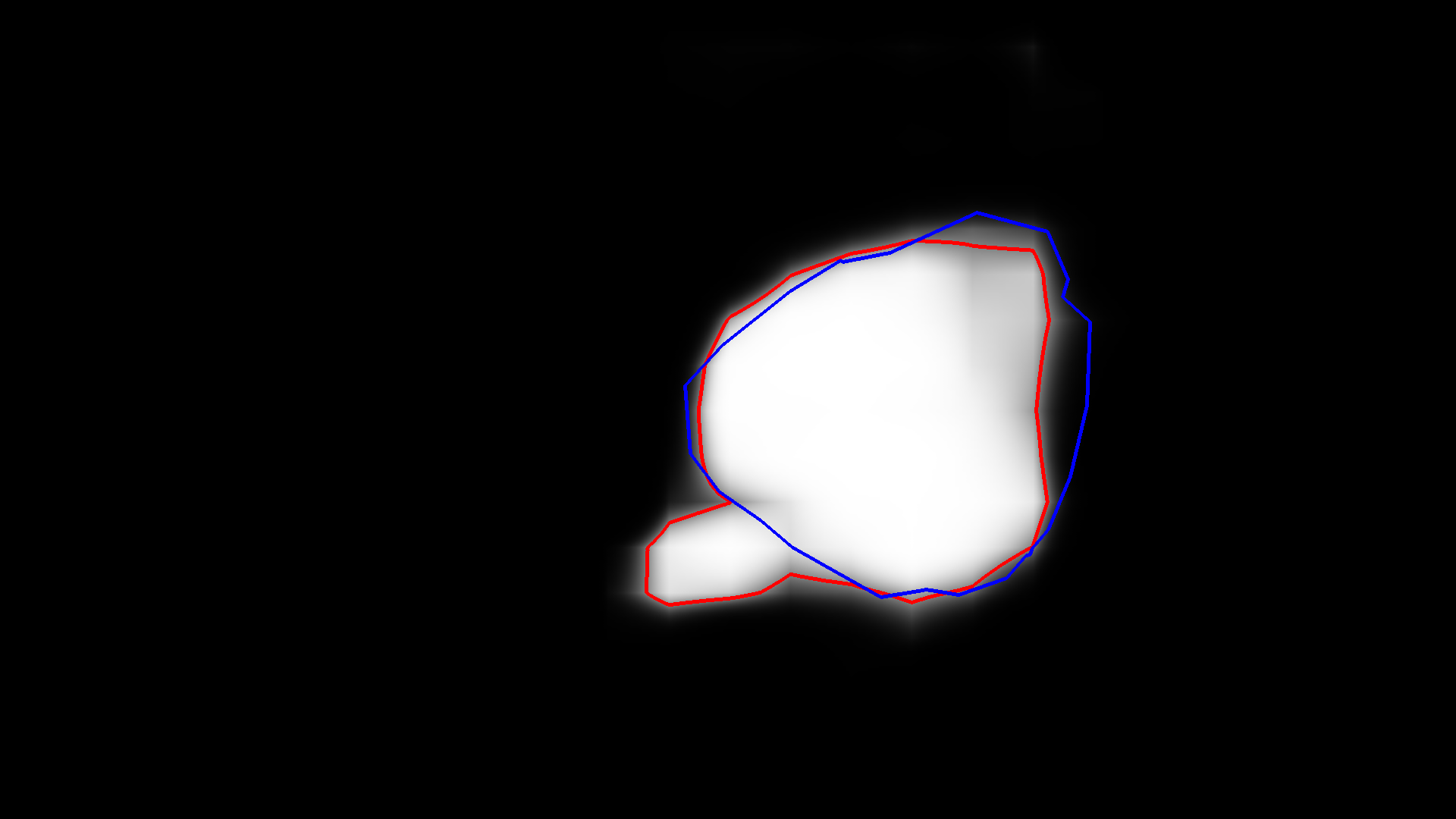} & \includegraphics[width=\linewidth]{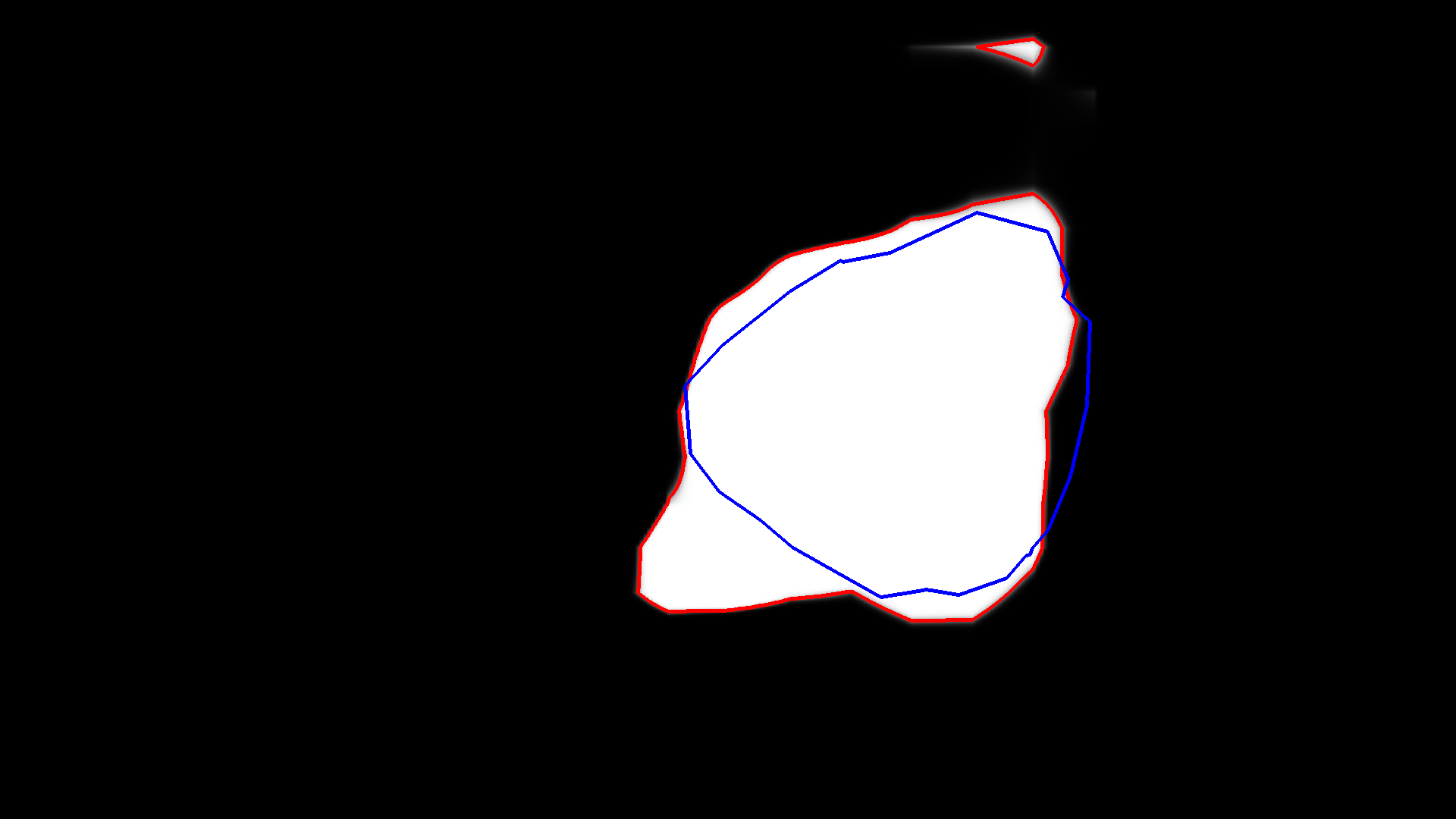} &
        \includegraphics[width=\linewidth]{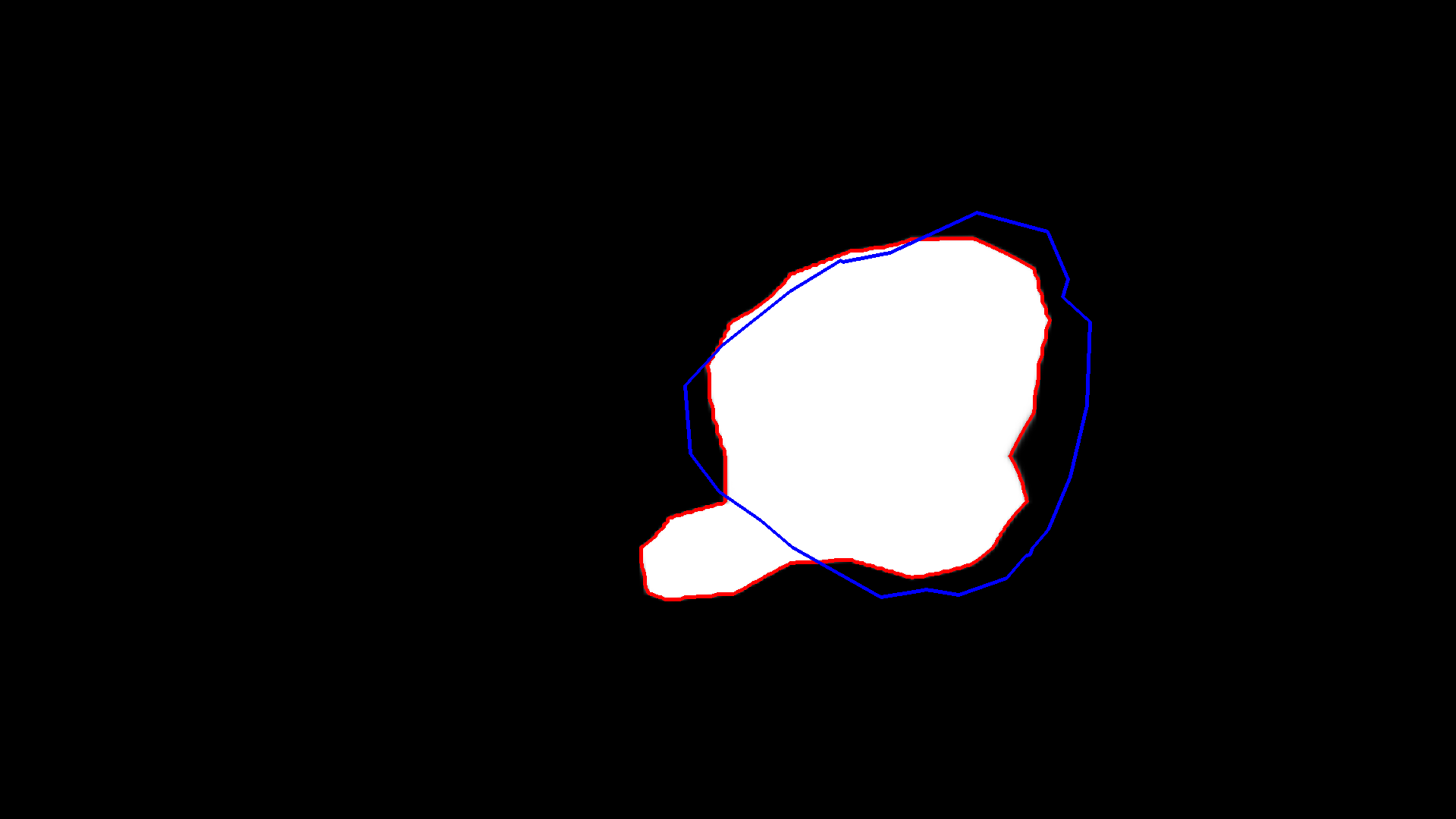} & \includegraphics[width=\linewidth]{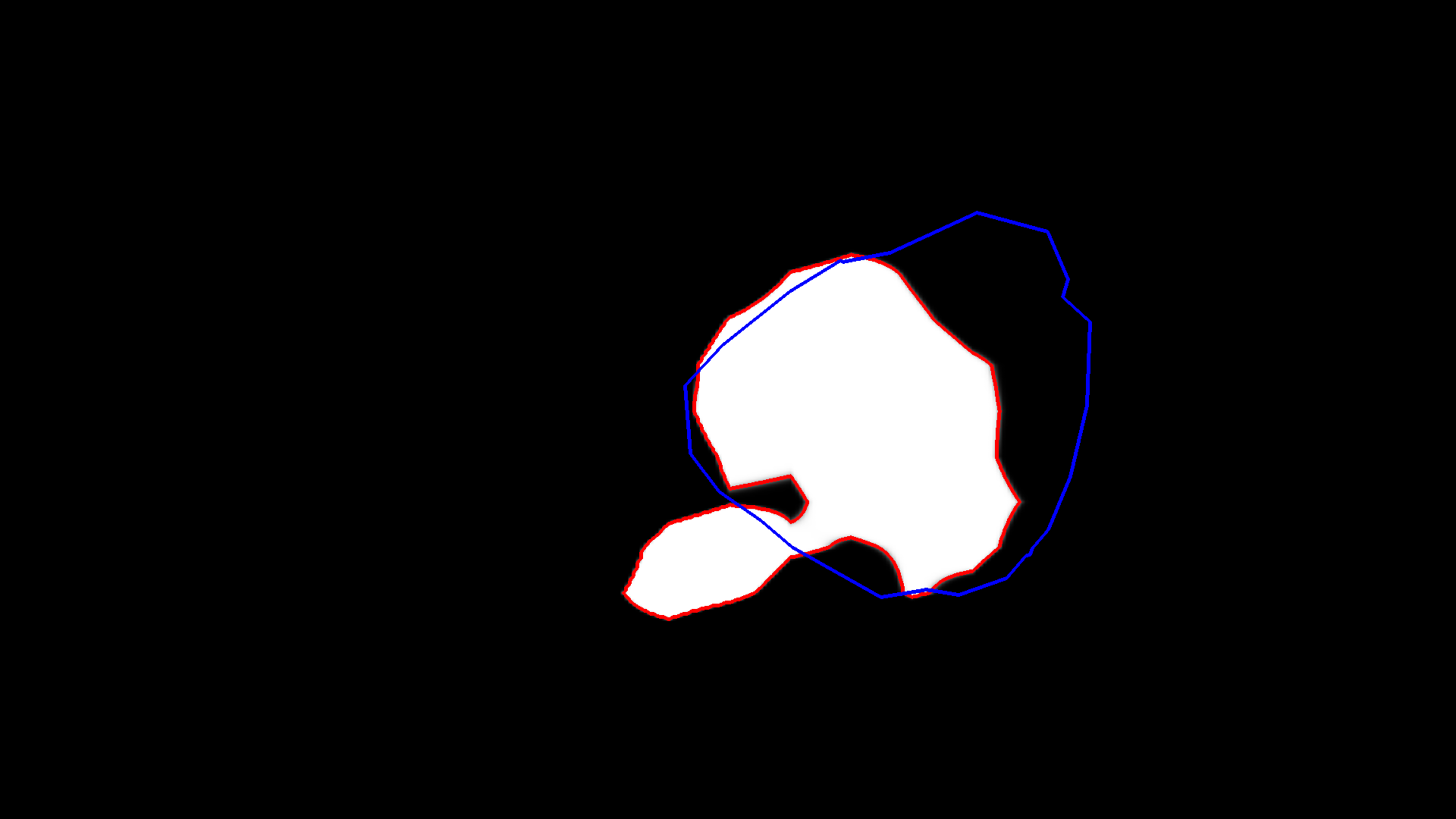} &
        \includegraphics[width=\linewidth]{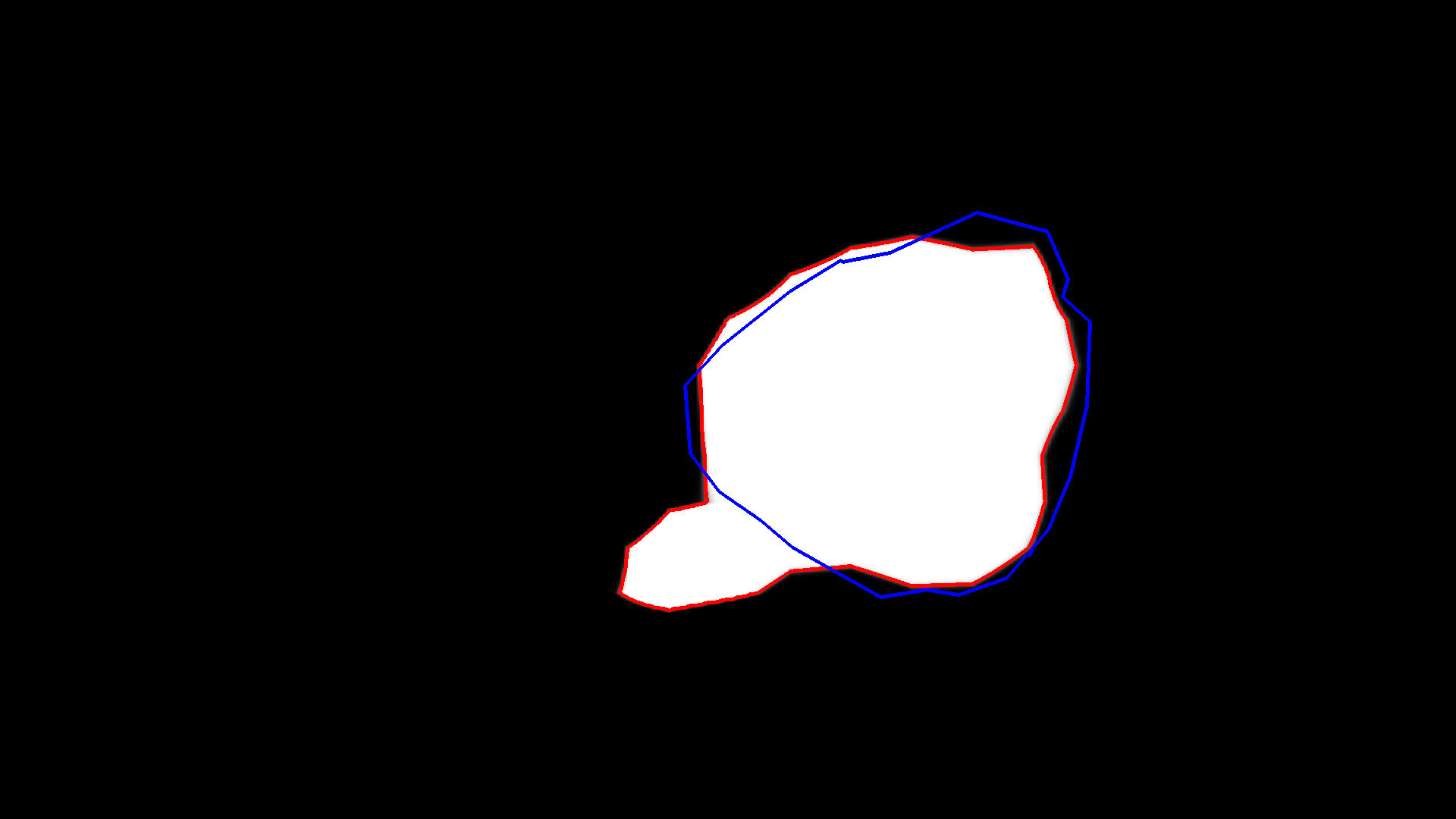} \\ 
        
        \rotatebox{90}{\hspace{45pt} \review{WM17}} &
        \includegraphics[width=\linewidth]{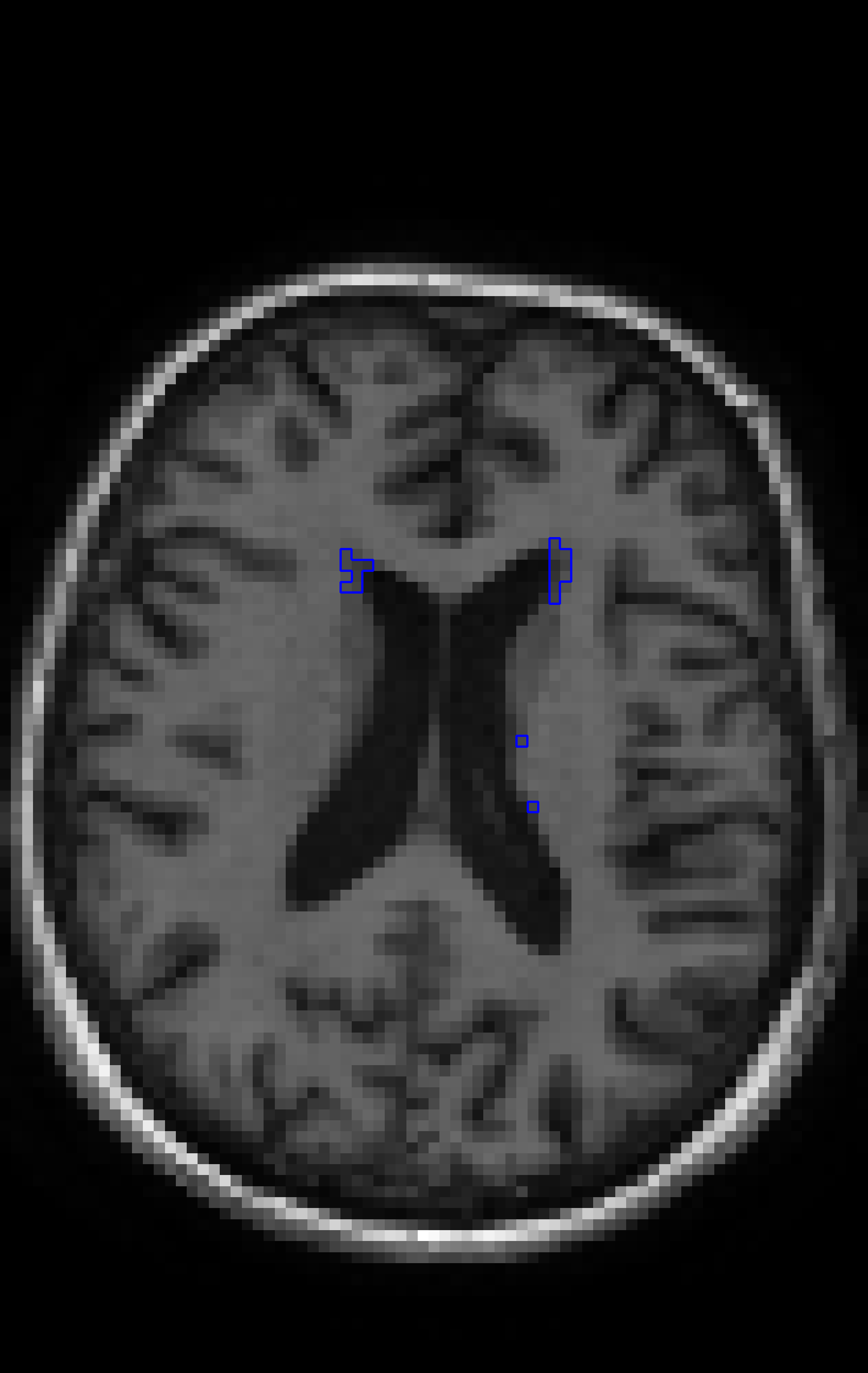} & 
        \includegraphics[width=\linewidth]{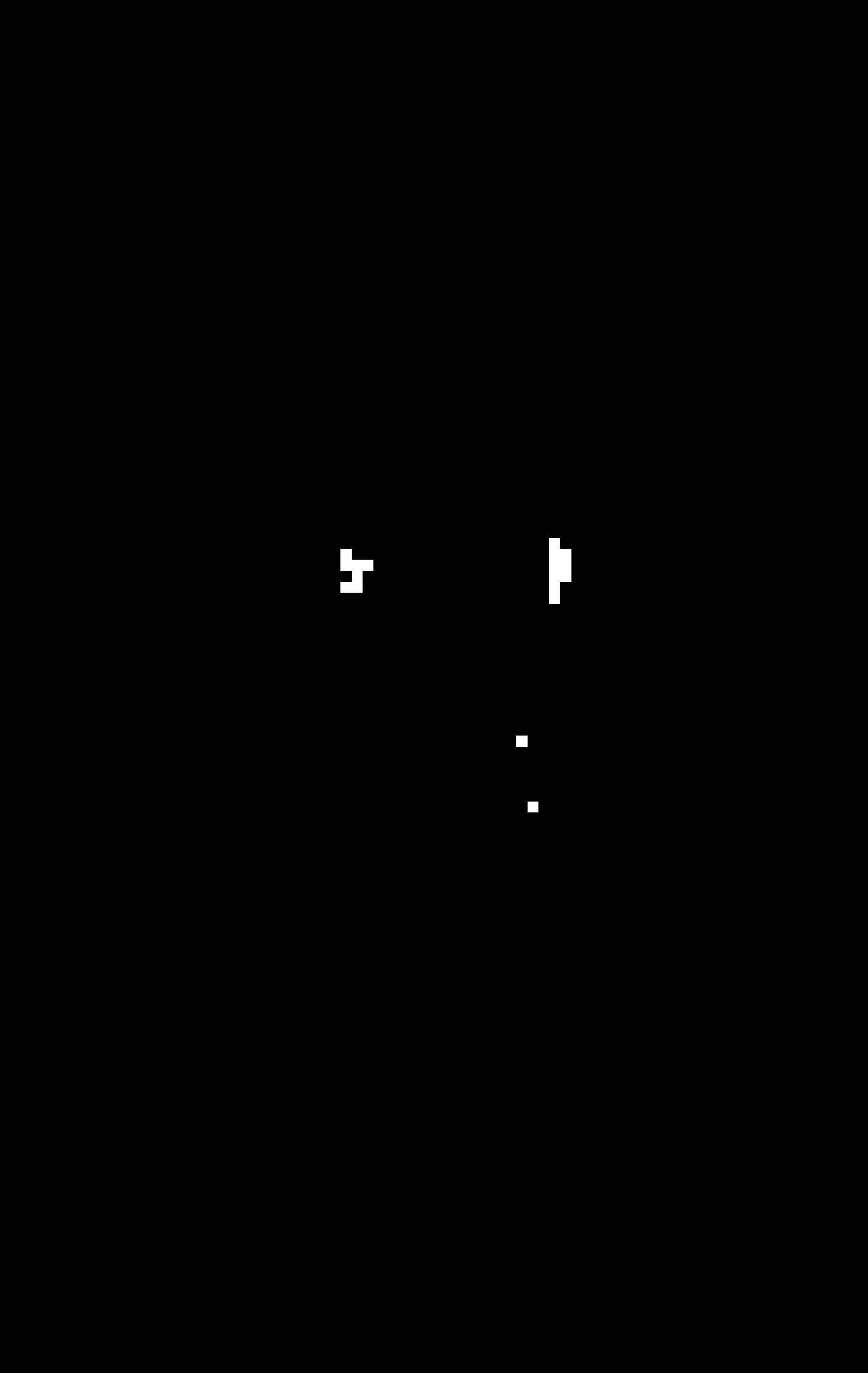} &
        \includegraphics[width=\linewidth]{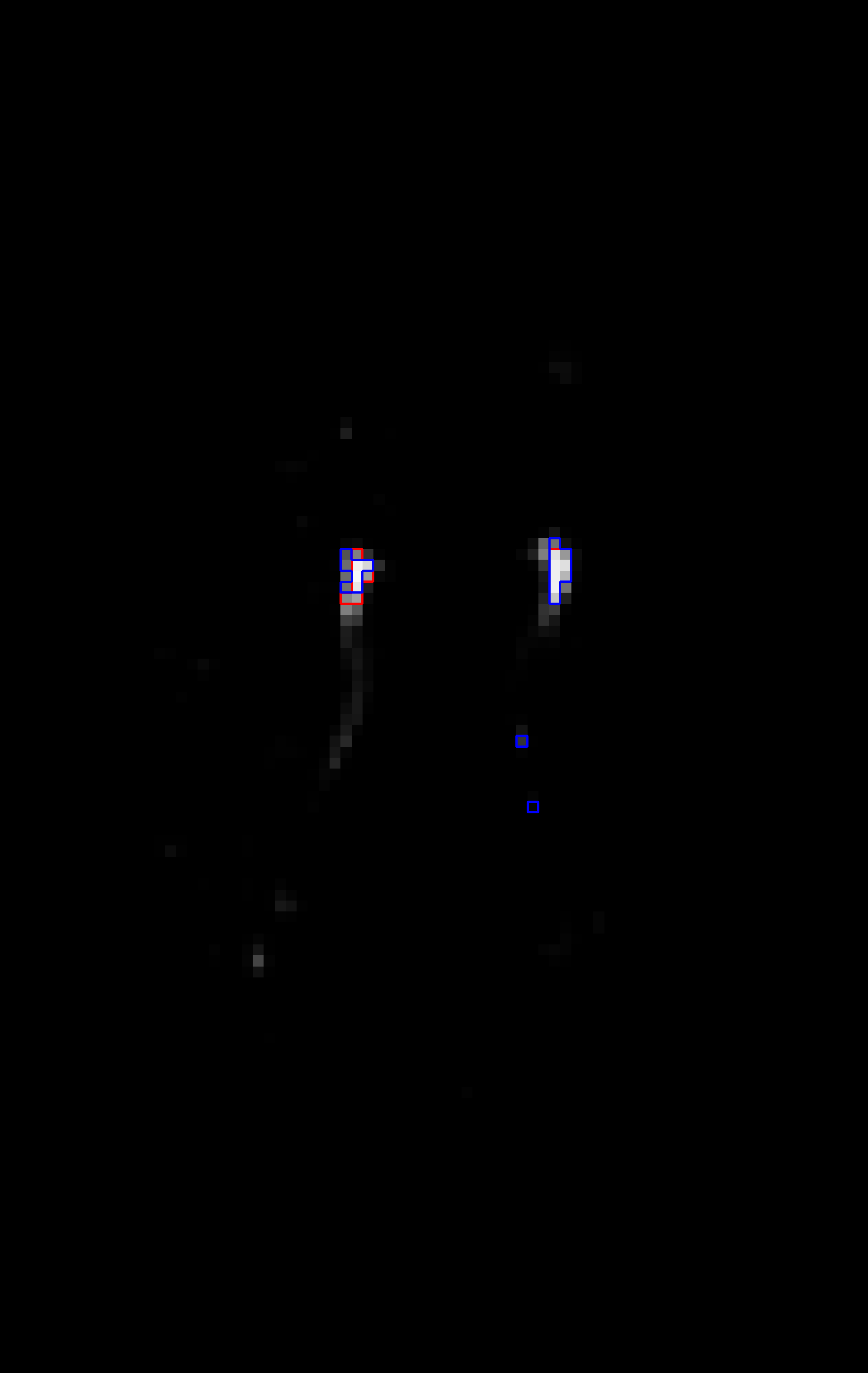} & 
        \includegraphics[width=\linewidth]{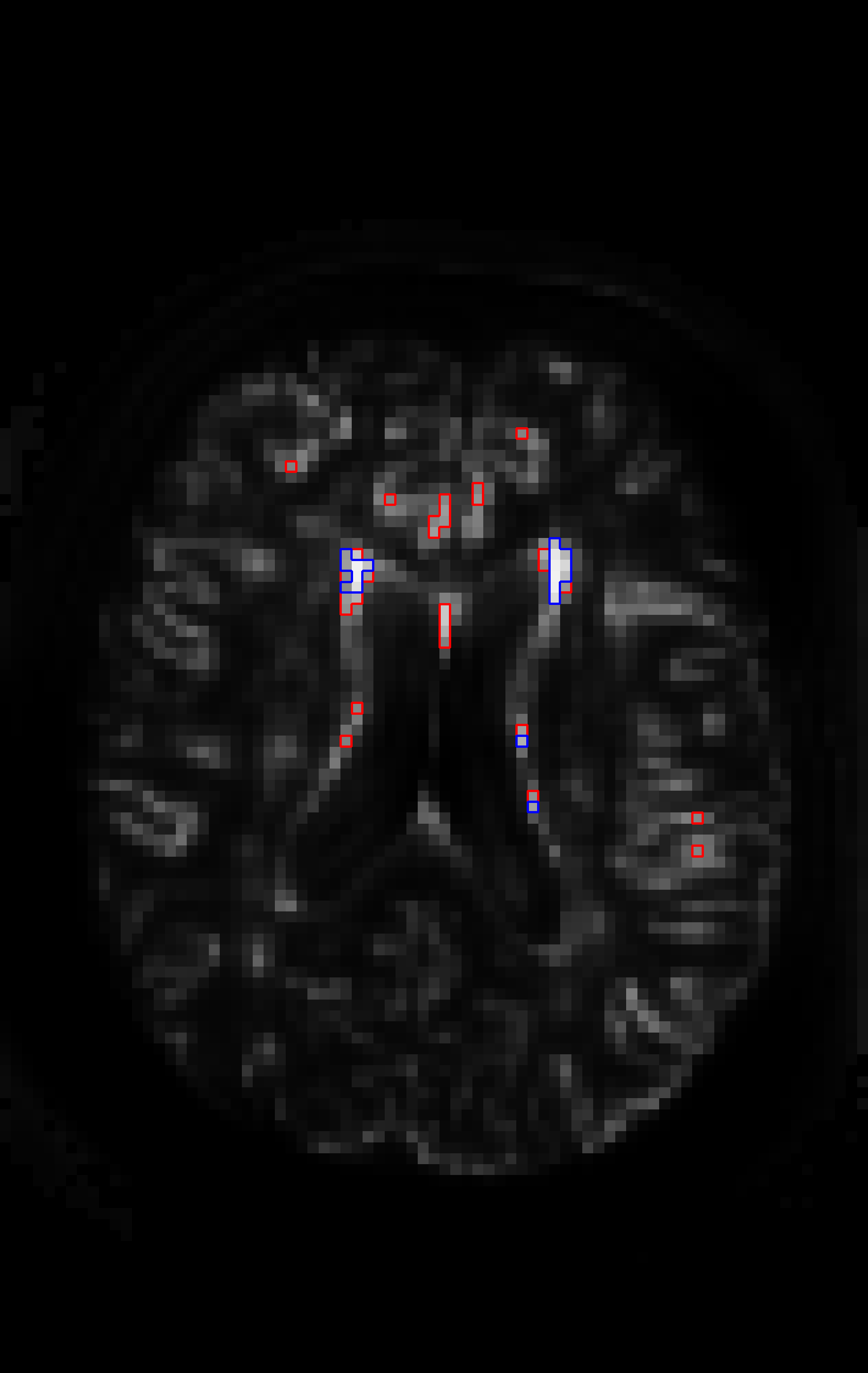} &
        \includegraphics[width=\linewidth]{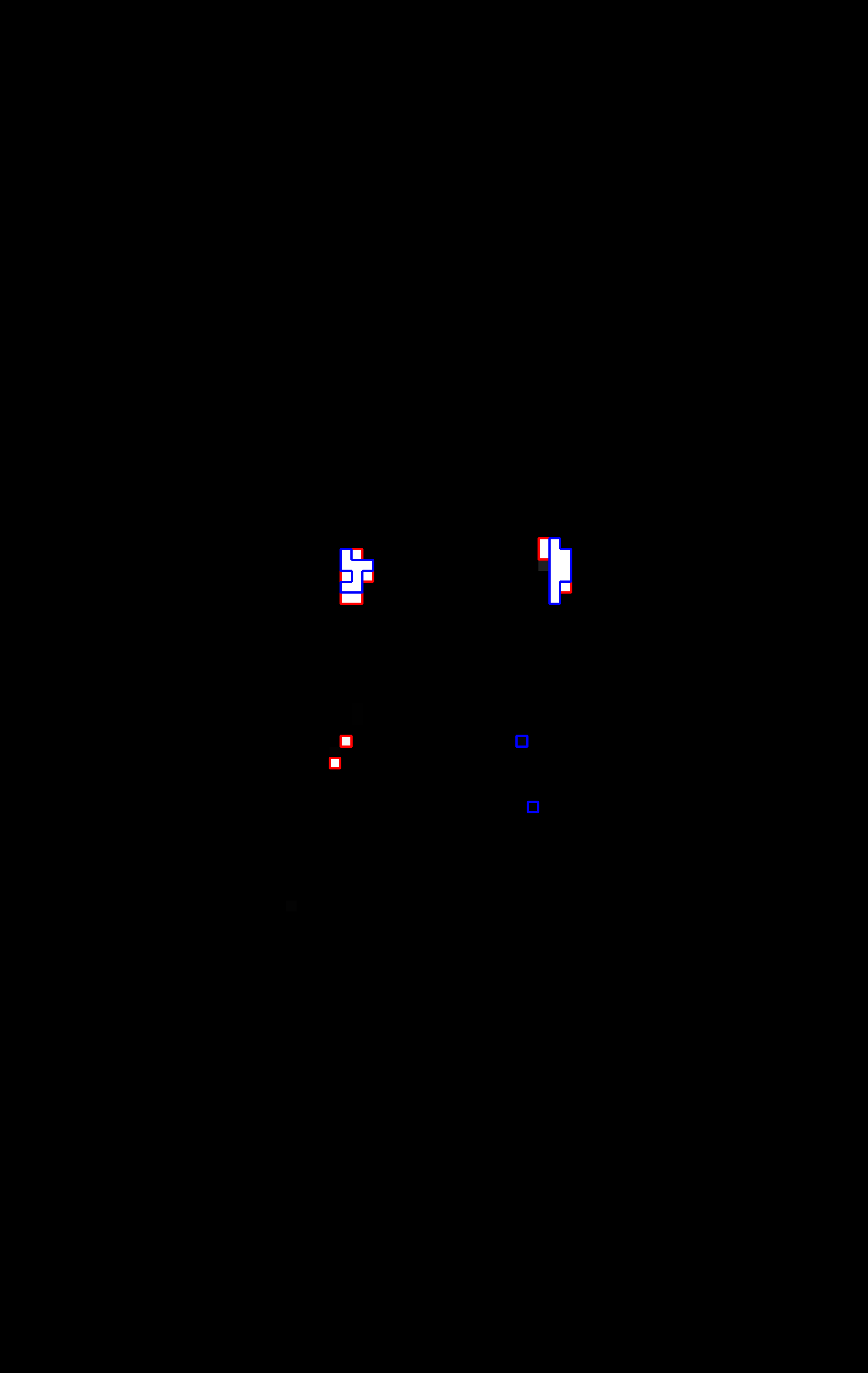} & 
        \includegraphics[width=\linewidth]{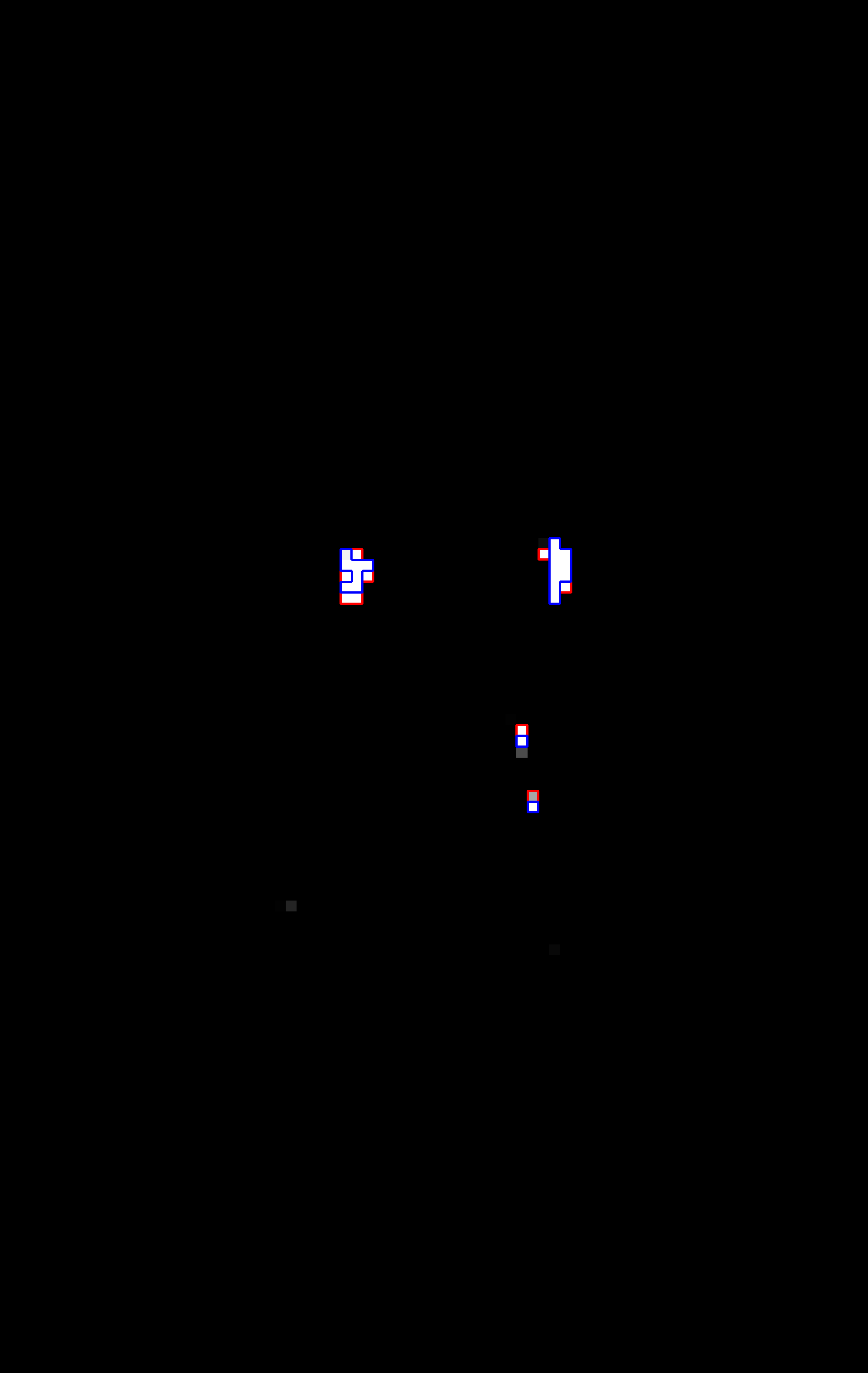} &
        \includegraphics[width=\linewidth]{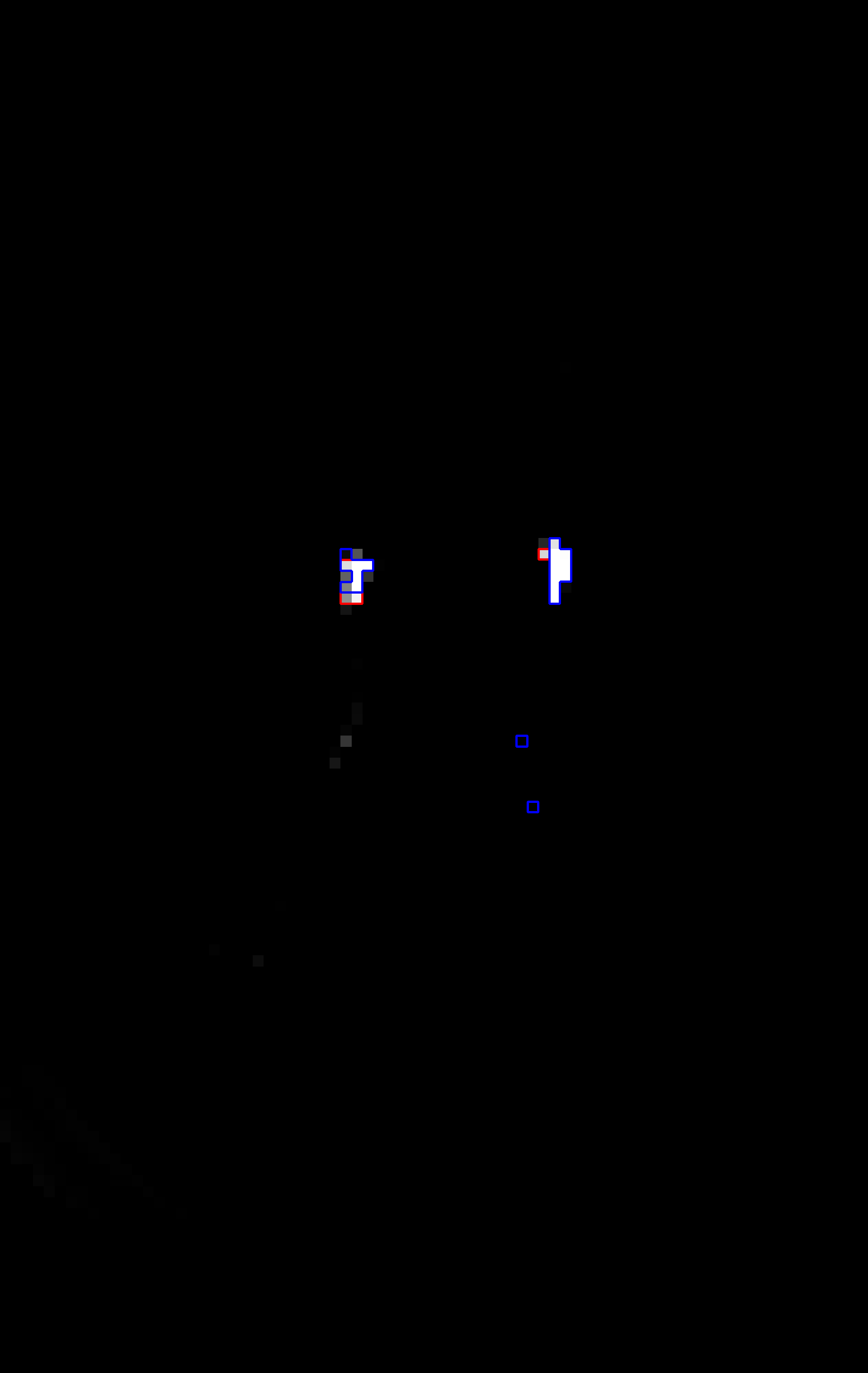} \\
    \end{tabularx}

    \caption{\review{A qualitative evaluation of the resulting predictions from networks trained with CE, wCE, sDice, sJaccard and Lov\'{a}sz, columns 3-7, respectively. The predicted delineations are obtained after thresholding at 0.5 and are overlayed in red. The original images inputted to the network (or part of) are shown on the left. The ground truths are shown in the second column and their delineations are overlayed in blue on the predictions. The slices shown had a median Dice, both across cases and within one case (having a minimum of one foreground pixel), for the CE model and were checked visually for being representative.}}
    \label{fig:losses_qual}
\end{sidewaysfigure*}

\subsubsection{Qualitative inspection\label{par:qualitative_inspection}}
In Fig. \ref{fig:losses_qual} we present example slices for each dataset. From left to right the images represent the input (with ground truth delineations in blue), the ground truth segmentation and the predictions using each of the original loss functions (with ground truth delineations in blue and predicted delineations in red). The slices shown had a median Dice, both across cases and within one case (having a minimum of one foreground pixel), for the CE model and were checked for being representative. We note a wider value range for the outputs of the cross-entropy based losses. In particular, at a threshold of 0.5, the CE model seems to systematically under-segment the structures. We further note that the outputs of the metric-sensitive losses are highly similar and have a narrow value range close to 1.0. Only for both ISLES datasets, the Lov\'{a}sz predictions have a value range close to 0.5. However, after thresholding at 0.5 the predicted delineations are similar to sDice and sJaccard.

\subsubsection{\review{Network architecture and patch-wise training}}
\review{It is clear that results are consistent over different state-of-the-art network architectures: U-Net (used in BR18, IS17, IS18, MO17 and WM17), DeepLab (used in PO18) and DeepMedic (used in WM17\textsuperscript{DM}). All architectures were shown to be optimized according to the implemented loss functions. This was somehow expected since these architectures are used almost interchangeably in recent biomedical segmentation challenges~\cite{isles2017,brats2018}.}

\review{Interestingly, the results are also consistent for patch-wise training, whereas it is unclear how the effective, patch-wise training loss relates to the actual test metric that is evaluated on full images.}

\subsubsection{\reviewminor{Generalizability to absolute and distance-based metrics}}
\label{sec:other_metrics}
\reviewminor{The results above show that the use of metric-sensitive losses provides superior Dice scores and Jaccard indices compared to (w)CE. However, often multiple metrics are of interest for evaluation such as volume errors or distance-based metrics~\cite{brats2018,Kuijf2019}. In this light, we report three additional metrics for each of the public datasets: accuracy (ACC), Hausdorff distance (HDD) and absolute volume difference (AVD). The performance on these metrics is compared for the experiments with CE and sDice in table~\ref{tab:other_metrics} (equivalence within each group of loss functions was already shown). For HDD and AVD, sDice is never inferior to CE and in certain cases significantly better. For ACC, CE is often superior, which is to be expected since it directly optimizes the evaluation metric.}

\begin{table}[htbp]
    \caption{\reviewminor{Accuracy (ACC), Hausdorff distance (HDD) and absolute volume difference (AVD) for each of the public datasets and for networks trained with CE and sDice. Values highlighted in grey point to a significantly better value compared to the other loss. Italic formatting denotes significantly inferior values.}}
    \begin{tabularx}{\linewidth}{llXYYYYY}
    \toprule
         & Loss & \multicolumn{1}{r}{\lapbox[\width]{1em}{\emph{Dataset} $\rightarrow$}} & BR18 & IS17 & IS18 & WM17 & WM17\textsuperscript{DM} \\
    \midrule
    \parbox[t]{3mm}{\centering\multirow{2}{*}{\rotatebox[origin=c]{90}{ACC}}}
      & CE    & & \textit{0.997} & \cellcolor{gray!50} 0.995 & \cellcolor{gray!50} 0.989 & 0.999 & 0.999 \\
      & sDice & & \cellcolor{gray!50} 0.998 & \textit{0.993} & \textit{0.988} & 0.999 & 0.999 \\
    \midrule
    \parbox[t]{3mm}{\centering\multirow{2}{*}{\rotatebox[origin=c]{90}{HDD}}}
      & CE    & & 22.673 & \textit{73.102} & \textit{42.994} & 34.210 & 30.526 \\
      & sDice & & 22.911 & \cellcolor{gray!50} 47.043 & \cellcolor{gray!50} 33.366 & 34.806 & 28.695 \\
    \midrule
    \parbox[t]{3mm}{\centering\multirow{2}{*}{\rotatebox[origin=c]{90}{AVD}}}
      & CE    & & \textit{15.579} & 21.042 & 12.029 & \textit{5.628} & \textit{5.865} \\
      & sDice & & \cellcolor{gray!50} 13.375 & 35.895 & 11.954 & \cellcolor{gray!50} 3.895 & \cellcolor{gray!50} 4.064 \\
    \bottomrule
\end{tabularx}
    \label{tab:other_metrics}
\end{table}

\subsection{Influence of class imbalance}\label{sec:classimbalance}
Most datasets for image segmentation consist of objects of different sizes. The metric-sensitive losses are typically expected to help most with segmentation of small-sized objects because they are invariant to scale~\cite{Berman2018a}. The approximation bound for the Hamming loss in Eq.~\eqref{eq:hammingbound} is dependent on the number of foreground pixels in the ground truth and this makes us think that optimization with cross-entropy will lead to a lower Dice score for small objects.

\begin{figure*}[htbp]%
\sbox\foursubbox{%
  \resizebox{\textwidth}{!}{%
    \includegraphics[height=3cm]{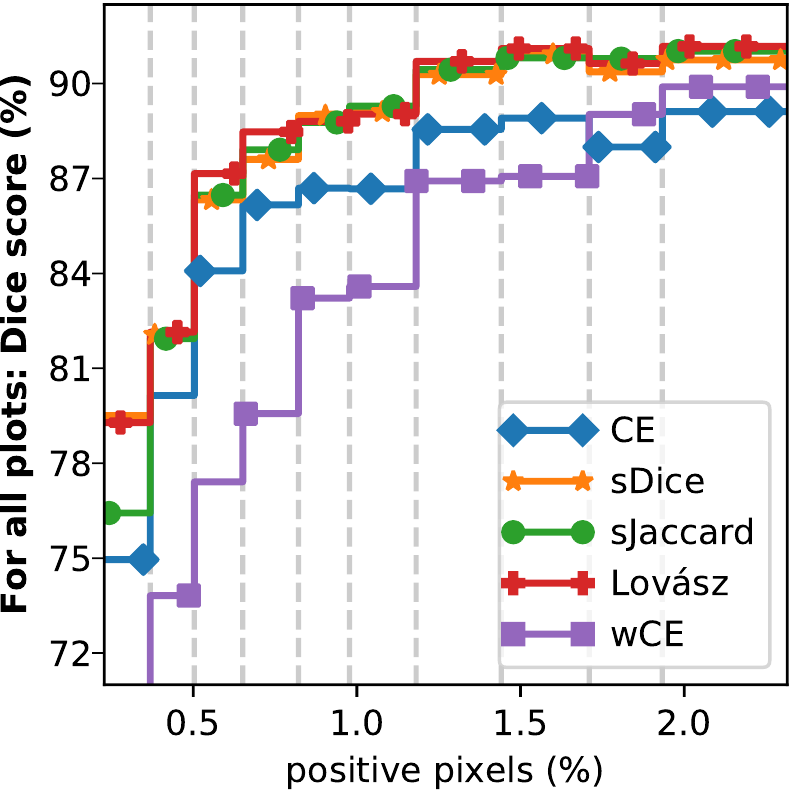}%
    \includegraphics[height=3cm]{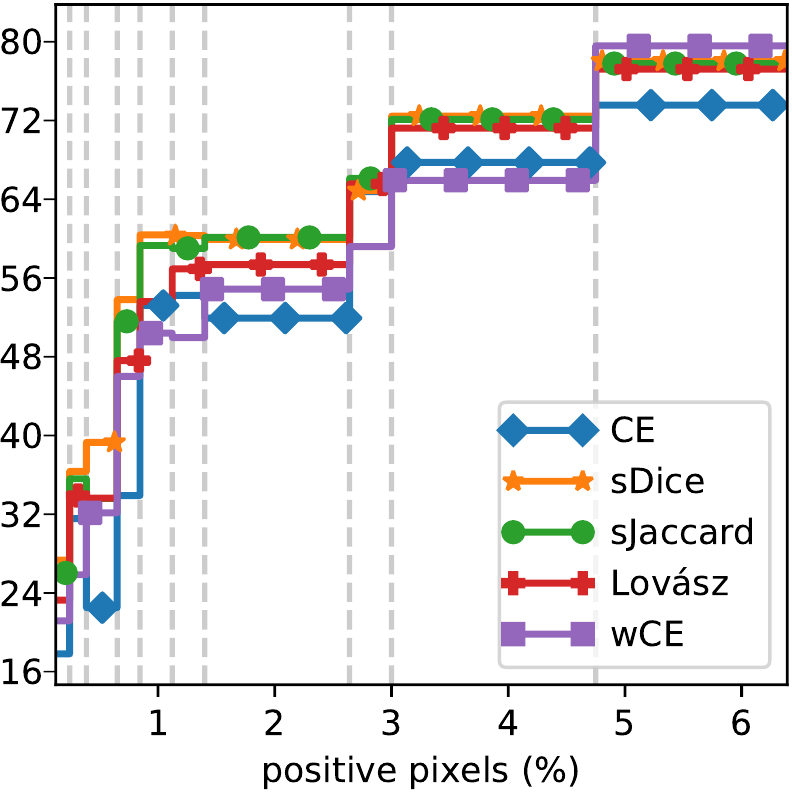}%
    \includegraphics[height=3cm]{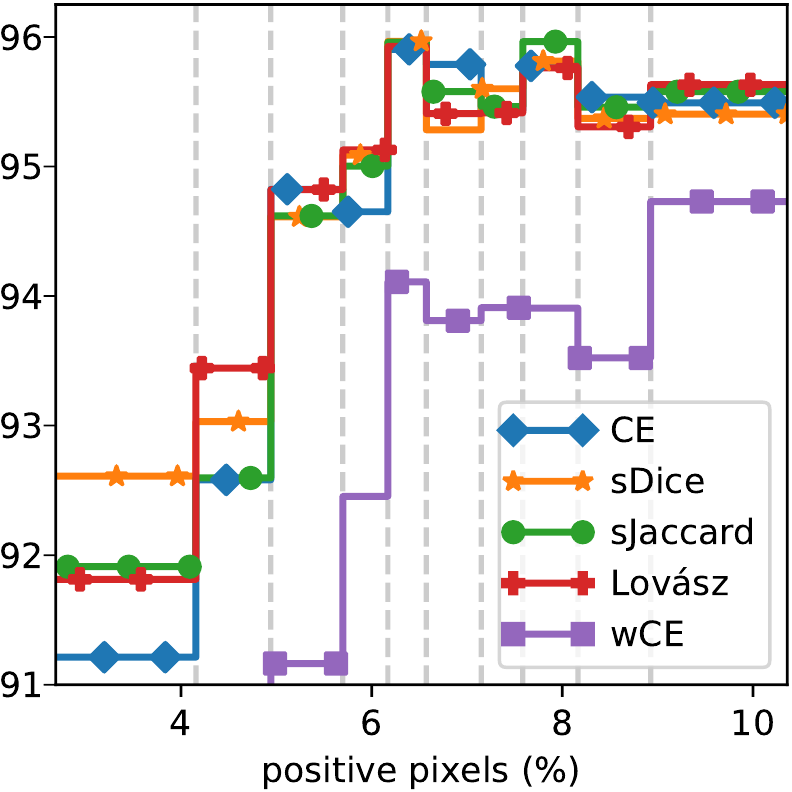}%
    \includegraphics[height=3cm]{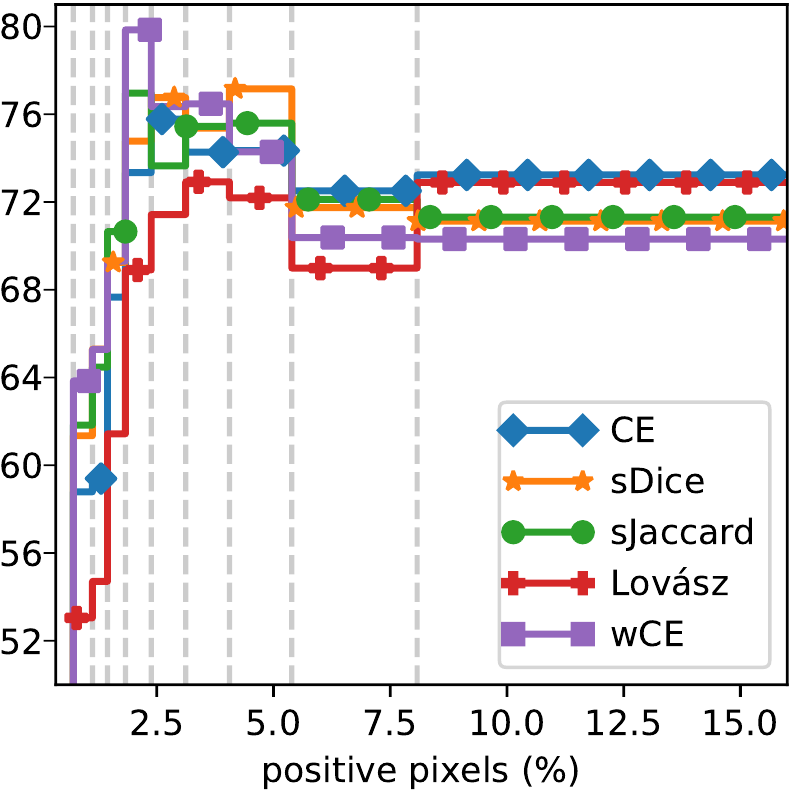}%
  }%
}%
\setlength{\foursubht}{\ht\foursubbox}%
  \centering%
\subcaptionbox{BR18\label{fig:areaBR}}{%
  \includegraphics[height=\foursubht]{histfig/BRATS_2018_decomp_new_model}%
}%
\subcaptionbox{IS18\label{fig:areaIS}}{%
  \includegraphics[height=\foursubht]{histfig/ISLES_2018_decomp}%
}%
\subcaptionbox{MO17\label{fig:areaMO}}{%
  \includegraphics[height=\foursubht]{histfig/38_decomp_new_model}%
}%
\subcaptionbox{PO18\label{fig:areaPO}}{%
  \includegraphics[height=\foursubht]{histfig/POLYPS_decomp}%
}%
  \caption{Dice score as a function of object size (\# foreground pixels in \%) for BR18 (a), IS18 (b), MO17 (c) and PO18 (d). The results for IS17 \review{and WM17} are omitted due to the small dataset size and lack of statistical relevance. 
  For visualization purposes, the Dice scores are averaged within every 10th percentile (bordered by the dashed lines). 
  The superior performance for metric-sensitive losses holds across the entire range of object sizes.\label{fig:dice_size}}%
\end{figure*}

We study the performance of the different loss functions w.r.t. the object size in Fig.~\ref{fig:dice_size}. This figure shows the average Dice scores in function of the number of foreground pixels in the ground truth.
We show that optimization with one of the metric-sensitive losses is beneficial for all object sizes and that CE can provide poor Dice scores for almost the entire range of scales. CE is underperforming the other losses for all datasets, but especially in BR18 and IS18. \review{We further note that the performance of wCE as a loss function can greatly depend on the lesion size (which is not the case for CE). In fact, wCE does improve performance for some sample-size ranges, but never across the entire range and it can also perform much worse than the metric-sensitive losses.} This shows again that simply giving a weight to cross-entropy will not be able to surrogate the target metric across all sample scales. \\

In order to further investigate the influence of \reviewminor{the} foreground to background ratio (fg/bg ratio) on the performance of the previously described loss functions \reviewminor{at the dataset level}, we virtually create new datasets from PO18. \reviewminor{Following Sect.~\ref{sec:data_preprocessing}, the apparent fg/bg ratios of the datasets are 0.00518, 0.0264, 0.0348, 0.0424, 0.0671 and 0.0705 for WM17, IS17, IS18, PO18, BR18 and MO17, respectively. With the lower end of the fg/bg spectrum covered by WM17 and the converging trend of the losses at the case level (i.e. object size) observed in Fig.~\ref{fig:dice_size}, we question the hypothesis for higher fg/bg ratios. For each of the ratios we want to analyse (i.e. 0.05, 0.1, 0.2, 0.3, 0.4, 0.5),} a new dataset is created as follows: a binary rectangular mask is created for each image as shown in Fig.~\ref{fig:cropfig}. This rectangle has the same aspect ratio as the original image and contains the entire polyp (or part of it if it doesn't fit). The size of this rectangle is the same for all images in each virtual dataset and is determined such that the average fg/bg ratio over all images is equal to the desired ratio. This binary mask is then used to mask the output of the final layer of the network during training and testing. This way, only the pixels inside the rectangle are used to update the weights of the network, but the field-of-view (FOV) doesn't change between the virtual datasets. Otherwise, if we would just crop the input images before feeding them to the network, segmentation could become harder when increasing the fg/bg ratio because the FOV decreases.

\begin{figure}[htbp]
    \centering
    \resizebox{\linewidth}{!}{
    \begin{tabular}{ccc}
        \includegraphics{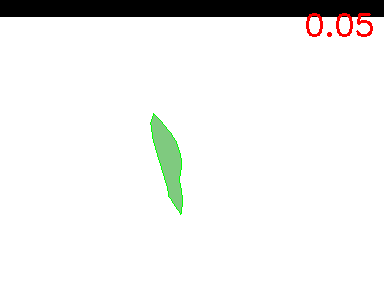} & \includegraphics{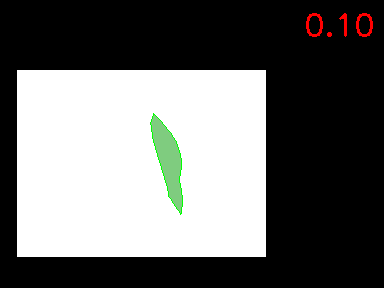} & \includegraphics{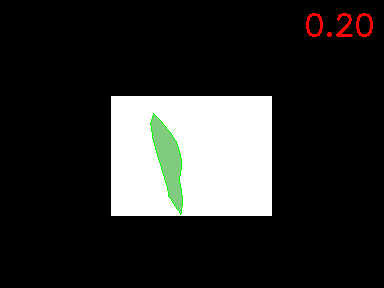} \\
        \includegraphics{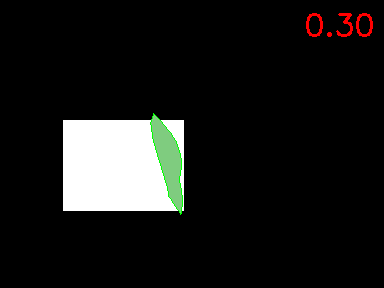} & \includegraphics{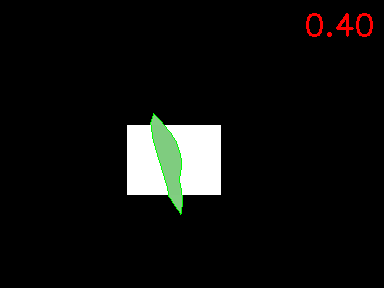} & \includegraphics{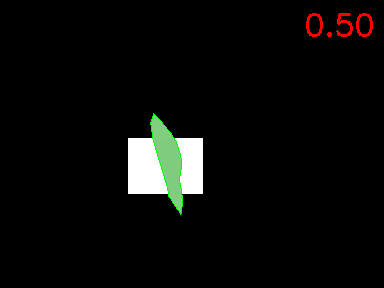}
    \end{tabular}
    }
    \caption{\reviewminor{For each foreground to background ratio in red, a rectangular mask is created with the same aspect ratio as the original image and which contains the entire polyp (or part of it if not possible) in green.}}
    \label{fig:cropfig}
\end{figure}

We then analysed the loss functions' performance for each of the \reviewminor{six} virtual datasets analogously to the original PO18 dataset in Section~\ref{sec:empirical_setup}. The same network, training procedure and training images from PO18 are used with one difference: the output of the final layer is masked with the corresponding mask for a given fg/bg ratio. The results are shown in Table~\ref{tab:crop_results}.
We only post-trained with CE and sDice since we already showed that there is no significant difference within each group of loss functions.
We see no clear trend when going from \reviewminor{0.05} to 0.5, indicating that the superiority of the metric-sensitive loss functions is not dependent on the fg/bg ratio. Even for a ratio of 0.5, post-training with Dice loss shows a significantly higher performance w.r.t. Dice score than CE.

\begin{table}[htbp]
    \caption{\review{Dice scores and Jaccard indexes obtained for each virtual PO18 dataset and for networks trained with CE or sDice. Values highlighted in grey point to a significantly higher value compared to the other loss. Italic formatting denotes significantly inferior values.}}
    \begin{tabularx}{\linewidth}{llXYYYYYY}
    \toprule
         & Loss & \multicolumn{1}{r}{\lapbox[\width]{1em}{\emph{Dataset} $\rightarrow$}} & \reviewminor{0.05} & 0.1 & 0.2 & 0.3 & 0.4 & 0.5 \\
    \midrule
\parbox[t]{3mm}{\centering\multirow{2}{*}{\rotatebox[origin=c]{90}{Dice}}}
      & CE    & & \reviewminor{\textit{0.627}} & \textit{0.645} &  \textit{0.673} &  \textit{0.691} &  \textit{0.730} &  \textit{0.757} \\
      & sDice & & \cellcolor{gray!50} \reviewminor{0.639} & \cellcolor{gray!50} 0.662 & \cellcolor{gray!50} 0.683 & \cellcolor{gray!50} 0.701 & \cellcolor{gray!50} 0.740 & \cellcolor{gray!50} 0.765 \\
    \midrule
    \parbox[t]{3mm}{\centering\multirow{2}{*}{\rotatebox[origin=c]{90}{Jacc.}}}
      & CE    & & \reviewminor{\textit{0.534}} & \textit{0.552} &  \textit{0.579} &  \textit{0.597} &  \textit{0.639} &  \textit{0.672} \\
      & sDice & & \cellcolor{gray!50} \reviewminor{0.541} & \cellcolor{gray!50} 0.564 & \cellcolor{gray!50} 0.584 & \cellcolor{gray!50} 0.604 & \cellcolor{gray!50} 0.646 & \cellcolor{gray!50} 0.676 \\
         
    \bottomrule
\end{tabularx}
    \label{tab:crop_results}
\end{table}

\subsection{Comparison of losses for multi-class segmentation}\label{sec:multiclass}
The results for the original loss functions are presented in Table \ref{tab:multiclass_results}. Again, we observe a consistent, superior performance of the metric-sensitive losses compared to the cross-entropy losses w.r.t. the Dice score and Jaccard index. Given foreground-background ratios of 0.065, 0.028 and 0.012 for WT, TC and ET, respectively, we observe a similar trend as in Fig. \ref{fig:dice_size}. The metric-sensitive losses obtain higher Dice scores and Jaccard indexes compared to the cross-entropy losses over multiple foreground-background ratios.

\begin{table}[htbp]
    \caption{\review{Dice scores and Jaccard indexes obtained for the multi-class BRATS dataset using the original loss functions. Results are presented for the three volumes that were considered for evaluation during the challenge: whole tumor (WT), tumor core (TC) and enhancing tumor (ET). We also report the average across the three types (all). Values in gray are the significantly best performing losses. There are no values in italic since no single loss performed significantly worse than all the others.}}

\begin{tabularx}{\linewidth}{llXYY|YYY}
    \toprule
    & & \multicolumn{1}{r}{\lapbox[\width]{1em}{\emph{group} $\rightarrow$}} & \multicolumn{2}{c}{CE-based} & \multicolumn{3}{c}{Metric-sensitive} \\
    \midrule
    & Type & \multicolumn{1}{r}{\lapbox[\width]{1em}{\emph{loss} $\rightarrow$}} & CE & wCE & sDice & sJaccard & Lov\'{a}sz \\
    \midrule
    \parbox[t]{7mm}{\centering\multirow{4}{*}{\rotatebox[origin=c]{90}{Dice}}} 
    & All   & & 0.639 & 0.665 & 0.740 & \cellcolor{gray!50} 0.747 & 0.716 \\
    & WT    & & 0.798 & 0.799 & 0.849 & \cellcolor{gray!50}0.859 & \cellcolor{gray!50}0.858 \\
    & TC    & & 0.620 & 0.636 & 0.725 & \cellcolor{gray!50}0.737 & 0.692 \\
    & ET    & & 0.498 & 0.560 & \cellcolor{gray!50}0.646 & \cellcolor{gray!50}0.645 & 0.598 \\
    \midrule
    \parbox[t]{7mm}{\centering\multirow{4}{*}{\rotatebox[origin=c]{90}{Jaccard}}}
    & All & & 0.522 & 0.554 & 0.632 & \cellcolor{gray!50}0.643 & 0.606 \\
    & WT  & & 0.683 & 0.688 & 0.753 & \cellcolor{gray!50}0.767 & \cellcolor{gray!50}0.766 \\
    & TC  & & 0.487 & 0.518 & 0.609 & \cellcolor{gray!50}0.627 & 0.569 \\
    & ET  & & 0.395 & 0.456 & \cellcolor{gray!50}0.534 & \cellcolor{gray!50}0.534 & 0.484 \\
    \bottomrule
\end{tabularx}
    \label{tab:multiclass_results}
\end{table}

The results for the range of sTversky losses are presented in Table \ref{tab:multiclass_tversky_results}.
\begin{table*}[htbp]
    \centering
    \caption{\review{Dice scores and Jaccard indexes obtained for the multi-class BRATS dataset using the range of sTversky losses with varying alpha/beta. Results are presented for the three volumes that were considered for evaluation during the challenge: whole tumor (WT), tumor core (TC) and enhancing tumor (ET). We also report the average across the three types (all). Cells in gray highlight the top-ranked losses and italic values are significantly inferior to all others in the row. There is no weighting scheme producing significantly higher values than sTversky 0.5/0.5.}}
\begin{tabularx}{\linewidth}{lsXYYYYYYYYY|YY}
    \toprule
    & Type & \multicolumn{1}{r}{\lapbox[\width]{1em}{\emph{$\alpha/\beta$} $\rightarrow$}} & 0.1/0.9 & 0.2/0.8 & 0.3/0.7 & 0.4/0.6 & 0.5/0.5 & 0.6/0.4 & 0.7/0.3 & 0.8/0.2 & 0.9/0.1 & \review{0.75/0.75} & \review{1.0/1.0} \\
    \midrule
    \parbox[t]{7mm}{\centering\multirow{4}{*}{\rotatebox[origin=c]{90}{Dice}}} 
    &All &        &        0.649 &        0.702 &        0.725 &        0.739 & 0.740 &        0.738 &        0.709 &        0.704 &        0.649 &   \cellcolor{gray!50}     \review{0.741} & \cellcolor{gray!50}\review{0.747} \\
    &WT &        &        0.771 &        0.822 &        0.844 &        0.853 & 0.849 &        0.848 &        0.826 &        0.816 &        0.757 &   \cellcolor{gray!50}     \review{0.858} & \cellcolor{gray!50}\review{0.859} \\
    &TC &        &        0.652 &        0.703 &        0.724 &     \cellcolor{gray!50}   0.735 & 0.725 &        0.725 &        0.687 &        0.686 &        0.623 &        \cellcolor{gray!50}\review{0.731} & \cellcolor{gray!50}\review{0.737} \\
    &ET &        &        \textit{0.525} &        0.582 &        0.606 &        0.631 & \cellcolor{gray!50}0.646 &   \cellcolor{gray!50}     0.642 &        0.614 &        0.611 &        0.568 &     \cellcolor{gray!50}   \review{0.634} &\cellcolor{gray!50} \review{0.645} \\
    \midrule
    \parbox[t]{7mm}{\centering\multirow{4}{*}{\rotatebox[origin=c]{90}{Jaccard}}}
    &All &        &        0.517 &        0.580 &        0.610 &        0.630 & 0.632 &        0.629 &        0.595 &        0.586 &        0.521 &        \review{0.634} & \cellcolor{gray!50}\review{0.643} \\
    &WT &        &        0.640 &        0.711 &        0.742 &        0.757 & 0.753 &        0.750 &        0.719 &        0.704 &        0.627 &       \cellcolor{gray!50} \review{0.765} & \cellcolor{gray!50}\review{0.767} \\
    &TC &        &        0.513 &        0.572 &        0.602 &        0.617 & 0.609 &        0.609 &        0.567 &        0.563 &        0.493 &        \review{0.615} &\cellcolor{gray!50} \review{0.627} \\
    &ET &        &        0.397 &        0.457 &        0.487 &        0.515 & \cellcolor{gray!50}0.534 &  \cellcolor{gray!50}      0.528 &        0.500 &        0.491 &        0.444 &        \review{0.522} &\cellcolor{gray!50} \review{0.534} \\
    \bottomrule
\end{tabularx}
    \label{tab:multiclass_tversky_results}
\end{table*}

Similar to the results regarding the range of sTversky losses in Sect. \ref{sec:results}, there is no weighting scheme that is significantly superior to the Dice equivalent weighting of $\alpha=\beta=0.5$. There is a clear trend of sub-optimal weighting when $\alpha$ and/or $\beta$ deviate from 0.5, which relates back to Fig. \ref{fig:TverskyAbsRelErrors} where relative and absolute errors are bound to increase.

\section{Conclusion}
More and more metric-sensitive loss functions find their way into the optimization of CNNs for image segmentation, both in the context of medical imaging and classical computer vision. Nonetheless, we saw a great mass of research in the MICCAI 2018 proceedings still using per-pixel losses while their evaluation was based primarily on the Dice score or Jaccard index. In this work we questioned the latter approach from both theoretical and empirical point of views.\\
On the one hand, theory suggests that the Dice score and Jaccard index approximate each other relatively and absolutely. On the other hand, we found that no such approximations exist for a weighted Hamming similarity. For the Tversky index, the approximation gets monotonically worse when weighting differently from the soft Dice setting. We were able to confirm these findings in an extensive empirical validation on five binary medical segmentation tasks. Further experiments reaffirmed their superior performance across different object sizes, foreground/background ratios and in a multi-class setting.\\
We conclude that segmentation tasks may benefit from a wider adoption of metric-sensitive loss-functions when evaluation is performed using \review{only} the Dice score or the Jaccard index.


\appendices



\section{MICCAI 2018 proceedings}\label{sec:MICCAI2018methodology}

In the introduction, it is stated that only 30\% of all MICCAI 2018 proceedings actually use a metric-sensitive loss function. This, even though the Dice score is reported as their evaluation metric. We categorized all MICCAI 2018 proceedings \cite{frangi2018medical} as indicated by the flowchart in Figure~\ref{fig:flowchart_proceedings_count}. From the total number of MICCAI 2018 proceedings (n = 372), there are 103 papers dealing with learning-based segmentation. Almost all of them (n=96) use the Dice score or Jaccard index to evaluate the performance of their algorithm, but only 28 proceedings report to use soft Dice or any of the other metric-sensitive loss functions described in this work. Many methods use cross-entropy or its weighted variant (n=38) and the rest have used a variety of other losses such as adverserial loss, focal loss, mean-squared-error, etc.

\begin{figure}[htb]
    \centering
    \def\svgwidth{\columnwidth}
    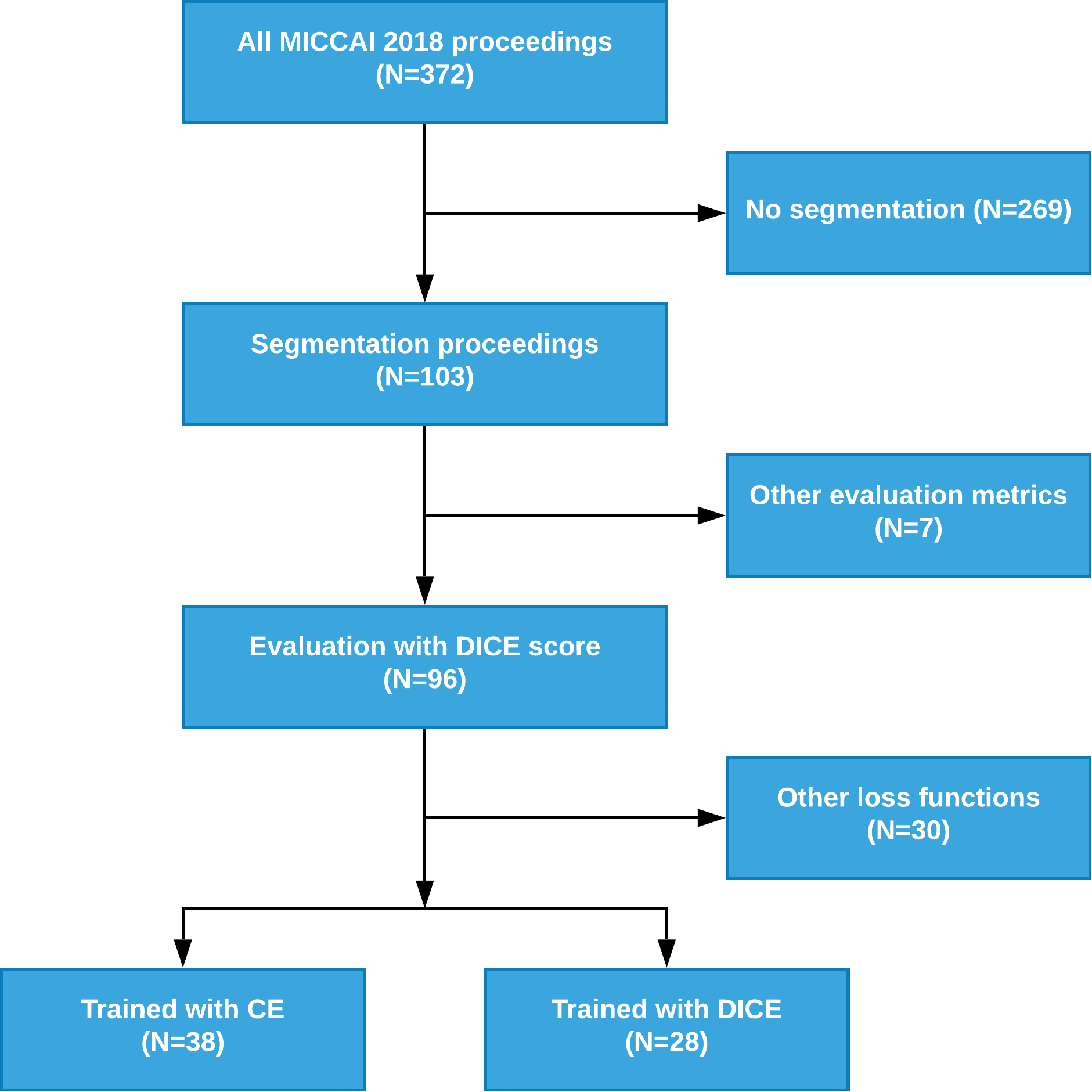
    \caption{Flowchart used for categorization of MICCAI 2018 proceedings. All segmentation papers are selected and from these, we retain all the papers that used DICE score for evaluation. We then analysed which type of loss function was used.}  \label{fig:flowchart_proceedings_count}
\end{figure}


\ifCLASSOPTIONcaptionsoff
  \newpage
\fi

\newpage



\bibliographystyle{IEEEtran}
\bibliography{IEEEabrv,biblio}
%

\FloatBarrier
\clearpage
\counterwithin{definition}{section}
\counterwithin{theorem}{section}
\counterwithin{proposition}{section}

\counterwithin{figure}{section}
\counterwithin{table}{section}
\counterwithin{equation}{section}

\reviewpar
\textbf{{\Large Supplementary material}} \\
\pagenumbering{Roman}  

\setcounter{section}{0} 
\section{F measures}\label{sec:fmeasures}
\reviewpar
We show in Section~\ref{sec:results} that the weighting $\alpha=\beta=0.5$ for the Tversky loss (equivalent to Dice loss) is the optimal weighting when evaluation is performed on Dice score. Alternative weightings of Tversky loss can however be beneficial in case other metrics are of interest. In Table~\ref{tab:tversky_fmeasures}, we show that a different weighting scheme can be very effective to achieve superior performance when evaluation is done with alternative F-measures ($F\textsuperscript{0.5}$, $F\textsuperscript{1.0}$, $F\textsuperscript{1.5}$ and $F\textsuperscript{2.0}$). A relative higher weight for $\alpha$ gives better performance for the lower order F-measures, whereas the best weighting for F\textsuperscript{2.0} is found in the lower range of values for $\alpha$. In this table, results for the weightings $\alpha=\beta=0.5$ and $\alpha=\beta=1.0$ are also provided which correspond to sDice and sJaccard respectively as well as the balanced version between both, i.e. weighting $\alpha=\beta=1.0$. No clear difference can be observed between these three.

\reviewminorpar
\section{Boxplots}\label{sec:boxplots}
This section gives a complementary boxplot for all tables in the main manuscript providing more detailed spread information for all measurements. It can be observed that the IS17, IS18 and PO18 datasets have the highest variability which is to be expected due to their difficult nature and smaller sample size.

\begin{figure}[!htbp]
    \centering
    \resizebox*{\linewidth}{!}{
    
    \def\arraystretch{0}
    \setlength{\tabcolsep}{0pt}
    \begin{tabularx}{\linewidth}{l @{\hspace{2pt}} YYYYYYY}

        & BR18 &   IS17 &  IS18 & MO18 & PO18 & WM17 & WM17\textsuperscript{DM}  \\
        
        \rotatebox{90}{\hspace{5pt} DICE} &
        \includegraphics[width=\linewidth]{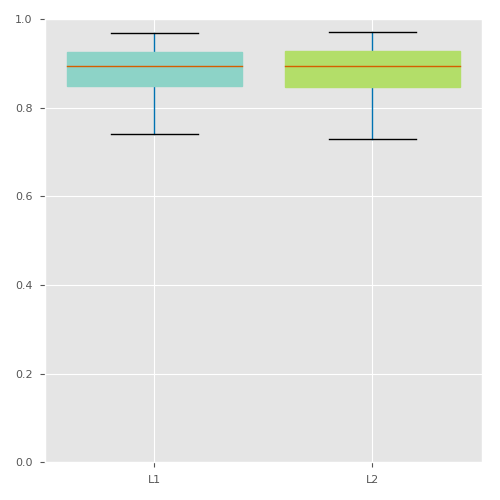} &  
        \includegraphics[width=\linewidth]{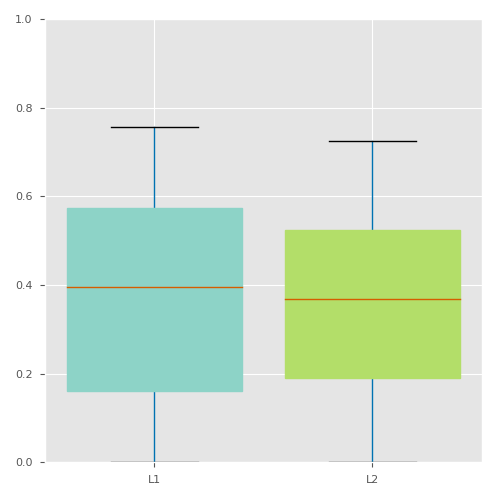} &  
        \includegraphics[width=\linewidth]{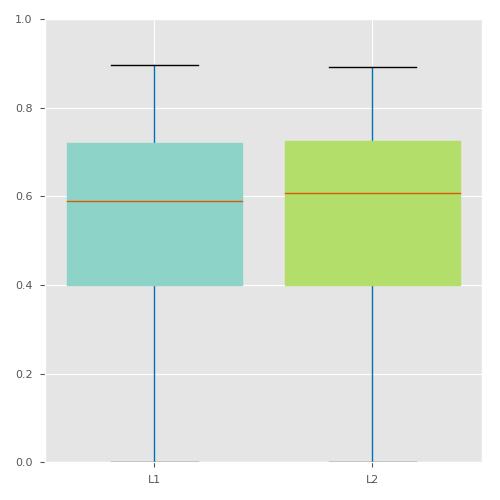} &  
        \includegraphics[width=\linewidth]{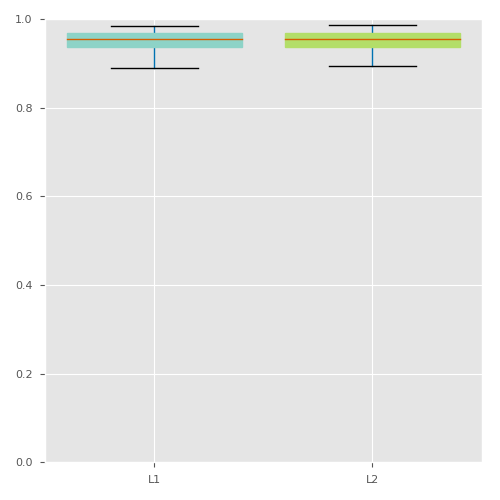} &  
        \includegraphics[width=\linewidth]{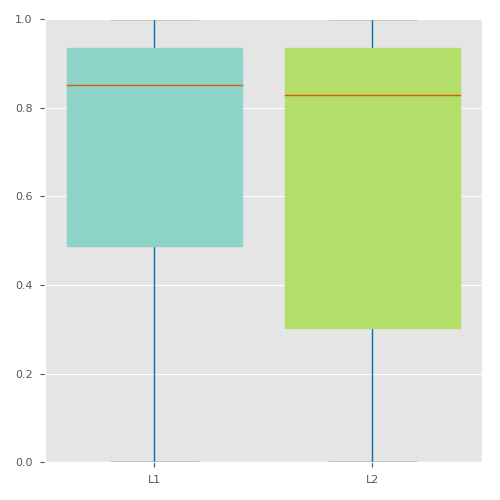} &  
        \includegraphics[width=\linewidth]{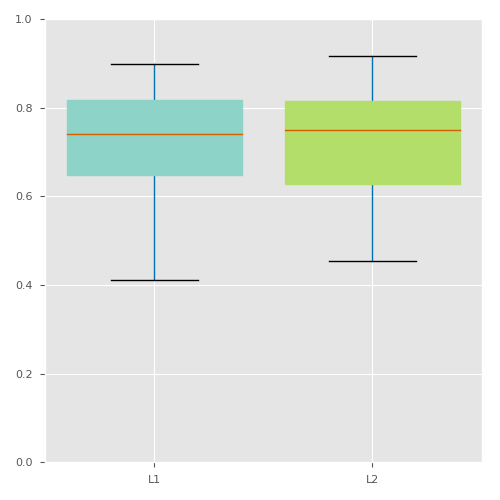} &  
        \includegraphics[width=\linewidth]{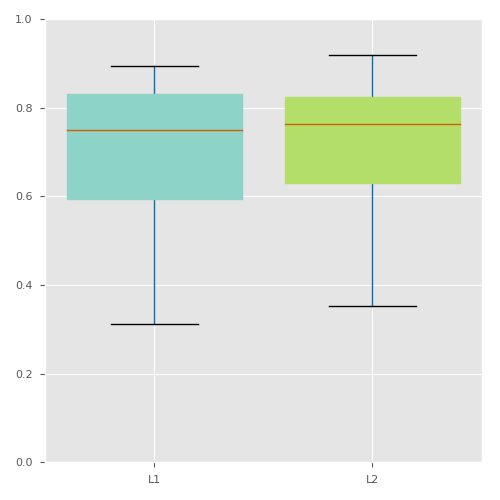}  \\ 
        
        \rotatebox{90}{\hspace{5pt} JACC} &
        \includegraphics[width=\linewidth]{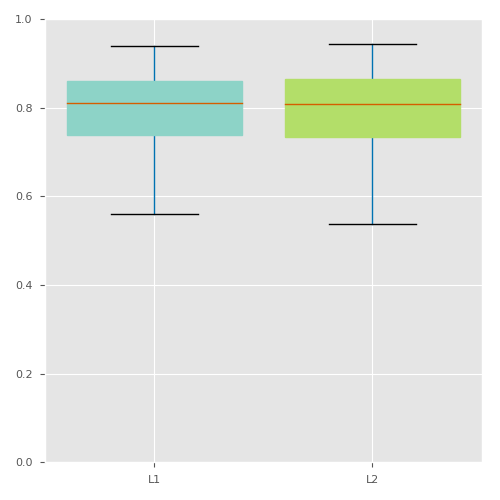} &  
        \includegraphics[width=\linewidth]{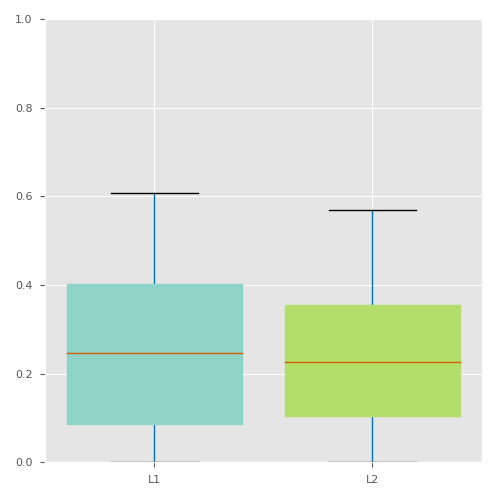} &  
        \includegraphics[width=\linewidth]{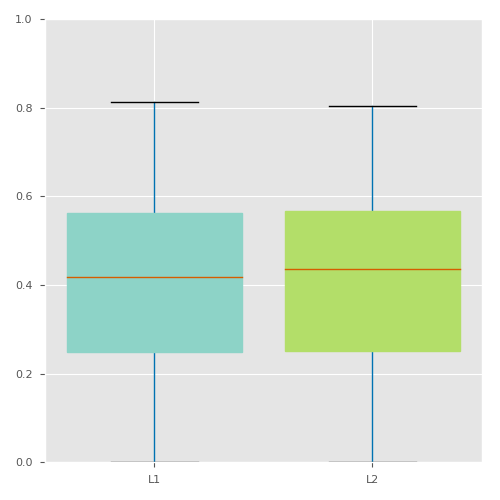} &  
        \includegraphics[width=\linewidth]{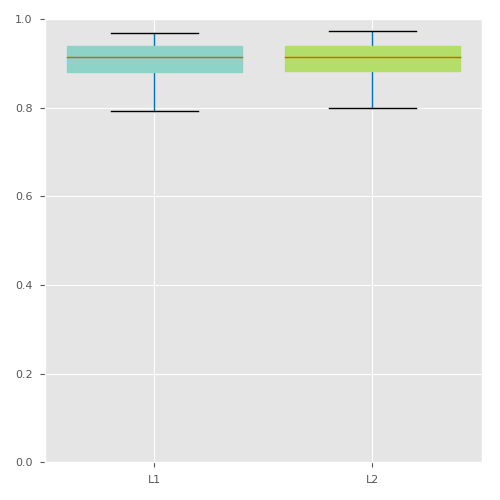} &  
        \includegraphics[width=\linewidth]{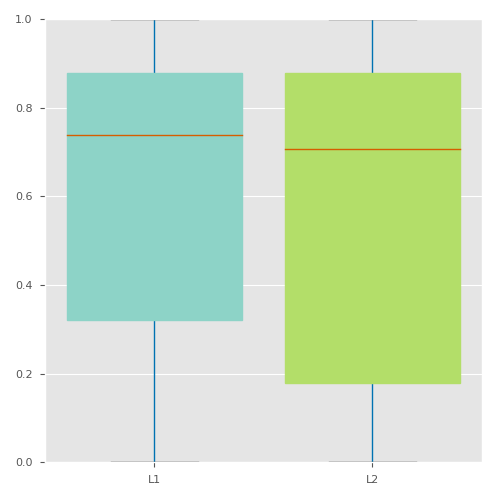} &  
        \includegraphics[width=\linewidth]{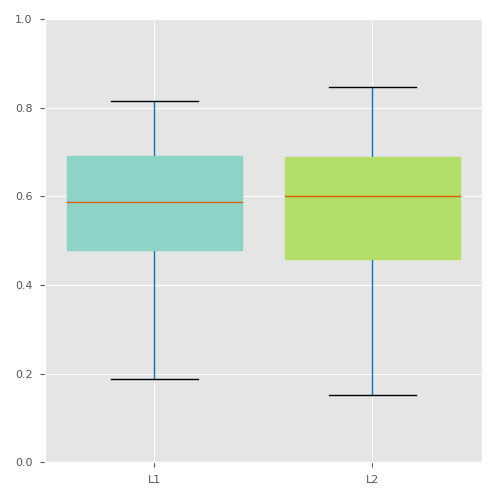} &  
        \includegraphics[width=\linewidth]{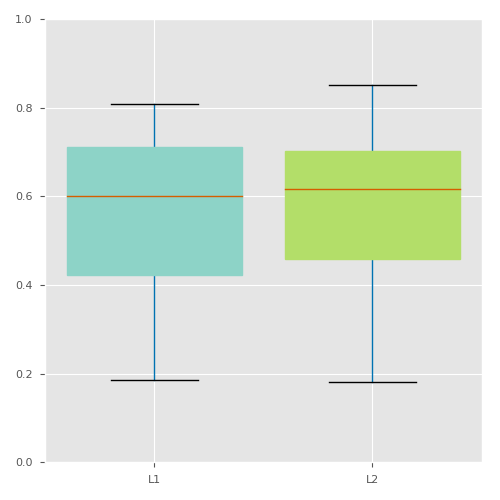}  \\ 

    \end{tabularx}
    }
        \caption{\reviewminorpar Boxplots for the Dice scores and Jaccard indexes obtained for each dataset, using the $L^1$ and $L^2$ generalization for soft Dice loss.}
    \label{fig:boxplots_L1vsL2}
\end{figure}

\begin{figure}[!htbp]
    \centering
    \resizebox*{\linewidth}{!}{
    
    \def\arraystretch{0}
    \setlength{\tabcolsep}{0pt}
    \begin{tabularx}{\linewidth}{l @{\hspace{2pt}} YYYYYY}

        & 0.05 &   0.1 &  0.2 & 0.3 & 0.4 & 0.5   \\
        
        \rotatebox{90}{\hspace{5pt} DICE} &
        \includegraphics[width=\linewidth]{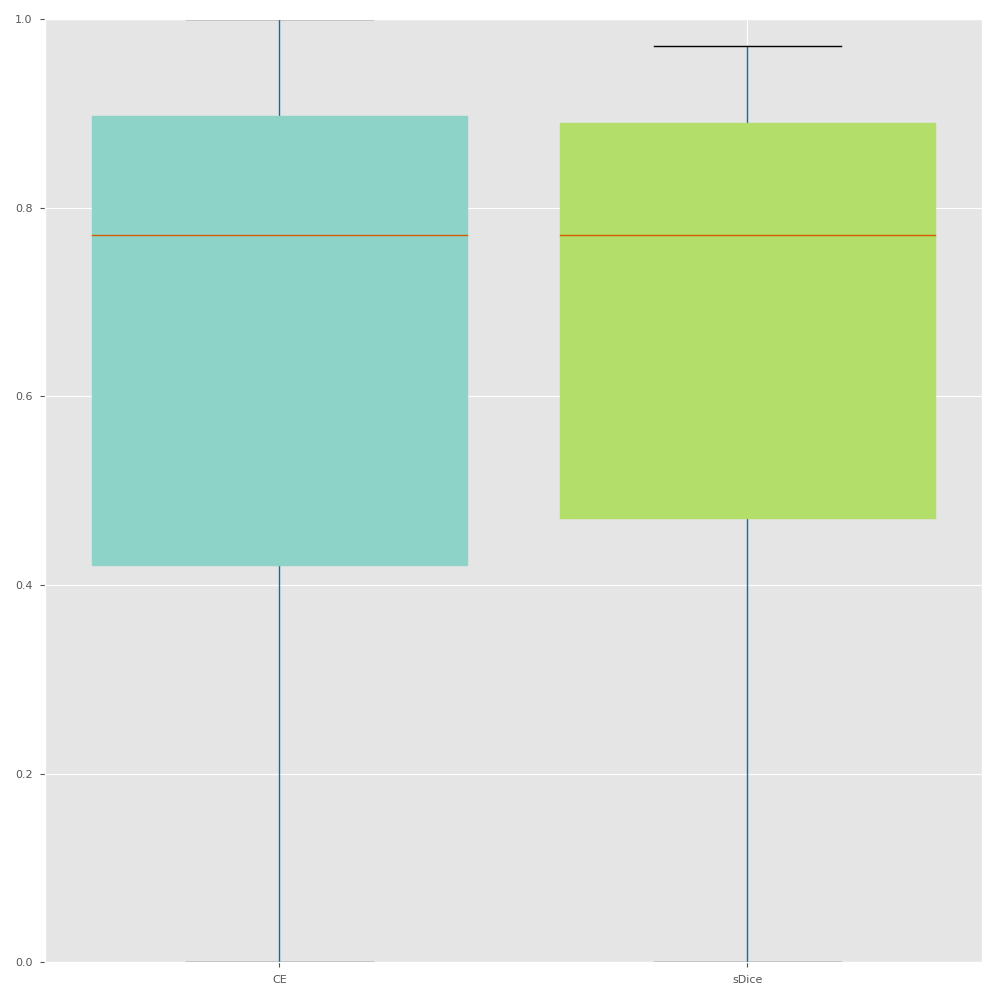} &  
        \includegraphics[width=\linewidth]{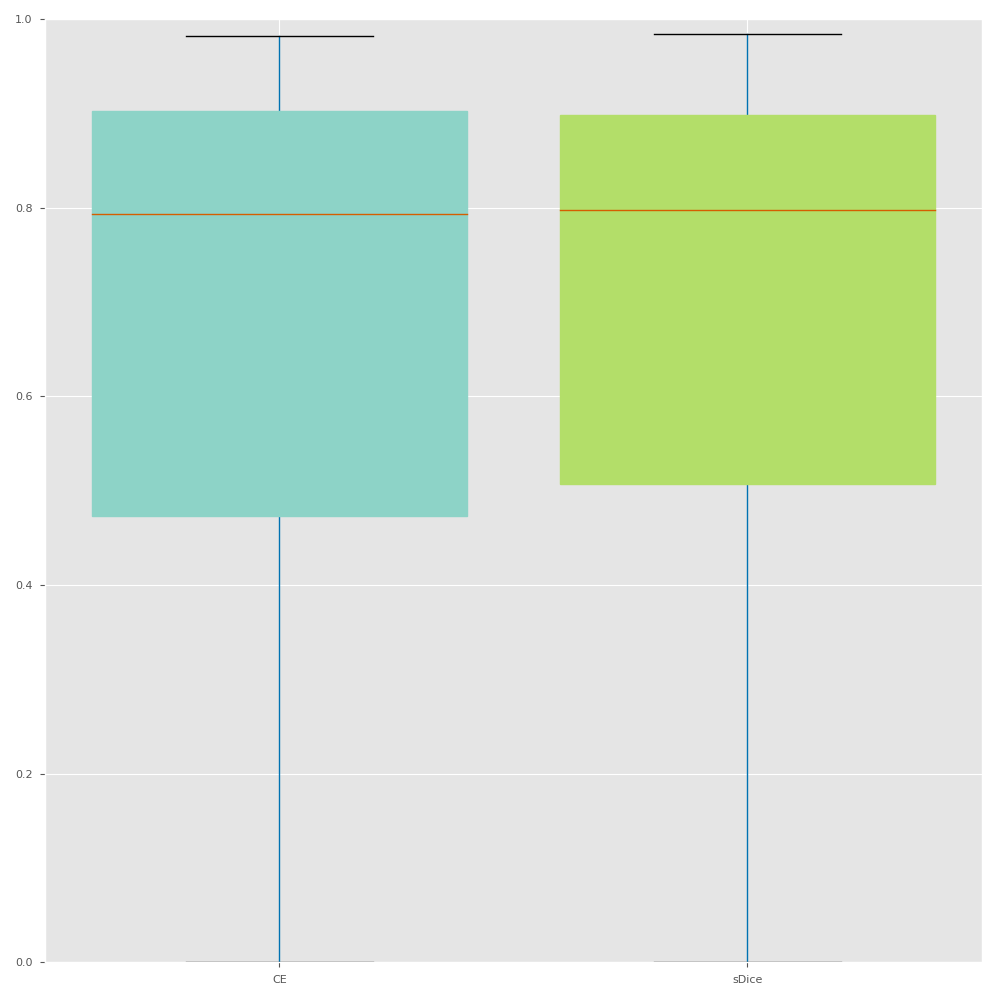} &  
        \includegraphics[width=\linewidth]{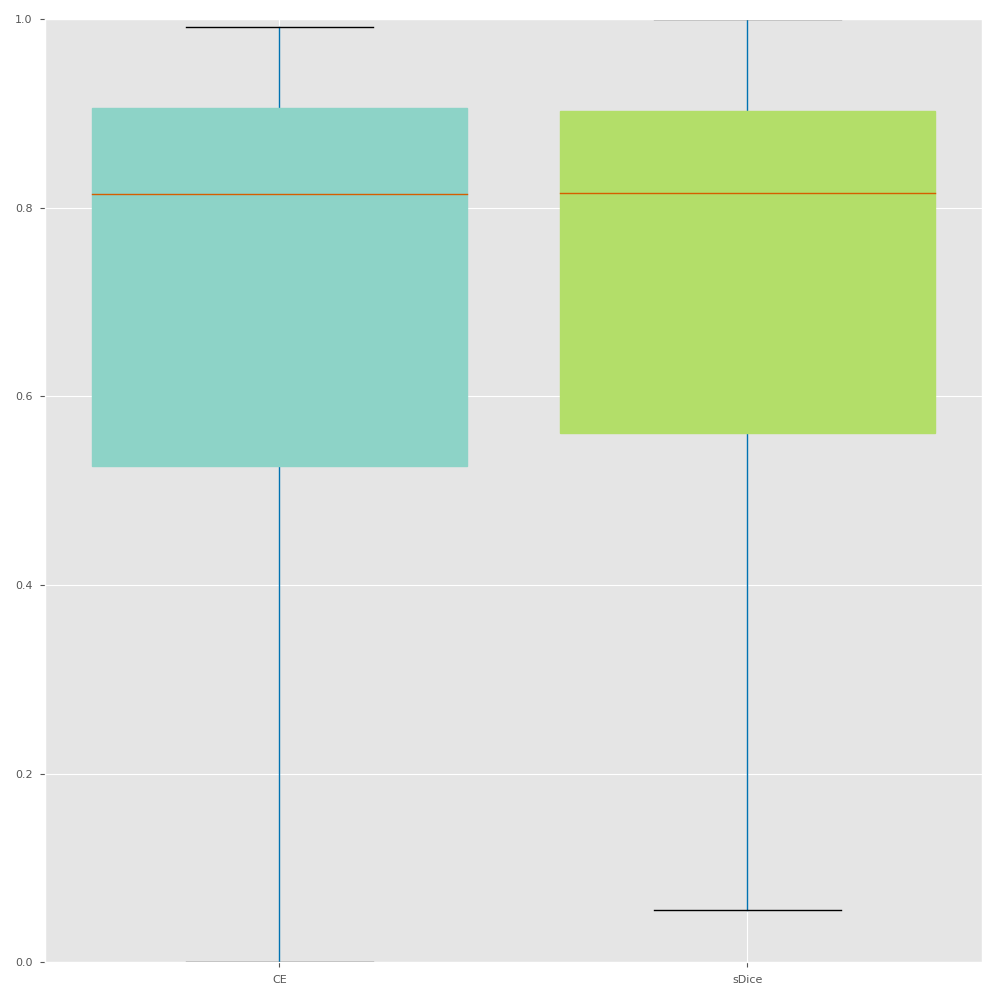} &  
        \includegraphics[width=\linewidth]{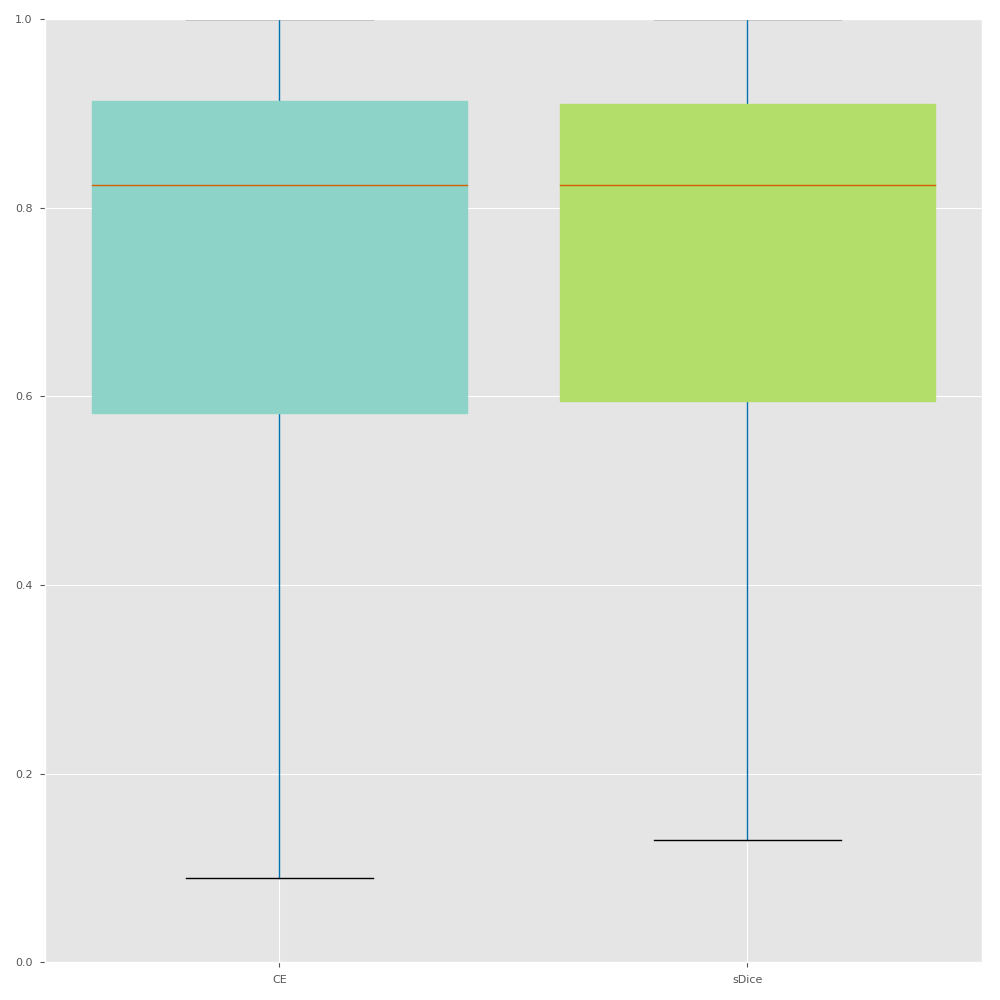} &  
        \includegraphics[width=\linewidth]{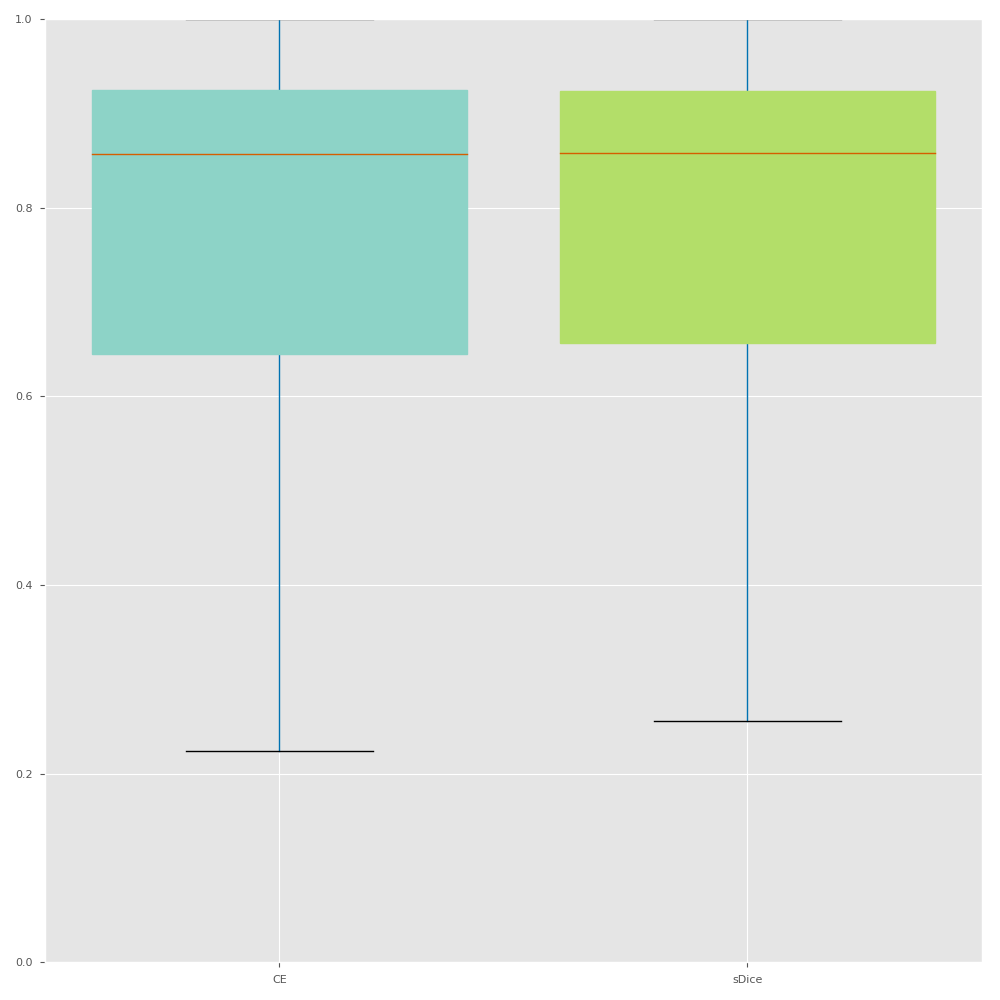} &  
        \includegraphics[width=\linewidth]{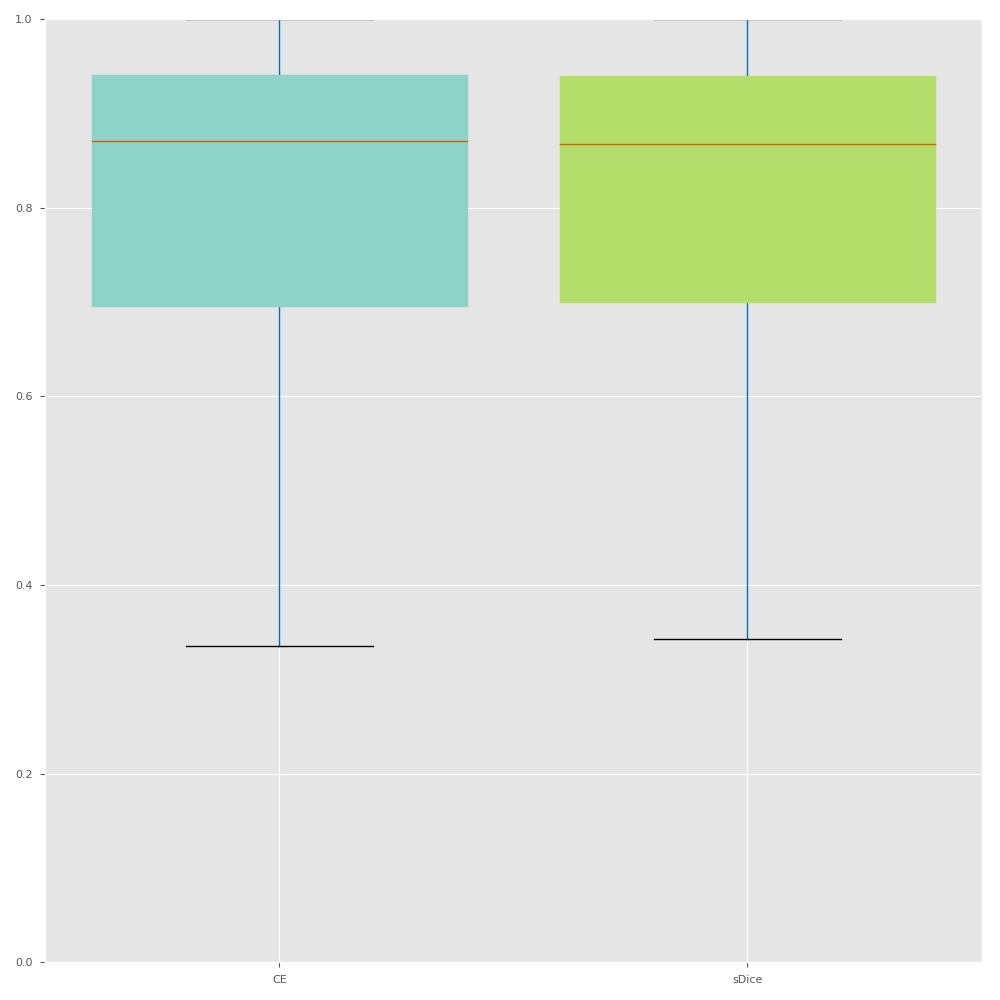}  \\ 
        
        \rotatebox{90}{\hspace{5pt} JACC} &
        \includegraphics[width=\linewidth]{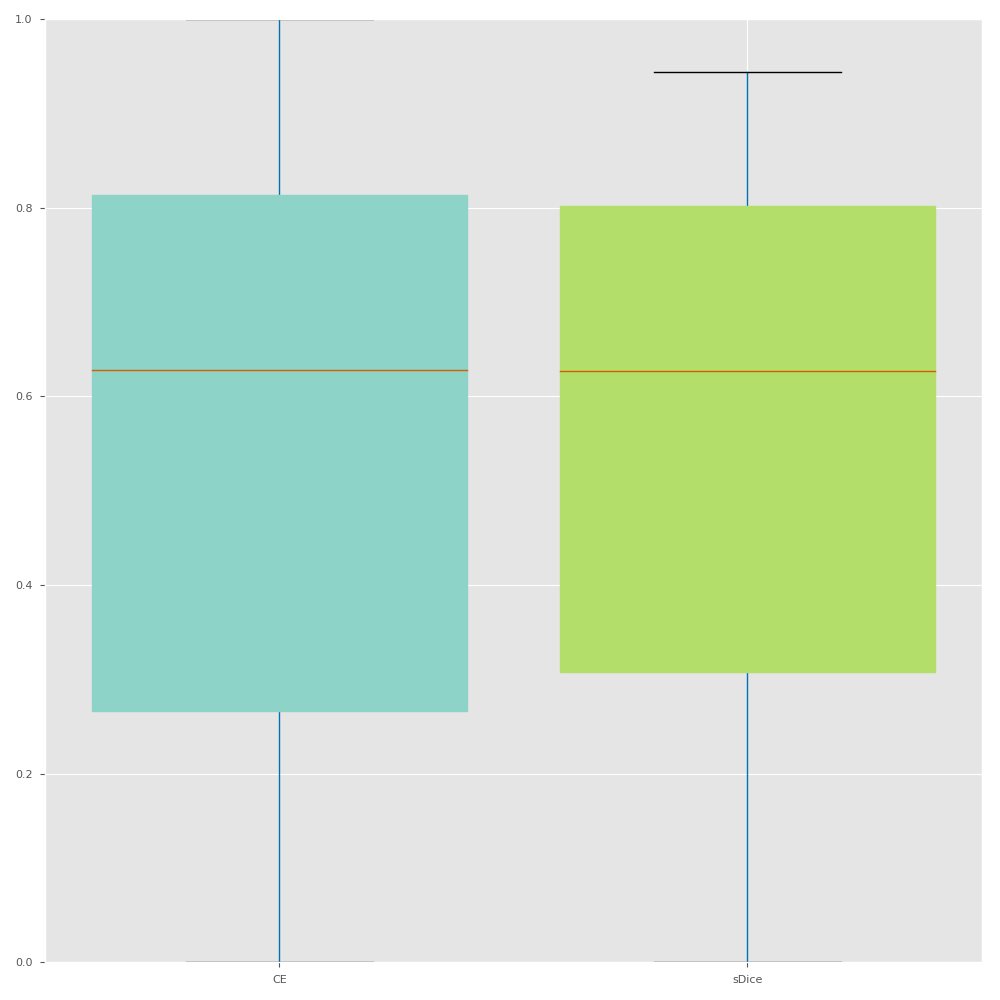} &  
        \includegraphics[width=\linewidth]{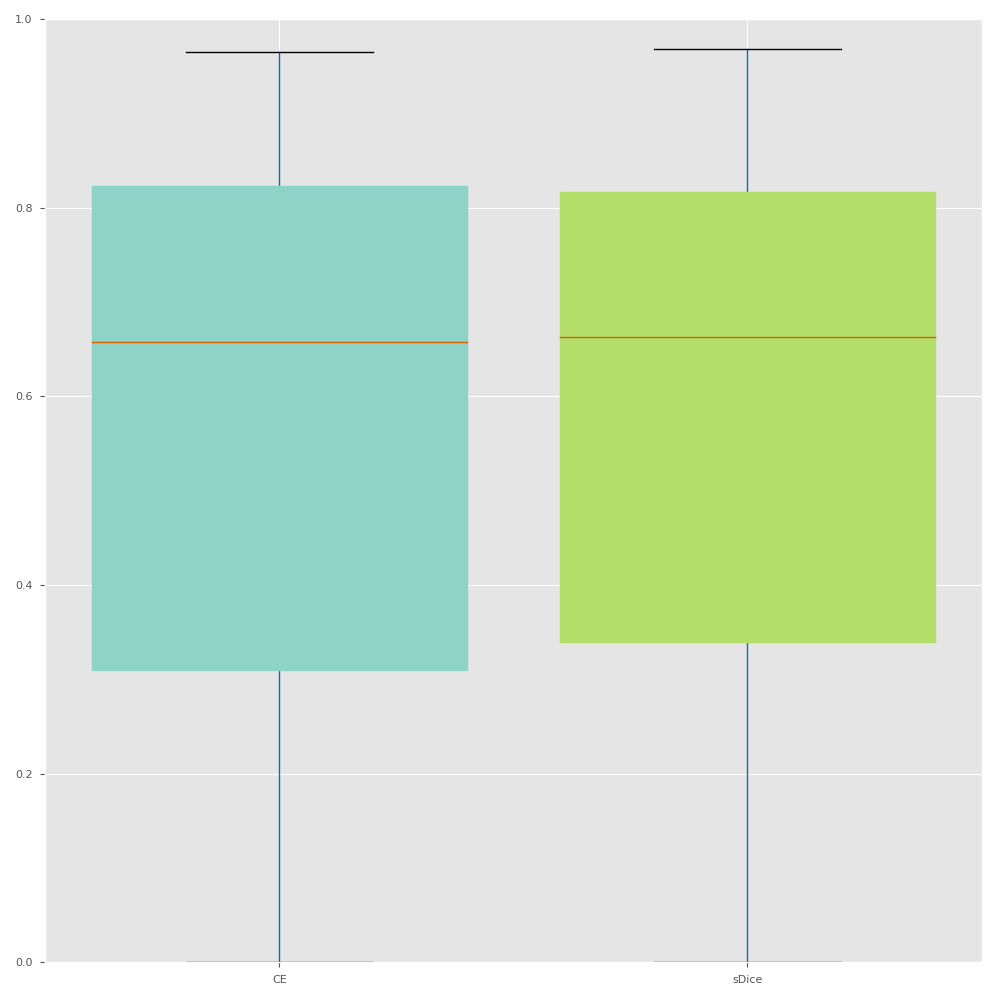} &  
        \includegraphics[width=\linewidth]{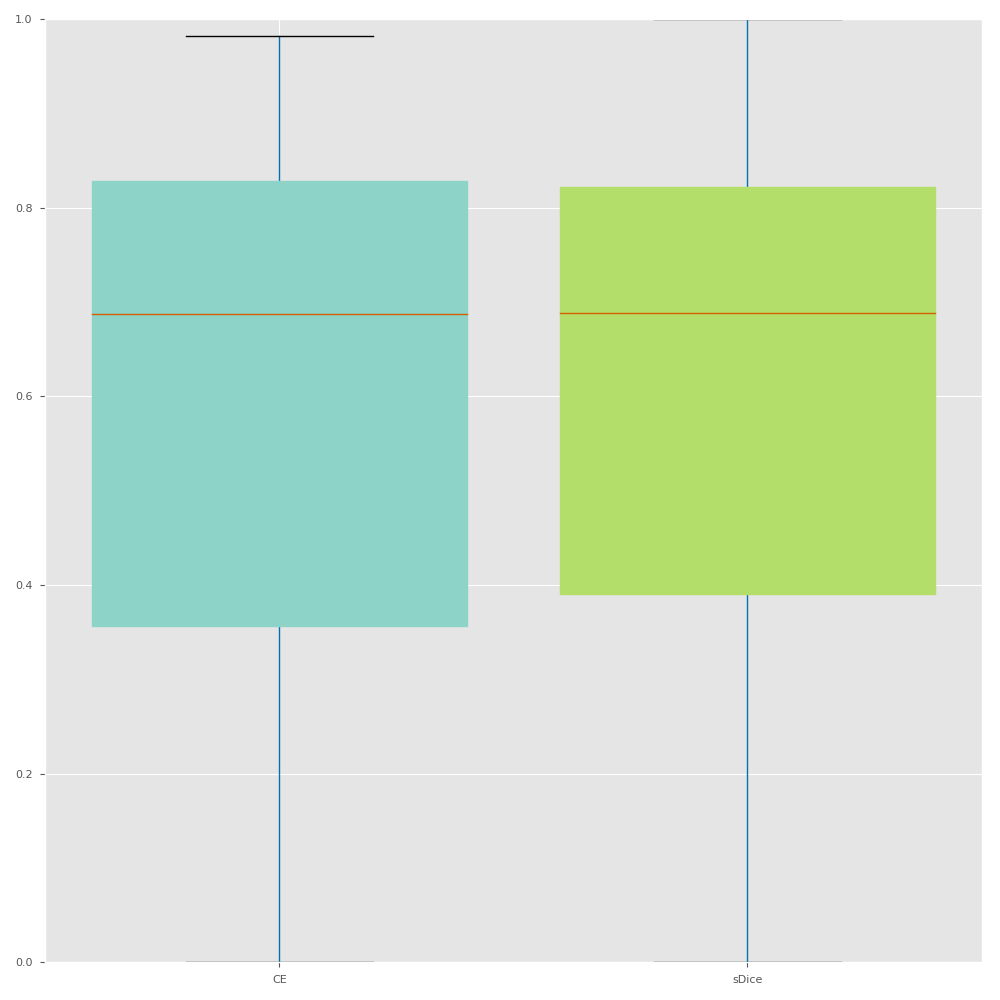} &  
        \includegraphics[width=\linewidth]{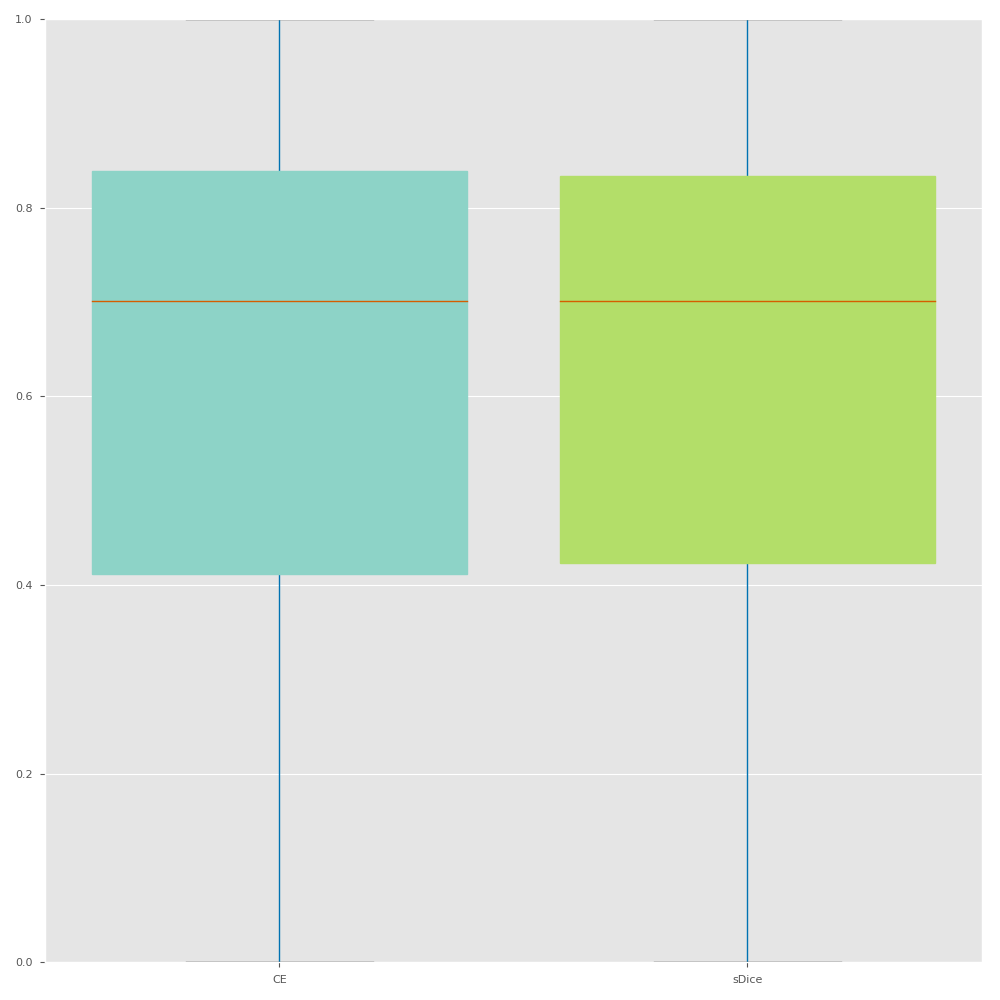} &  
        \includegraphics[width=\linewidth]{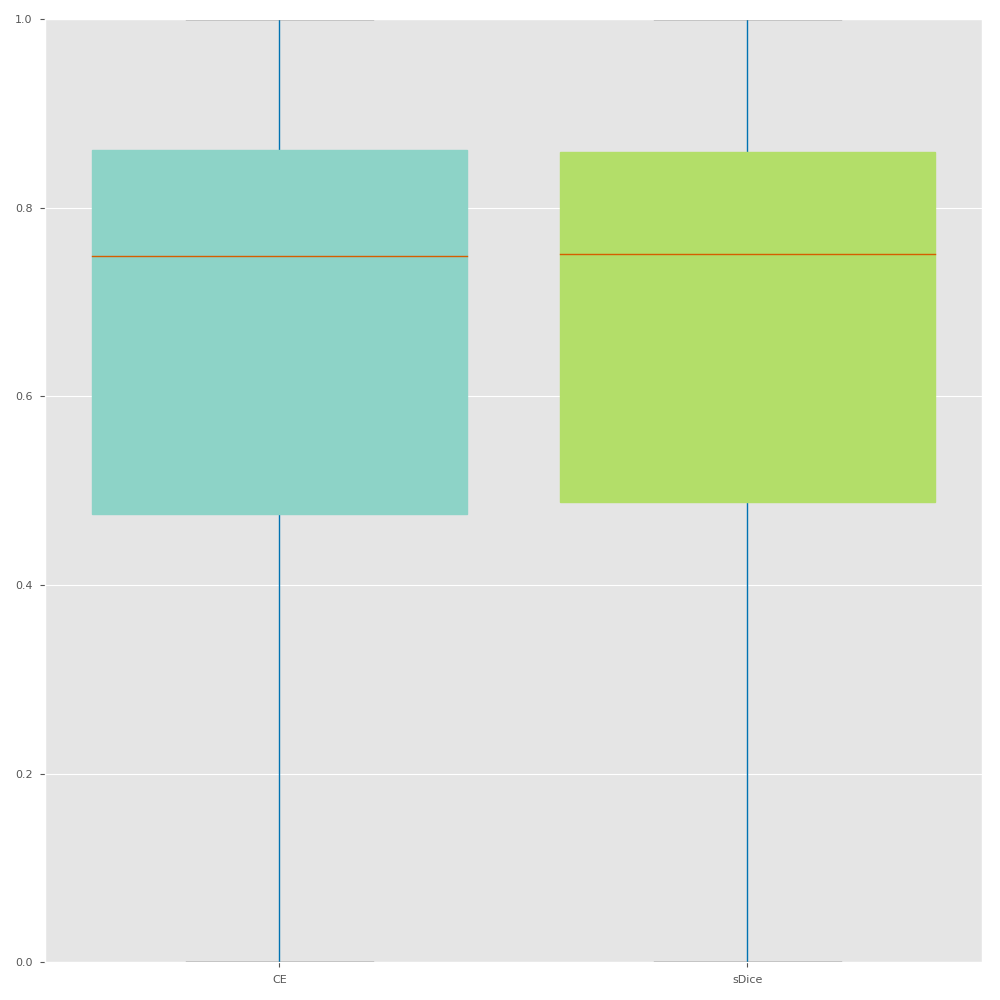} &  
        \includegraphics[width=\linewidth]{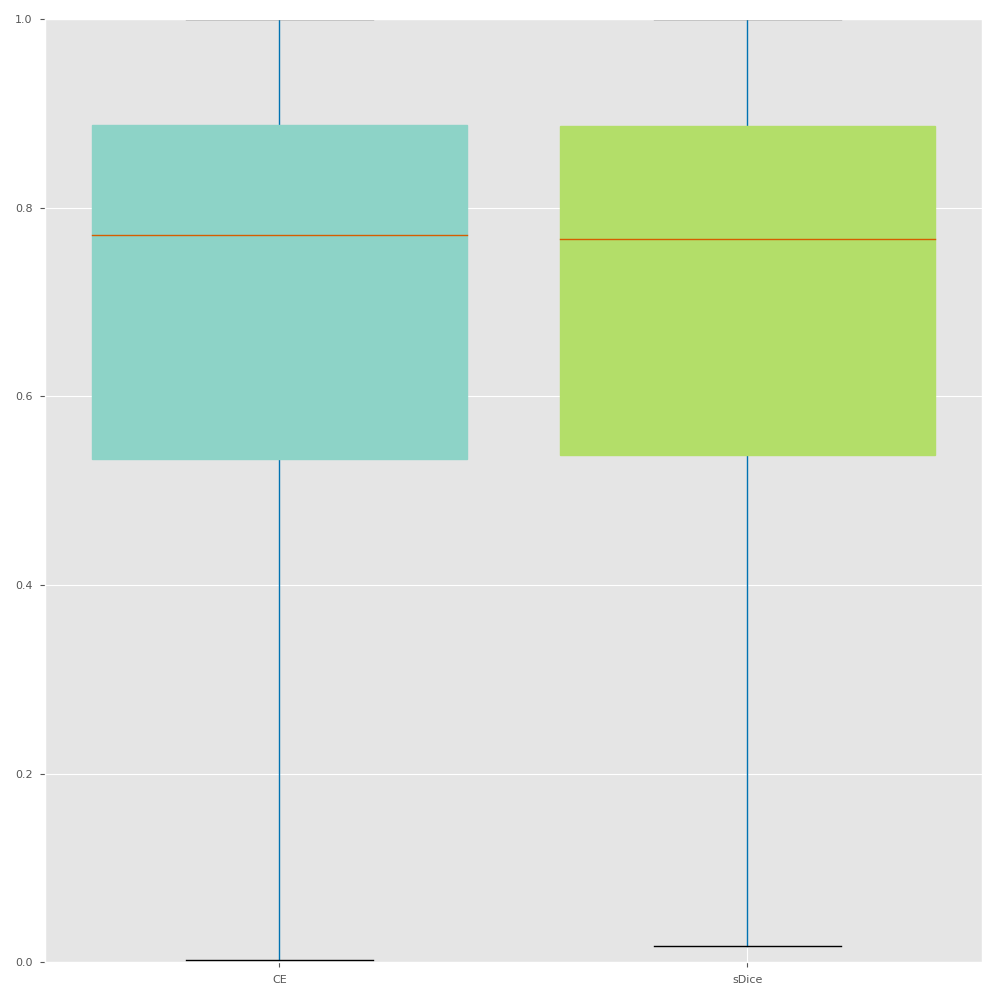}  \\ 

    \end{tabularx}
    }
        \caption{\reviewminorpar Boxplots for the Dice scores and Jaccard indexes obtained for each virtual PO18 dataset and for networks trained with CE or sDice.}
    \label{fig:boxplots_cropped}
\end{figure}

\begin{figure}[!htbp]
    \centering
    \resizebox*{\linewidth}{!}{
    
    \def\arraystretch{0}
    \setlength{\tabcolsep}{0pt}
    \begin{tabularx}{\linewidth}{l @{\hspace{2pt}} YYYY}

        & All &   WT &  TC & ET  \\
        
        \rotatebox{90}{\hspace{5pt} DICE} &
        \includegraphics[width=\linewidth]{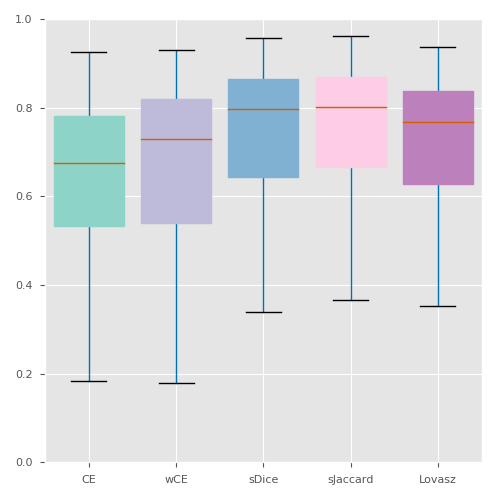} &  
        \includegraphics[width=\linewidth]{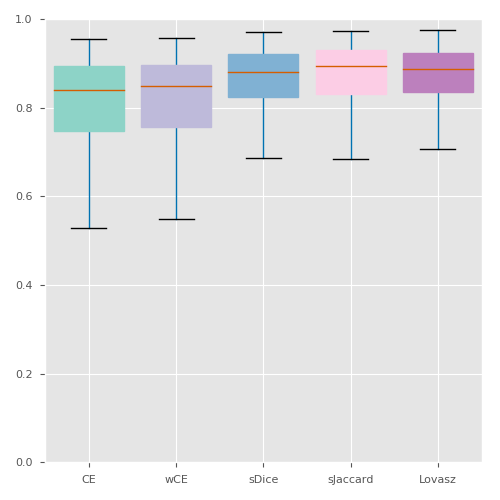} &  
        \includegraphics[width=\linewidth]{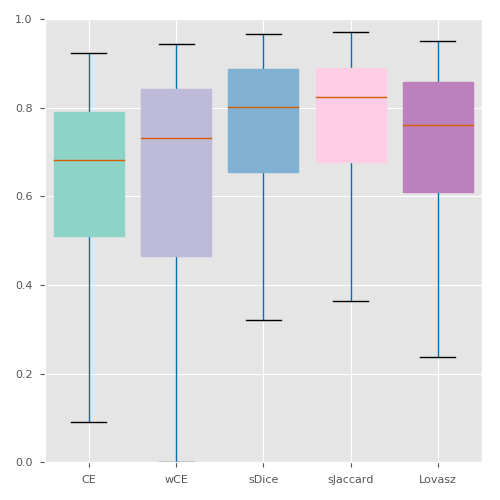} &  
        \includegraphics[width=\linewidth]{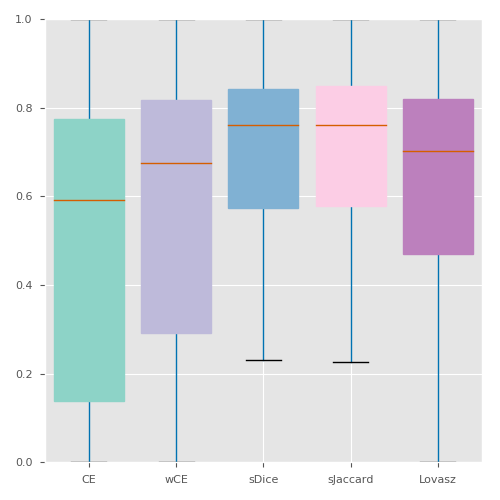}   \\ 
        
        \rotatebox{90}{\hspace{5pt} JACC} &
        \includegraphics[width=\linewidth]{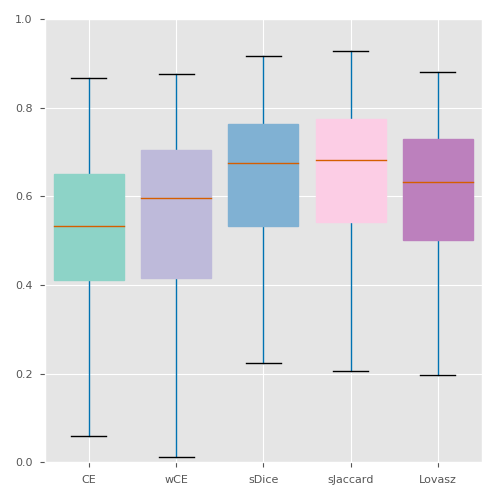} &  
        \includegraphics[width=\linewidth]{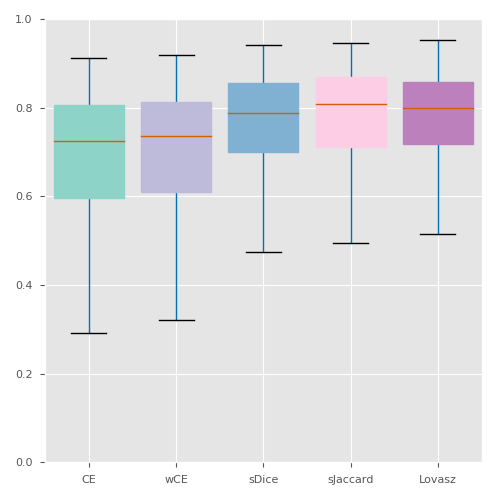} &  
        \includegraphics[width=\linewidth]{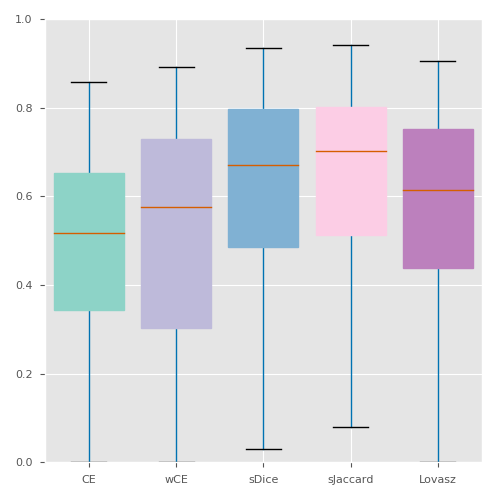} &  
        \includegraphics[width=\linewidth]{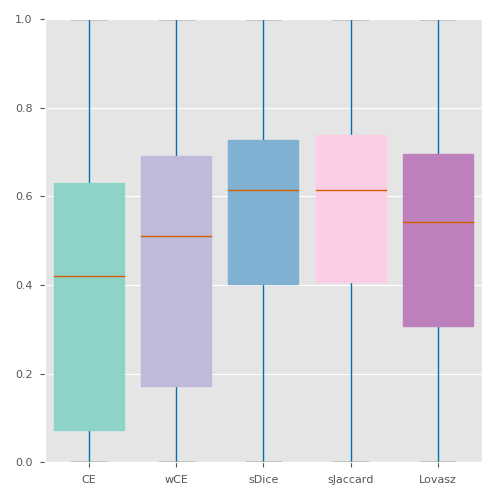} \\ 

    \end{tabularx}
    }
        \caption{\reviewminorpar Boxplots for the Dice scores and Jaccard indexes obtained for the multi-class BRATS dataset using the original loss functions. Results are presented for the three volumes that were considered for evaluation during the challenge: whole tumor (WT), tumor core (TC) and enhancing tumor (ET). We also report the average across the three types (all).}
    \label{fig:boxplots_multiclass}
\end{figure}

\begin{figure}[!htbp]
    \centering
    \resizebox*{\linewidth}{!}{
    
    \def\arraystretch{0}
    \setlength{\tabcolsep}{0pt}
    \begin{tabularx}{\linewidth}{l @{\hspace{2pt}} YYYY}

        & All &   WT &  TC & ET  \\
        
        \rotatebox{90}{\hspace{5pt} DICE} &
        \includegraphics[width=\linewidth]{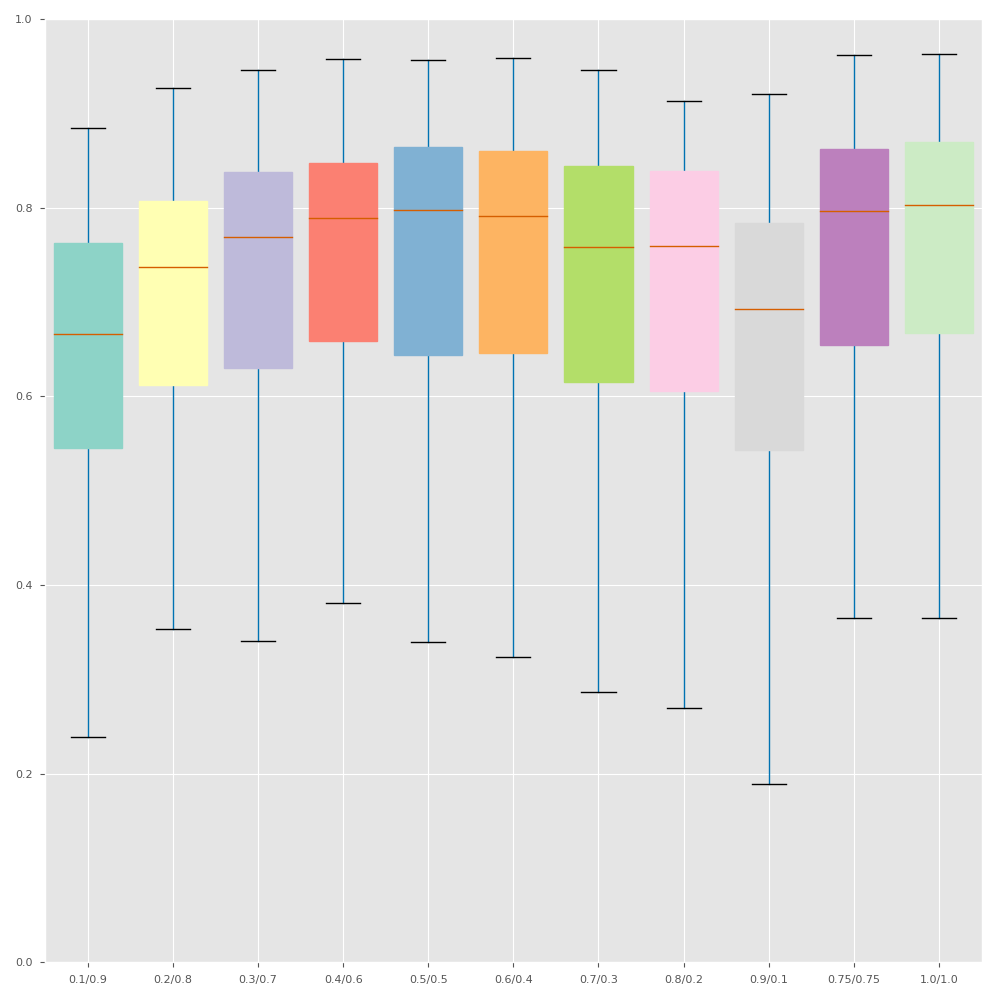} &  
        \includegraphics[width=\linewidth]{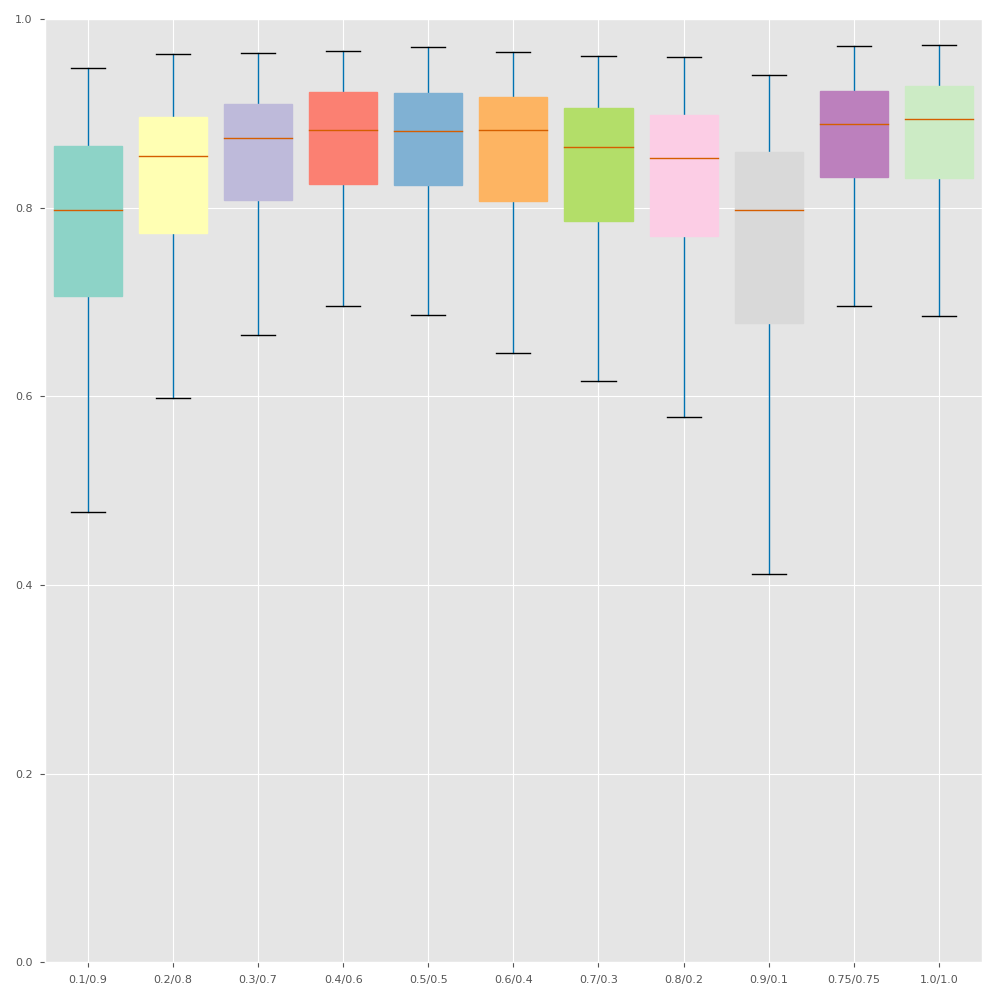} &  
        \includegraphics[width=\linewidth]{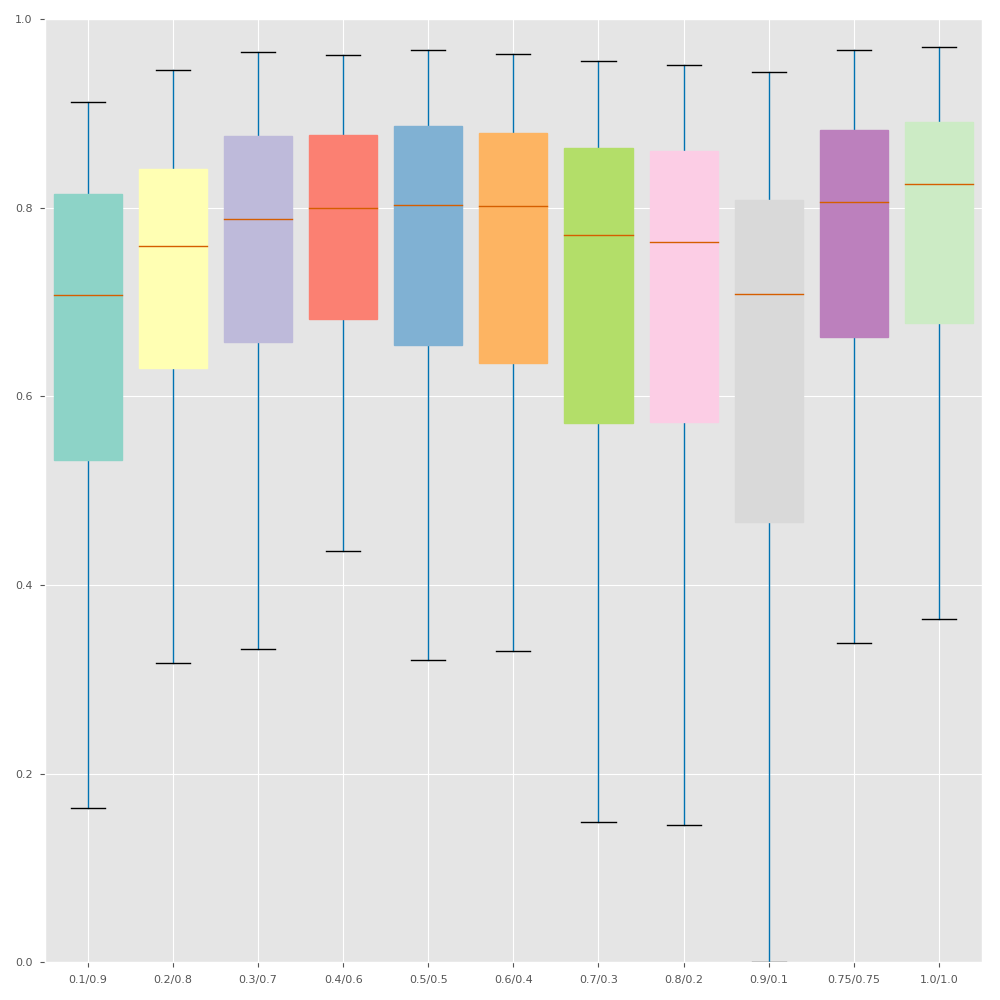} &  
        \includegraphics[width=\linewidth]{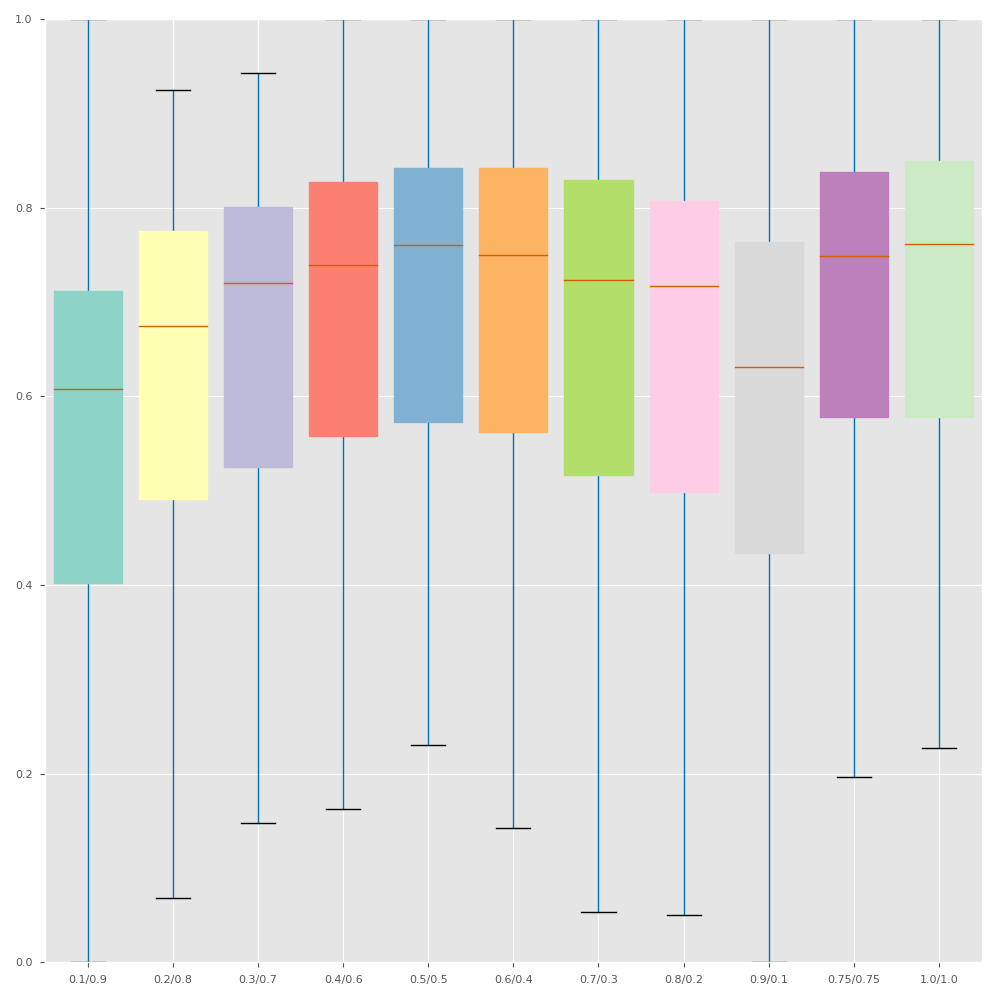}   \\ 
        
        \rotatebox{90}{\hspace{5pt} JACC} &
        \includegraphics[width=\linewidth]{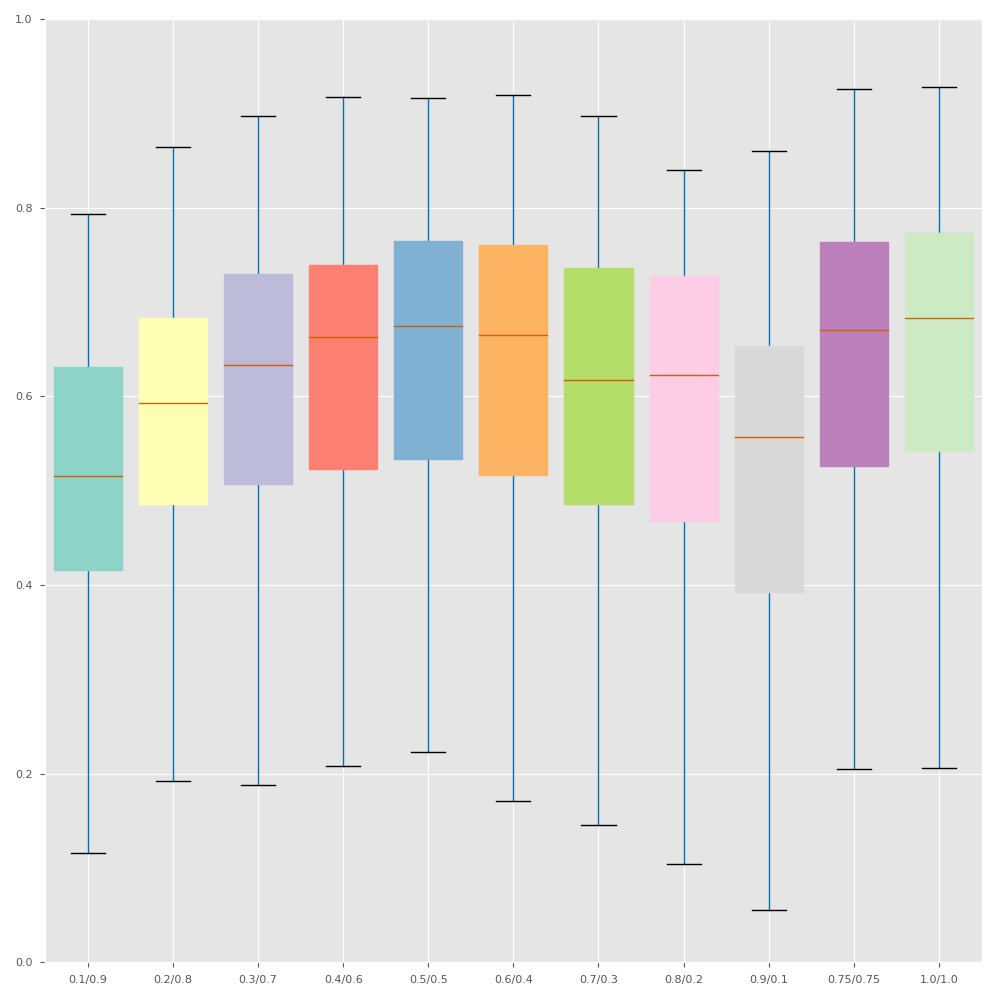} &  
        \includegraphics[width=\linewidth]{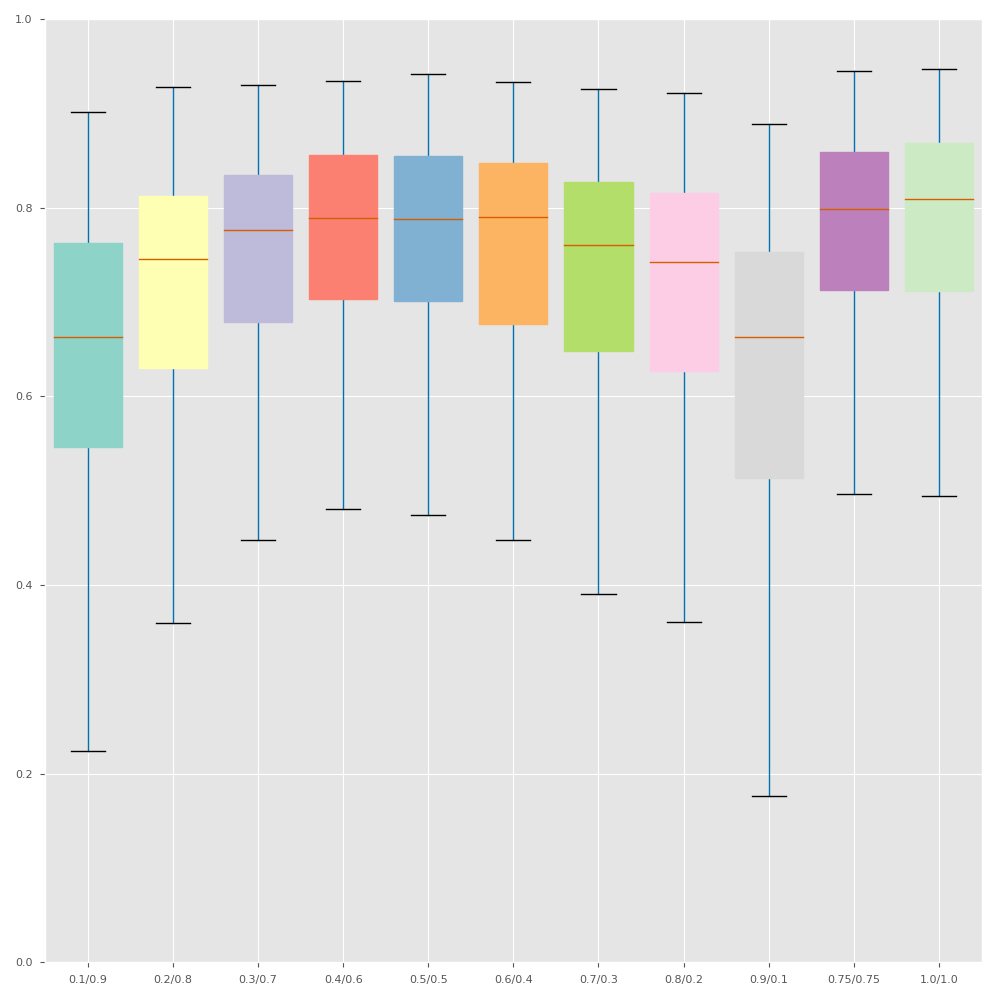} &  
        \includegraphics[width=\linewidth]{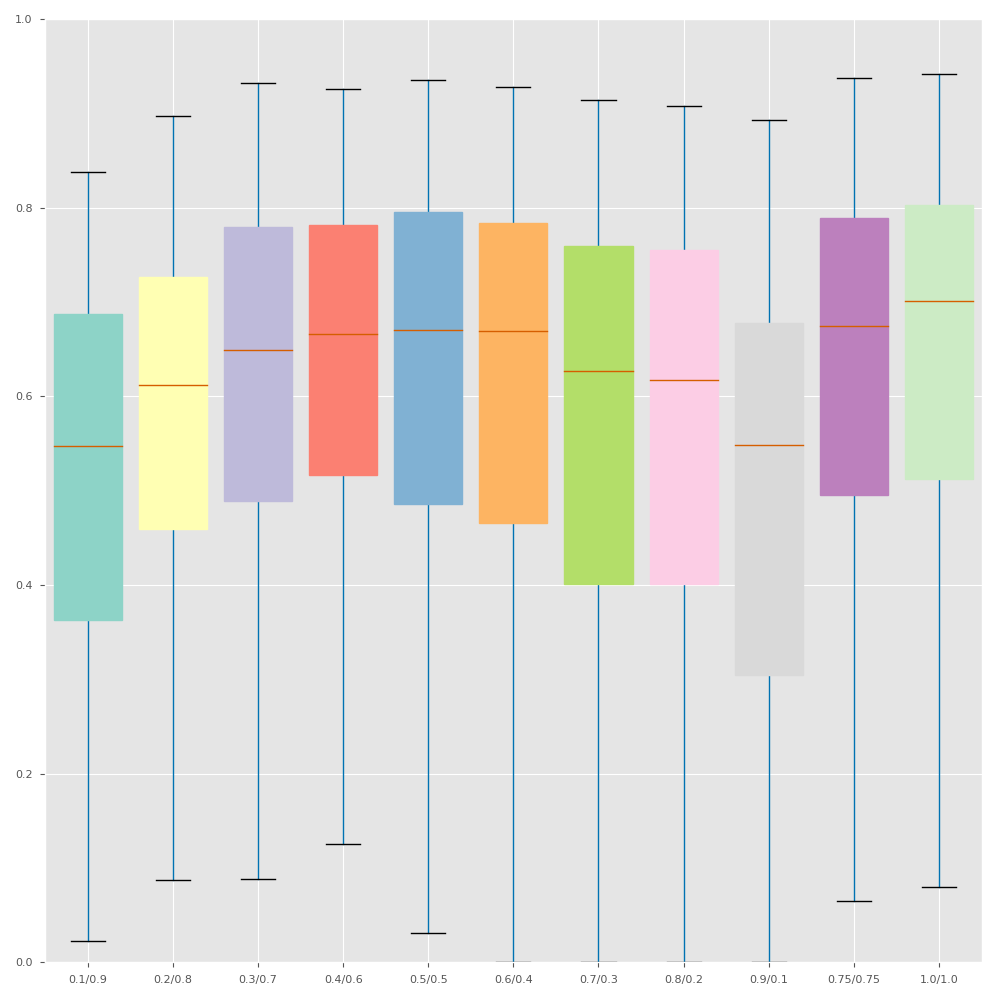} &  
        \includegraphics[width=\linewidth]{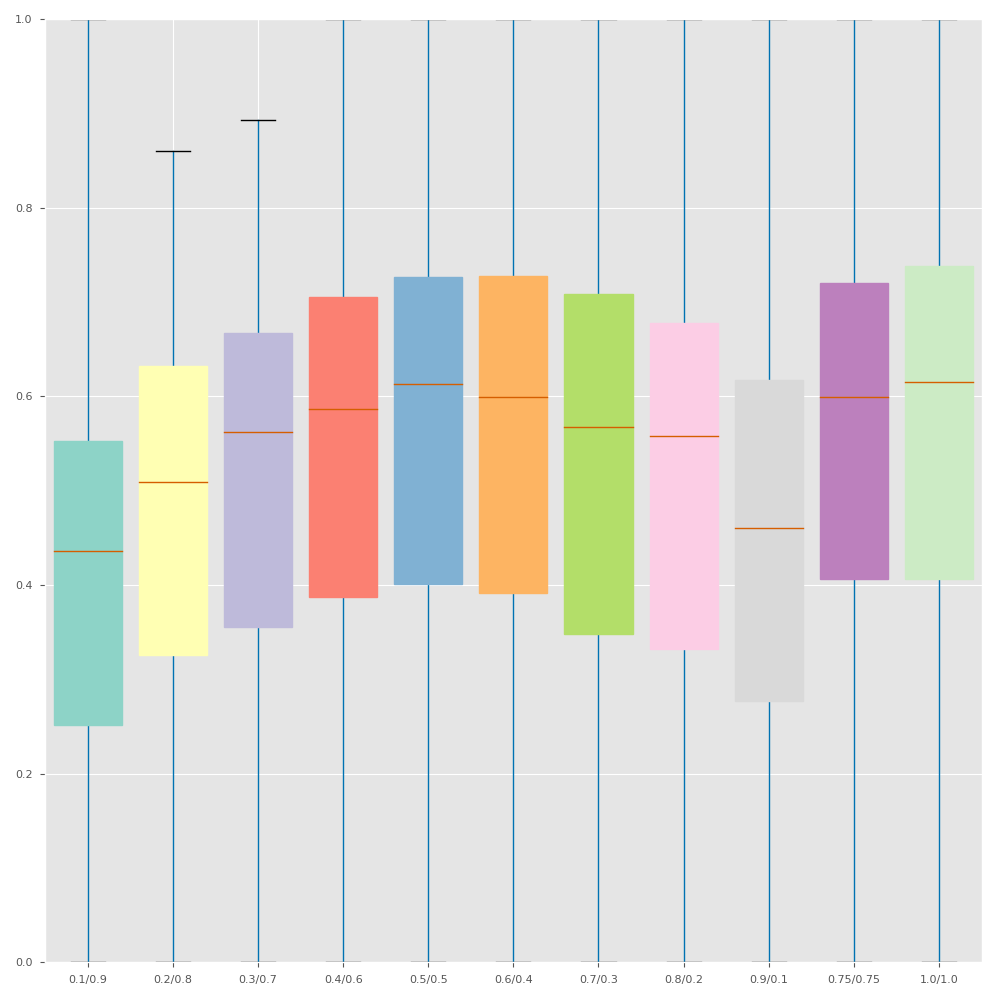} \\ 

    \end{tabularx}
    }
        \caption{\reviewminorpar Boxplots for the Dice scores and Jaccard indexes obtained for the multi-class BRATS dataset using the range of sTversky
        losses with varying alpha/beta. Results are presented for the three volumes that were considered for evaluation during the challenge: whole tumor (WT), tumor core (TC) and enhancing tumor (ET). We also report the average across the three types (all).}
    \label{fig:boxplots_multiclass_tversky}
\end{figure}

\begin{figure}[!htbp]
    \centering
    \resizebox*{\linewidth}{!}{
    
    \def\arraystretch{0}
    \setlength{\tabcolsep}{0pt}
    \begin{tabularx}{\linewidth}{l @{\hspace{2pt}} YYYYYYY}

        & BR18 &   IS17 &  IS18 & MO18 & PO18 & WM17 & WM17\textsuperscript{DM}  \\
        
        \rotatebox{90}{\hspace{5pt} ACC} &
        \includegraphics[width=\linewidth]{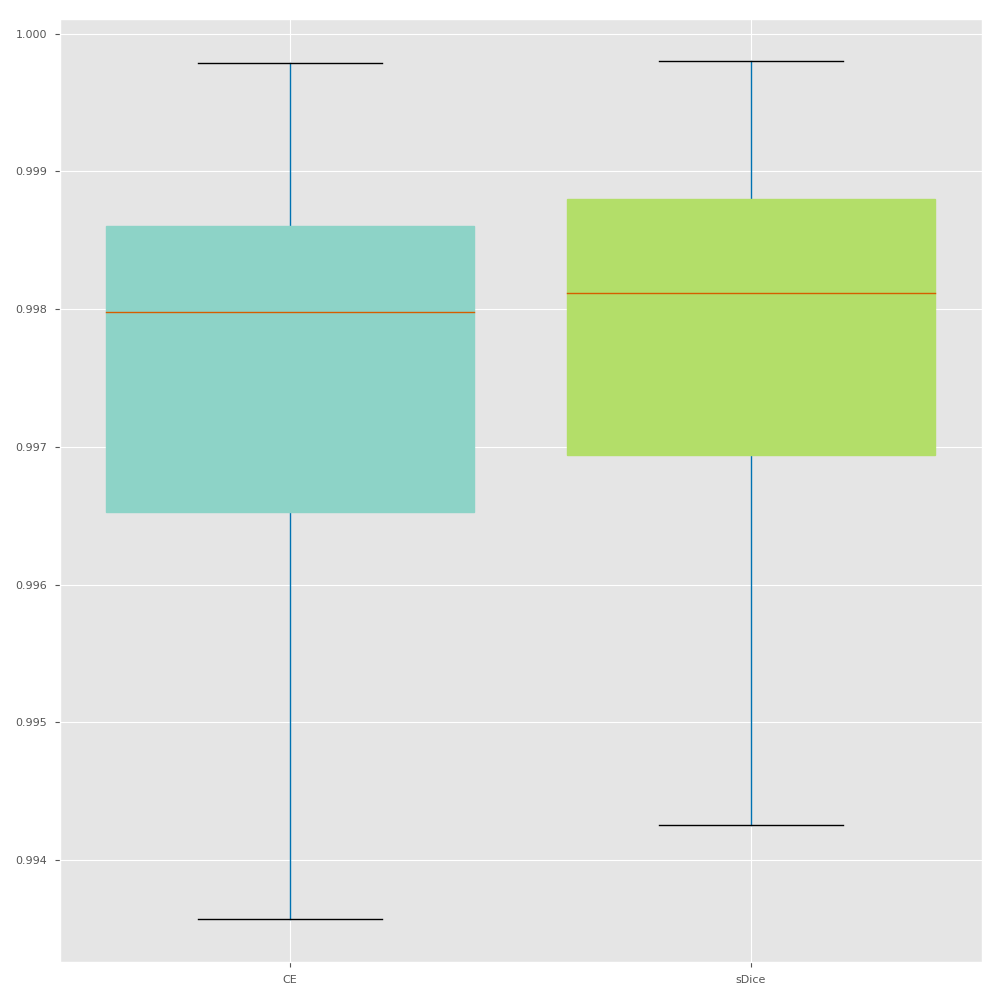} &  
        \includegraphics[width=\linewidth]{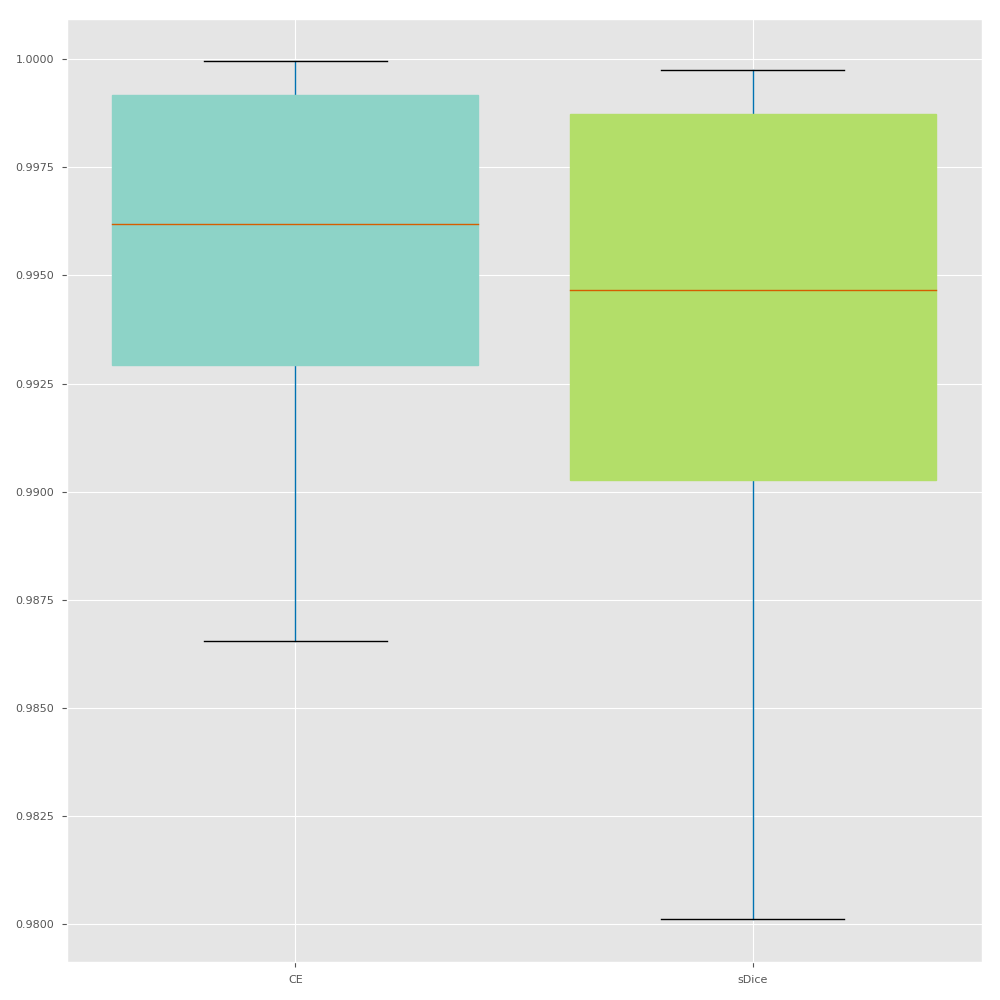} &  
        \includegraphics[width=\linewidth]{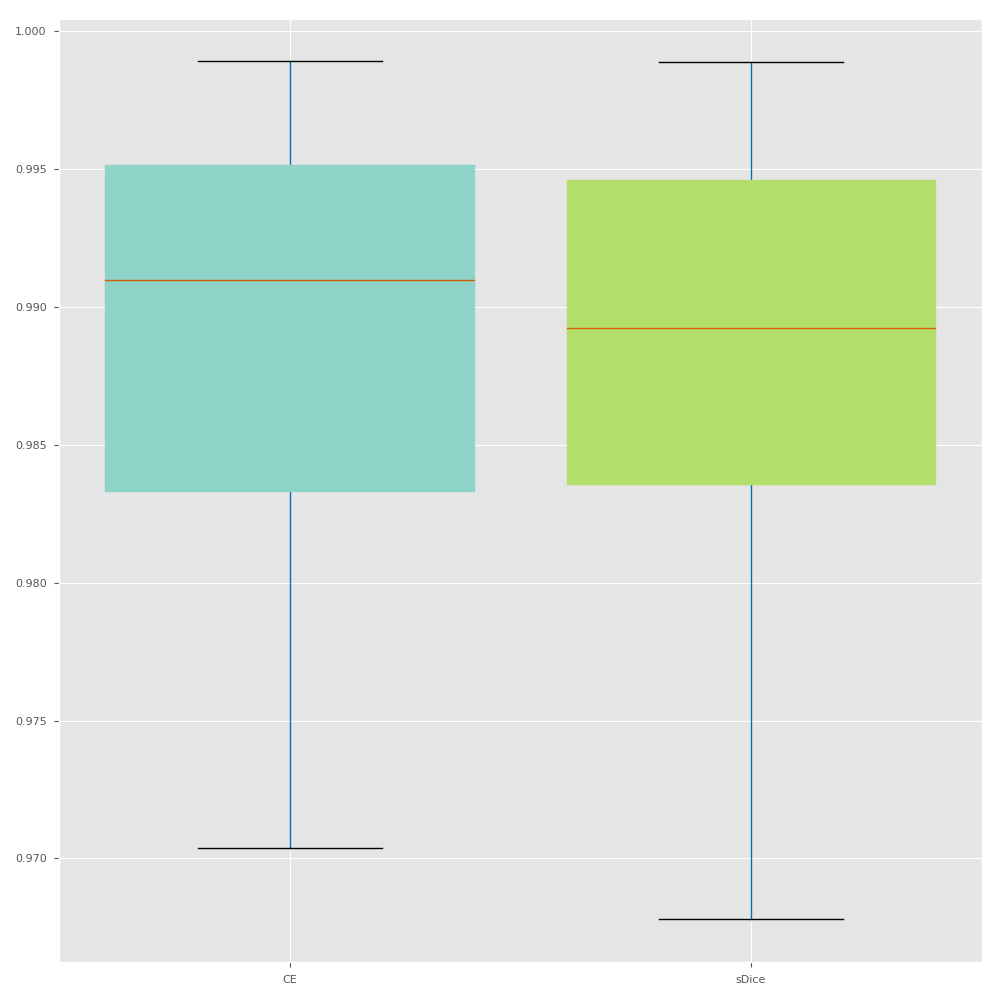} &  
        \includegraphics[width=\linewidth]{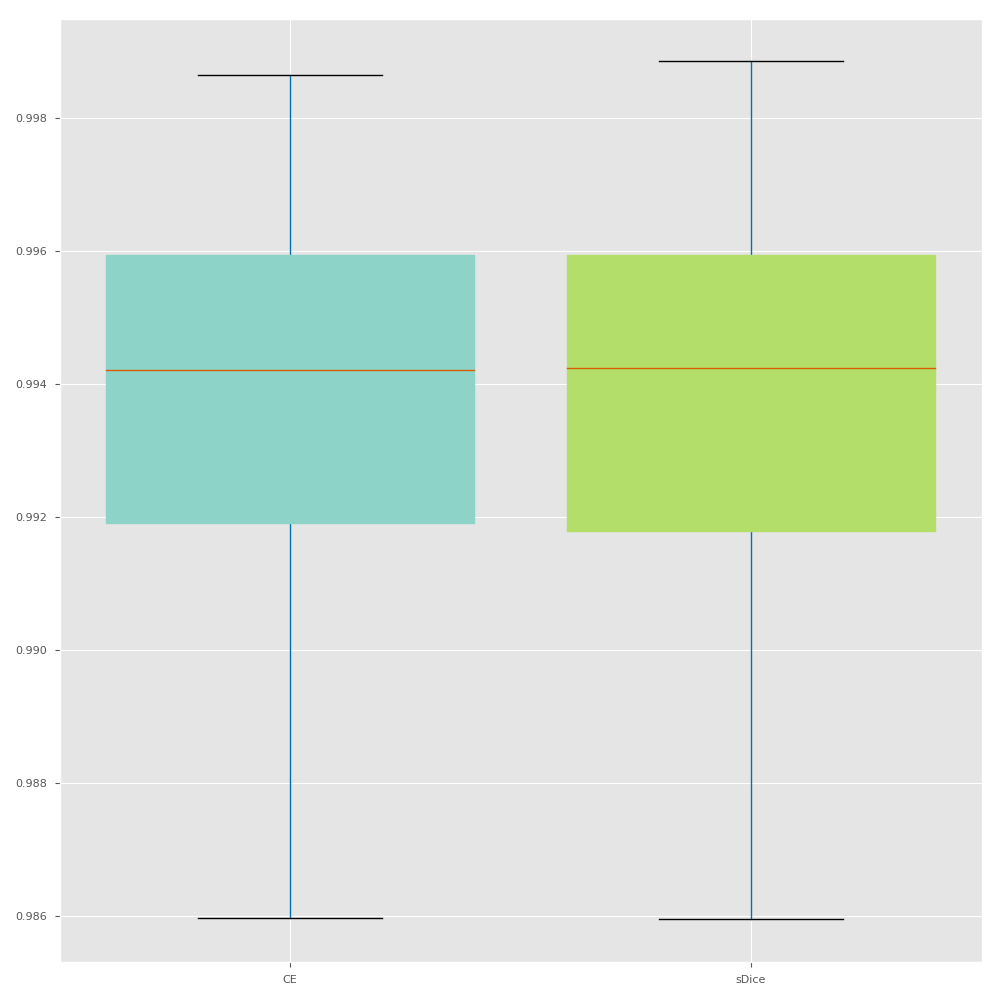} &  
        \includegraphics[width=\linewidth]{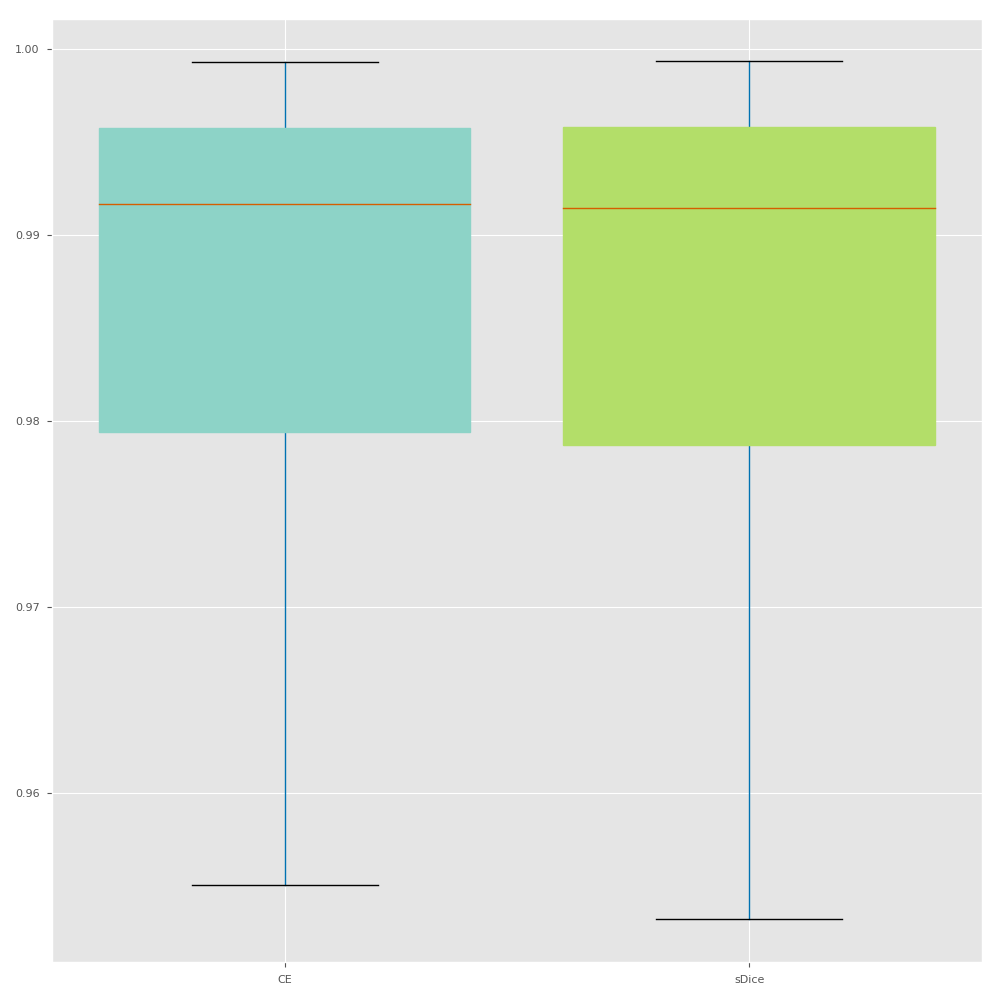} &  
        \includegraphics[width=\linewidth]{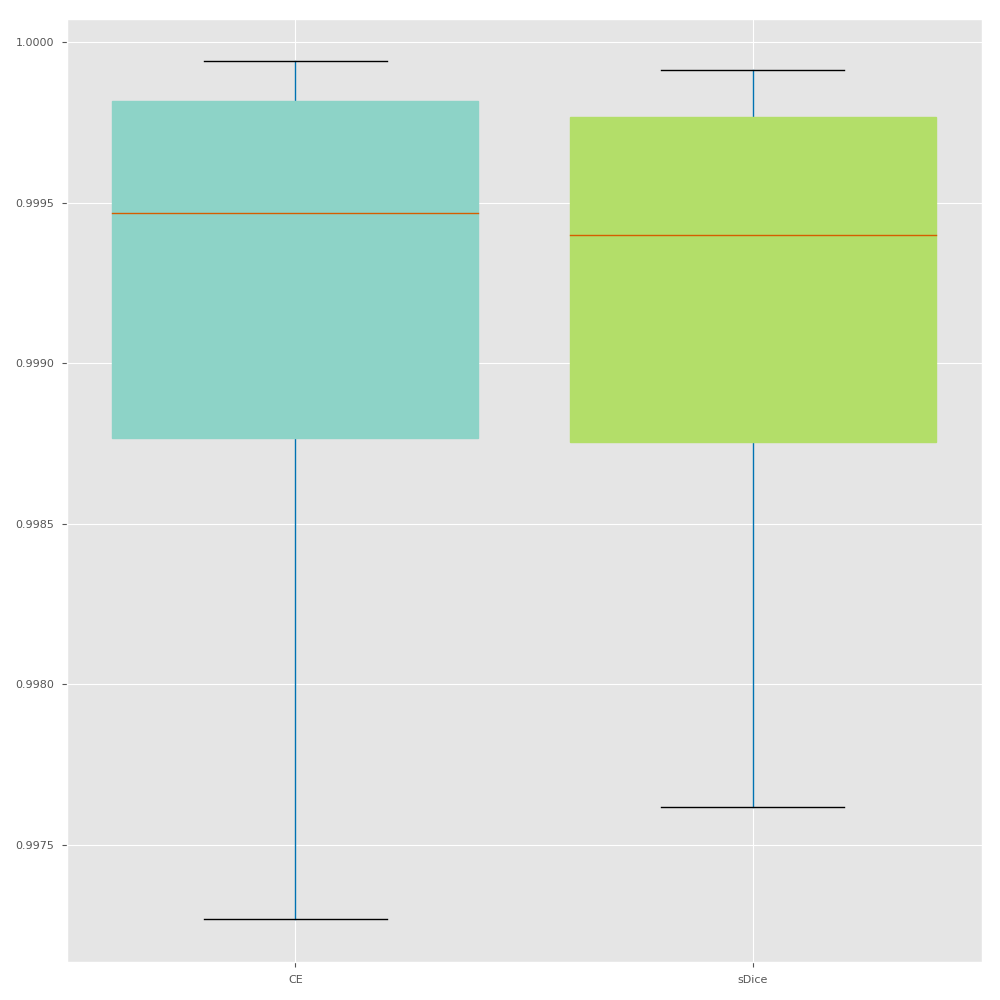} &  
        \includegraphics[width=\linewidth]{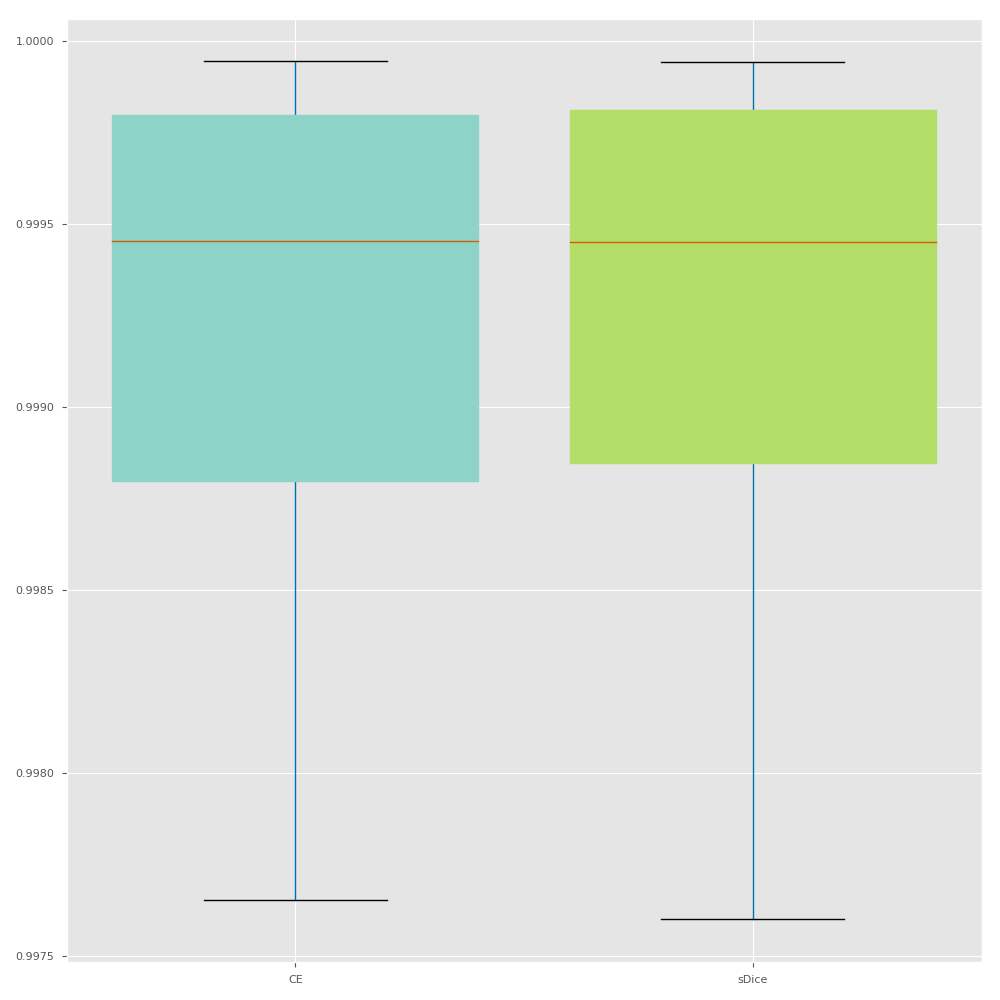}  \\ 
        
        \rotatebox{90}{\hspace{5pt} HAU} &
        \includegraphics[width=\linewidth]{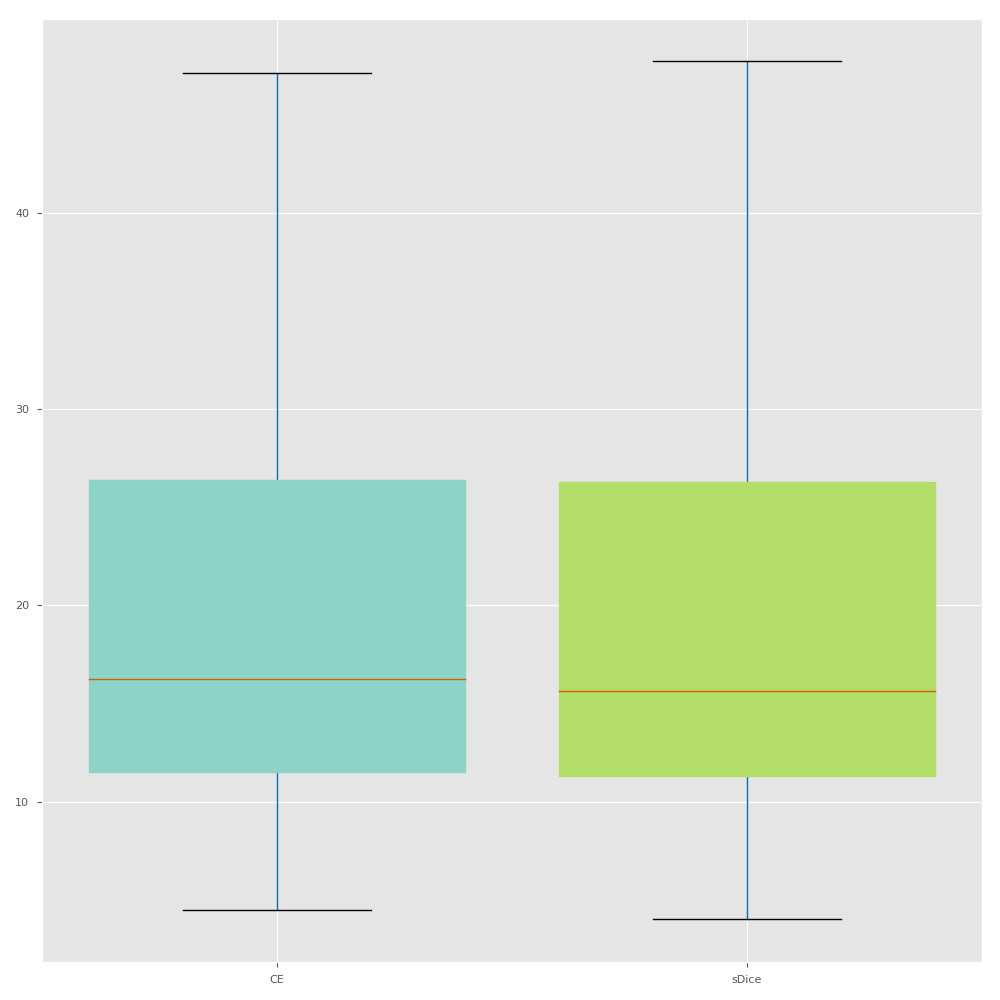} &  
        \includegraphics[width=\linewidth]{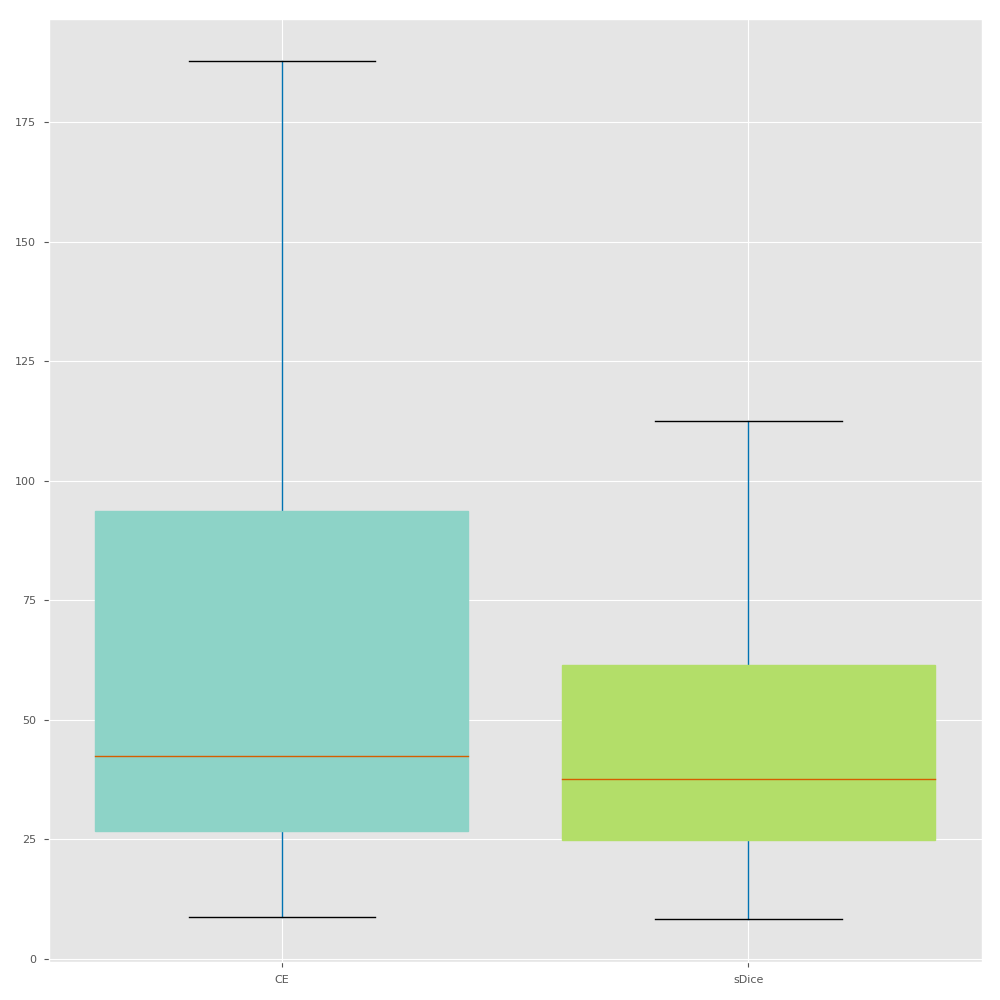} &  
        \includegraphics[width=\linewidth]{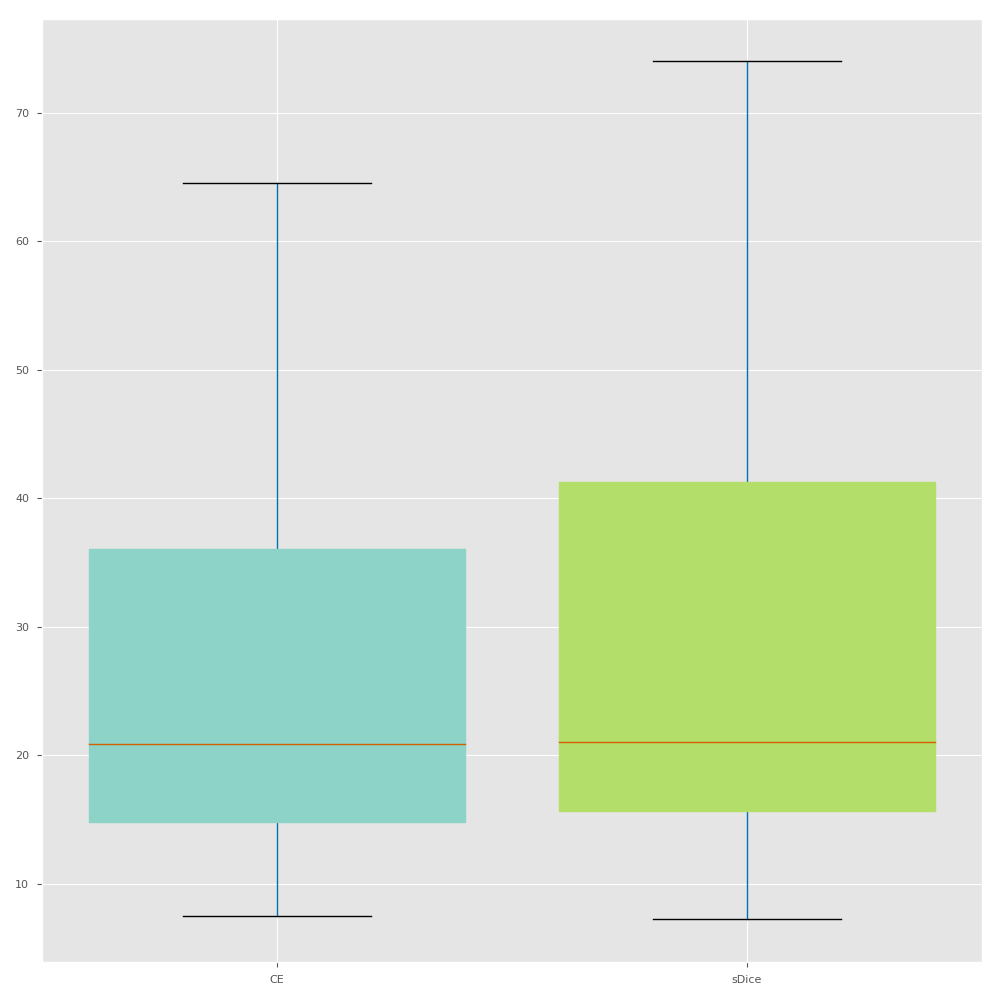} &  
        \includegraphics[width=\linewidth]{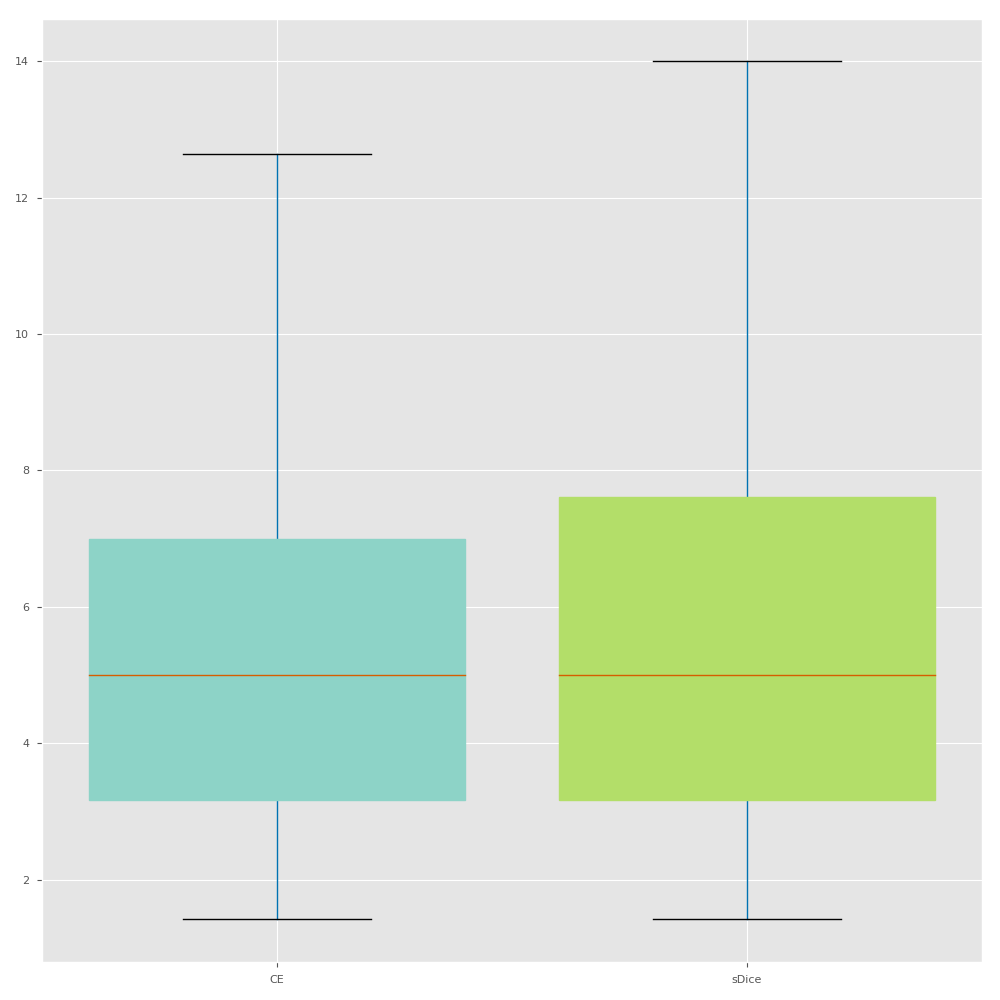} &  
        \includegraphics[width=\linewidth]{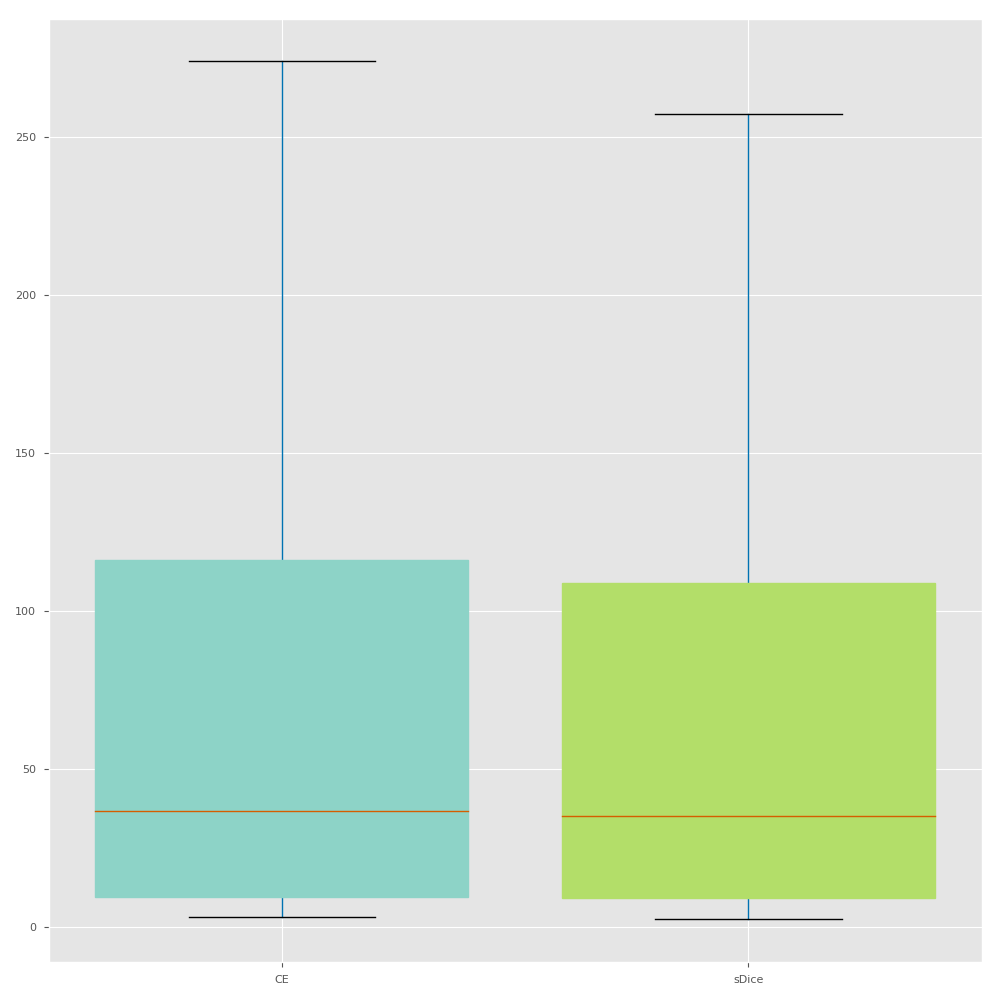} &  
        \includegraphics[width=\linewidth]{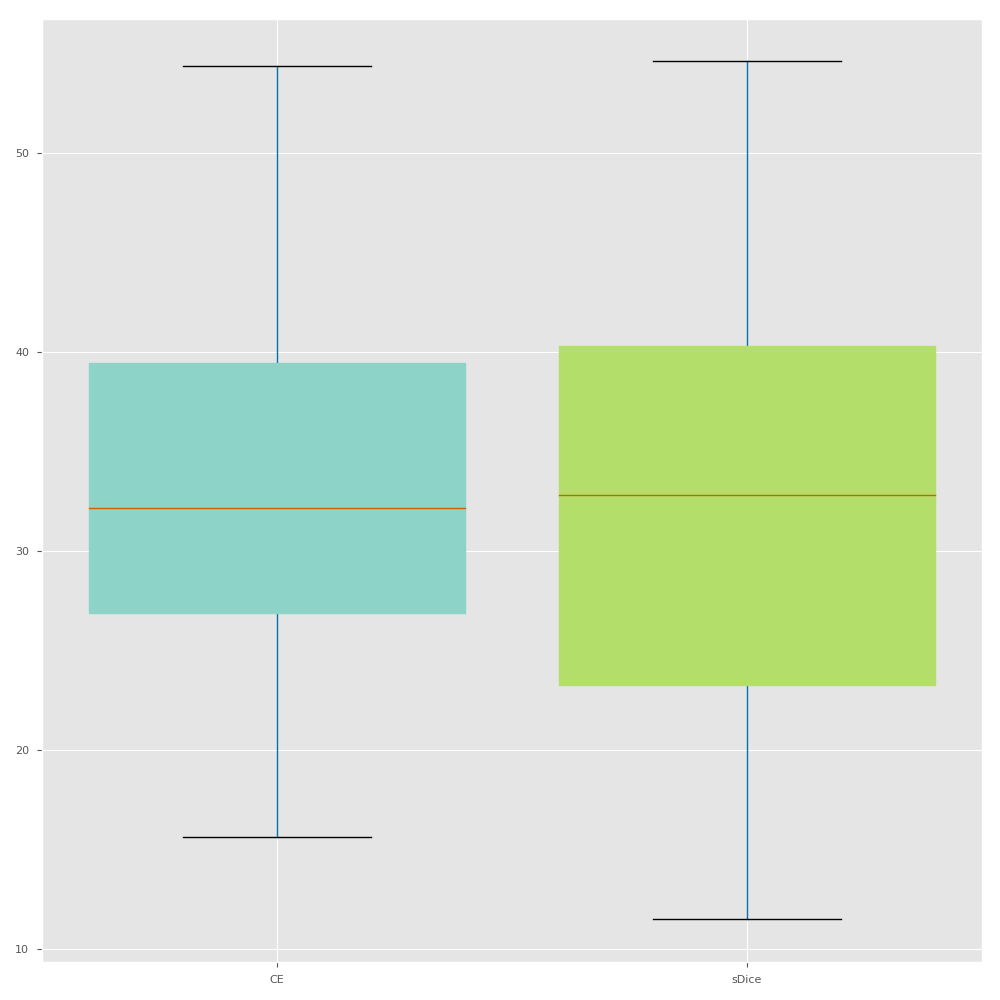} &  
        \includegraphics[width=\linewidth]{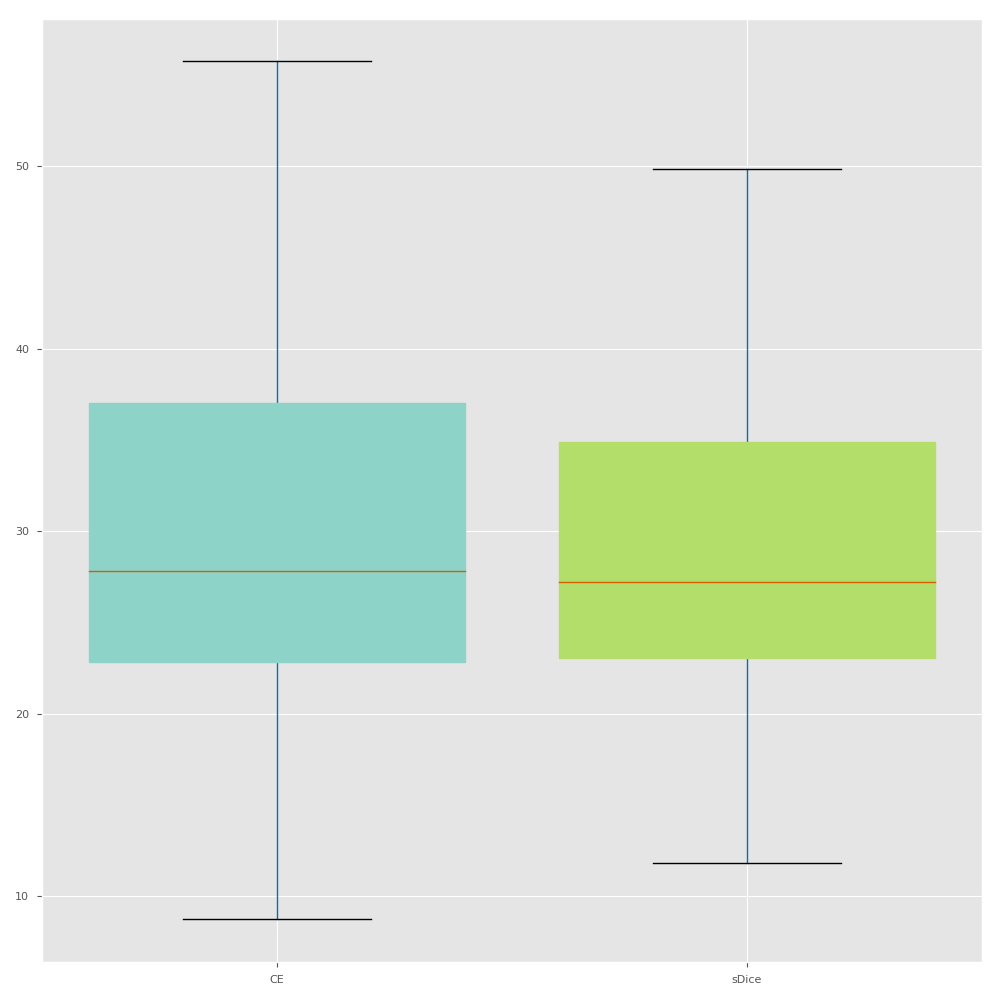}  \\ 
        
        \rotatebox{90}{\hspace{5pt} AVD} &
        \includegraphics[width=\linewidth]{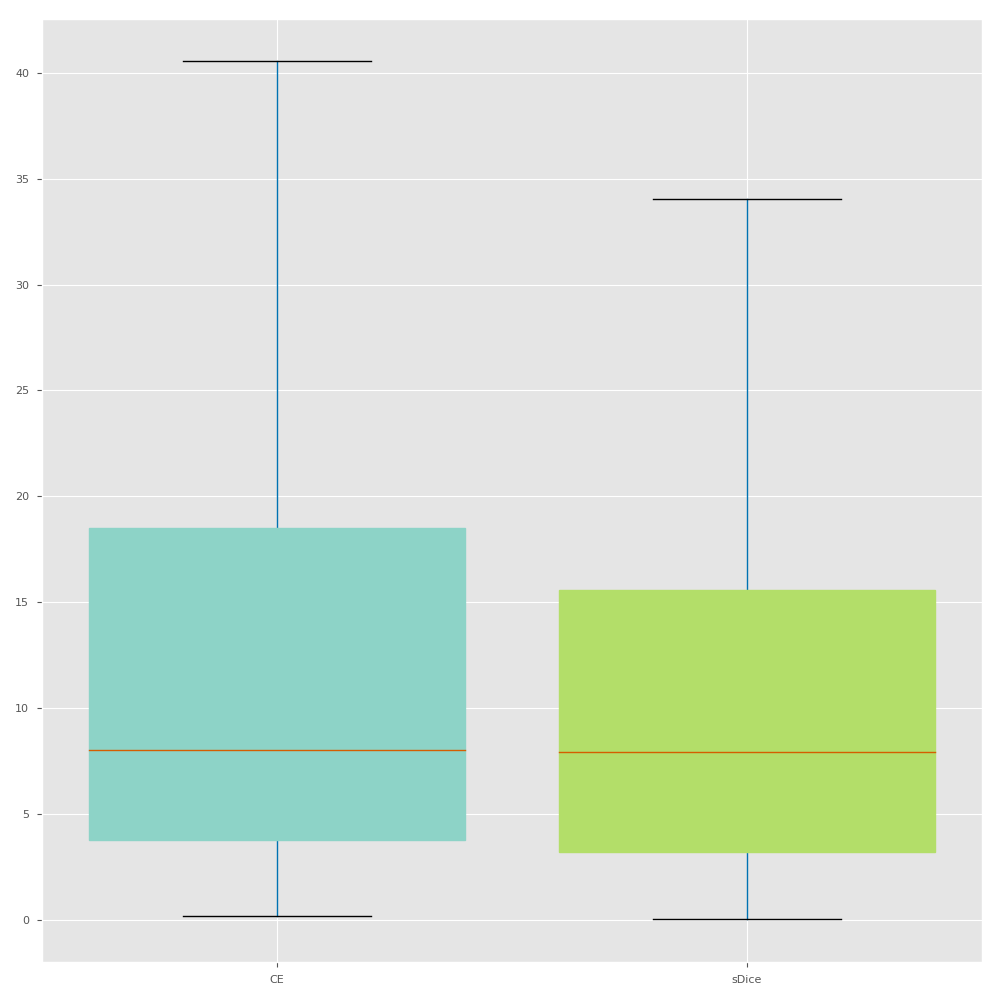} &  
        \includegraphics[width=\linewidth]{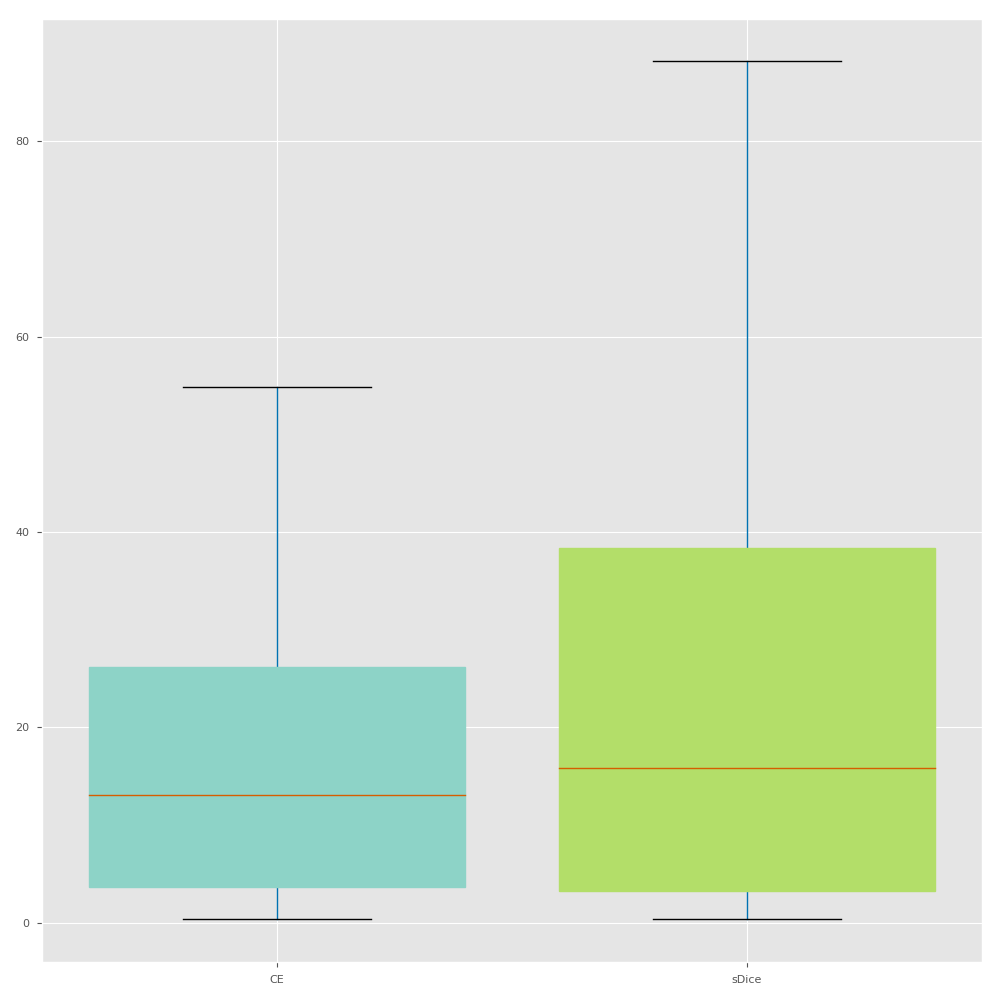} &  
        \includegraphics[width=\linewidth]{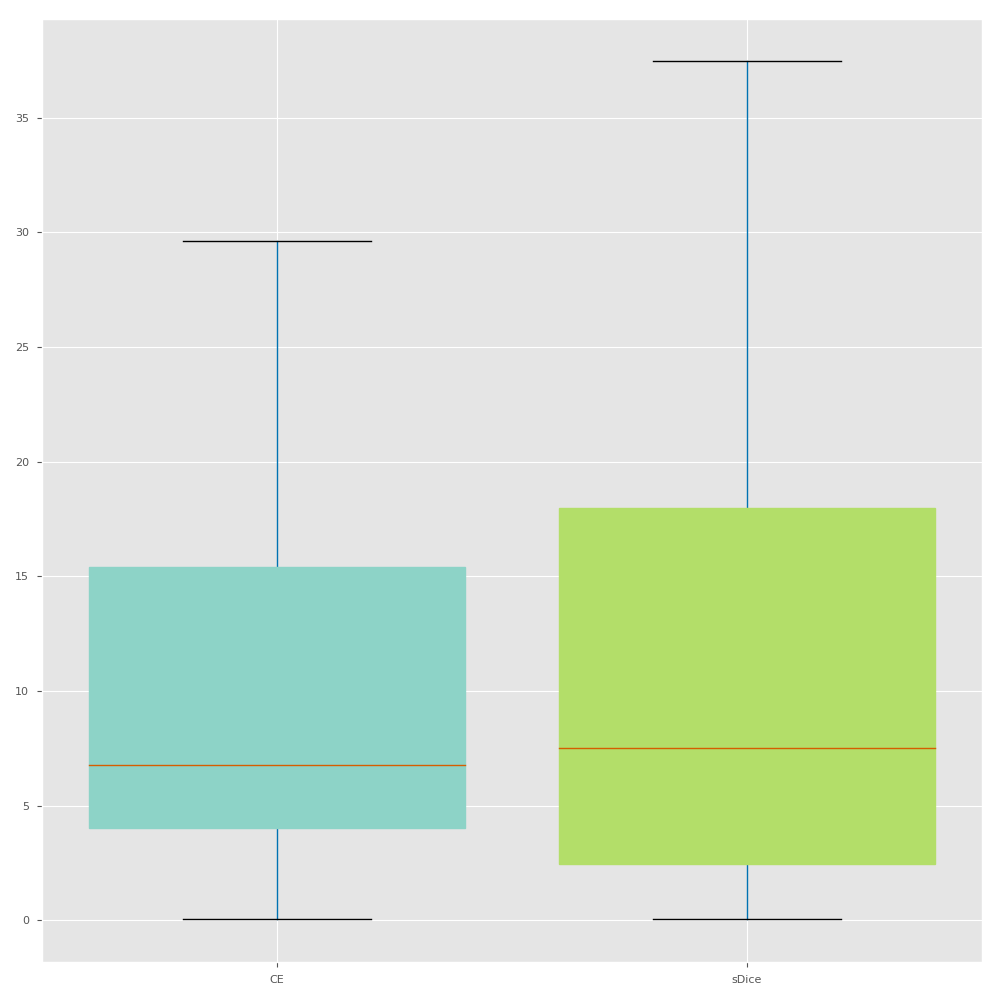} &  
        \includegraphics[width=\linewidth]{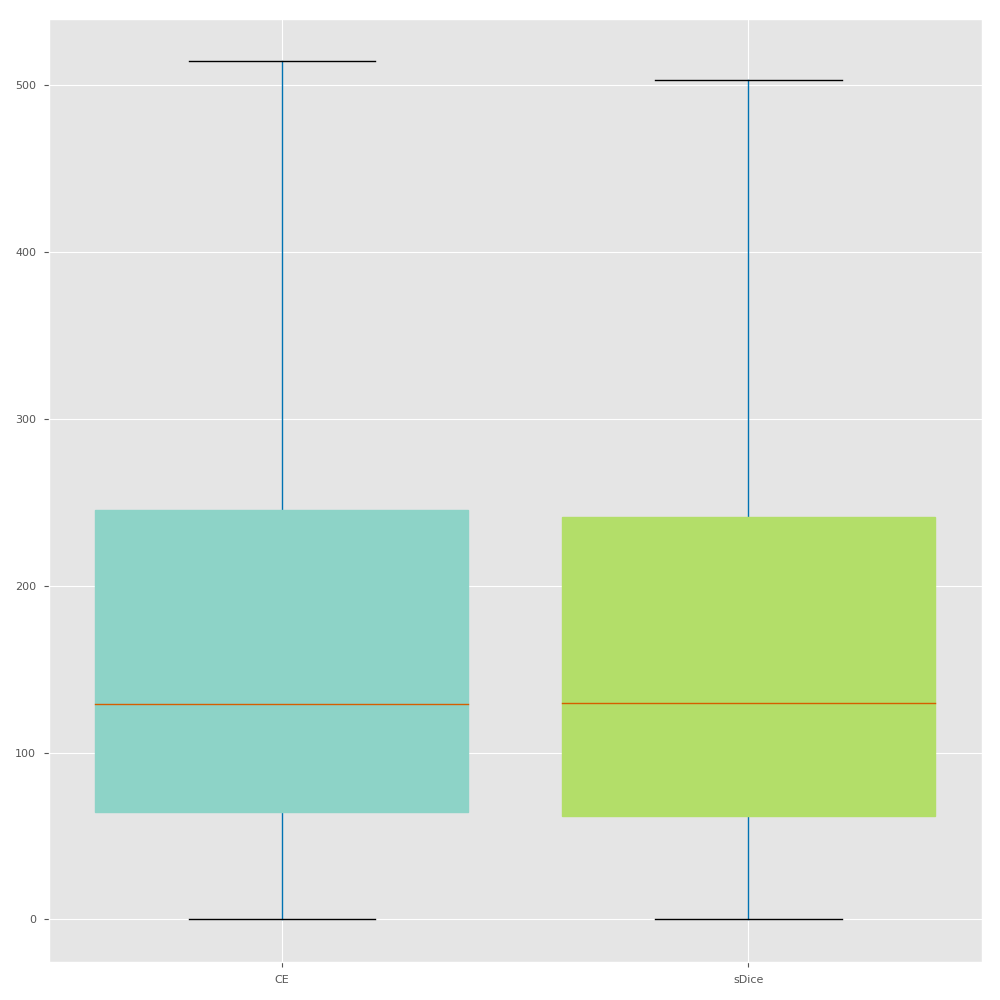} &  
        \includegraphics[width=\linewidth]{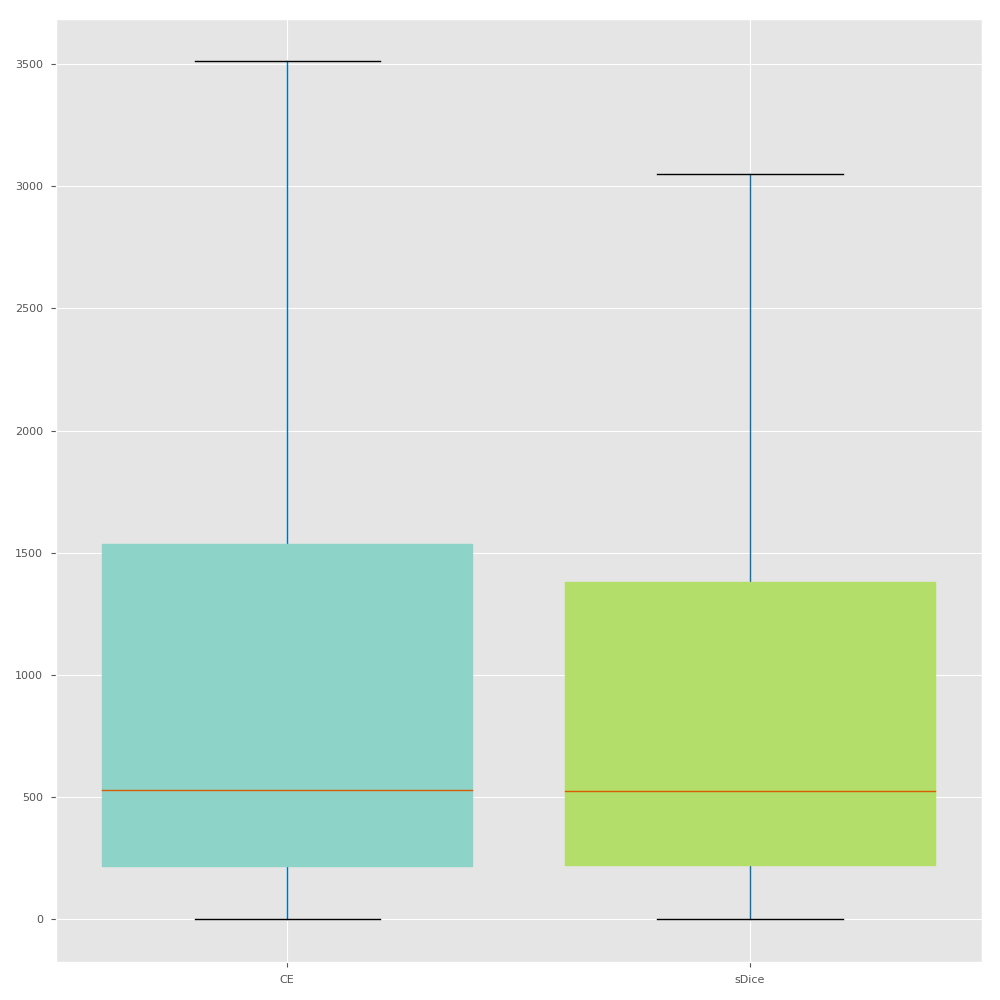} &  
        \includegraphics[width=\linewidth]{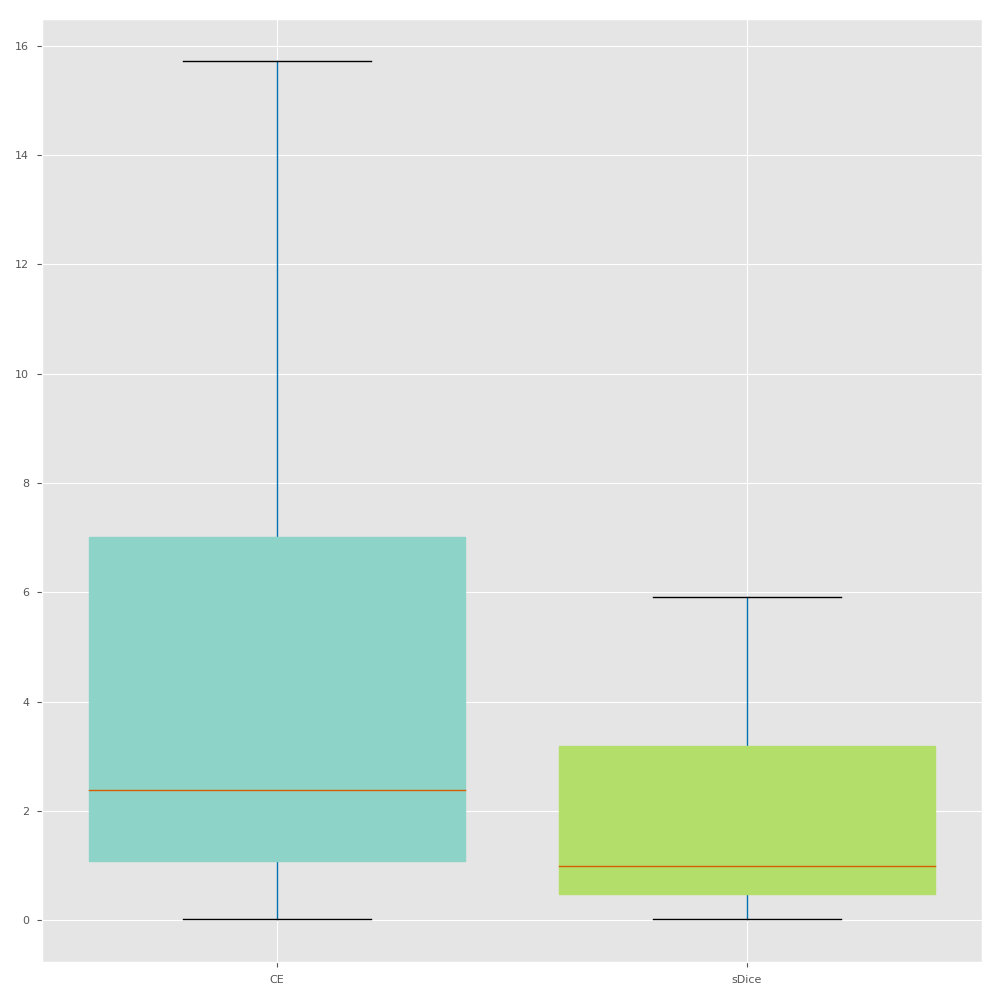} &  
        \includegraphics[width=\linewidth]{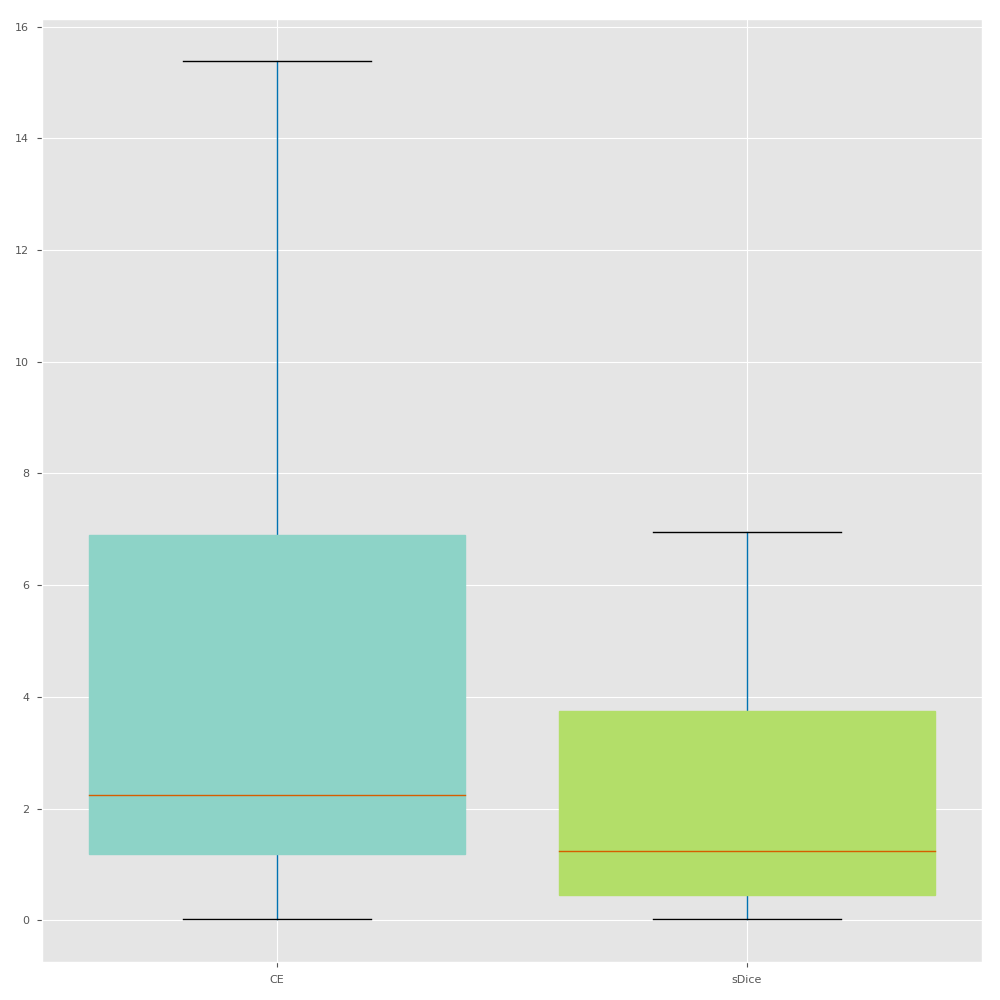}  \\ 

    \end{tabularx}
    }
        \caption{\reviewminorpar Boxplots for the accuracy, Hausdorff distance and absolute volume difference obtained for each dataset, using a cross-entropy loss (CE, wCE) or a metric-sensitive loss (sDice, sAVD, Lov\'{a}sz).}
    \label{fig:boxplots_othermetrics}
\end{figure}

\begin{table*}
    \centering
    \caption{\reviewpar Alternative F measures obtained for each dataset using the range of sTversky losses with varying alpha/beta. Cells in grey highlight the top-ranked losses and values that are significantly inferior to all others are shown in italic.}
    \begin{tabularx}{\linewidth}{lsXYYYYYYYYY|YY}
    \toprule
    & Measure & \multicolumn{1}{r}{\lapbox[\width]{1em}{\emph{$\alpha/\beta$} $\rightarrow$}} & 0.1/0.9 & 0.2/0.8 & 0.3/0.7 & 0.4/0.6 & 0.5/0.5 & 0.6/0.4 & 0.7/0.3 & 0.8/0.2 & 0.9/0.1 & 0.75/0.75 & 1.0/1.0 \\
    \midrule
    \parbox[t]{7mm}{\centering\multirow{5}{*}{\rotatebox[origin=c]{90}{BR18}}} 
    & F0.5 & & \textit{0.739} & 0.806 & 0.841 & 0.853 & 0.878 & 0.890 & \cellcolor{gray!50}0.896 & \cellcolor{gray!50}0.894 & 0.879 &  0.878 & 0.877 \\
    & F1.0 & & 0.801 & 0.844 & 0.859 & 0.863 & \cellcolor{gray!50}0.870 & \cellcolor{gray!50}0.867 & 0.857 & 0.831 & \textit{0.787} &  \cellcolor{gray!50}0.871 & \cellcolor{gray!50}0.868 \\
    & F1.5 & & 0.850 & \cellcolor{gray!50}0.874 & \cellcolor{gray!50}0.875 & \cellcolor{gray!50}0.873 & 0.867 & 0.855 & 0.837 & 0.798 & \textit{0.741} &  0.869 & 0.865 \\
    & F2.0 & & 0.882 & \cellcolor{gray!50}0.893 & 0.885 & 0.880 & 0.867 & 0.850 & 0.827 & 0.782 & \textit{0.719} &  0.869 & 0.865 \\
    \midrule
    \parbox[t]{7mm}{\centering\multirow{5}{*}{\rotatebox[origin=c]{90}{IS17}}} 
    & F0.5 & & 0.316 & 0.318 & \cellcolor{gray!50}0.358 & \cellcolor{gray!50}0.372 & \cellcolor{gray!50}0.368 & \cellcolor{gray!50}0.367 & \cellcolor{gray!50}0.387 & \cellcolor{gray!50}0.382 & \cellcolor{gray!50}0.363 &  0.367 & 0.362 \\
    & F1.0 & & \cellcolor{gray!50}0.352 & 0.345 & \cellcolor{gray!50}0.374 & \cellcolor{gray!50}0.374 & \cellcolor{gray!50}0.362 & 0.346 & \cellcolor{gray!50}0.356 & \cellcolor{gray!50}0.345 & 0.293 &  \cellcolor{gray!50}0.371 & \cellcolor{gray!50}0.363 \\
    & F1.5 & & \cellcolor{gray!50}0.398 & \cellcolor{gray!50}0.384 & \cellcolor{gray!50}0.402 & \cellcolor{gray!50}0.393 & \cellcolor{gray!50}0.378 & 0.351 & 0.360 & 0.342 & \textit{0.274} &  \cellcolor{gray!50}0.394 & \cellcolor{gray!50}0.384 \\
    & F2.0 & & \cellcolor{gray!50}0.439 & \cellcolor{gray!50}0.420 & \cellcolor{gray!50}0.430 & \cellcolor{gray!50}0.415 & \cellcolor{gray!50}0.396 & 0.362 & 0.372 & 0.348 & \textit{0.270} &  \cellcolor{gray!50}0.419 & \cellcolor{gray!50}0.408 \\
    \midrule
    \parbox[t]{7mm}{\centering\multirow{5}{*}{\rotatebox[origin=c]{90}{IS18}}} 
    & F0.5 & & \textit{0.411} & 0.468 & 0.495 & 0.516 & \cellcolor{gray!50}0.528 & 0.528 & \cellcolor{gray!50}0.539 & \cellcolor{gray!50}0.543 & \cellcolor{gray!50}0.530 &  0.521 & 0.520 \\
    & F1.0 & & 0.481 & 0.522 & 0.533 & \cellcolor{gray!50}0.540 & \cellcolor{gray!50}0.538 & \cellcolor{gray!50}0.527 & 0.519 & 0.490 & \textit{0.445} &  0.528 & 0.528 \\
    & F1.5 & & 0.551 & \cellcolor{gray!50}0.573 & \cellcolor{gray!50}0.571 & \cellcolor{gray!50}0.568 & 0.555 & 0.537 & 0.517 & 0.470 & \textit{0.413} &  0.544 & 0.544 \\
    & F2.0 & & \cellcolor{gray!50}0.607 & \cellcolor{gray!50}0.612 & 0.600 & 0.590 & 0.569 & 0.547 & 0.520 & 0.464 & \textit{0.399} &  0.558 & 0.558 \\
    \midrule
    \parbox[t]{7mm}{\centering\multirow{5}{*}{\rotatebox[origin=c]{90}{MO17}}} 
    & F0.5 & & \textit{0.866} & 0.900 & 0.918 & 0.930 & 0.943 & 0.948 & 0.951 & \cellcolor{gray!50}0.954 & 0.948 &  0.943 & 0.942 \\
    & F1.0 & & 0.907 & 0.928 & 0.936 & 0.941 & \cellcolor{gray!50}0.944 & \cellcolor{gray!50}0.942 & 0.938 & 0.932 & 0.902 &  \cellcolor{gray!50}0.944 & \cellcolor{gray!50}0.942 \\
    & F1.5 & & 0.936 & 0.947 & \cellcolor{gray!50}0.949 & \cellcolor{gray!50}0.948 & 0.945 & 0.940 & 0.930 & 0.919 & \textit{0.876} &  0.945 & 0.944 \\
    & F2.0 & & 0.954 & \cellcolor{gray!50}0.959 & 0.957 & 0.953 & 0.946 & 0.939 & 0.926 & 0.913 & \textit{0.862} &  0.946 & 0.945 \\
    \midrule
    \parbox[t]{7mm}{\centering\multirow{5}{*}{\rotatebox[origin=c]{90}{PO18}}} 
    & F0.5 & & \textit{0.588} & 0.609 & 0.625 & 0.644 & \cellcolor{gray!50}0.664 & \cellcolor{gray!50}0.664 & \cellcolor{gray!50}0.671 & \cellcolor{gray!50}0.669 & \cellcolor{gray!50}0.664 & 0.656 & 0.657 \\
    & F1.0 & & 0.614 & 0.628 & 0.637 & 0.647 & \cellcolor{gray!50}0.656 & \cellcolor{gray!50}0.650 & 0.647 & 0.633 & 0.611 & 0.646 & \cellcolor{gray!50}0.651 \\
    & F1.5 & & 0.643 & 0.651 & \cellcolor{gray!50}0.654 & \cellcolor{gray!50}0.658 & \cellcolor{gray!50}0.660 & 0.651 & 0.641 & 0.620 & \textit{0.591} & 0.650 & \cellcolor{gray!50}0.656 \\
    & F2.0 & & \cellcolor{gray!50}0.665 & \cellcolor{gray!50}0.669 & \cellcolor{gray!50}0.668 & \cellcolor{gray!50}0.668 & \cellcolor{gray!50}0.665 & 0.655 & 0.640 & 0.616 & \textit{0.582} & 0.655 & \cellcolor{gray!50}0.662 \\
    \midrule
    \parbox[t]{7mm}{\centering\multirow{5}{*}{\rotatebox[origin=c]{90}{WM17}}} 
    & F0.5 & & \textit{0.498} & 0.592 & 0.659 & 0.690 & 0.724 & 0.742 & \cellcolor{gray!50}0.761 & \cellcolor{gray!50}0.759 & 0.720 &  0.723 & 0.730 \\
    & F1.0 & & 0.581 & 0.649 & 0.689 & 0.700 & \cellcolor{gray!50}0.712 & 0.706 & 0.693 & 0.660 & 0.565 &  \cellcolor{gray!50}0.712 & \cellcolor{gray!50}0.712 \\
    & F1.5 & & 0.657 & 0.700 & \cellcolor{gray!50}0.718 & 0.713 & \cellcolor{gray!50}0.711 & 0.692 & 0.660 & 0.615 & \textit{0.501} &  0.712 & 0.708 \\
    & F2.0 & & 0.714 & \cellcolor{gray!50}0.736 & \cellcolor{gray!50}0.739 & 0.724 & 0.713 & 0.686 & 0.645 & 0.593 & \textit{0.472} &  0.714 & 0.708 \\
    \midrule
    \parbox[t]{7mm}{\centering\multirow{5}{*}{\rotatebox[origin=c]{90}{WM17\textsuperscript{DM}}}} 
    & F0.5 & & \textit{0.521} & 0.617 & 0.671 & 0.709 & 0.736 & \cellcolor{gray!50}0.758 & \cellcolor{gray!50}0.759 & \cellcolor{gray!50}0.752 & 0.647 &  0.731 & 0.736 \\
    & F1.0 & & 0.599 & 0.668 & 0.696 & 0.709 & \cellcolor{gray!50}0.717 & 0.706 & 0.675 & 0.628 & \textit{0.496} &  0.706 & \cellcolor{gray!50}0.714 \\
    & F1.5 & & 0.670 & \cellcolor{gray!50}0.713 & \cellcolor{gray!50}0.722 & \cellcolor{gray!50}0.717 & 0.712 & 0.682 & 0.636 & 0.574 & \textit{0.436} &  0.697 & 0.707 \\
    & F2.0 & & 0.721 & \cellcolor{gray!50}0.744 & \cellcolor{gray!50}0.739 & 0.723 & 0.712 & 0.672 & 0.618 & 0.548 & \textit{0.409} &  0.695 & 0.706 \\
    \bottomrule
\end{tabularx}
    \label{tab:tversky_fmeasures}
\end{table*}

\begin{figure*}[!htbp]
    \centering
    \resizebox*{\linewidth}{!}{
    
    \def\arraystretch{0}
    \setlength{\tabcolsep}{0pt}
    \begin{tabularx}{\linewidth}{l @{\hspace{2pt}} YYYYYYY}

        & BR18 &   IS17 &  IS18 & MO18 & PO18 & WM17 & WM17\textsuperscript{DM}  \\
        
        \rotatebox{90}{\hspace{20pt} DICE} &
        \includegraphics[width=\linewidth]{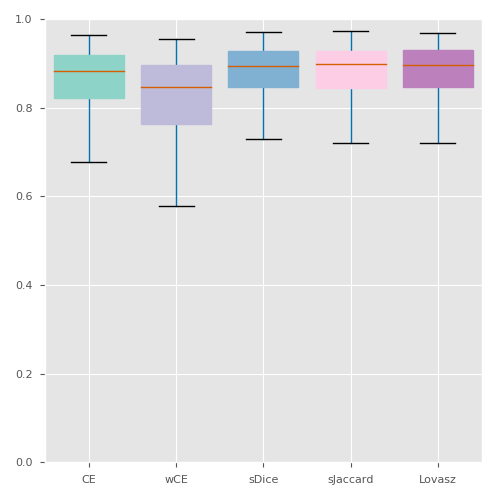} &  
        \includegraphics[width=\linewidth]{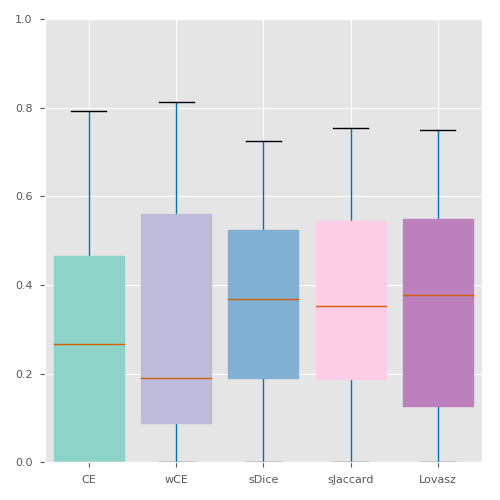} &  
        \includegraphics[width=\linewidth]{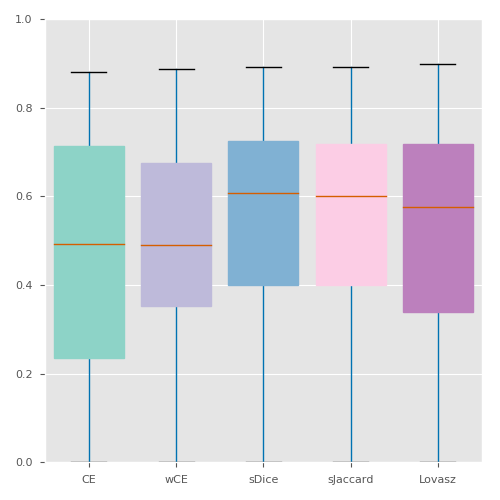} &  
        \includegraphics[width=\linewidth]{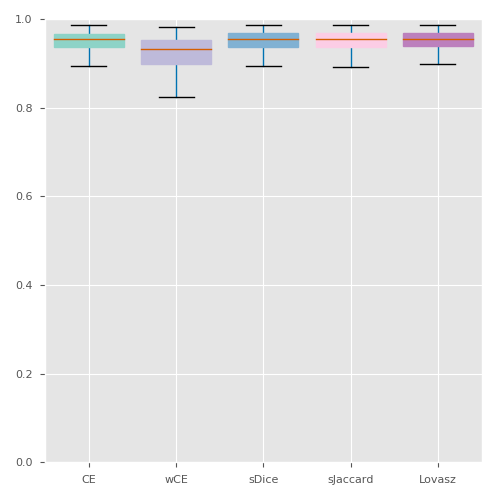} &  
        \includegraphics[width=\linewidth]{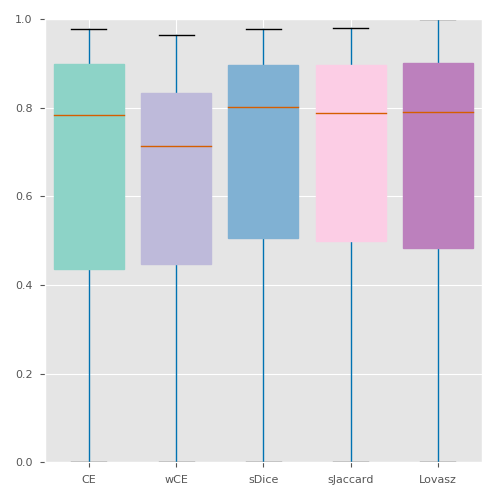} &  
        \includegraphics[width=\linewidth]{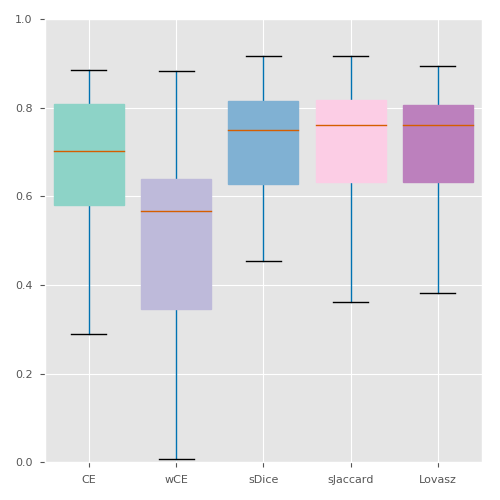} &  
        \includegraphics[width=\linewidth]{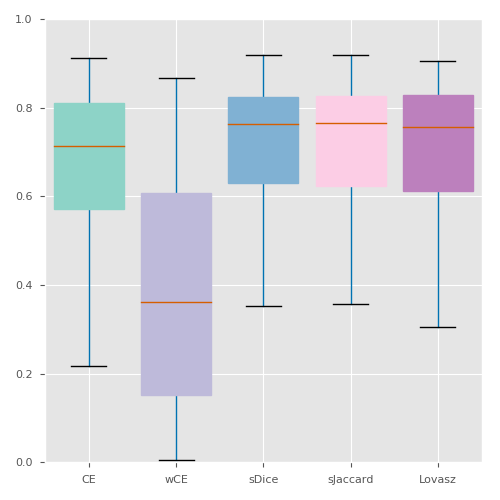}  \\ 
        
        \rotatebox{90}{\hspace{20pt} JACC} &
        \includegraphics[width=\linewidth]{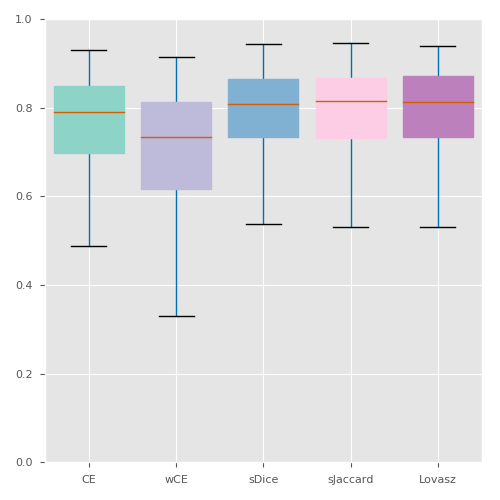} &  
        \includegraphics[width=\linewidth]{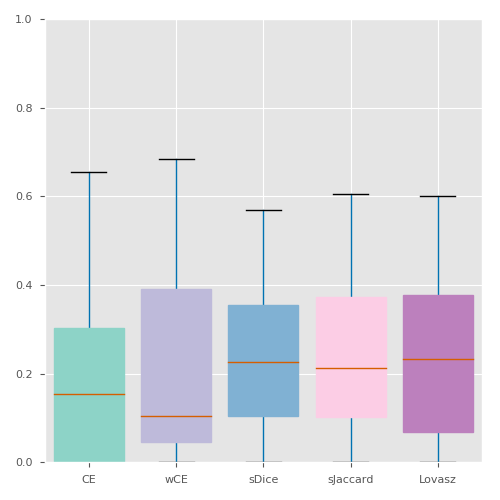} &  
        \includegraphics[width=\linewidth]{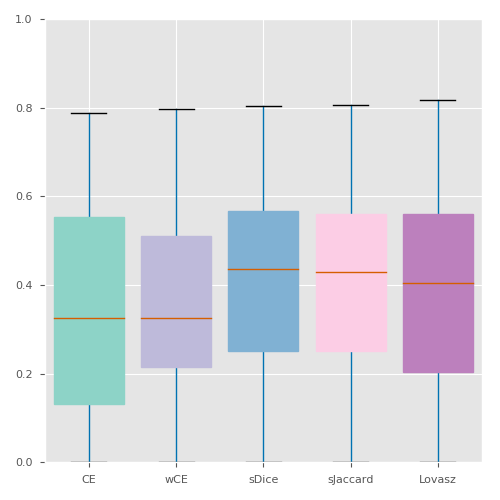} &  
        \includegraphics[width=\linewidth]{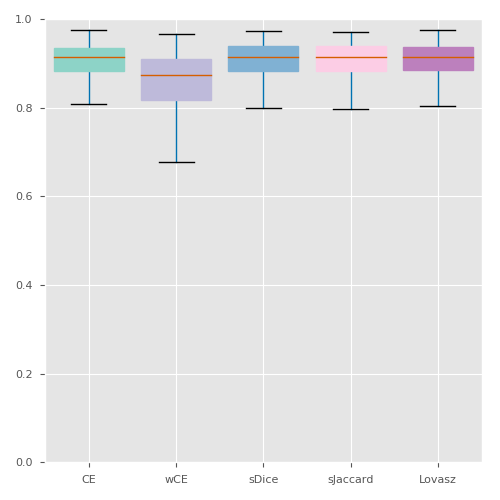} &  
        \includegraphics[width=\linewidth]{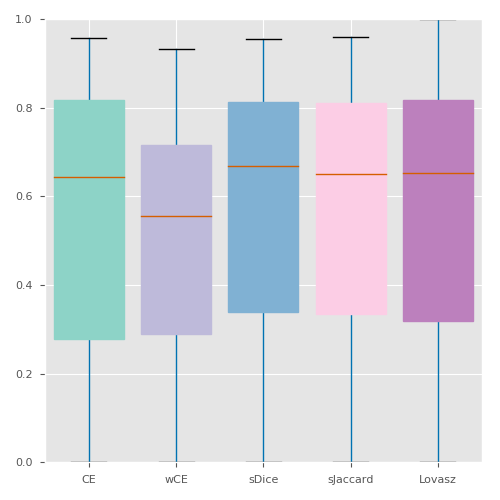} &  
        \includegraphics[width=\linewidth]{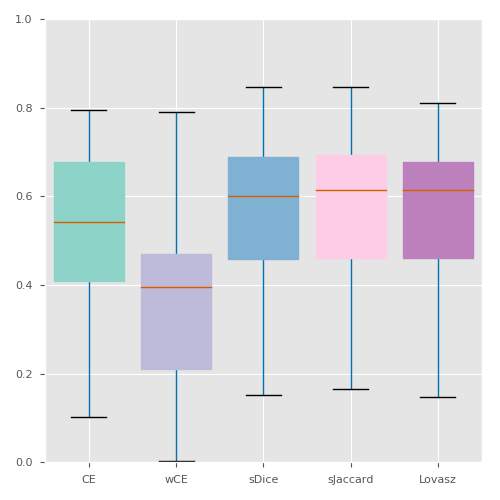} &  
        \includegraphics[width=\linewidth]{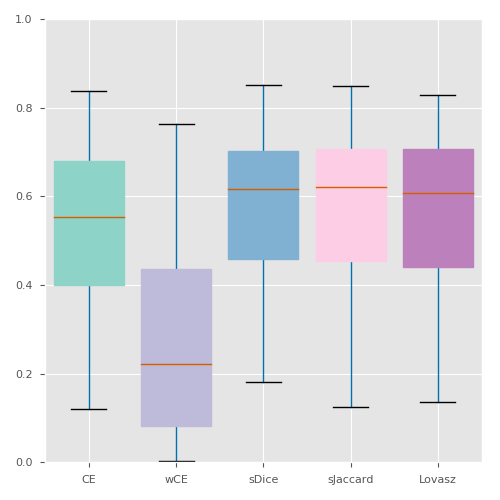}  \\ 

    \end{tabularx}
    }
        \caption{\reviewminorpar Boxplots for the Dice scores and Jaccard indexes obtained for each dataset, using a cross-entropy loss (CE, wCE) or a metric-sensitive loss (sDice, sJaccard, Lov\'{a}sz).}
    \label{fig:boxplots_main}
\end{figure*}

\begin{figure*}[!htbp]
    \centering
    \resizebox*{\linewidth}{!}{
    
    \def\arraystretch{0}
    \setlength{\tabcolsep}{0pt}
    \begin{tabularx}{\linewidth}{l @{\hspace{2pt}} YYYYYYY}

        & BR18 &   IS17 &  IS18 & MO18 & PO18 & WM17 & WM17\textsuperscript{DM}  \\
        
        \rotatebox{90}{\hspace{20pt} DICE} &
        \includegraphics[width=\linewidth]{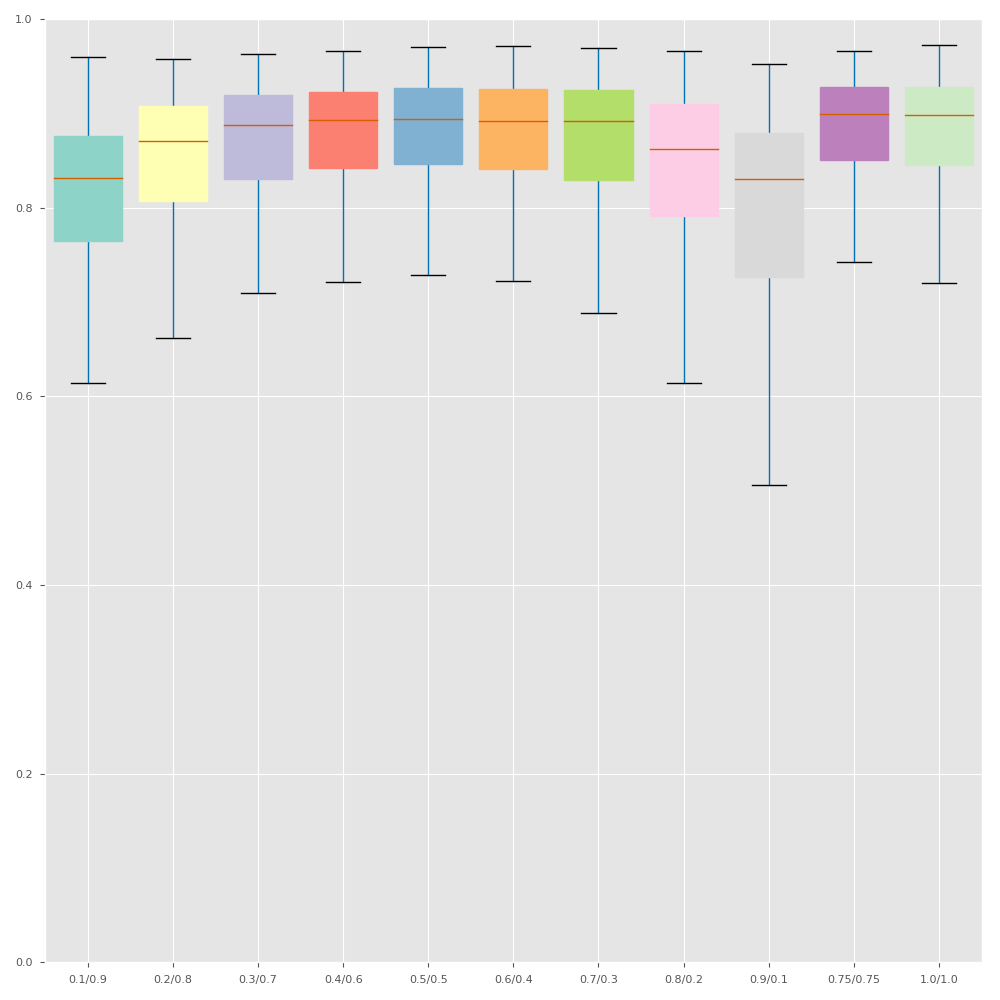} &  
        \includegraphics[width=\linewidth]{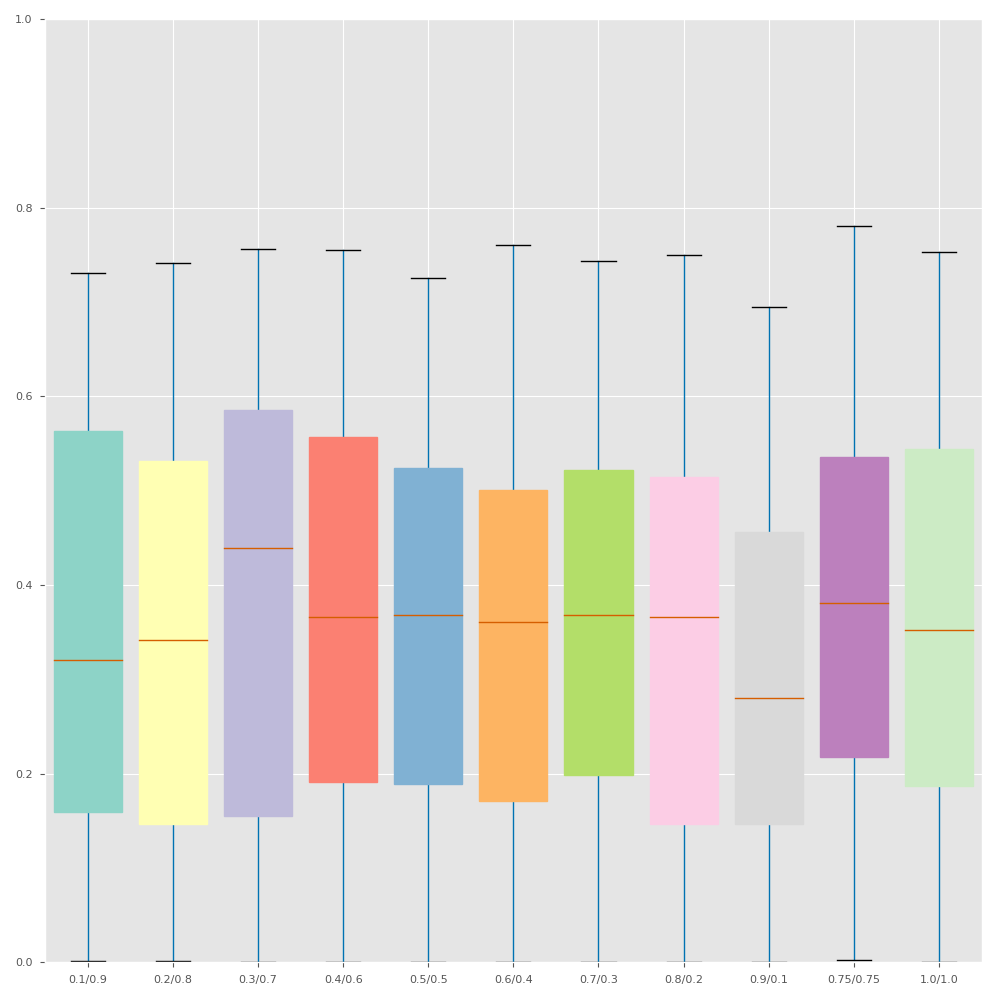} &  
        \includegraphics[width=\linewidth]{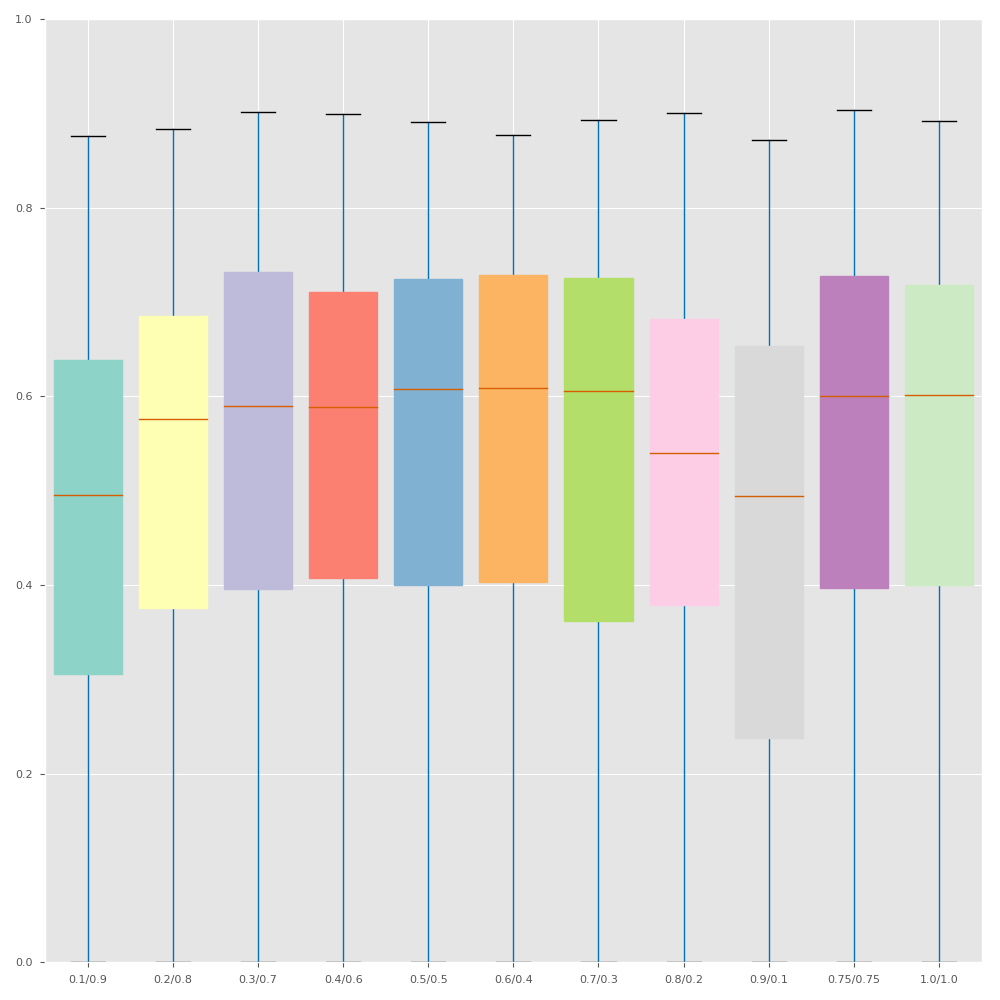} &  
        \includegraphics[width=\linewidth]{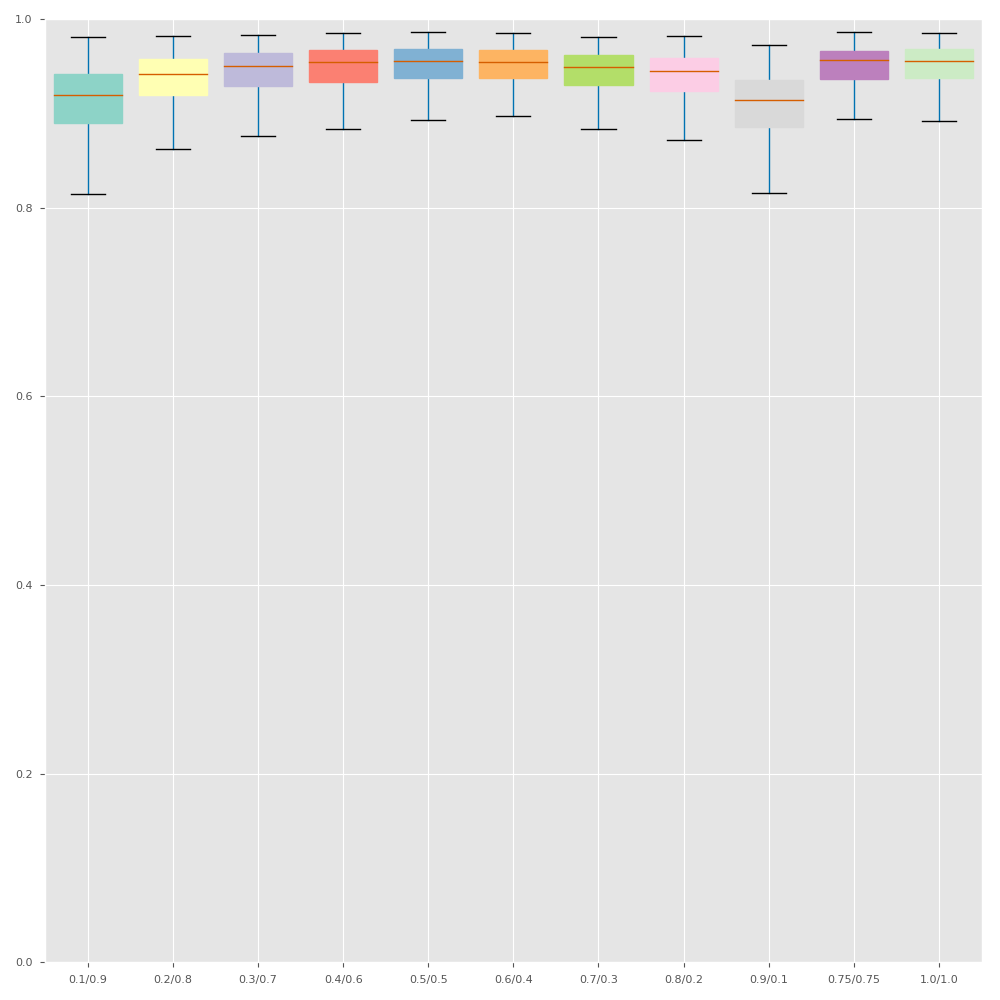} &  
        \includegraphics[width=\linewidth]{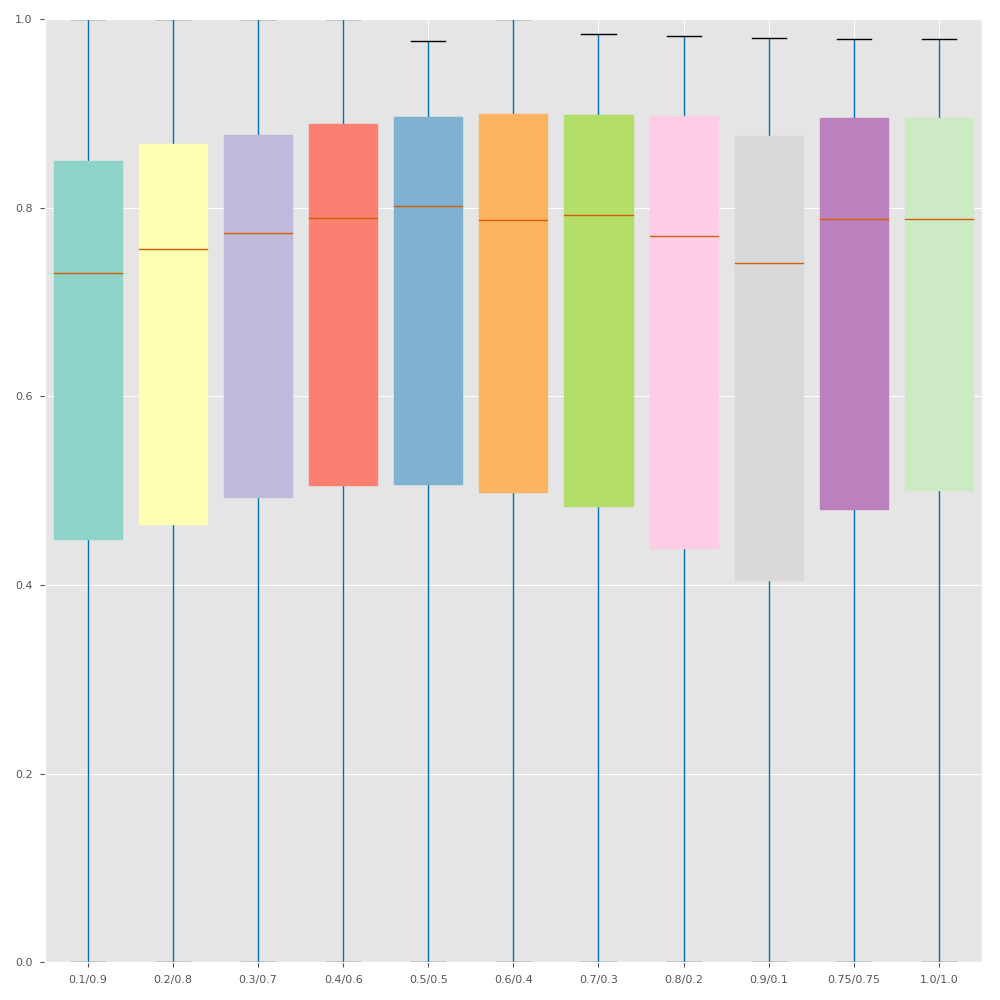} &  
        \includegraphics[width=\linewidth]{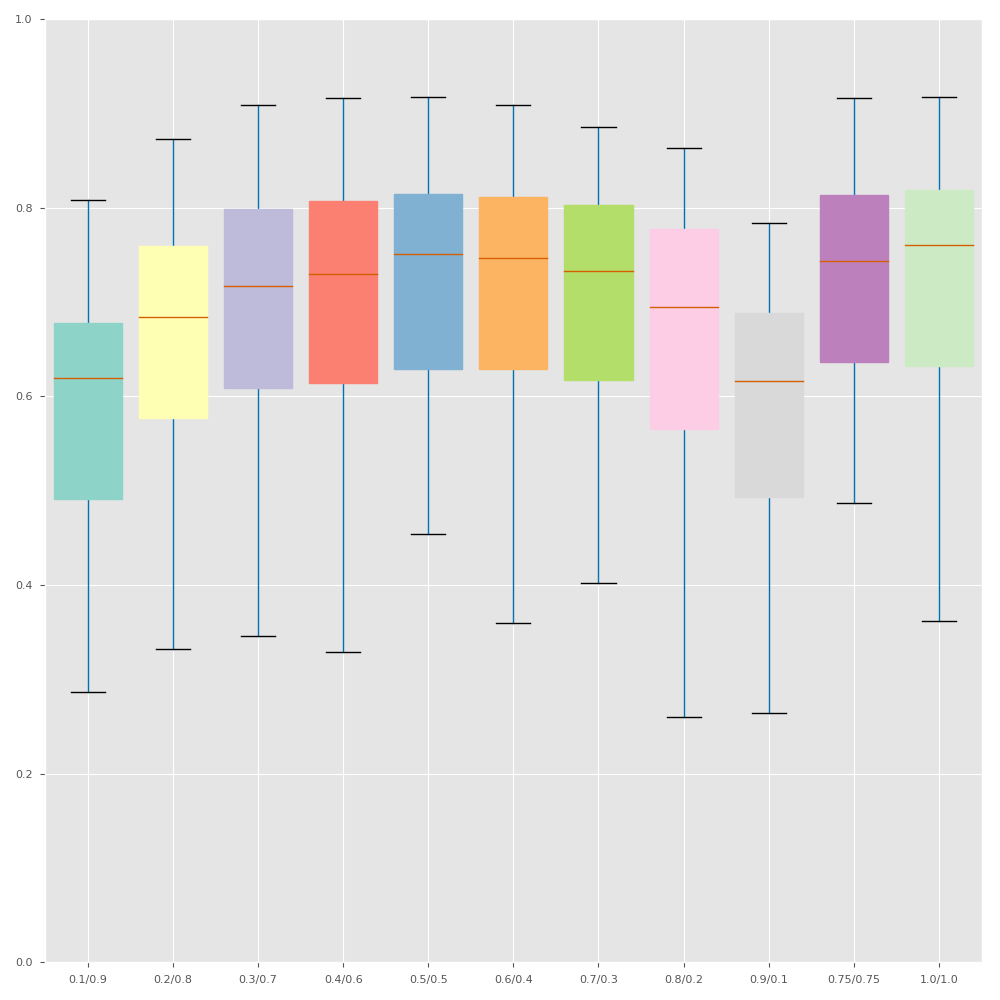} &  
        \includegraphics[width=\linewidth]{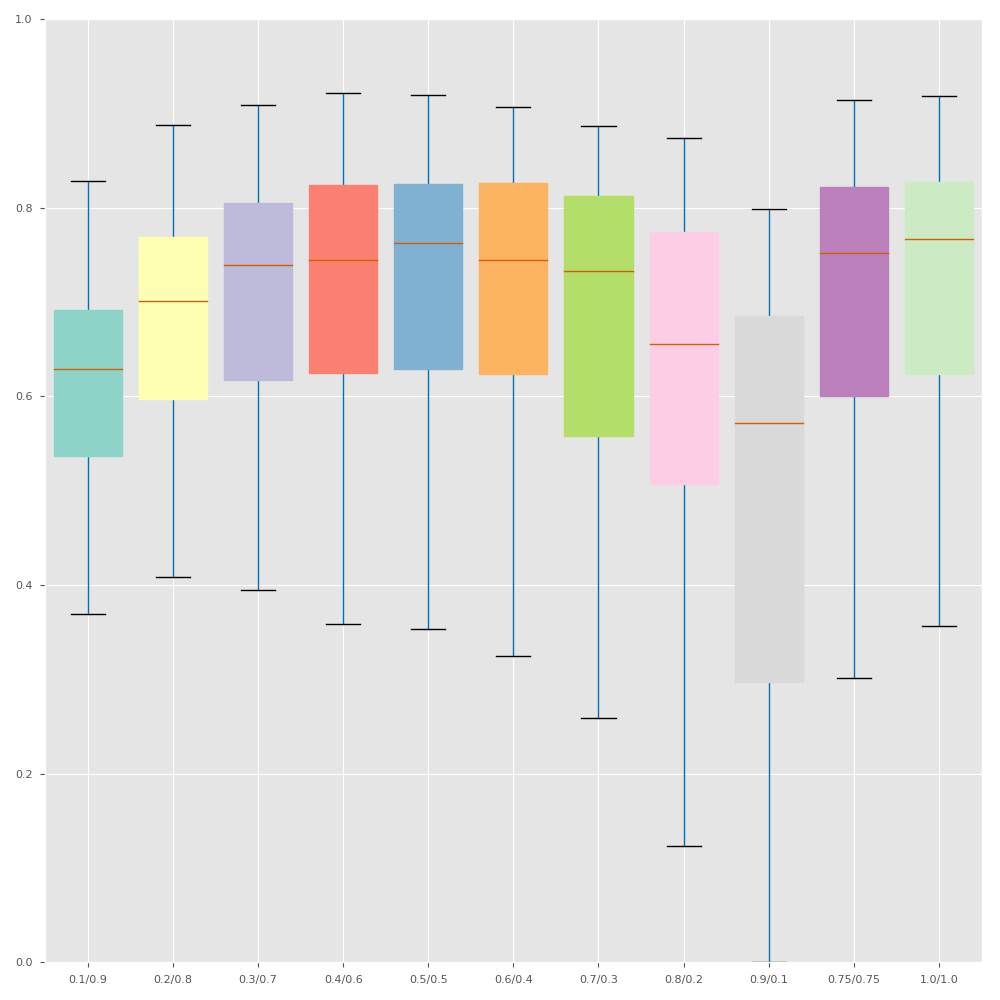}  \\ 
        
        \rotatebox{90}{\hspace{20pt} JACC} &
        \includegraphics[width=\linewidth]{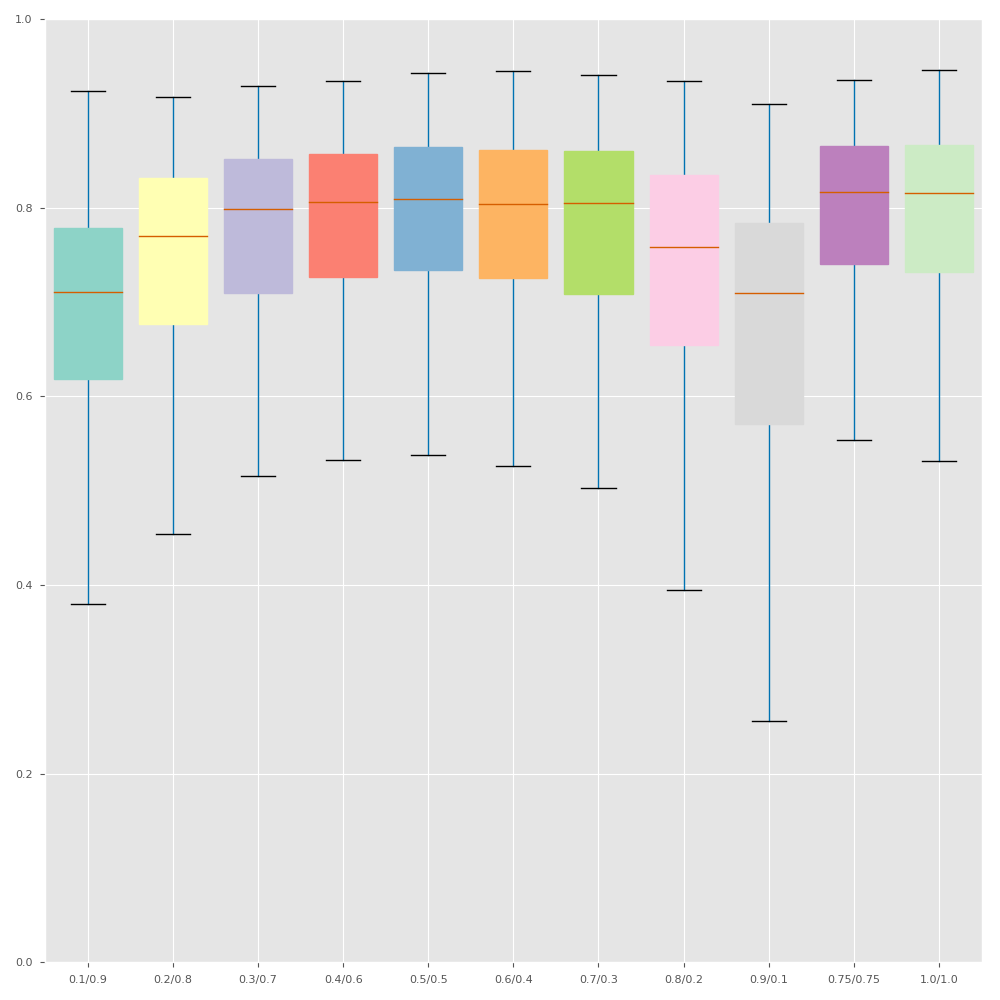} &  
        \includegraphics[width=\linewidth]{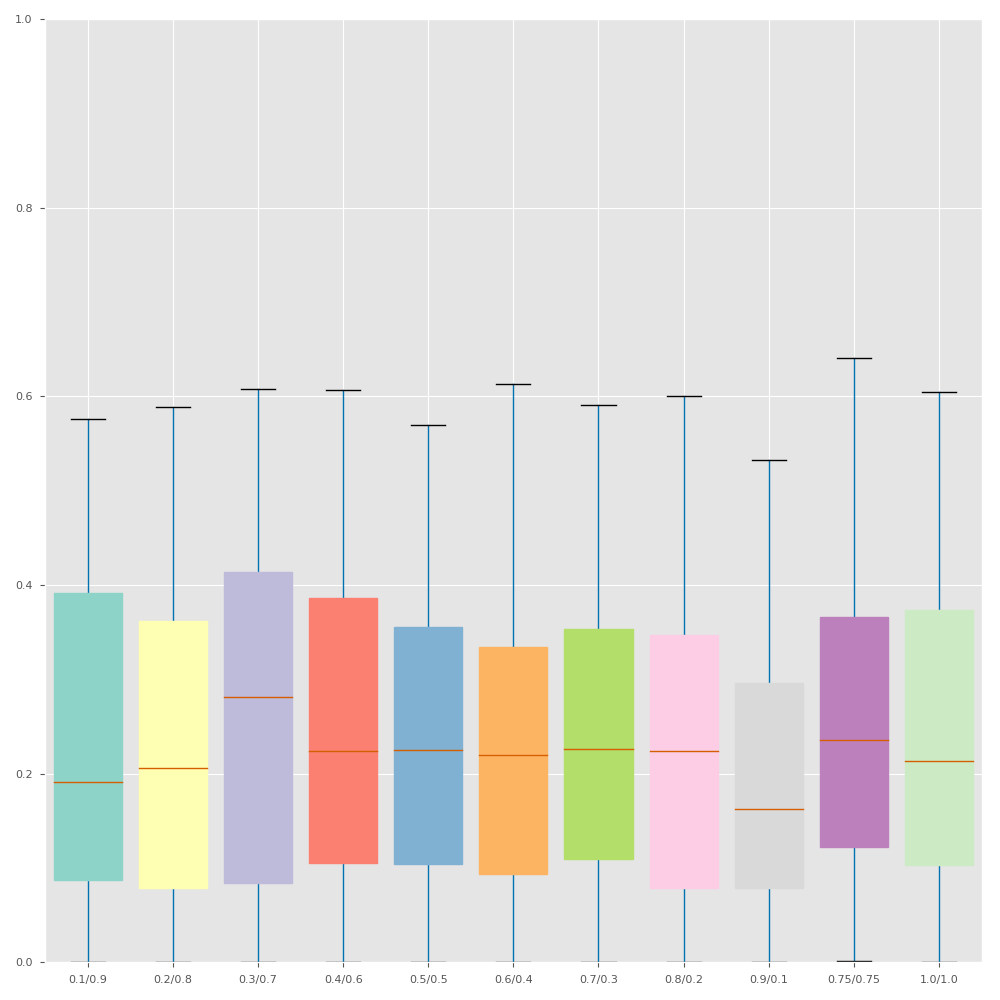} &  
        \includegraphics[width=\linewidth]{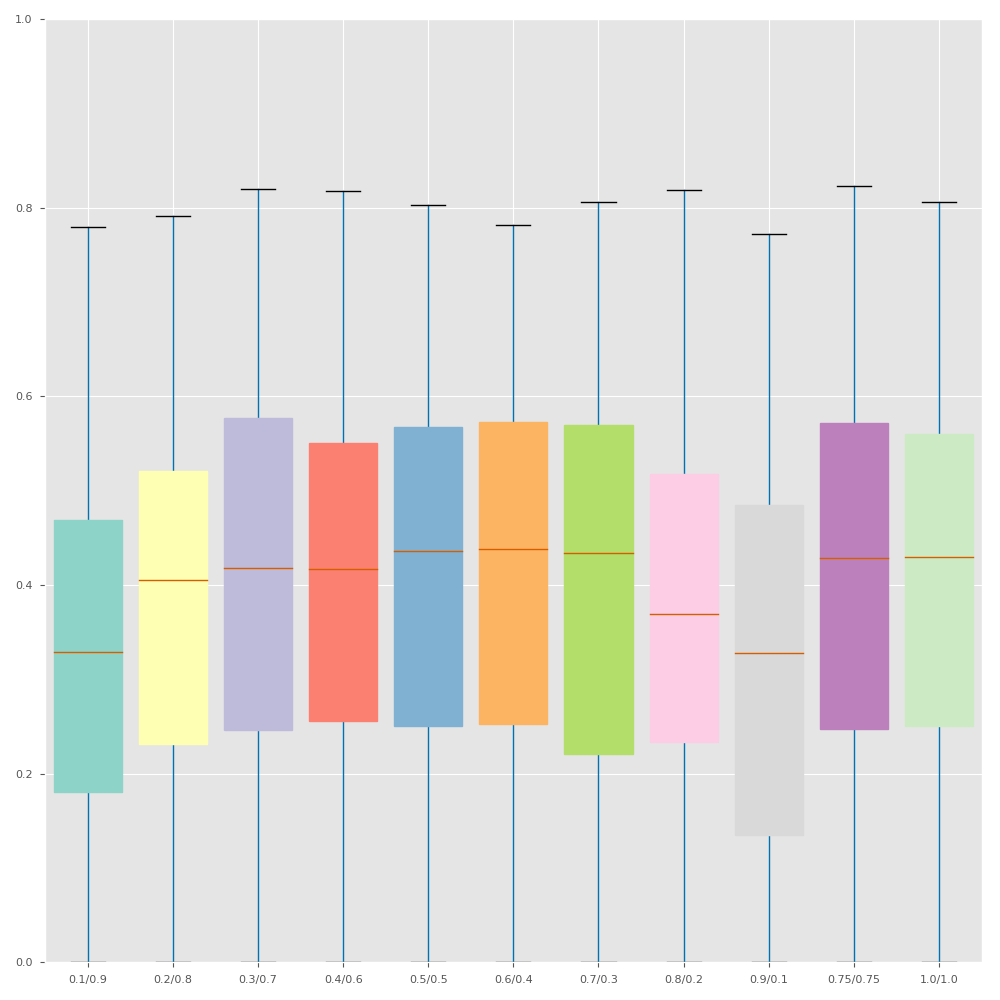} &  
        \includegraphics[width=\linewidth]{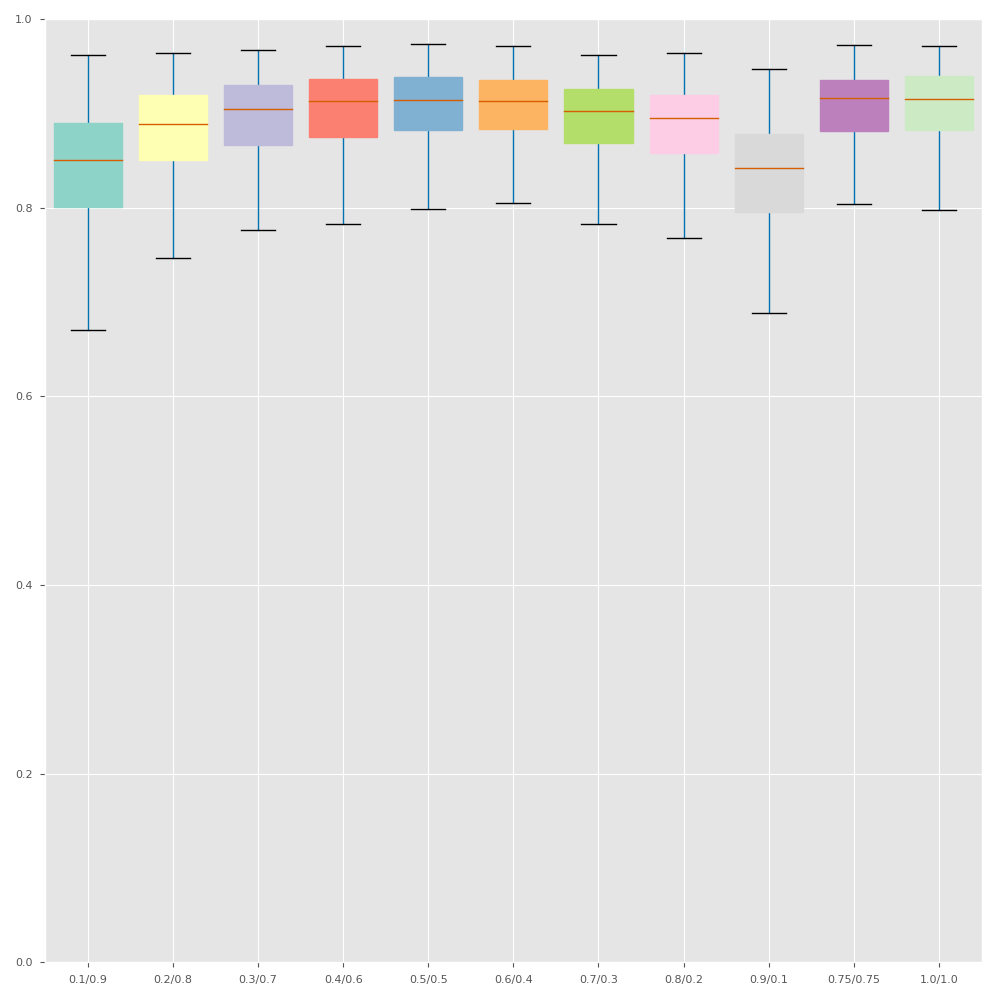} &  
        \includegraphics[width=\linewidth]{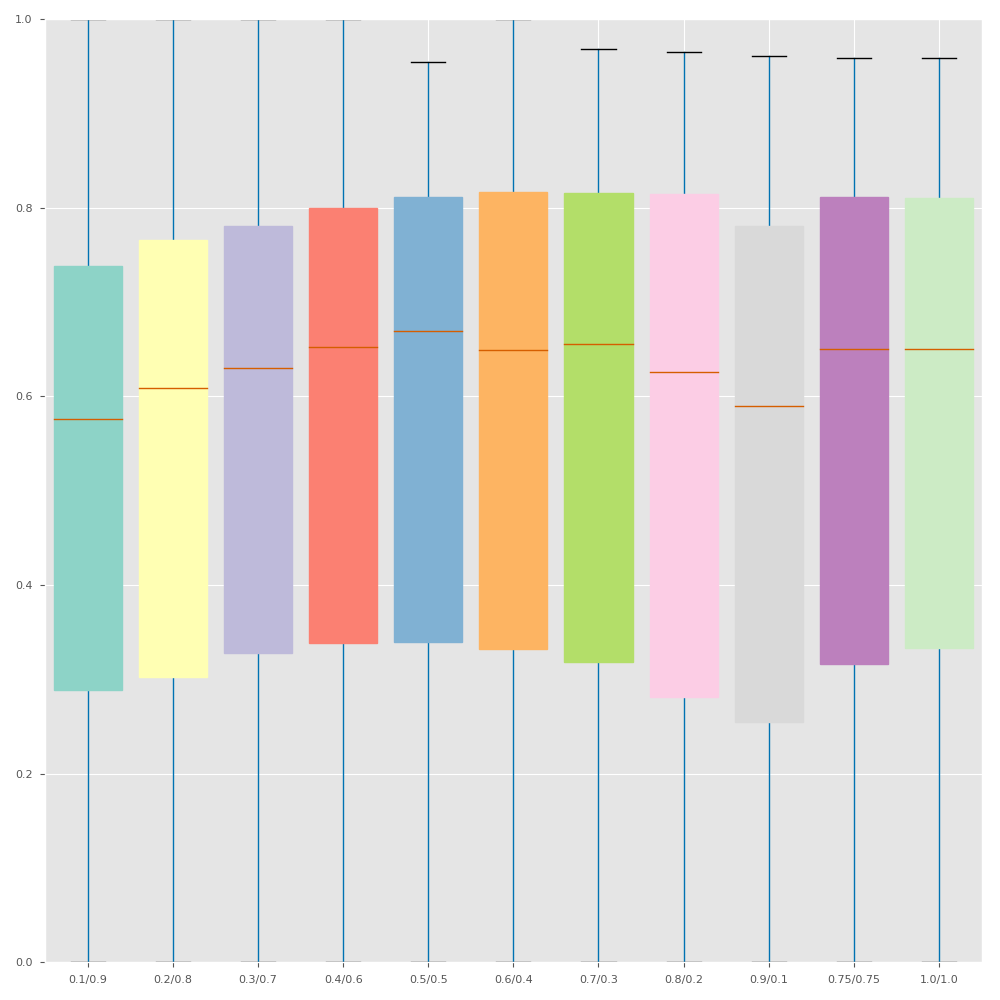} &  
        \includegraphics[width=\linewidth]{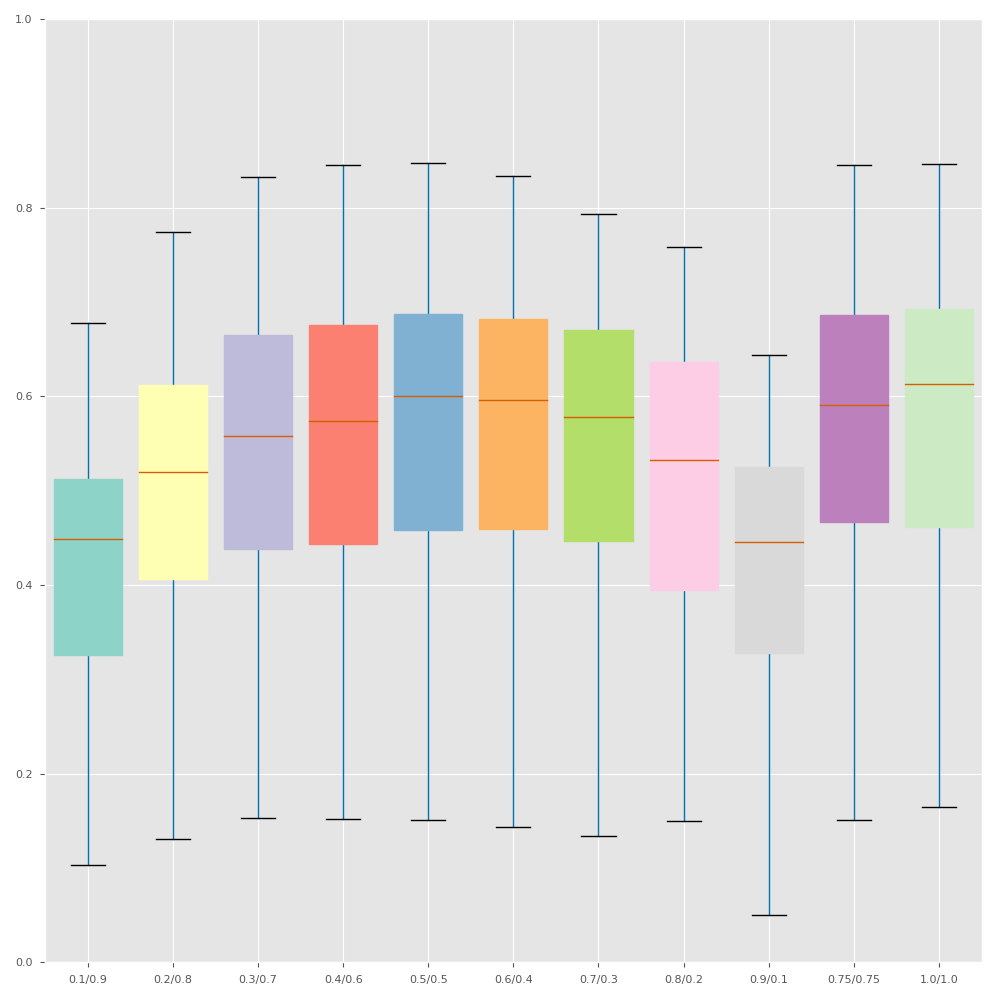} &  
        \includegraphics[width=\linewidth]{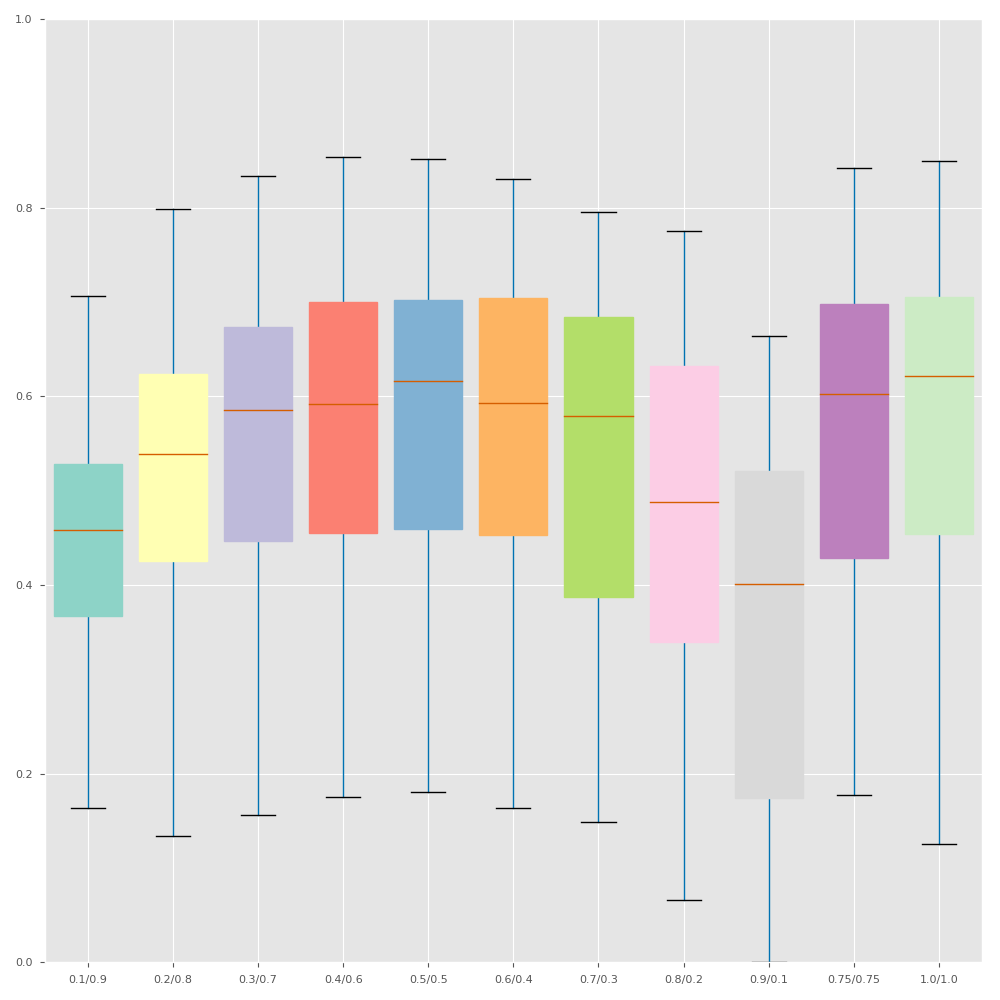}  \\ 

    \end{tabularx}
    }
        \caption{\reviewminorpar Boxplots for the Dice scores and Jaccard indexes obtained for all datasets using the range of sTversky losses with varying alpha/beta.}
    \label{fig:boxplots_main_tversky}
\end{figure*}

\begin{figure*}[!htbp]
    \centering
    \resizebox*{\linewidth}{!}{
    
    \def\arraystretch{0}
    \setlength{\tabcolsep}{0pt}
    \begin{tabularx}{\linewidth}{l @{\hspace{2pt}} YYYYYYY}

        & BR18 &   IS17 &  IS18 & MO18 & PO18 & WM17 & WM17\textsuperscript{DM}  \\
        
        \rotatebox{90}{\hspace{20pt} F0.5} &
        \includegraphics[width=\linewidth]{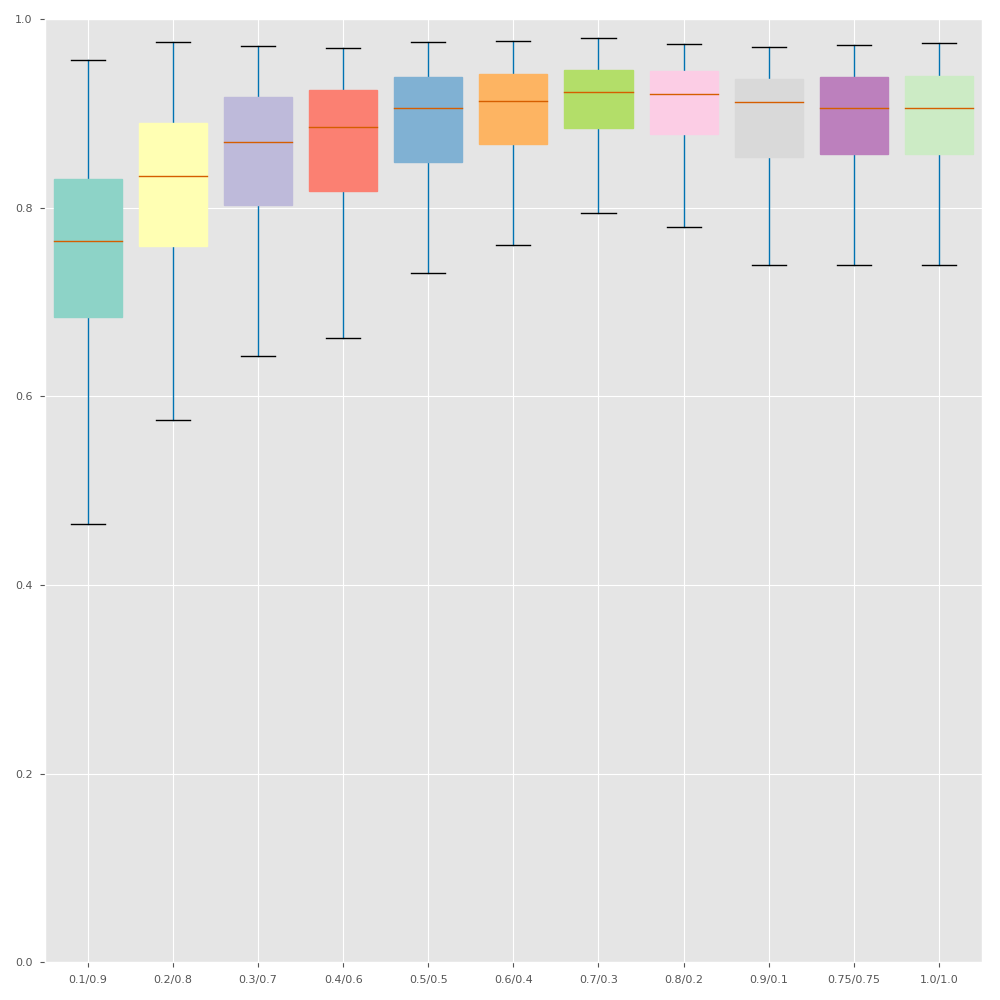} &  
        \includegraphics[width=\linewidth]{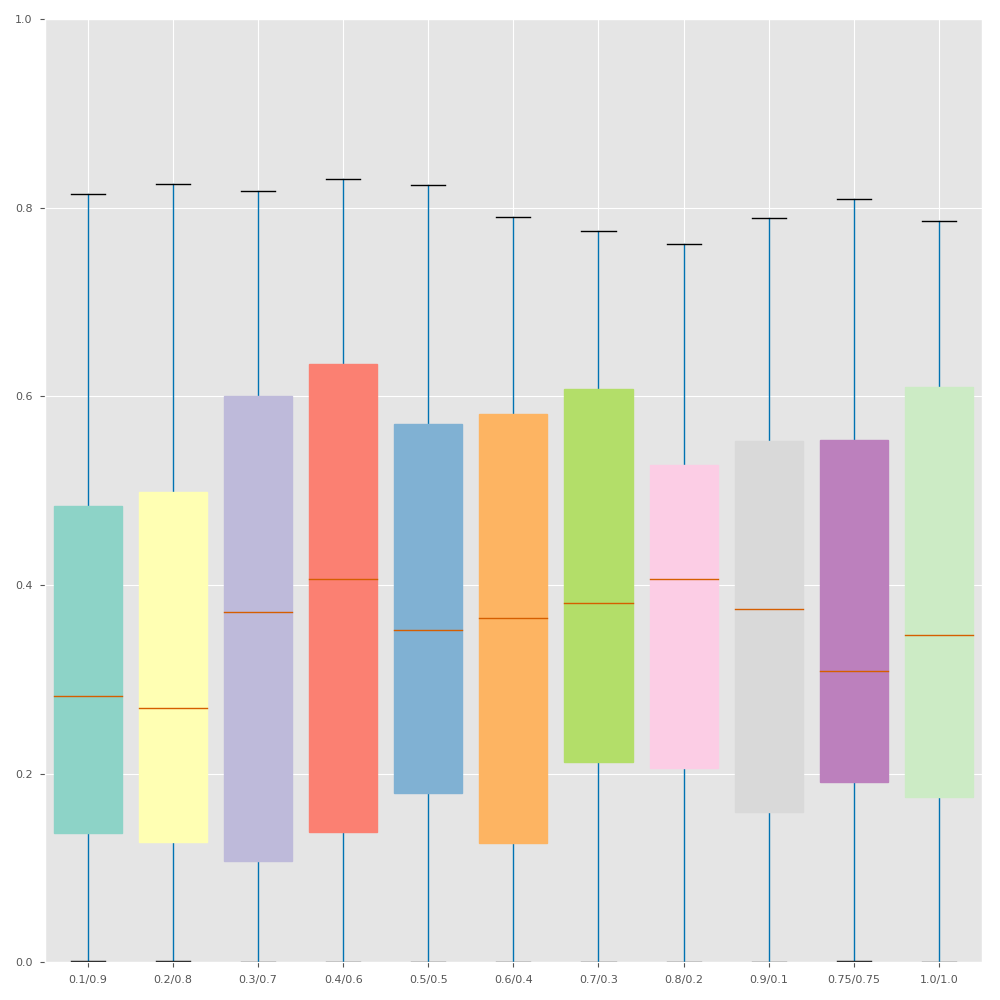} &  
        \includegraphics[width=\linewidth]{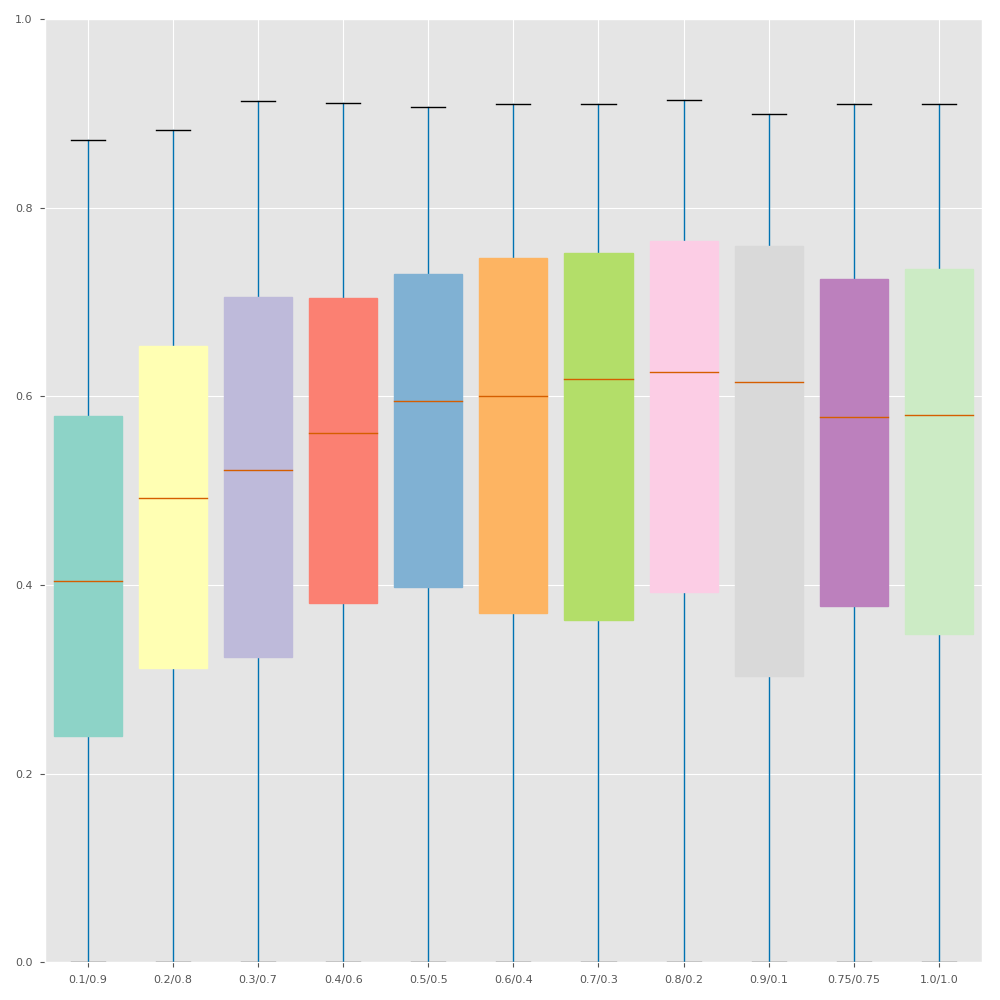} &  
        \includegraphics[width=\linewidth]{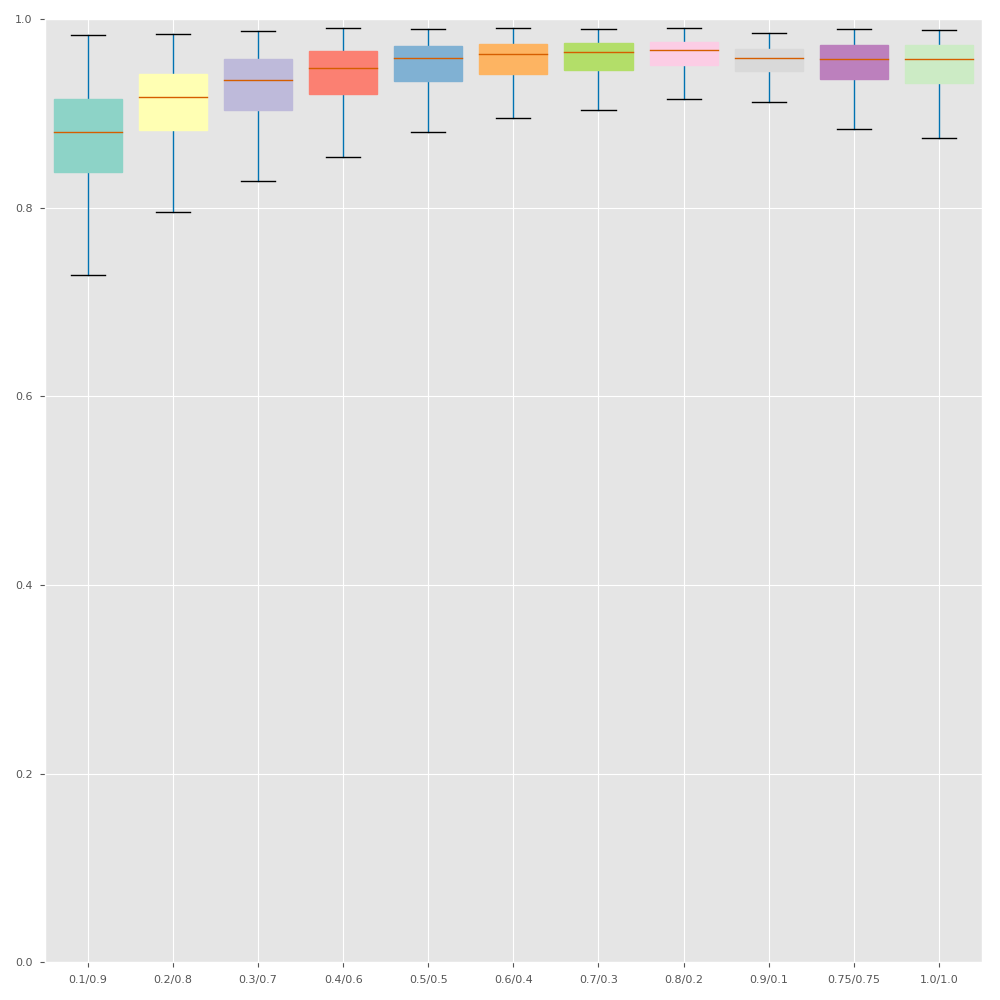} &  
        \includegraphics[width=\linewidth]{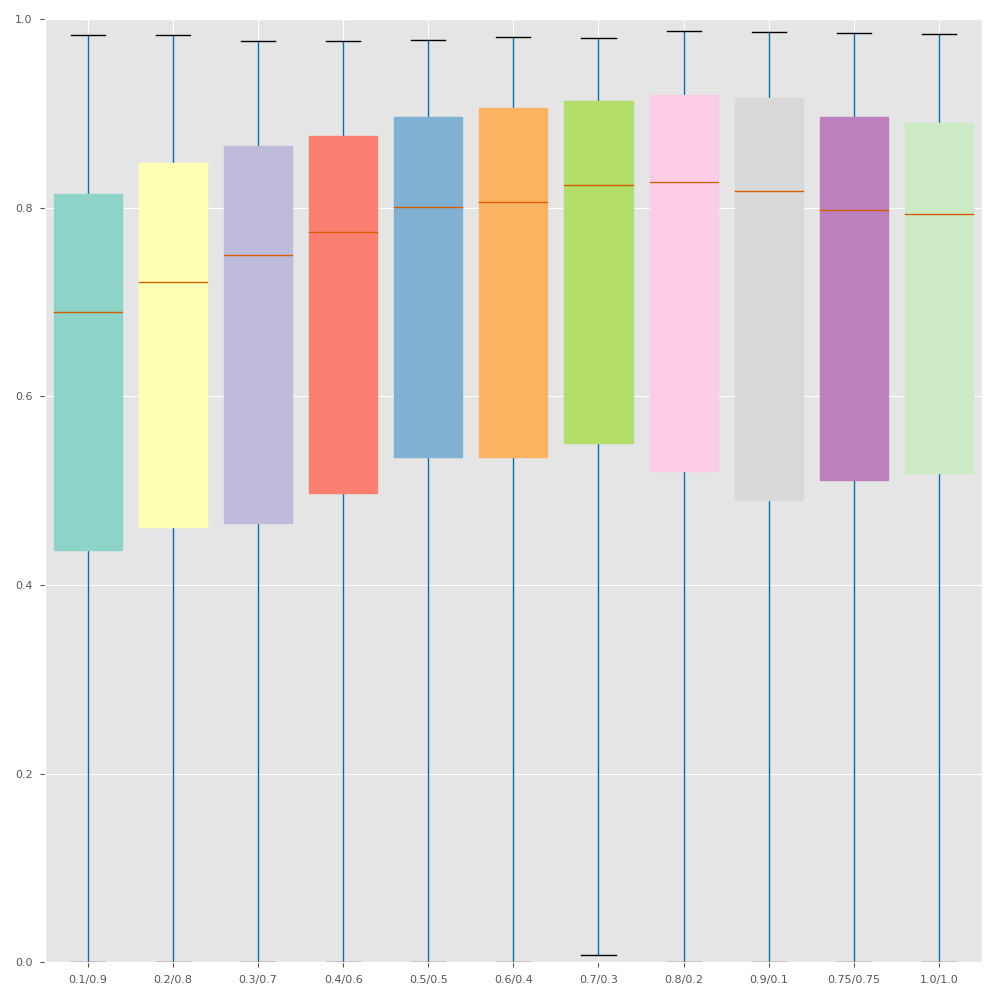} &  
        \includegraphics[width=\linewidth]{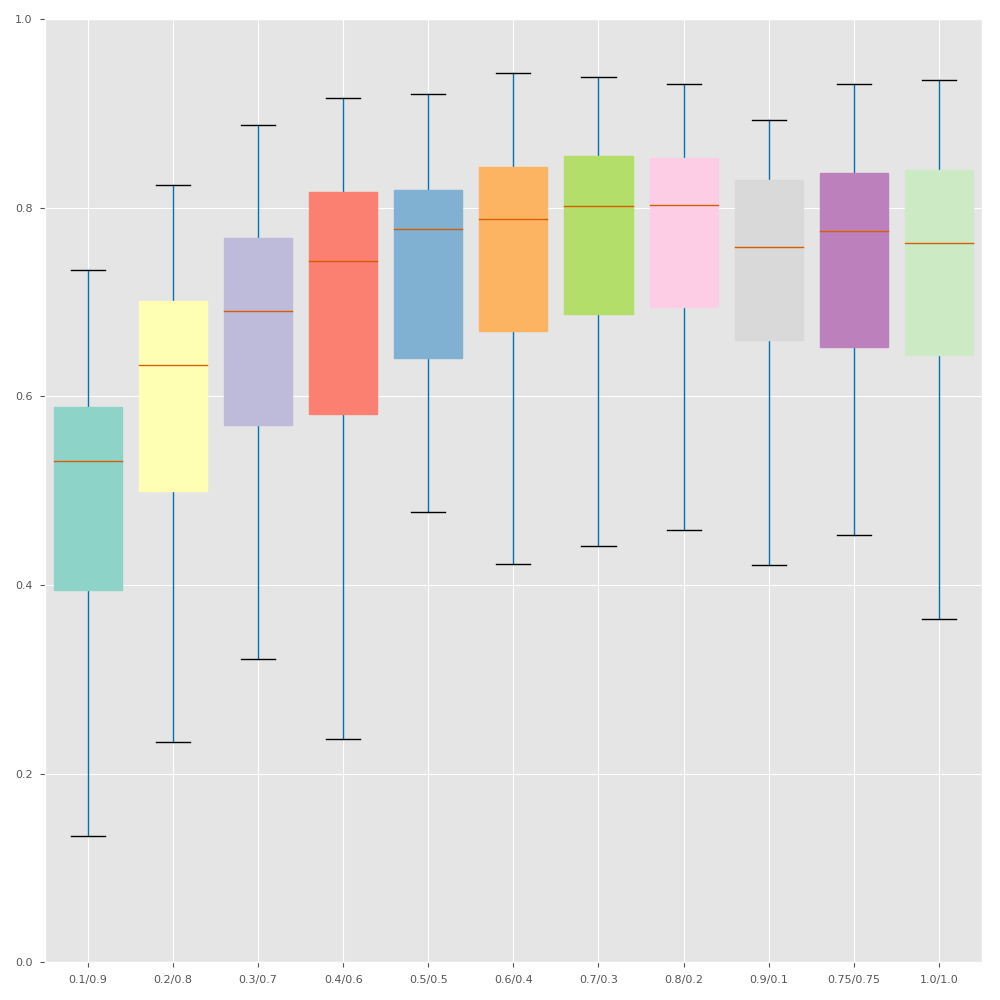} &  
        \includegraphics[width=\linewidth]{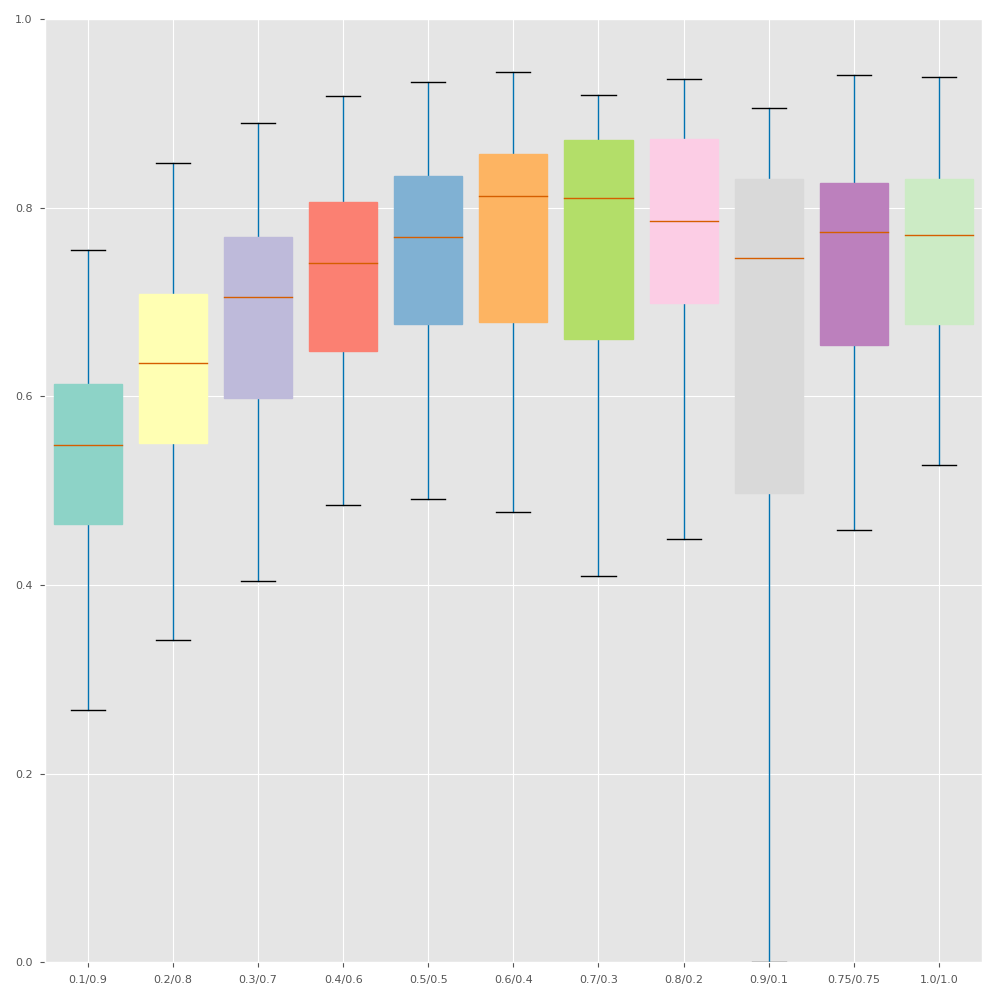}  \\ 
        
        \rotatebox{90}{\hspace{20pt} F1.0} &
        \includegraphics[width=\linewidth]{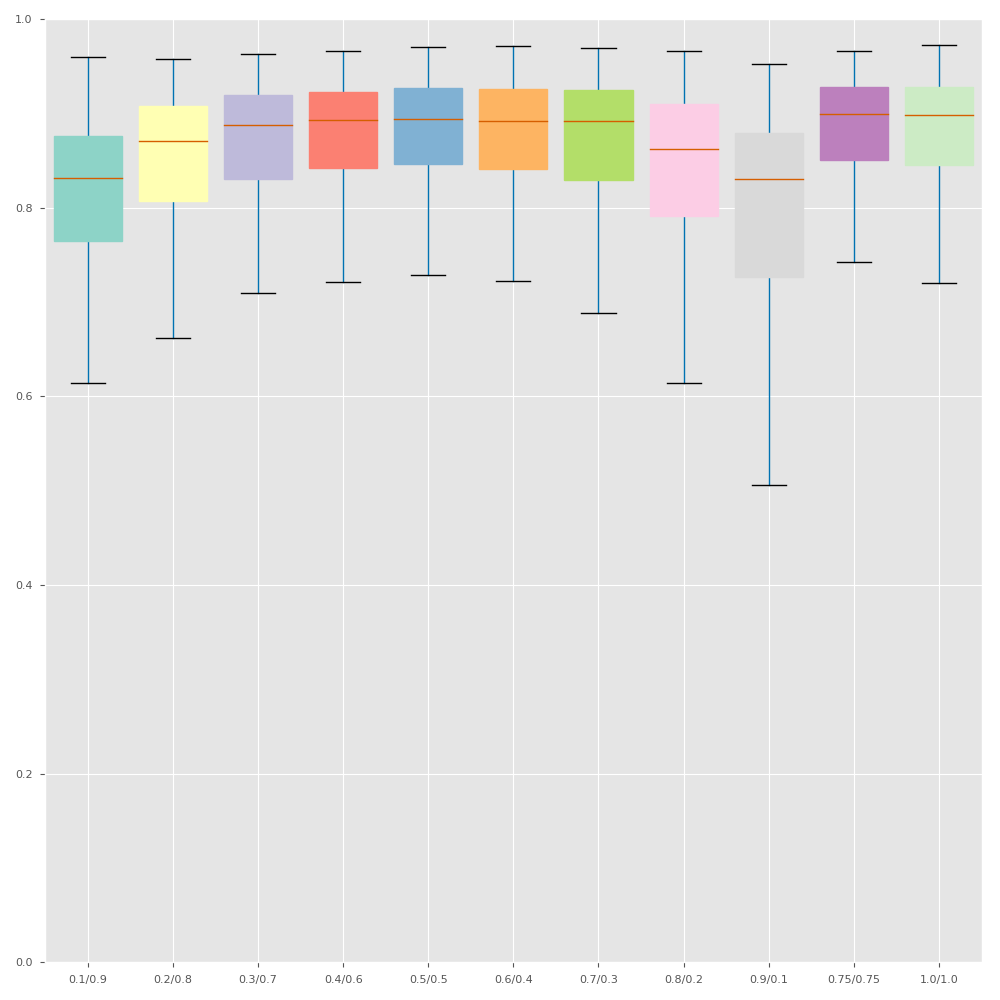} &  
        \includegraphics[width=\linewidth]{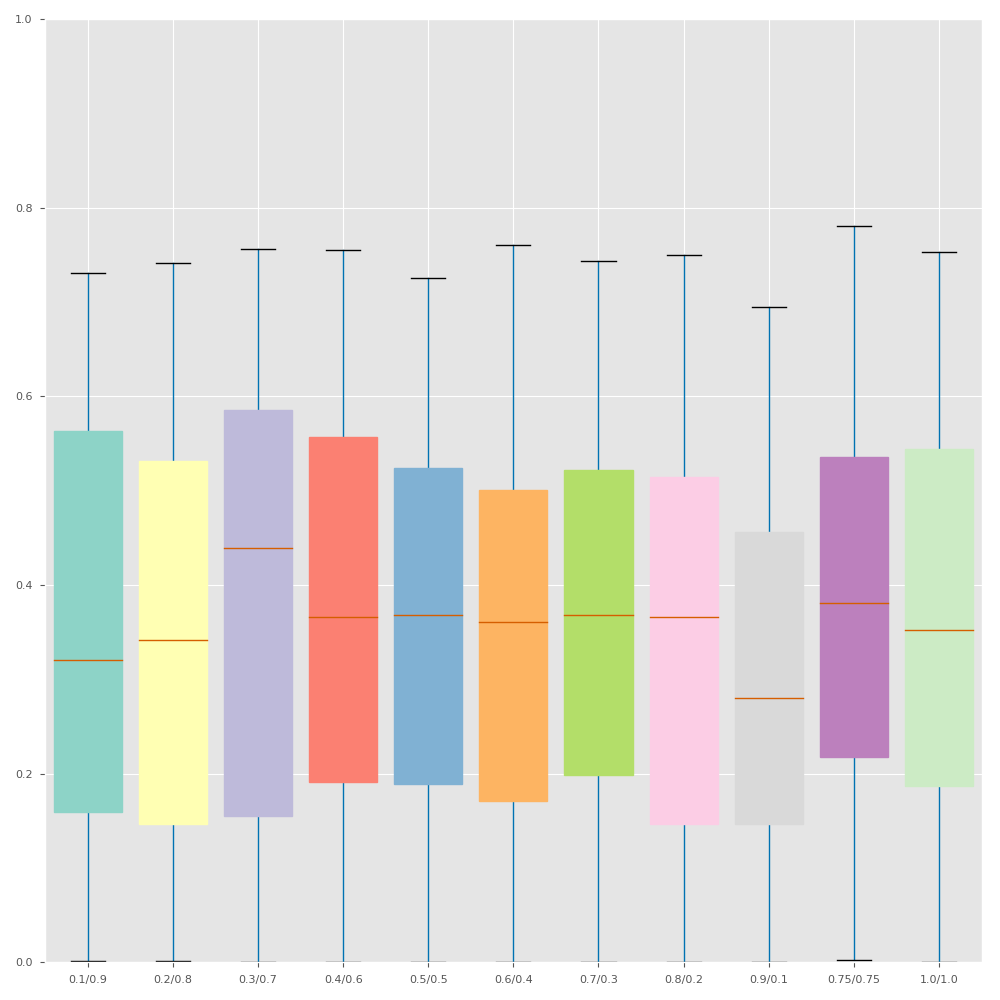} &  
        \includegraphics[width=\linewidth]{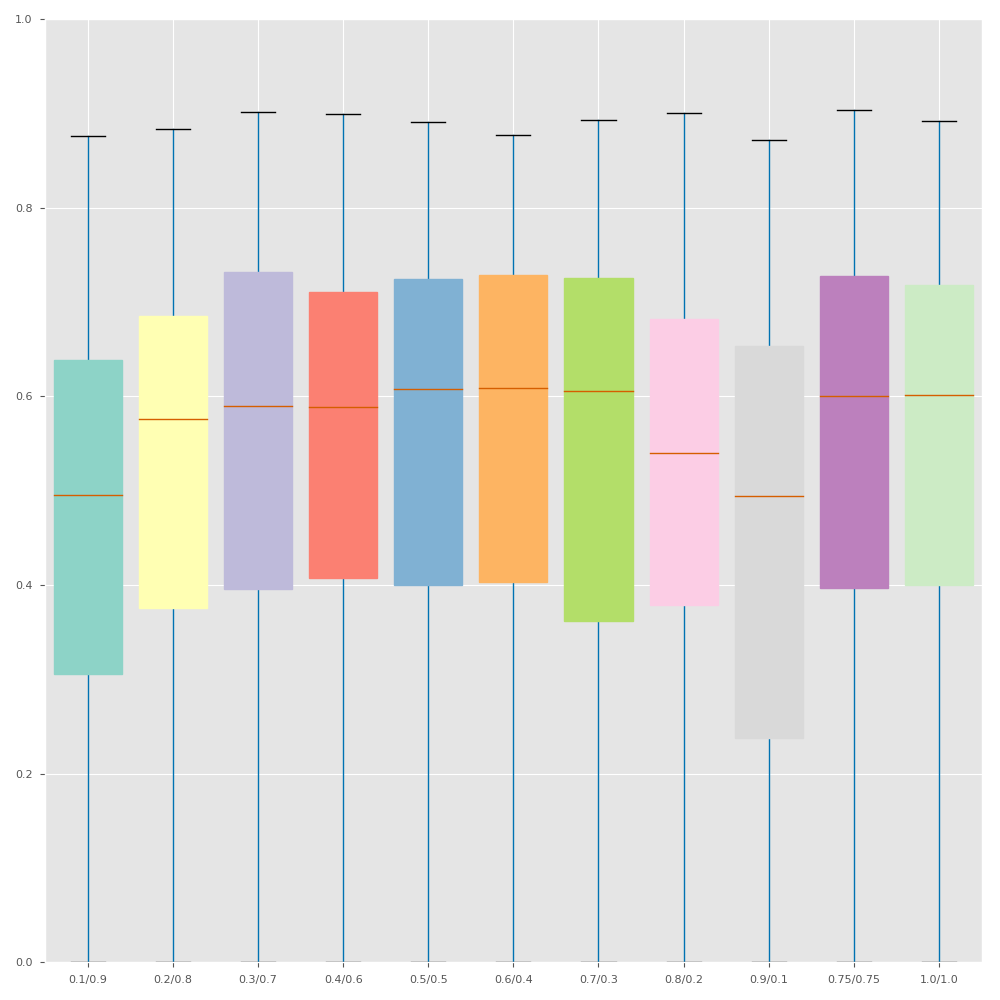} &  
        \includegraphics[width=\linewidth]{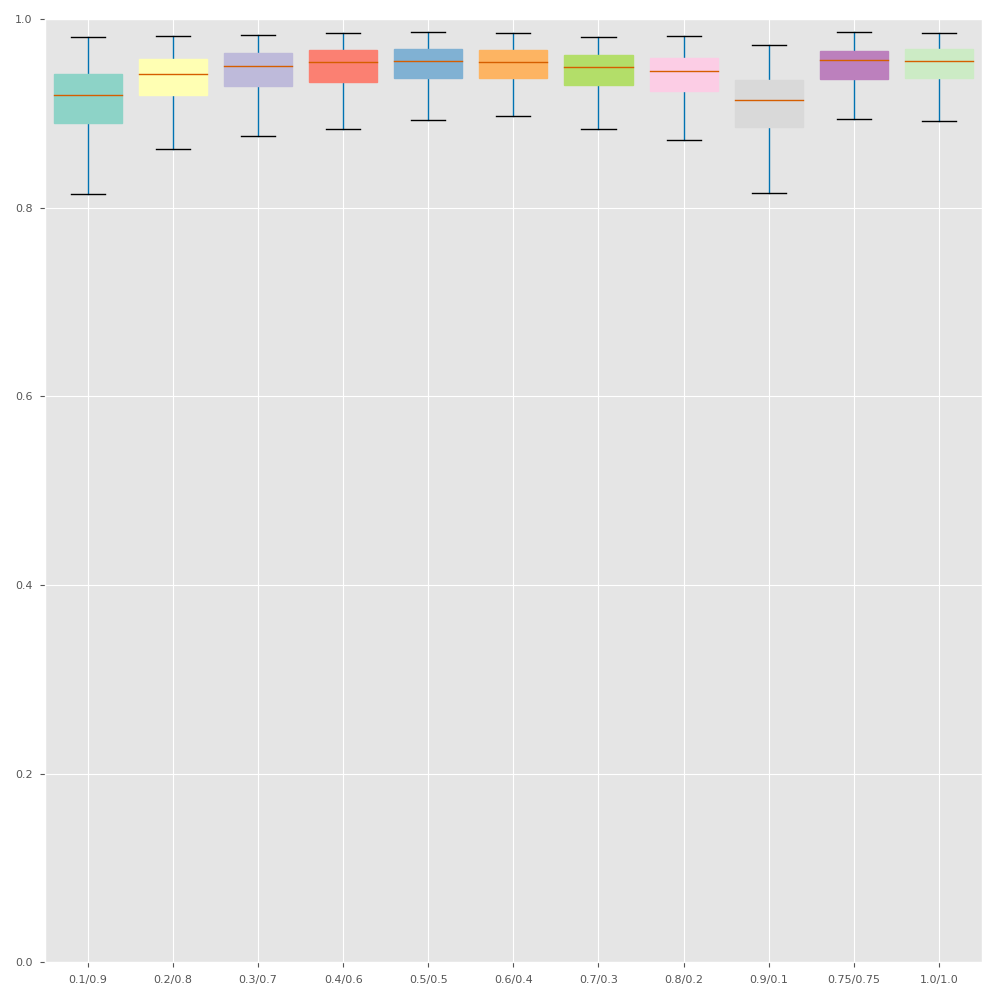} &  
        \includegraphics[width=\linewidth]{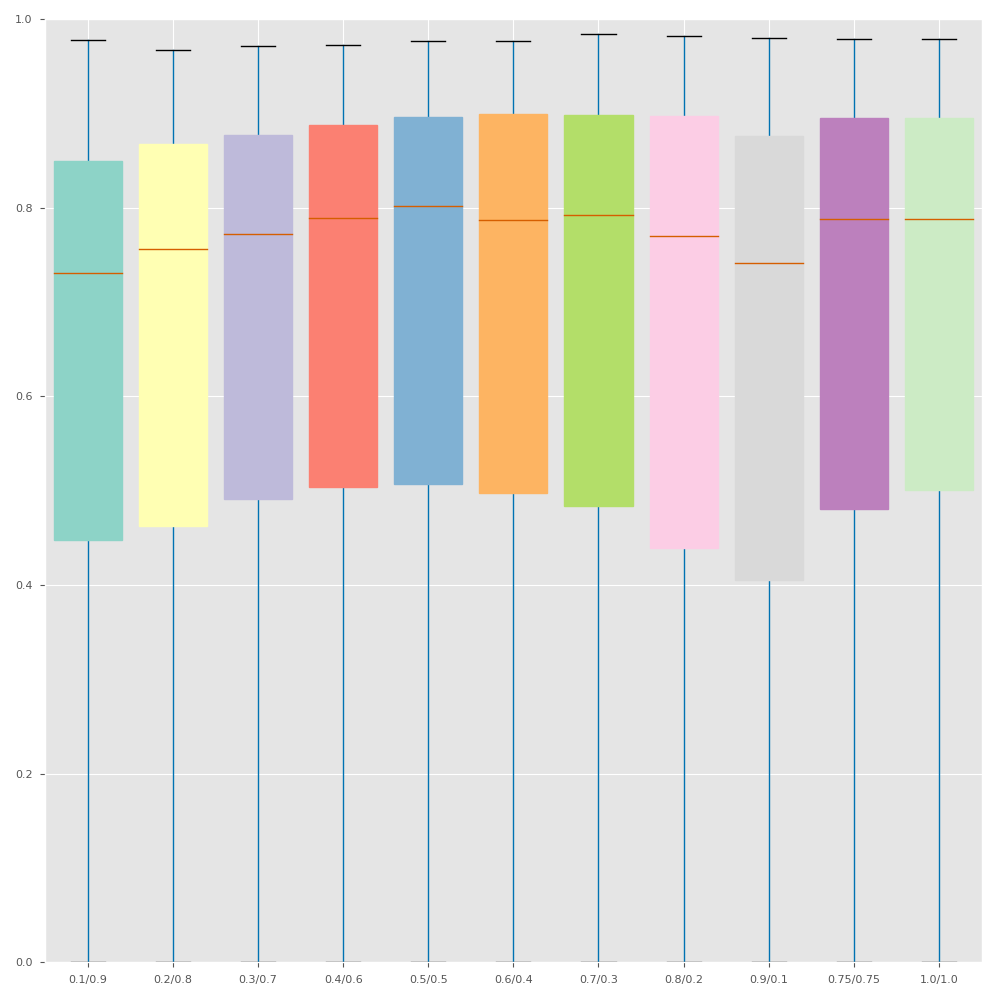} &  
        \includegraphics[width=\linewidth]{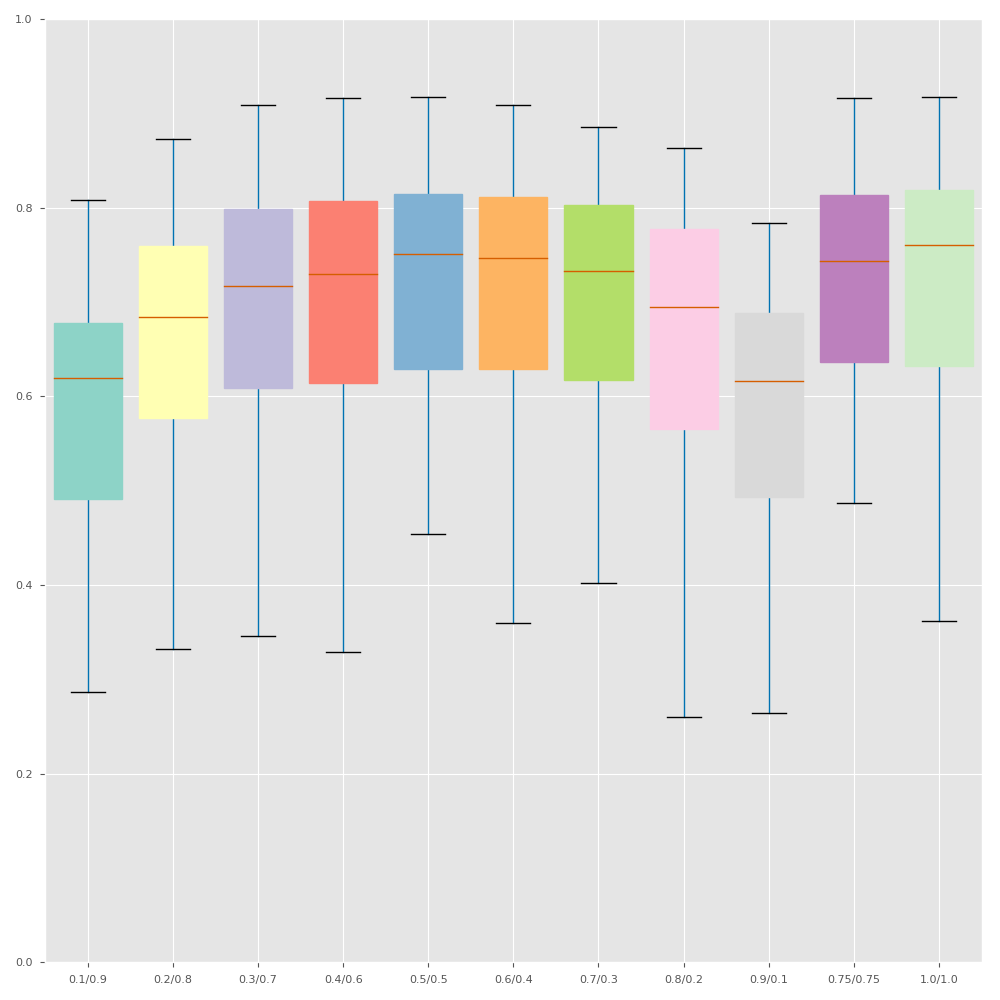} &  
        \includegraphics[width=\linewidth]{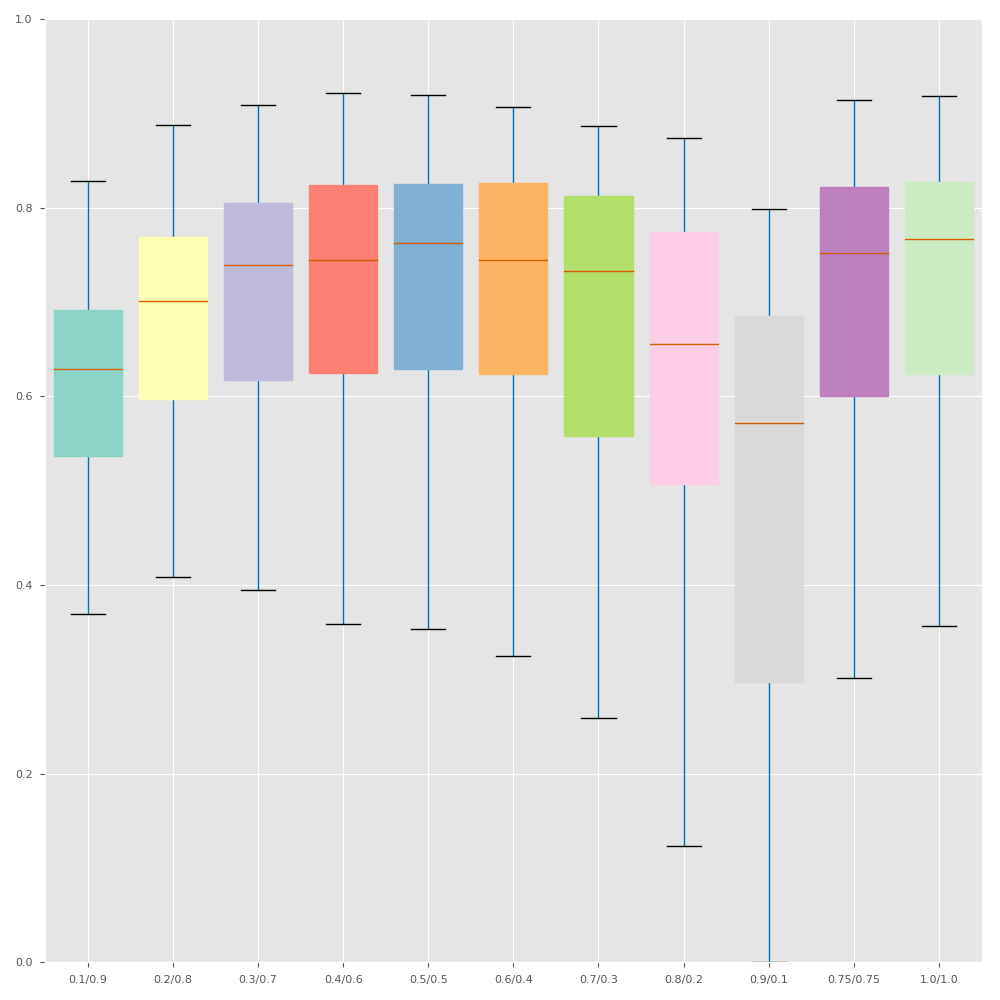}  \\ 
    
        \rotatebox{90}{\hspace{20pt} F1.5} &
        \includegraphics[width=\linewidth]{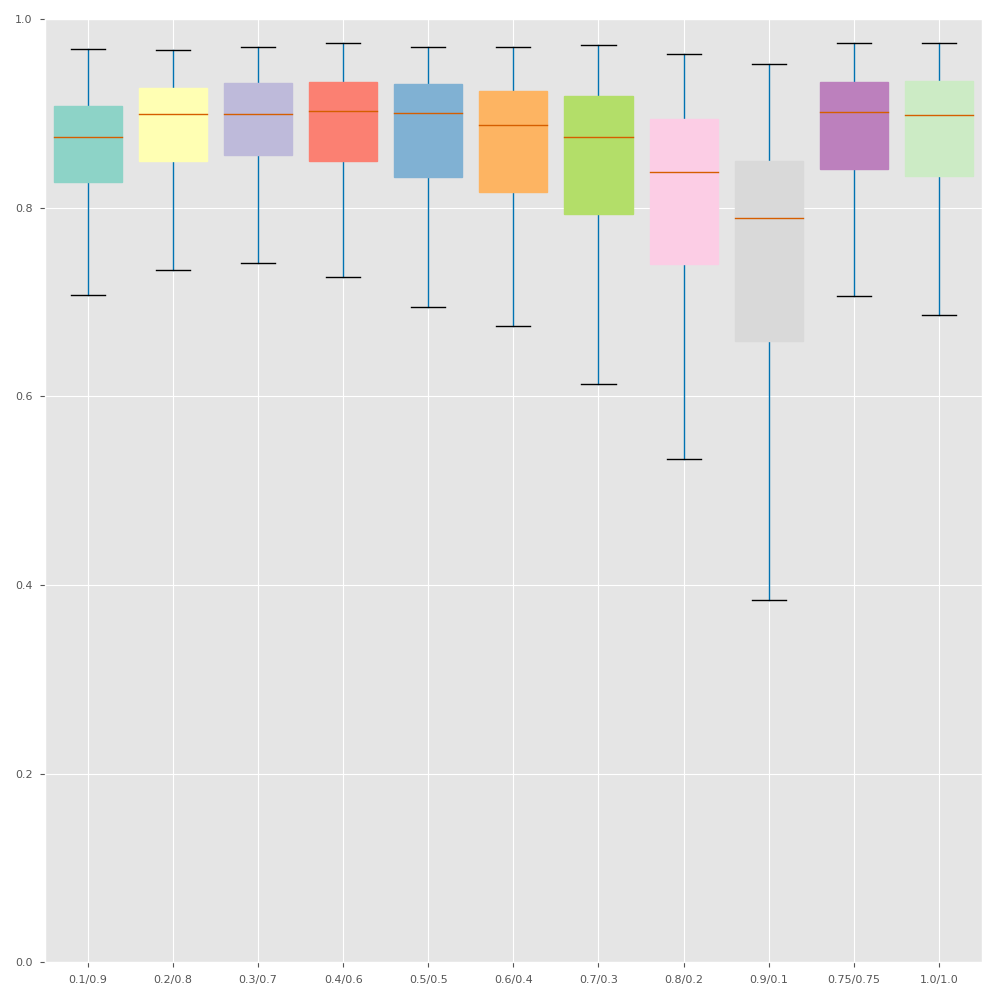} &  
        \includegraphics[width=\linewidth]{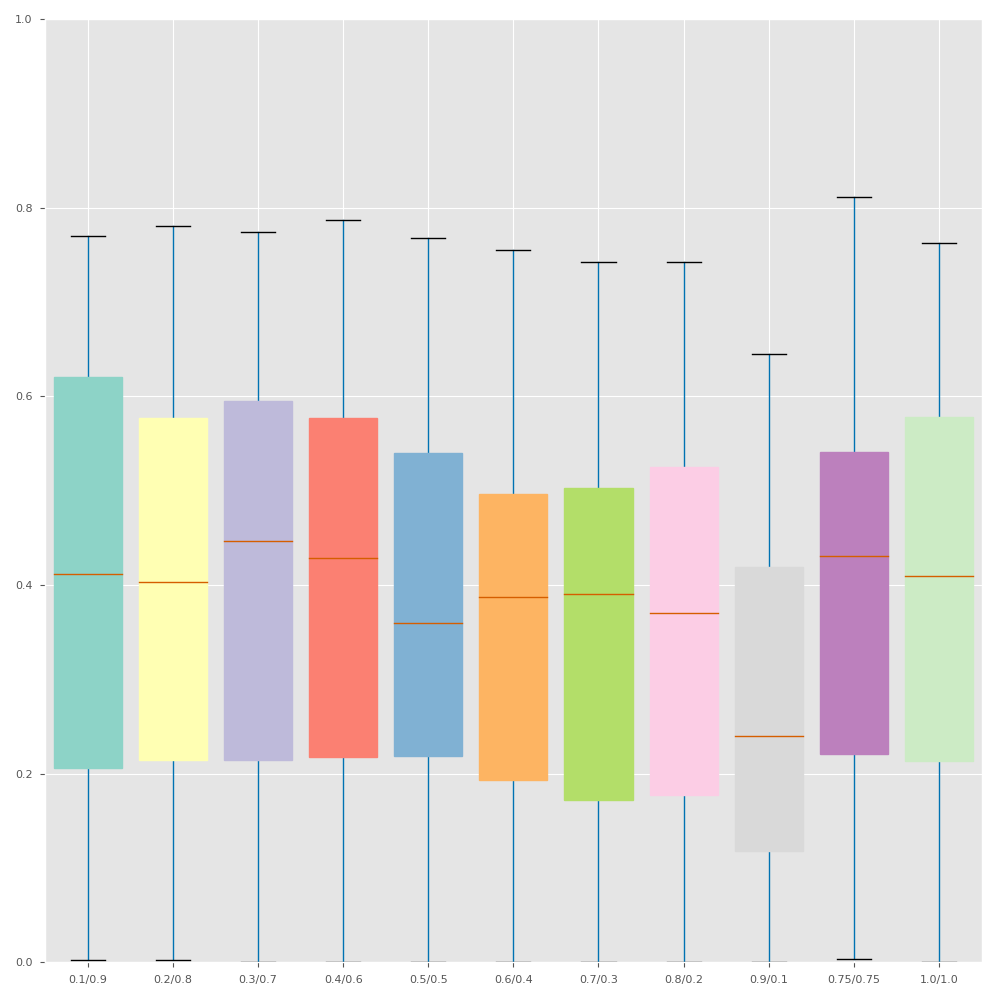} &  
        \includegraphics[width=\linewidth]{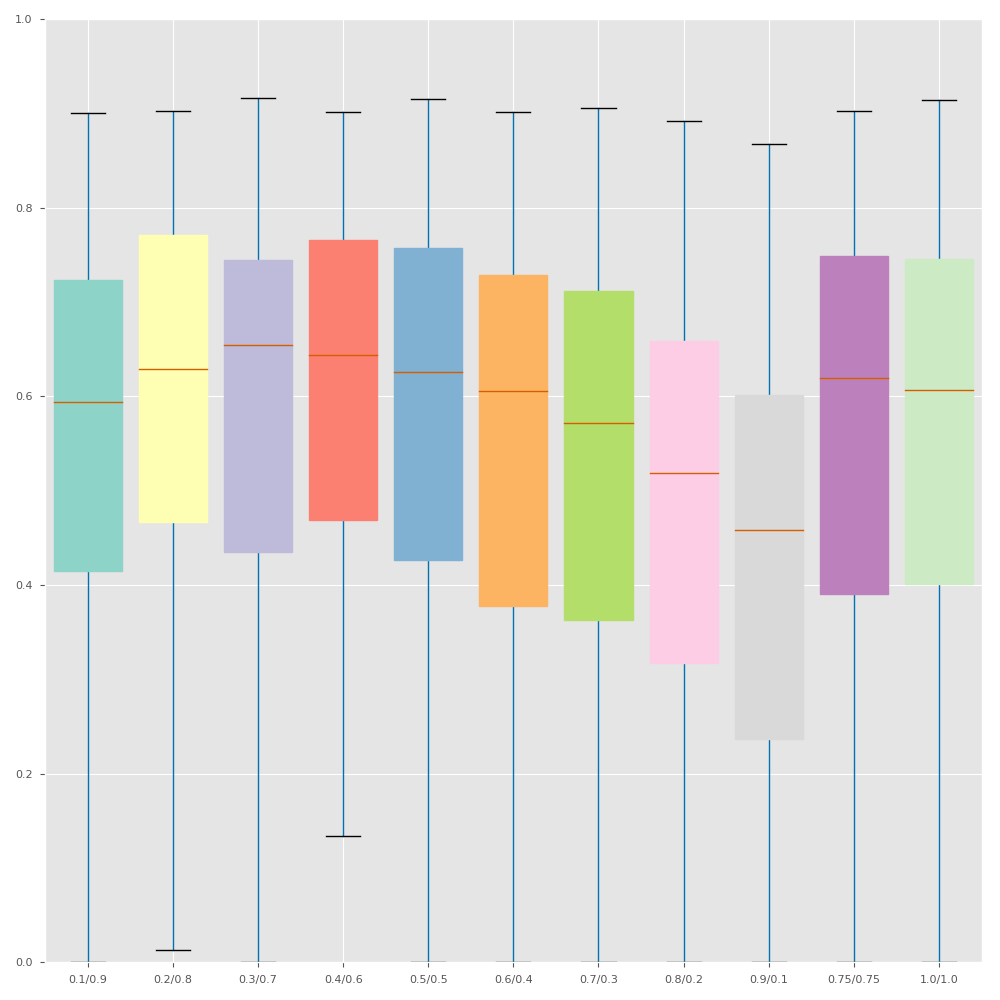} &  
        \includegraphics[width=\linewidth]{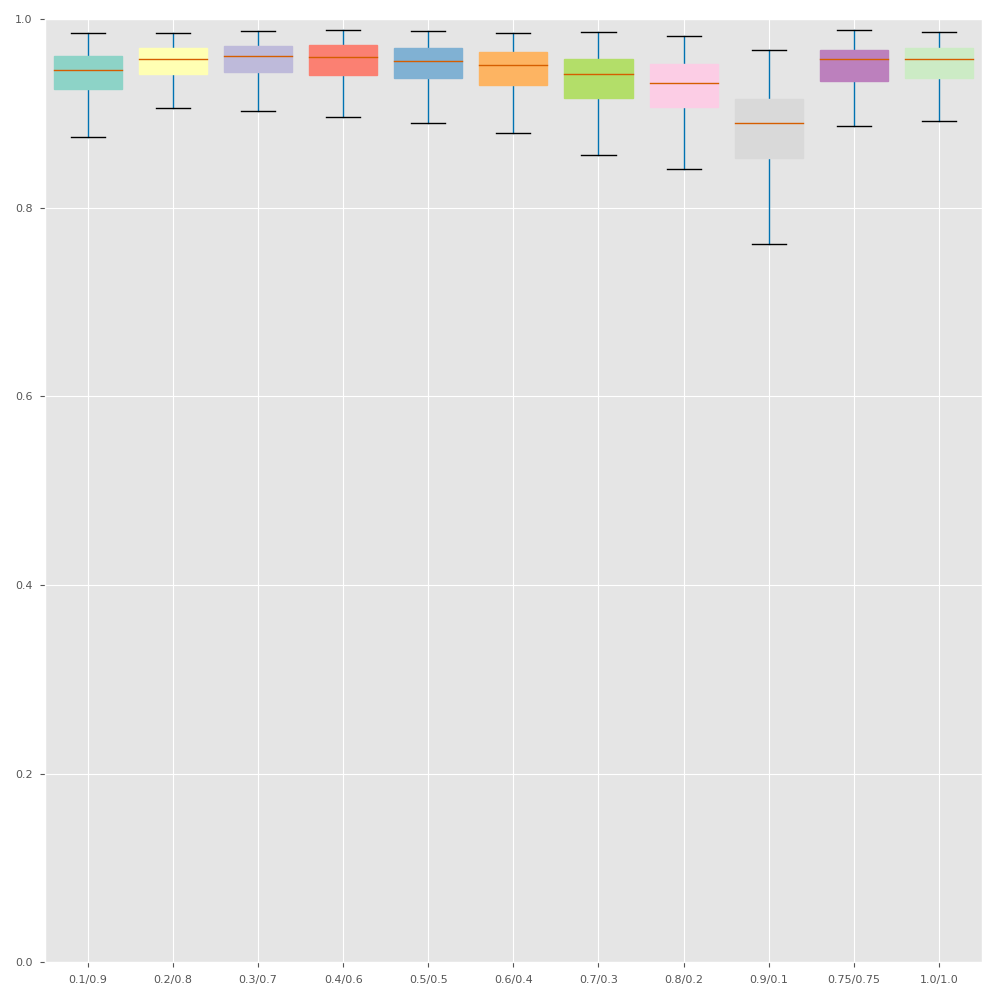} &  
        \includegraphics[width=\linewidth]{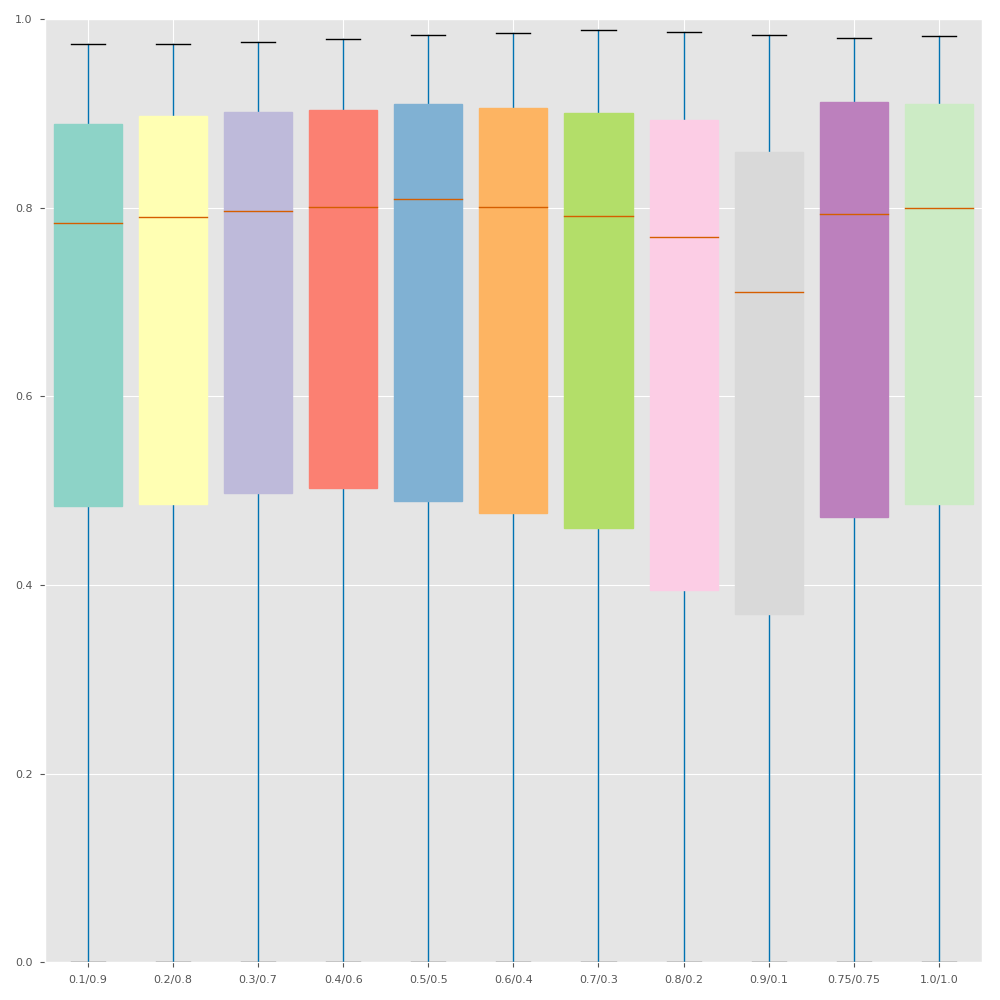} &  
        \includegraphics[width=\linewidth]{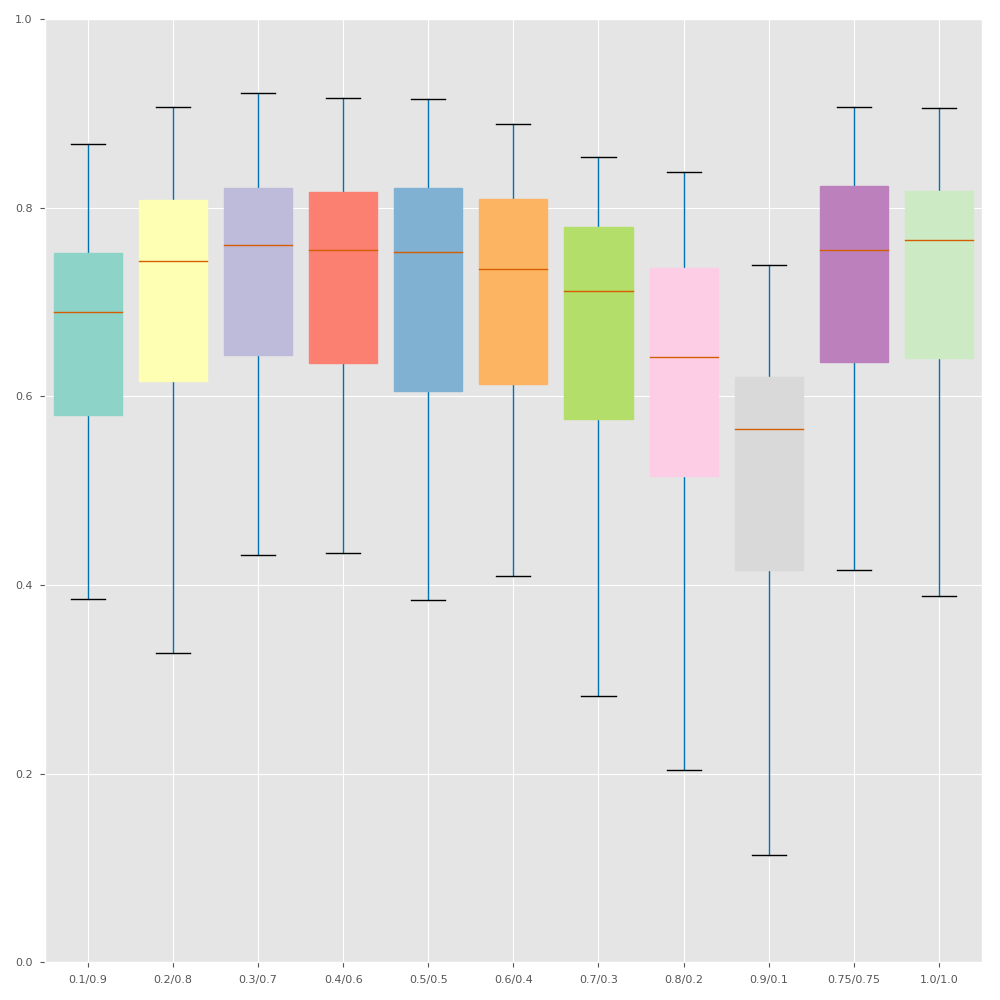} &  
        \includegraphics[width=\linewidth]{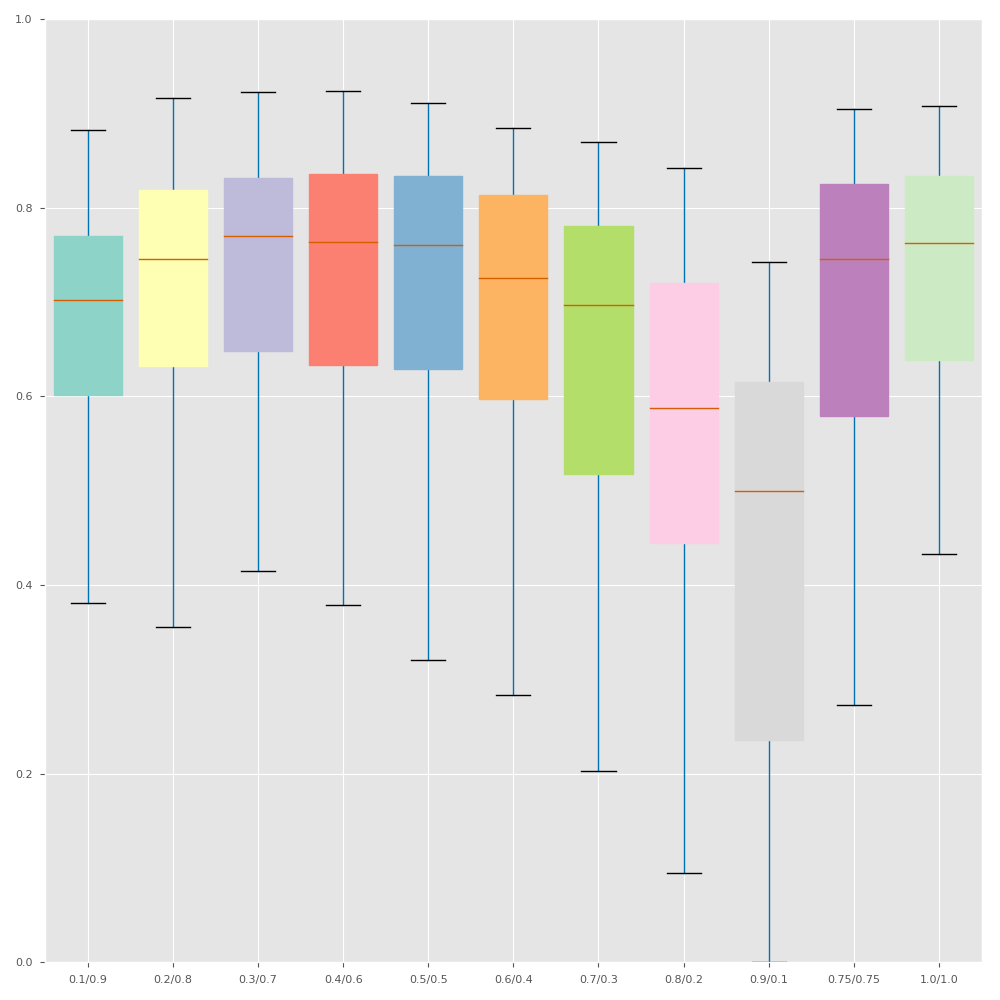}  \\ 

        \rotatebox{90}{\hspace{20pt} F2.0} &
        \includegraphics[width=\linewidth]{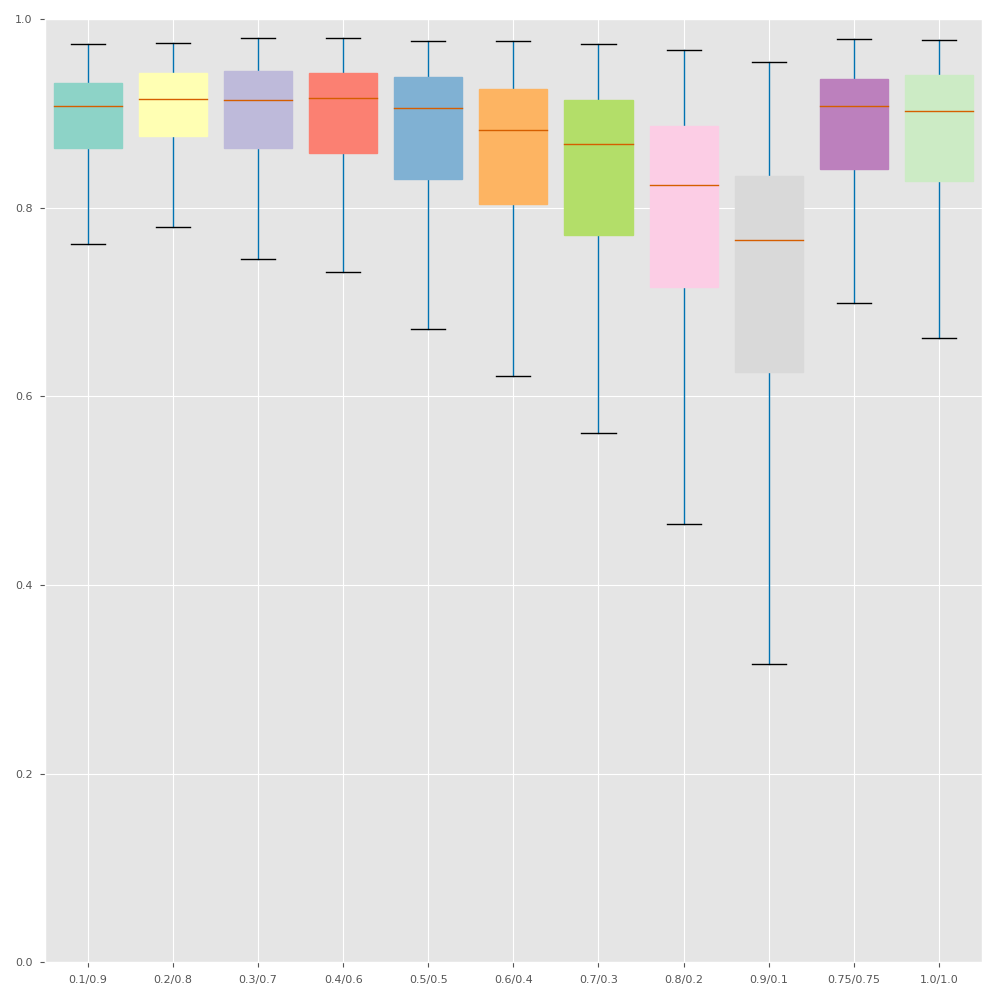} &  
        \includegraphics[width=\linewidth]{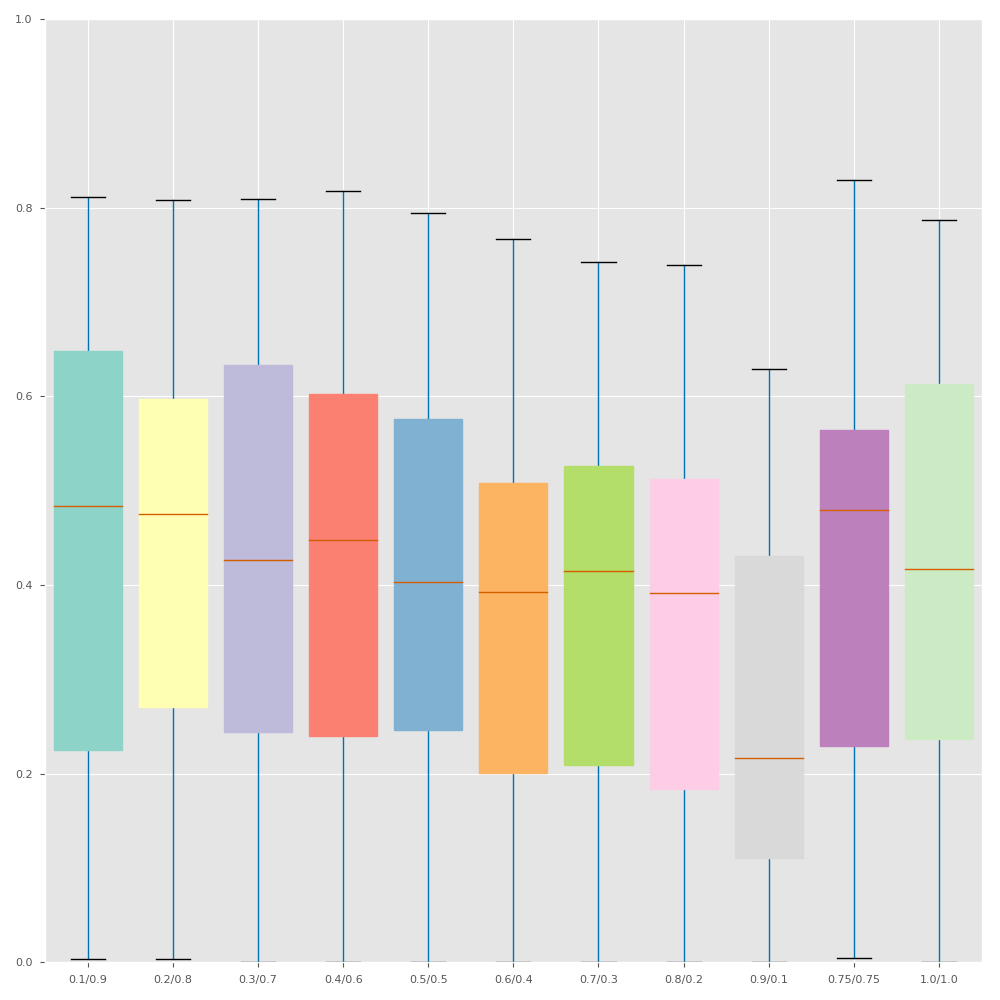} &  
        \includegraphics[width=\linewidth]{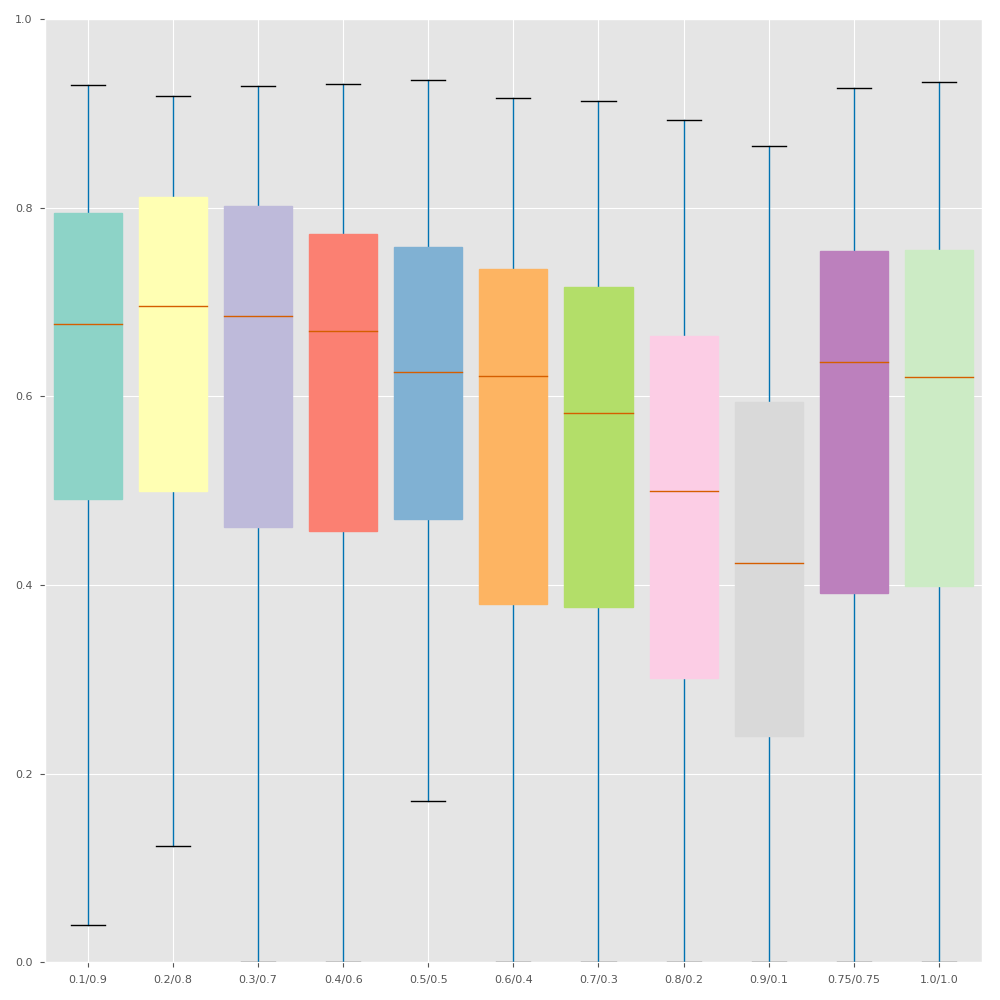} &  
        \includegraphics[width=\linewidth]{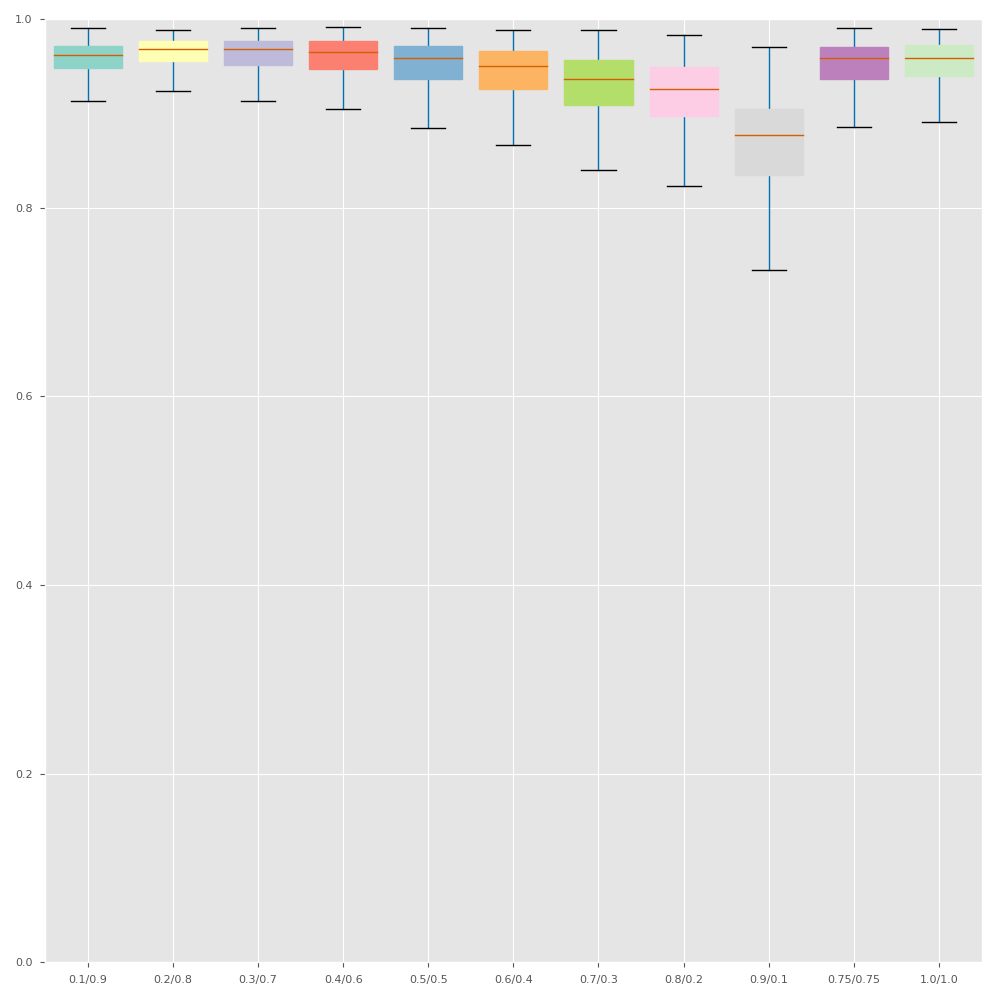} &  
        \includegraphics[width=\linewidth]{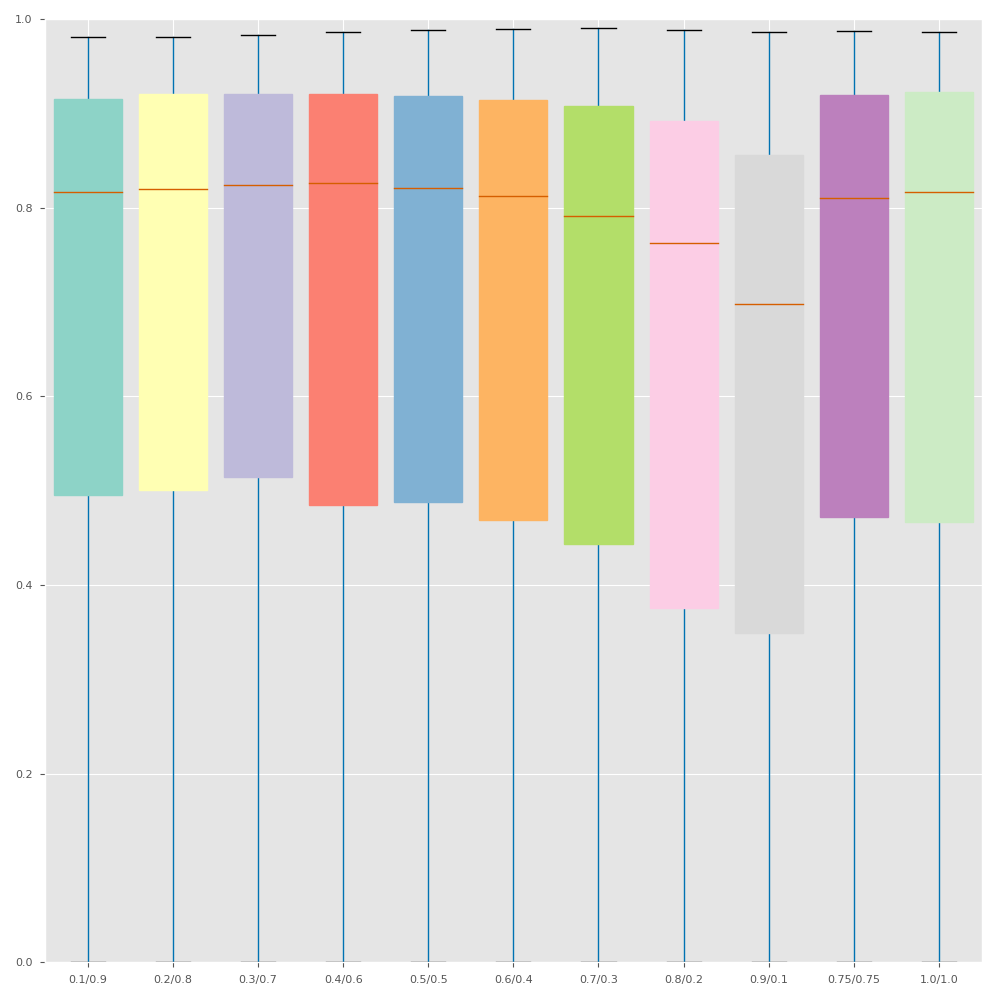} &  
        \includegraphics[width=\linewidth]{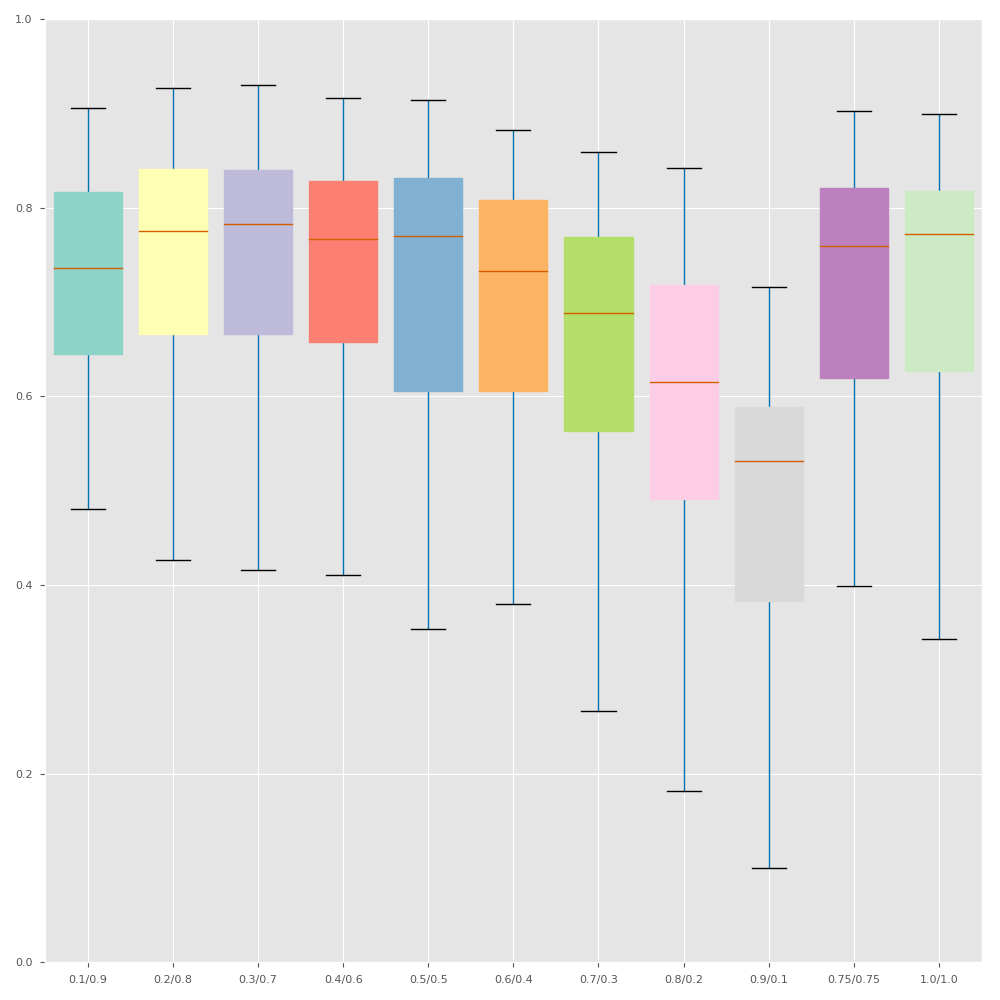} &  
        \includegraphics[width=\linewidth]{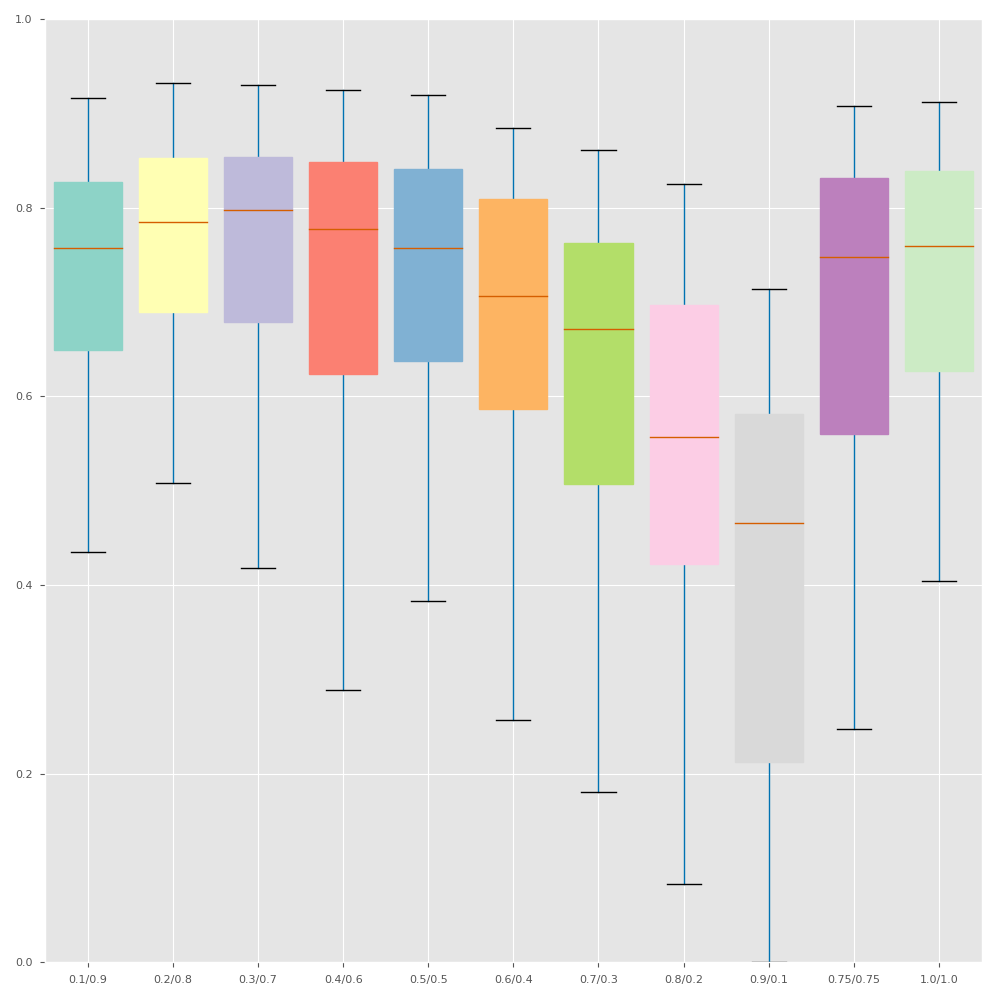}  \\ 

    \end{tabularx}
    }
        \caption{\reviewminorpar Boxplots of the alternative F measures obtained for each dataset using the range of sTversky losses with varying alpha/beta.}
    \label{fig:boxplots_alternative_tversky}
\end{figure*}

\end{document}